\documentclass[11pt]{article}
\usepackage{color}
\usepackage{hhline}
\usepackage[colorlinks=true, allcolors=blue,backref=page]{hyperref}
\usepackage[table]{xcolor}
\usepackage{amsmath, amssymb, amsthm, thmtools}
\usepackage{mathrsfs}
\usepackage{mathtools}
\usepackage[noabbrev,capitalize,nameinlink]{cleveref}
\usepackage[noadjust]{cite}
\usepackage{pifont}
\usepackage{tikz}
\usetikzlibrary{decorations.markings,shapes.geometric}
\usepackage{bbm}
\usepackage{float} 
\usepackage[T1]{fontenc}
\DeclareMathOperator*{\argmax}{arg\,max}

\usetikzlibrary{arrows.meta}

\usepackage{environ}
\usepackage{framed}
\usepackage[linesnumbered,ruled,vlined]{algorithm2e}
\usepackage[noend]{algpseudocode}
\usepackage[labelfont=bf]{caption}
\usepackage[framemethod=tikz]{mdframed}
\usepackage{appendix}
\usepackage{graphicx}
\usepackage[textsize=tiny,textwidth=2cm]{todonotes}
\usepackage{tcolorbox}
\allowdisplaybreaks[1]
\usepackage{stmaryrd}
\SetSymbolFont{stmry}{bold}{U}{stmry}{m}{n}

\usepackage{multirow}

\usepackage[margin=1in]{geometry}
\usepackage{braket}

\usepackage[shortlabels]{enumitem}
\crefformat{enumi}{#2#1#3}
\crefrangeformat{enumi}{#3#1#4 to~#5#2#6}
\crefmultiformat{enumi}{#2#1#3}
{ and~#2#1#3}{, #2#1#3}{ and~#2#1#3}

\DeclareSymbolFont{symbolsC}{U}{pxsyc}{m}{n}
\SetSymbolFont{symbolsC}{bold}{U}{pxsyc}{bx}{n}
\DeclareFontSubstitution{U}{pxsyc}{m}{n}
\DeclareMathSymbol{\medcircle}{\mathbin}{symbolsC}{7}

\crefformat{equation}{#2(#1)#3}
\crefrangeformat{equation}{#3(#1)#4 to~#5(#2)#6}
\crefmultiformat{equation}{#2(#1)#3}{ and~#2(#1)#3}{, #2(#1)#3}{ and~#2(#1)#3}
\AtBeginEnvironment{appendices}{\crefalias{section}{appendix}} 
\usepackage[color,final]{showkeys} 
\colorlet{refkey}{orange!20}
\colorlet{labelkey}{blue!30}

\numberwithin{equation}{section}
\newtheorem{theorem}{Theorem}[section]
\newtheorem{proposition}[theorem]{Proposition}
\newtheorem{lemma}[theorem]{Lemma}

\newtheorem{corollary}[theorem]{Corollary}

\newtheorem*{question*}{Question}

\theoremstyle{definition}
\newtheorem{definition}[theorem]{Definition}

\newtheorem*{definition*}{Definition}
\newtheorem*{theorem*}{Theorem}

\newtheorem{remark}[theorem]{Remark}

\newcommand{\mb}[1]{\mathbb{#1}}
\newcommand{\mc}[1]{\mathcal{#1}}
\newcommand{\mbf}[1]{\mathbf{#1}}

\newcommand{\msc}{\mathscr}

\newcommand{\on}{\operatorname}

\newcommand{\eps}{\varepsilon}
\renewcommand{\Im}{\mathsf{Im}}

\let\N\Natural

\let\R\Real
\let\Z\Integer
\let\F\Field
\let\FF\Field

\newcommand{\smcdef}{\sum_{j=0}^{\ell}c_j \alpha^j}
\newcommand{\smudef}{\sum_{j=0}^{\ell}u_j \alpha^j}

\newcommand{\K}{\mathbb{K}}
\newcommand{\U}{\mathscr{U}}

\newcommand{\W}{\mathscr{W}}
\newcommand{\ceil}[1]{\left\lceil #1\right\rceil}
\newcommand{\abs}[1]{\left|#1\right|}

\let\originalleft\left
\let\originalright\right
\renewcommand{\left}{\mathopen{}\mathclose\bgroup\originalleft}
\renewcommand{\right}{\aftergroup\egroup\originalright}

\newif\ifdraft
\draftfalse
\ifdraft
\newcommand{\mkw}[1]{\todo[color=green!20]{MW: #1}}
\newcommand{\mkwinline}[1]{\todo[inline,color=green!20]{MW: #1}}
\newcommand{\ks}[1]{\todo[color=yellow!20]{KS: #1}}

\newcommand{\ksinline}[1]{\todo[inline,color=yellow!20]{KS: #1}}

\newcommand{\rg}[1]{\todo[color=red!20]{RG: #1}}
\newcommand{\rginline}[1]{\todo[inline,color=red!20]{RG: #1}}

\newcommand{\vg}[1]{\todo[color=blue!20]{VG: #1}}
\newcommand{\vginline}[1]{\todo[inline,color=blue!20]{VG: #1}}
\usepackage{showlabels}
\else
\newcommand{\mkw}[1]{}
\newcommand{\mkwinline}[1]{}
\newcommand{\ks}[1]{}
\newcommand{\ksinline}[1]{}
\newcommand{\rg}[1]{}
\newcommand{\rginline}[1]{}
\newcommand{\vg}[1]{}
\newcommand{\vginline}[1]{}
\fi

\title{Improved LCL Equivalences and Curve Decoding Alphabet Sizes}
\begin{document}
\author{Rohan Goyal\thanks{MIT, \texttt{rohan\_g@mit.edu}. Research supported by Yael Tauman Kalai's NSF Award 6956054.}, Venkatesan Guruswami\thanks{UC Berkeley, \texttt{venkatg@berkeley.edu}. Research supported in part by a Simons Investigator Award, a UC Noyce initiative award, DOD Advanced
Research Projects Agency grant HR0011262E031, and NSF grant 2211972.}, Yihang Sun\thanks{Stanford University, \texttt{kimisun@stanford.edu}. Research supported by a NSF Graduate Research Fellowship and the Stanford Graduate Fellowship}, and Mary Wootters\thanks{Stanford University.  \texttt{marykw@stanford.edu}.  Work partially supported by NSF grant 2321489.}}
\date{}
\maketitle
\begin{abstract}
\emph{Local Coordinate-wise Linear} (LCL) properties were developed in the influential work of Levi, Mosheiff, and Shagrithaya~\cite{lms25} to capture coding-theoretic properties such as list decoding and list recovery. They showed that for every LCL property, the rate threshold for that property to be achieved coincides for for random linear codes (RLC) and random Reed-Solomon codes (RRS). Brakensiek, Chen, Dhar, and Zhang further extended this equivalence to random additive codes (linear codes over vector alphabets)~\cite{bcdz25b}. In particular, this showed that any property of all \emph{subspace design codes} is also a property of RLCs and RRS codes at the same rate.

In this work, we  
improve the probabilistic guarantees in the local equivalence theorem for RRS codes, and further extend the equivalence to all known subspace design code constructions individually. That is, each such construction, including folded Reed--Solomon codes, has the same LCL threshold as RLC and RRS codes.
These improvements can lead to applications in improving alphabet size guarantees for LCL properties---to demonstrate this, we focus on \emph{curve decoding}.

Curve decoding was introduced by \cite{gg25a,CarmonGHLLPS2026}; it is an LCL property that is sufficient to establish \emph{proximity gaps} (PG), \emph{correlated agreement} (CA), and \emph{mutual correlated agreement} (MCA).  The latter arise in cryptography and interactive proofs, for example in Interactive Oracle Proofs (IOPs) and Succinct Non-Interactive Arguments of Zero Knowledge (SNARKs).  Informally, a code $C \subset \mathbb{F}^n$ of rate $R$ is \emph{curve decodable} (with parameters $\ell, \delta, a$ and $b$) if, for any degree-$\ell$ curve $u:\F\to \mathbb{F}^n$, if $u$ is $\delta$-close to $C$ on at least $a$ points, then in fact at least $b \leq a$ of those close-by codewords must themselves lie on a degree-$\ell$ curve in $C$.

Prior work \cite{gg25a,ggsw} has established curve decodability for RLC and RRS codes, with an optimal value for $\delta$ that approaches $1 - R$ where $R$ is the rate.  However, these results also require a large alphabet size, in terms of $\ell$ and $\eta := 1-R-\delta$.  Using our improved LCL equivalence theorems, and inspired by the rank-sensitive counting framework of \cite{gg26} for list decoding, we exponentially improve the alphabet size requirement for these families while matching best known values of $(a, b)$. In particular, for $\ell=O(1)$, our curve decoding results match the best known alphabet size bounds for list decoding. This also implies improved results for PG, CA, and MCA.

\end{abstract}

\newpage
\tableofcontents

\newpage
\section{Introduction}
\label{sec:intro}

\emph{Local coordinate-wise linear} (LCL) properties 
were introduced by Levi, Mosheiff and Shagrithaya \cite{lms25} to capture properties of error correcting codes such as list-decoding and list-recovery. In their paper, the authors showed that random linear codes (RLC)\footnote{A random linear code $C \subset \F^n$ of design dimension $k$ is the image of a uniformly random linear map $\F^k\to\F^n$.} and random Reed-Solomon codes (RRS)\footnote{A random Reed--Solomon code $C\subset \F^n$ of dimension $k$ is 
$C:= \left\{ ( p(\alpha_1), p(\alpha_2), \ldots, p(\alpha_n) ) \,:\, p \in \F[X], \mathrm{deg}(p) < k \right\}$, where $(\alpha_1,\ldots,\alpha_n)$ is chosen uniformly from the ordered tuples of distinct elements of $\F$.} have the same \emph{threshold rate} with respect to any LCL property, and are therefore ``locally equivalent''.

In more detail, 
an LCL property can be represented as a family $\mc{F}$ of \emph{$r$-local profiles}.  Informally, such a family $\mathcal{F}$ describes a collection of ``bad sets'' of size $r$ which is not too large; we would like the code $C$ to not contain any bad set in this collection.\footnote{For example, a code $C \subseteq \F^n$ is \emph{$(\rho, L)$-list-decodable} if no Hamming ball of radius $\rho$ contains more than $L$ elements of $C$.  This can be phrased as $C$ avoiding all of the ``bad'' sets containing any $r = L+1$ points of $\F^n$ that lie in some radius-$\rho$ Hamming ball.  In this case, the family $\mathcal{F}$ would contain a collection of profiles that collectively describe this collection of bad sets.}    See \cref{def:local-profile} for a formal definition.

The work \cite{lms25} showed that, given an LCL property represented by $\mc{F}$, there is a threshold rate $R_{\mathcal F}$ (given with an explicit expression) such that a random linear code with rate $R>R_{\mathcal F}$ is extremely likely to contain some profile in family $\mc{F}$, while a random code with rate $R<R_{\mathcal F}$ is extremely unlikely to contain any profile in this family. 
Then they showed that RLC and RRS are locally equivalent, in the sense that this threshold rate $R_{\mathcal{F}}$ is the same for both families. 

Other work has established similar ``local equivalence'' statements with other families of codes.  
For Gallager's ensemble of random LDPC codes, an earlier local-property transfer theorem was proved in \cite{MosheiffRRSW2020}, and \cite{ggsw} establishes the corresponding threshold statement for row-span constrained LCL properties.
As another example, in a breakthrough connecting two lines of work, \cite{bcdz25b} showed that the family of \emph{subspace design codes} satisfy a (deterministic) local equivalence-type statement with the ensembles above, in the sense that every subspace design code of rate less than $R_{\mathcal{F}}$ avoids $\mathcal{F}$; and for every rate $R$ larger than $R_{\mathcal{F}}$, there {exists} a subspace design code of rate $R$ that does not avoid $\mathcal{F}$.

In addition to list-decoding and list-recovery, LCL properties were shown in \cite{gg25a,ggsw} to also capture \emph{curve decoding}. Informally, curve decoding is the following property.
Let $C \subset \F^n$ be a linear code (that is, a linear subspace of $\F^n$), for some finite field $\F$; throughout this paper, $\F$ denotes a finite field of cardinality $q$.  For integers $\ell, a, b$ and a parameter $\delta \in (0,1)$, we say that $C$ is \emph{$(\ell, \delta, a,b)$-curve-decodable} if, whenever a degree-$\ell$ curve $u(T)$ in $\F^n$ is $\delta$-close to the code $C$ on at least $a$ points, then at least $b \leq a$ of those close-by codewords actually lie on a degree-$\ell$ curve in $C$. See \cref{def:curve-decoding} for the formal definition.  

Curve decoding was introduced independently in \cite{gg25a,CarmonGHLLPS2026}; it is sufficient to establish \textit{proximity gaps} (PG), \emph{correlated agreement} (CA), and \emph{mutual correlated agreement} (MCA).  These latter notions arise in cryptography and interactive proofs, for example in Interactive
Oracle Proofs (IOPs) and Succinct Non-interactive Arguments of Zero Knowledge (SNARKs). See, e.g., \cite{proximityprize} for more on these applications.

As it can be viewed as an LCL property, it follows that curve decodability has the same threshold rate for random linear codes, random RS codes, and random LDPC codes.\footnote{Here and throughout the paper, ``random LDPC codes'' refers to \emph{Gallager's ensemble}.}
Using this framework, prior work~\cite{gg25a,ggsw} established curve decoding for these families up to the optimal value of $\delta$, namely $\delta$ approaching $1 - R$, where $R = \dim_{\F}(C)/n$ is the rate of the code $C \subset \F^n$.  However, other parameters---such as the alphabet size---remained sub-optimal.  This affected the parameters for proximity gaps, correlated agreement, and mutual correlated agreement.

\paragraph{Our Contributions.}
Conceptually, we have three main  contributions.
\begin{itemize}
    \item 
    First, we improve the local equivalence result between RLC and RRS from \cite{lms25}.  In more detail, for the below-threshold non-containment
    direction (that is, the direction ``if $R < R_{\mathcal{F}}$, then a random RS code likely avoids $\mathcal{F}$''), we improve the quantitative guarantees.  The quantitative improvement is discussed in \cref{rem:improve-lms}.
    \item Second, we show for nested code families how to deduce above-threshold containment statements directly from below-threshold non-containment statements concerning a certain class of critical quotient profiles. For random RS codes the hypothesis holds with high probability, while optimal nested subspace design codes satisfy it deterministically. This gives local equivalence for each such nested constructions---including folded Reed--Solomon code---in a black-box manner in contrast with prior approaches in e.g. \cite{lms25}\footnote{In fact, we repair the proof of the above-threshold containment statement in \cite{lms25} for Random Reed--Solomon codes---see \cref{rem:repair-lms}}; see \cref{rem:why-blackbox}.
    \item   
    Finally, we use our LCL equivalance theorems to study curve decoding. Along with more refined bounds on the size of the relevant LCL family, we improve the alphabet size bounds on curve decoding---and therefore proximity gaps and (mutual) correlated agreement---for RRS, RLC, and RLDPC. The quantitative improvement is given in \cref{tab:CD}.
\end{itemize}

We discuss each of these directions in more detail below.

\subsection{Main Results}
\label{subsec:results}
To state our main results, we informally define LCL properties, potential functions, and thresholds. See \cref{subsec:lcl} for formal definitions.
\begin{definition}[The LCL Framework; Briefly]
\label{def:lcl-informal}
A local coordinatewise linear family $\mc{F}$ is a collection of \emph{$r$-local profile} $(\mbf V,U)$, where
$\mbf V=(V_1,\ldots,V_n)$ is a tuple of subspaces of $\F^r$, and
$U\subset\F^r$.  A code $C\subset\F^n$ \emph{contains}
$(\mbf V,U)$ if there is a matrix $M\in\F^{n\times r}$ whose columns
belong to $C$, whose $i$th row belongs to $V_i$, and whose row span isexactly $U$. \footnote{For $\F$-additive codes, e.g. folded Reed-Solomon codes, the definition is generalized by \cite{bcdz25a}---see \cref{def:folded-profile}.}
For $R\in[0,1]$, define the \emph{potential}
\begin{equation}
  \phi_{\mbf V}(U,R)
  :=
  (R-1)\dim_{\F}U
  +\frac{1}{n}\sum_{i=1}^n\dim_{\F}(U\cap V_i).
\end{equation}
For subspace $W\subsetneq U$, let \emph{threshold} $R_{\mbf V}(U,W)$ be the value $R$ where
$\phi_{\mbf V}(U,R)=\phi_{\mbf V}(W,R)$. Let
\begin{equation}
  R_{\mbf V}(U)
  :=
  \max_{W\subsetneq U}R_{\mbf V}(U,W),
  \qquad
  R_{\mc F}
  :=
  \min_{(\mbf V,U)\in\mc F}R_{\mbf V}(U).
\end{equation}
Thus $R_{\mbf V}(U)$ is the rate above which the full row span $U$
has at least as large a potential as each of its proper subspaces, and
$R_{\mc F}$ is the threshold associated with the family $\mc F$ of profiles $(\mbf{V}, U)$
\end{definition}

\paragraph{Improved local equivalence.} We begin by stating our main LCL equivalence theorem for RRS.
Recall that $R_{\mathcal{F}}$ is the threshold for RLC.  \cite[Theorem 3.1]{lms25} shows that this is also the threshold for RRS.  We improve their statement to the following.

\begin{restatable}[Local Equivalence for RRS]{theorem}{rrsequiv}
\label{cor:rrs-equiv}
Let $\mc{F}$ be row-span constrained $r$-local family, let $\eta\in (0, 1]$, and let $C:=\mathsf{RS}(\boldsymbol{\alpha}, k)$ of rate $R=k/n$. Suppose $q:=|\F|>\max\{n, r(k-1)\}$, then
\begin{enumerate}
\item If $R\le R_{\mc F}-\eta$ and $\eta n\ge2r$, then
\begin{equation}
  \mb{P}\left(C\text{ contains }\mc F\right)
  \le
  |\mc F|\,r
  \left(
    \frac{2e r k}{\eta(q-n)}
  \right)^{\eta n/2}.
\end{equation}

\item If $R>R_{\mc F}+\eta$ and $\eta n\ge2r(r+1)$, then
\begin{equation}
  \mb{P}\left(C\text{ contains }\mc F\right)
  \ge 1-q^{2r^2}r
  \left(
    \frac{2e r k}
         {(\eta/r-2/n)(q-n)}
  \right)^{\frac{\eta n}{2r}-1}.
\end{equation}
\end{enumerate}
\end{restatable}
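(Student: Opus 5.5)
We prove the two parts separately, both by reduction to a single-profile estimate for random RS codes followed by a union bound over $\mc F$ and over the at most $r$ relevant ``levels'' (dimensions of $U$).

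\textbf{Part 1 (below threshold).} Fix $(\mbf V,U)\in\mc F$ with $R\le R_{\mbf V}(U)-\eta$. We first pass to a minimal witness. If $C$ contains $(\mbf V,U)$, take a witnessing matrix $M$. Among the pairs $W'\subseteq U$ that realise containment of the restricted profile $(\mbf V\cap W',W')$, choose one that is ``critical'': its potential at rate $R_{\mc F}$ is maximal and its dimension is minimal. This is the standard reduction in \cite{lms25}. For such $W'$ we have $\phi_{\mbf V}(W',R)-\phi_{\mbf V}(W,R)\ge \eta(\dim W'-\dim W)$ for all $W\subsetneq W'$, so every proper quotient has potential slack at least $\eta$.

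Next we bound $\mb P(C \text{ contains } (\mbf V,W'))$. Write $M=G\,P$, where the columns of $P$ are coefficient vectors of polynomials of degree $<k$ and $G$ is the Vandermonde matrix in $\boldsymbol\alpha$. Row $i$ lies in $V_i$ exactly when certain linear forms in $(p_1(\alpha_i),\dots,p_r(\alpha_i))$ vanish. Each such condition is a nonzero polynomial equation in $\alpha_i$ of degree $\le r(k-1)$; here we use $q>r(k-1)$. Moreover, the row span being all of $W'$ forces these polynomials to be nonzero. Conditioning sequentially on $\alpha_1,\dots,\alpha_{i-1}$, each coordinate with $\dim(W'\cap V_i)<\dim W'$ then costs a factor $\le rk/(q-n)$.

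The crucial counting step is to choose a set of at least $\eta n/2$ coordinates on which these events are ``fresh'' given the polynomials. We pick the polynomial tuple $P$ via a rank-sensitive enumeration: a basis is determined by at most $\eta n/2$ ``pivot'' coordinates among $n$. This gives $\binom{n}{\eta n/2}\le (2e/\eta)^{\eta n/2}$ choices. That factor combines with $(rk/(q-n))^{\eta n/2}$ to produce $\left(\frac{2erk}{\eta(q-n)}\right)^{\eta n/2}$, and the leading factor $r$ comes from summing over $\dim W'\in[r]$. Finally, a union bound over $\mc F$ gives the claimed inequality, where $\eta n\ge 2r$ ensures the exponent is large enough to absorb the combinatorial counting.

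\textbf{Part 2 (above threshold).} Here we use the paper's second contribution: above-threshold containment follows from below-threshold non-containment of critical quotient profiles. Take $(\mbf V,U)$ attaining $R_{\mc F}$, so $R>R_{\mbf V}(U)+\eta$. By dimension counting, $C$ contains \emph{some} matrix with rows in the $V_i$ and columns in $C$. The space of such matrices has dimension at least $n\,\phi_{\mbf V}(U,R)$, and for $R$ above threshold this dimension is at least $\eta n/r$ larger than what any proper subspace $W$ can account for.

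Containment fails only if every such matrix has row span inside some proper $W\subsetneq U$. The number of such $W$ is at most $q^{r^2}$; together with quotient choices this gives the $q^{2r^2}$ factor. For each such $W$, failure forces $C$ to contain an ``excess'' quotient profile $(\mbf V/W,U/W)$ whose potential slack is at least $\eta/r-2/n$ after rounding dimensions. Applying the Part 1 single-profile bound to this quotient, with $\eta$ replaced by $\eta/r-2/n$, gives the factor $\left(\frac{2erk}{(\eta/r-2/n)(q-n)}\right)^{\eta n/(2r)-1}$. The hypothesis $\eta n\ge 2r(r+1)$ ensures $\eta/r-2/n$ is positive and large enough for Part 1 to apply.

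\textbf{Main obstacle.} The hardest step is the fresh-coordinate counting in Part 1: showing that the polynomial conditions on different $\alpha_i$ can be made independent and genuinely nonzero on at least $\eta n/2$ coordinates. The difficulty is doing this while paying only $\binom{n}{\eta n/2}$, rather than $q^{rk}$, for the choice of polynomials. I would attempt this by a greedy pivot argument that reveals $\alpha_i$ one at a time and tracks the rank of the evaluated rows.
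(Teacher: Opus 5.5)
\textbf{Part 1.} Your reduction is misstated. A witness for $(\mbf V,U)$ has row span exactly $U$, so containment does not pass to a restricted pair on a proper $W'\subsetneq U$. Also, the inequality $\phi_{\mbf V}(W',R)-\phi_{\mbf V}(W,R)\ge\eta(\dim W'-\dim W)$ has the above-threshold sign. What is needed is the quotient of \cref{lem:maximal-quotient}, whose kernel is an inclusion-maximal maximizer of $R_{\mbf V}(U,\cdot)$. That quotient is contained whenever $(\mbf V,U)$ is, and it has $\phi_{\pi\mbf V}(Y,R)\le-\eta\dim Y$ for all nonzero $Y\subseteq\pi U$.

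The real gap is the single-profile estimate, which you flag as the main obstacle and which is the entire content of \cref{thm:low-lms}. The witness polynomials depend on $\boldsymbol\alpha$. So a factor $rk/(q-n)$ per coordinate for a \emph{fixed} tuple $P$ must be paid against a union over roughly $q^{dk}$ tuples. Since a fixed nonzero polynomial vanishes with probability about $k/q$ rather than $1/q$, this first moment is hopeless for $d\ge 2$. Your ``pivot'' enumeration with $\binom{n}{\eta n/2}$ choices does not name any object that determines $P$.

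The paper uses a different mechanism. It reveals $\alpha_t$ one at a time and keeps a candidate space $S_t$ (initially $\mathbb{F}[X]_{<k}^d$). It compares $\dim_{\mathbb F}(S_t\cap\mathscr{W})$ with the \emph{generic} future constraint count $\sum_{i>t}\dim_{\mathbb K}A_i(\mathscr{W})$ over $\mathbb K=\mathbb F(X)$. The monovariant $d_t+m_t$ starts at most $d-\eta n$ and must end positive if a nonzero witness exists. It increases, by at most one, only at a defect: when a symbolically nonzero determinant of degree at most $d(k-1)$, built from a presentation $(Z,E)$, vanishes at $\alpha_t$. This has conditional probability at most $d(k-1)/(q-n)$. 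When the maximizing subspace is proper, the process branches into restriction and quotient so that neither child's monovariant increases, and a labelling shows there are at most $d$ branches. Thus $\binom{n}{L}$ counts the \emph{times} of the $L\ge\eta n/2$ defects; this is where $\eta n\ge 2r$ enters, via $d\le\eta n/2$. The leading factor $r$ bounds the number of branches, not the possible values of $\dim W'$.

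\textbf{Part 2.} Your frame is right: a dimension lower bound $n\phi_{\mbf V}(U,R)$ at $U$, then a count over proper subspaces. But the key step is missing. The counting contradiction needs an \emph{upper} bound on $\dim\bigl(C^{\equiv r}\cap\mc M^\star_{\mbf V}(W)\bigr)$ for every $W\subsetneq U$, and for RRS that is the whole difficulty. Your mechanism, that failure forces $C$ to contain the quotient pair $(\mbf V/W,U/W)$, does not work. Matrices with row span inside $W$ are annihilated by the quotient by $W$, so nothing about that pair follows. Nor can Part 1 be applied at rate $R$ to quotients of $(\mbf V,W)$: $R$ may exceed $R_{\mbf V}(W)$, and then no quotient has uniformly negative potential.

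The paper instead exploits the nesting $\mathsf{RS}(\boldsymbol\alpha,1)\subset\cdots\subset\mathsf{RS}(\boldsymbol\alpha,k)$.
\begin{enumerate}
\item For each $W$ it follows the local chain $W=W_1\supsetneq W_2\supsetneq\cdots\supsetneq W_m=\{0\}$ of critical subspaces.
\item At each link it applies \cref{thm:low-lms}, which excludes every nonzero matrix with row span \emph{inside} the quotient space. It does so for the critical quotient of $(\mbf V,W_i)$ in the \emph{subcode} $\mathsf{RS}(\boldsymbol\alpha,j_i)$, whose rate is below $R_{\mbf V}(W_i)-\gamma$ with $\gamma=\eta/r-2/n$. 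A union bound is taken over at most $q^{2r^2}$ pairs of subspaces.
\item It climbs back to $\mathsf{RS}(\boldsymbol\alpha,k)$ at a cost of $(\dim W_i-\dim W_{i+1})(k-j_i)$ dimensions (\cref{lem:profile-quotients}).
\item Since the thresholds decrease along the chain (\cref{prop:decreasing-rate}), these costs telescope to $r+\gamma rn+n\phi_{\mbf V}(W_t,R)<n\phi_{\mbf V}(U,R)-r$ (\cref{lem:key}).
\end{enumerate}
Only then does the count over at most $q^r$ hyperplanes of $U$ give the contradiction. Your parameters $q^{2r^2}$, $\eta/r-2/n$ and exponent $\eta n/(2r)-1$ are the right ones, but the argument that produces them is absent.
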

\begin{remark}[Comparison to \cite{lms25}; Statement 1]
\label{rem:improve-lms}
Our main quantitative improvement over \cite{lms25} is in the first (non-containment) statement
 (1). 
For this statement,
\cite[Theorem 3.1]{lms25} replaces the factor \[ r\left(\frac{2e r k}{\eta(|\F|-n)}\right)^{\eta n/2}  \qquad \text{with} \qquad
(2^r - 1)\left(\frac{(4r)^{4r} k }{\eta(|\F|-n)}\right)^{\eta n /2r}.\]
We observe two quantitative improvements (see \cref{prop:lms-fixed-pair} for comparison):
\begin{enumerate}
\item Our exponent is $\Omega(\eta n)$, improving from
$\Omega(\eta n/r)$. This leads to improved alphabet size bounds for curve decoding.
\item The numerator is $O(rk)$, improving from $r^{O(r)}k$. This is a conceptual improvement, but it does not improve alphabet size bounds for any properties we study.
\end{enumerate}
We further remark that, because list decoding and list recovery are 
LCL properties, this sharper non-containment statement (and Point 1 in particular) can also improve the alphabet size bounds for these properties for RRS. In particular, it recovers the state-of-the-art
RRS list-decoding alphabet size and improves the corresponding RRS
list-recovery bound. See \cref{rem:apply-to-ld-lr} for details.
\end{remark}

\begin{remark}[Comparison to \cite{lms25}; Statement 2]
\label{rem:repair-lms}
The containment direction (2) in \cref{cor:rrs-equiv} gives the same quantitative result as in \cite{lms25}, but unfortunately there was an issue with the proof in \cite{lms25} in this direction, which has been confirmed with the authors.\footnote{In more detail, at a key point in this direction of the proof, there is the implication ``If Eq. (25) applies to $V$' then it also applies to $V$ since, by linearity of $C$, containing $V$ implies containing $V'$.''  Unfortunately, the implication is backwards here: what would be needed is that containing $V'$ implies containing $V$, but this does not follow from the linearity of $C$.} 
There may be a simpler fix, but instead we give a more general proof.  This more general proof enables us to obtain a stronger containment statement for subspace design codes, discussed next. 
\end{remark}

\paragraph{Deterministic black-box implication from low-rate non-containment to
high-rate containment for nested code families.}
Our second LCL equivalence theorem concerns a \emph{nested family} of subspace-design codes.  Informally, a \emph{$(\gamma,r)$-optimal nested subspace-design code} is a chain $C_1\subset\cdots\subset C_k$ in which every $C_j$, whose rate is $j/(sn)$, is within additive error $\gamma$ of the optimal
$r$-dimensional subspace-design bound (see \cref{def:sdc,def:slacked-sdc-nesting}).
Thus the same construction is near-optimal simultaneously at every intermediate rate.

\begin{restatable}[Local Equivalence for SDC]{theorem}{sdcequiv}
\label{cor:sdc-equiv}
Let $\mc{F}$ be a row-span constrained $r$-local family and let
$(C_1,\ldots,C_k)$ be a $(\gamma, r)$-optimal nested subspace design code. Define rate and slack to be
\begin{equation}
  R:=\frac{k}{sn}\qquad\text{and}\quad \eta := r\gamma+\frac{2r}{sn}.
\end{equation}
Then the following hold.
\begin{enumerate}
\item If $R< R_{\mc F}-\gamma$, then $C_k$ avoids $\mc F$.
\item If $R> R_{\mc F}+\eta$, then $C_k$ contains $\mc F$.
\end{enumerate}
\end{restatable}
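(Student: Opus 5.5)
We prove the two statements separately. Throughout we work with the $\F$-additive profile notion of \cref{def:folded-profile}, so that a profile of an $s$-folded code is a tuple of subspaces $V_i\subseteq(\F^r)^s$ (or $\F^{s\times r}$) together with a row span $U$. The potential $\phi_{\mbf V}$ is normalized by $sn$, matching the rate $R=k/(sn)$.

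\textbf{Non-containment (part 1).}
Suppose $C_k$ contains some $(\mbf V,U)\in\mc F$, witnessed by a matrix $M$ whose columns lie in $C_k$. Since $R_{\mbf V}(U)\ge R_{\mc F}>R+\gamma$, there is a $W\subsetneq U$ with $\phi_{\mbf V}(U,R+\gamma)<\phi_{\mbf V}(W,R+\gamma)$.

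The plan is to turn this potential inequality into a violation of the subspace-design bound. Pass to the quotient $U/W$, and apply $M$ to a complement of $W$ inside the appropriate dual space. This yields a $d:=\dim U-\dim W$ dimensional subspace $H\subseteq C_k$, namely the column space of $M$ modulo the $W$-part. For each coordinate $i$, the dimension of $H$ restricted to coordinate $i$ is at most $d-\dim(U\cap V_i)+\dim(W\cap V_i)$.

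Summing these coordinate bounds shows that $\sum_i\dim H[i]$ falls below the quantity that the $(\gamma,r)$-optimal subspace design property forces, namely at least $(1-R-\gamma)\,d\,sn$ in the appropriate normalization. This is a contradiction, since $d\le r$. So the key step is to write $\phi(U)-\phi(W)$ exactly as $(R-1)d+\frac1{sn}\sum_i(\dim U\cap V_i-\dim W\cap V_i)$ and match it to the definition of subspace design in \cref{def:sdc}; this is bookkeeping once that definition is in hand.

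\textbf{Containment (part 2).}
Here we use the general black-box reduction announced in the contributions: for nested families, above-threshold containment follows from below-threshold non-containment of critical quotient profiles. Pick $(\mbf V,U)\in\mc F$ with $R_{\mbf V}(U)=R_{\mc F}$. We construct a containing matrix $M$ greedily along a flag of $U$, using the nesting $C_1\subset\cdots\subset C_k$.

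Start from an intermediate code $C_j$ whose rate $j/(sn)$ is just above the potential-maximizing threshold for the current subprofile. Counting shows that $C_j$ contains $(\mbf V,U')$ for some $U'\subseteq U$. The count is the dimension of the linear space of matrices with columns in $C_j$ and rows in $V_i$, namely $\dim\{M\}\ge j\dim U+\sum_i\dim(U\cap V_i)-sn\dim U$ together with the subspace design upper bounds. Moreover, any such matrix whose row span is a proper subspace $U'$ must correspond to a profile whose potential exceeds that of $U$.

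If $U'\neq U$, we use the critical quotient profile $(\mbf V/U',U/U')$ and move up to a larger nested code $C_{j'}$, extending $M$ by new columns that span the quotient. Part 1, applied to the quotient profiles (which are below threshold at the lower rates), guarantees that the rank-deficient matrices form a proper subvariety of small dimension. Since the space of all valid matrices is strictly larger, a full-row-span matrix exists.

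Each iteration increases $\dim U'$, so there are at most $r$ iterations. Each loses slack $\gamma$ from the design bound and $2/(sn)$ from rounding the rate to an integer $j$. Together these give $\eta=r\gamma+2r/(sn)$.

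The main obstacle is the dimension-counting step. We must show that the union over proper $W\subsetneq U$ of the matrices with row span inside $W$ has strictly smaller size than the full solution space. This cannot be done by a union bound over the possibly exponentially many $W$. Instead, we would compare $\F$-dimensions of the linear spaces $\{M:\text{rows in }V_i\cap W\}$ for each $W$. We would then use that the number of $W$ is at most $q^{r^2}$ while each dimension gap is at least $1$, which forces the gap to beat $q^{r^2}$. The slack $2r/(sn)$ is what we expect to make this work: it should ensure a gap of at least $2r^2$ in $q$-exponent after summing over the $r$ steps.
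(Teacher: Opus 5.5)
\textbf{Part 1.} Your approach is the standard argument behind \cref{thm:sdc-threshold-pair}(1), which is all the paper uses here, since $\tau_k(r)\le R+\gamma<R_{\mc F}\le R_{\mbf V}(U)$. However, your per-block inequality points the wrong way.

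With $H:=\{Mx:x\in W^{\perp}\}$, so $\dim H=d$, the projection of $H$ onto block $i$ has dimension at most $\dim(U\cap V_i)-\dim(W\cap V_i)$. Equivalently, the subspace $H_i\subseteq H$ vanishing on block $i$ has $\dim H_i\ge d-\dim(U\cap V_i)+\dim(W\cap V_i)$. Averaging over the $n$ blocks gives
\[
\mb{E}_i\dim H_i\ge d\,R_{\mbf V}(U,W)>d(R+\gamma)\ge d\,\tau_k(d),
\]
which contradicts the subspace-design bound. The bound you wrote sums to $dn\,R_{\mbf V}(U,W)$ and contradicts nothing. The normalization is also by $n$ blocks, not $sn$.

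\textbf{Part 2 has a genuine gap.} The content of the containment direction is an \emph{upper} bound, for every proper $W\subsetneq U$, on $\dim\bigl(C_k^{\equiv r}\cap\mc M^\star_{\mbf V}(W)\bigr)$. This bound must fall more than $r$ below the easy lower bound $sn\,\phi_{\mbf V}(U,R)\le\dim\bigl(C_k^{\equiv r}\cap\mc M^\star_{\mbf V}(U)\bigr)$. Then the fewer than $q^r$ hyperplanes of $U$ cannot cover all solutions, so some solution has row span exactly $U$.

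Your sentence that Part 1 ``guarantees that the rank-deficient matrices form a proper subvariety of small dimension'' is exactly this claim, asserted rather than derived. Non-containment at a lower rate says nothing by itself about how large a nonzero solution space is at a higher rate. The greedy flag does not replace it. A witness with row span $U'$ and one whose image modulo $U'$ has full row span need not combine into one with row span $U$, because the witness space need not contain an element whose row span is the sum of all row spans. So you are pushed back to counting.

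Your counting sketch also has problems. A gap of $1$ per subspace cannot beat $q^{r^2}$ subspaces. And descending to a code ``just above'' a threshold is backwards: non-containment is only available strictly below the local thresholds.

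\textbf{The missing mechanism} is \cref{lem:profile-quotients,lem:key}. Take the local chain $W=W_1\supsetneq W_2\supsetneq\cdots\supsetneq W_m=\{0\}$ of $W$, with $W_{i+1}$ an inclusion-maximal critical subspace of $W_i$. Quotient by $W_{i+1}$ and pass to the subcode $C_{j_i}$, where $j_i:=j_\gamma(W_i)$. This gives
\[
\begin{aligned}
\dim\bigl(C_k^{\equiv r}\cap\mc M^\star_{\mbf V}(W_i)\bigr)
&\le(\dim W_i-\dim W_{i+1})(k-j_i)+\dim\bigl(C_{j_i}^{\equiv r'}\cap\mc M^\star_{\pi_i\mbf V}(\pi_iW_i)\bigr)\\
&\quad+\dim\bigl(C_k^{\equiv r}\cap\mc M^\star_{\mbf V}(W_{i+1})\bigr).
\end{aligned}
\]
The middle term vanishes; this is where the subspace-design property enters (\cref{thm:sdc-good}). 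By \cref{lem:maximal-quotient}, every nonzero $Y\subseteq\pi_iW_i$ has $R_{\pi_i\mbf V}(Y)>j_i/(sn)+\gamma\ge\tau_{j_i}(r)$, so your Part 1 argument at level $j_i$ rules out every such row span.

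Now recurse into $W_{i+1}$. The thresholds along the chain are nonincreasing (\cref{prop:decreasing-rate}), and $\phi_{\mbf V}(W_i,R_i)=\phi_{\mbf V}(W_{i+1},R_i)$. So the costs telescope to at most $r+\gamma rsn+sn\,\phi_{\mbf V}(W_t,R)$ for some $W_t\subsetneq U$. This is at most $sn\,\phi_{\mbf V}(U,R)-sn\bigl(R-R_{\mbf V}(U)\bigr)+r+\gamma rsn$.

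This also explains the slack $\eta$. The term $\gamma r$ is $\gamma$ times the total chain dimension, which is at most $r$. One $r/(sn)$ absorbs the integer rounding in $j_\gamma$, and the other pays for the $q^r$ hyperplanes. It is not a loss of $\gamma+2/(sn)$ per greedy step.
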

The factor of $r$ difference in the slack from containment to non-containment ($\gamma$ versus $\eta$) is analogous to the extra factor of $r$ both the base and the exponent of the probability bounds in \cref{cor:rrs-equiv}. Of course, subspace design codes are deterministic, so there is no probability here.

\begin{remark}[Comparison to \cite{bcdz25a}]
\label{rem:why-blackbox}
In contrast to \cite{bcdz25a,ggsw} (see \cref{thm:sdc-threshold-pair}), this theorem applies to \emph{any} optimal nested subspace design codes---which includes all known examples of subspace design codes---rather than the existence of one. 
That is, the result of \cite{bcdz25a} says that if $R > R_{\mathcal{F}} + \eta'$ (for some small $\eta'$) then there \emph{exists} a subspace design code that does not avoid $\mathcal{F}$.  In contrast, \cref{cor:sdc-equiv} applies to any (nested) subspace design code.

In \cref{thm:frs-equiv,cor:frs-equiv}, we instantiate this for folded Reed-Solomon codes (see \cref{def: FRS Codes}) with its natural nesting structure. 

In particular, this means that any future results specific to a single family of nested subspace design codes can be transferred to all locally equivalent codes.  

For example, in \cite{bcdz25b}, the authors showed that \emph{all} subspace design codes satisfied certain list-recovery properties, and this implied list-recovery for RLC and RRS.  In \cite{gg25a}, the authors showed that \emph{all} subspace design codes satisfied curve-decodability, and (along with \cite{ggsw}), this implied curve-decodability for RLC, RRS, and RLDPC.

However, one could imagine being able to prove similar results about, say, Folded Reed-Solomon codes only, using techniques that did not natively apply to any subspace design code.  With prior results, this would not transfer to locally equivalent ensembles.  Given \cref{cor:sdc-equiv}, we now know that such a result would indeed transfer.

\end{remark}

In fact, the result that we prove is more general; it applies to any family of \emph{$(\mathbf{V}, \gamma)$-good nested codes}---see \Cref{def:goodness} and \Cref{thm:key}.

\paragraph{Applications to Curve Decoding.}
Lastly, we discuss how our LCL equivalence theorems allows us to get stronger alphabet size bounds for curve decoding. Before discussing curve decoding, we make one remark about applications to list decoding and list recovery.

\begin{remark}[Potential applications to list-decoding and list-recovery]
\label{rem:apply-to-ld-lr}
While we do not focus on list decoding or list recovery in this work, we believe that our techniques will also work to bound the alphabet size for optimal list-decoding of RRS codes achieving the $\eta$-relaxed generalized Singleton bound with alphabet size $n\cdot \exp(1/\eta^2)$, matching the alphabet size achieved by \cite{AlrabiahGGLZ2025} in a more generic manner, and which the techniques of \cite{lms25} do not directly achieve. For list-recovery, we believe that our techniques can achieve an alphabet size of $n\cdot \exp(\ell^2/\eta^3 \cdot \log(\ell/\eta))$, exponentially improving the previous best doubly-exponential bound of $n\cdot \exp(\exp(\log \ell (R+\eta)/\eta))$ \cite{bcdz25a}.
\end{remark}

We first state the formal definition of curve-decoding.  Then we will state our results that improve alphabet size bounds for curve decoding, and therefore proximity gaps and (mutual) correlated agreement, for RLC and RRS. The analogous RLDPC consequence is stated in \cref{thm:rldpc-curve-final,cor:rldpc-proximity-package}.

\begin{definition}[Curve Decoding]
    \label{def:curve-decoding}
    Let $C \subset \F^n$ be a linear code.  Fix a positive integer $\ell$, and fix positive integers $a \geq b$.  Let $\delta \in (0,1)$.  We say that $C$ is $(\ell, \delta, a, b)$-curve-decodable if, for all $u_0, u_1, \ldots, u_\ell \in \F^n$ and every function $f: \F \to C$, whenever
    \[ A = \left\{\alpha \in \F \,:\, \Delta\left( \sum_{j=0}^\ell u_j \alpha^j, f(\alpha)\right) \leq \delta \right\}\]
    has $|A| \geq a$, then there exists $c_0, c_1, \ldots, c_\ell \in C$ so that 
    \[ B = \left\{ \alpha \in A \,:\, f(\alpha) = \sum_{j=0}^\ell c_j \alpha^j \right\}\]
    has $|B| \geq b$.
\end{definition}

\begin{theorem}[Curve Decoding with Smaller Alphabets, Informal; see
\cref{thm:rlc-curve-final,thm:rrs-curve-dec}]
\label{thm:curve-decoding-informal}
Fix $\ell\ge1$ and $\eta, \delta>0$. Let  rate $0<R\le 1-\delta-2\eta$, let $ a=O({\ell^2}/{\eta^3})$
and $ b=O({\ell}/{\eta})$. For all sufficiently large $n$, the following hold with
high probability.
\begin{enumerate}
\item A random linear code of rate $R$ is
$(\ell,\delta,a,b)$-curve-decodable provided
\begin{equation}
  q\ge
  \exp\left(
    O\left(
      \frac{\log(1/\eta)+\ell}{\eta}
    \right)
  \right).
\end{equation}
\item A random Reed--Solomon code of rate $R$ is
$(\ell,\delta,a,b)$-curve-decodable provided
\begin{equation}
  q\ge
  n\exp\left(
    O\left(
      \frac{\ell}{\eta^2}
      +\frac{\ell\log(\ell+1)}{\eta}
    \right)
  \right).
\end{equation}
\end{enumerate}
Moreover, under the same hypotheses, both codes have
$(\ell,\delta,\eps)$ mutual correlated agreement, and threfore perfect correlated
agreement and proximity gap (with slack $\gamma=0$), where
\begin{equation}
\label{eq:analog-a}
  \eps q
  =
  O\left(\frac{n\ell(1-R)}{\eta}+\frac{\ell^2}{\eta^3}
  \right).
\end{equation}
\end{theorem}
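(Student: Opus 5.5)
We will prove \cref{thm:curve-decoding-informal} by exhibiting curve decoding as a row-span constrained LCL family and bounding its size, then feeding this into the non-containment directions of the equivalence theorems. For the RRS bound we use \cref{cor:rrs-equiv}, Statement 1. For RLC we use the standard first-moment bound for random linear codes below threshold, $\Pr[C\text{ contains }\mc F]\le|\mc F|\,q^{-\Omega(\eta n)}$.

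\textbf{Step 1: the bad event as an LCL family.} Suppose $C$ fails $(\ell,\delta,a,b)$-curve decoding, witnessed by $u_0,\dots,u_\ell$, by $f$, and by points $\alpha_1,\dots,\alpha_a\in A$. Form the matrix whose columns are the codewords $f(\alpha_t)$. The $i$-th row must then agree with $\sum_j u_{j,i}\alpha_t^j$ outside a $\delta$-fraction of the columns. Coordinatewise, this says the row lies in a subspace $V_i\subseteq\F^a$ spanned by the Vandermonde-type vectors $(\alpha_t^j)_t$ together with $e_t$ for the error positions $t$ of coordinate $i$. We shall not take all $a$ columns as the profile. Instead, following the rank-sensitive counting of \cite{gg26}, we pass to a minimal sub-configuration of $r=O(\ell/\eta)$ columns. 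The choice is such that the codewords chosen so far either already lie on a degree-$\ell$ curve in $C$ (this supplies $b$), or they give a profile whose row span $U$ has threshold $R_{\mbf V}(U)\ge 1-\delta-\eta$.

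\textbf{Step 2: the potential computation.} For every $W\subsetneq U$ we must show $\phi_{\mbf V}(U,R)<\phi_{\mbf V}(W,R)$ for $R\le 1-\delta-2\eta$. Write $\dim(U\cap V_i)\le \ell+1+(\text{errors in row } i)$ relative to the curve part. Averaging over $i$ gives $\frac1n\sum_i\dim(U\cap V_i)\lesssim (\ell+1)+\delta\cdot(\text{number of columns})$. The genuine content of the curve structure is dimension $\ell+1$. Exceeding it by $\Omega(1/\eta)$ columns therefore forces the gap $\eta$ in the threshold. The choice $a=O(\ell^2/\eta^3)$ lets a pigeonhole or greedy argument over the $a$ points extract a good sub-configuration: roughly, $O(\ell/\eta)$ rounds, each consuming $O(\ell/\eta^2)$ points.

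\textbf{Step 3: counting $|\mc F|$ and choosing $q$.} This is the main obstacle. A naive count of profiles includes the choice of $\alpha_t\in\F$ and of error sets, giving $|\mc F|\le q^{O(r)}\binom{n}{\delta n}^{r}$. The factor $\binom{n}{\delta n}^r\approx 2^{O(rn)}$ is fatal when $r\sim\ell/\eta$ and the exponent is only $\eta n$. The fix is to count rank-sensitively. We count only the subspaces $U\cap V_i$, whose dimensions are governed by the potential, and we charge each row's error pattern against its contribution to the potential, as in \cite{gg26}. The target is $|\mc F|\le q^{O(r^2)}\exp\!\big(O(n(\ell+\log(1/\eta)))\big)$ for RLC. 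For RRS the target is the same up to an extra $\exp(O(n\ell\log(\ell+1)))$, accounting for the Vandermonde structure.

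Plugging these bounds in settles both codes. For RLC, $|\mc F|q^{-\eta n/2}\to0$ requires $\log q\gtrsim(\ell+\log(1/\eta))/\eta$. For RRS, we need $|\mc F|\,r\,(2erk/(\eta(q-n)))^{\eta n/2}\to0$. Since $r=O(\ell/\eta)$, this gives $q\ge n\exp\!\big(O(\ell/\eta^2+\ell\log(\ell+1)/\eta)\big)$, where the $1/\eta^2$ arises from $\log(rk/\eta)$ combined with the counting term. The improved exponent $\eta n/2$ from \cref{rem:improve-lms}, rather than $\eta n/2r$, is exactly what removes the extra factor in the exponent. Finally, the MCA, CA and proximity gap consequences with the stated $\eps q$ follow from curve decodability by the known reductions in \cite{gg25a,CarmonGHLLPS2026}. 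The counting of $a$ bad points, plus at most $n\ell(1-R)/\eta$ points handled by the agreement-set argument, yields \eqref{eq:analog-a}.
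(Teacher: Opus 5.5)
There is a genuine gap, in two places, plus a missing input for the MCA claim.

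\emph{The reduction in Steps 1--2 is not established.} The bound $\dim(U\cap V_i)\le \ell+1+(\text{errors in row } i)$ gives a negative potential only when $\dim U$ is comparable to the number of retained columns. That means keeping linearly independent columns, so that $U=\mathbb{F}^S$, and then you need $|S|\ge(\ell+1)/\eta$, not $\Omega(1/\eta)$. This is exactly the paper's high-rank case (\cref{lem:high-rank,lem:high-rank-family-bound}).

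When the witness matrix has rank $d<D=\lceil(\ell+1)/\eta\rceil$, no such $S$ exists, and your dichotomy does no work, because the failure configuration is already curve-free by hypothesis. Curve-freeness only forbids $b$ columns on a curve. So a row whose agreement set has at most $b-1$ points can have codimension $c(A_i)=0$ while still counting toward coverage. If each unit of codimension is supported on $h$ columns, the double-counting loss is about $\delta b/h$ per row, and this is below $\eta/2$ only if $h\ge 2b/\eta$.

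The missing idea is the pruning of \cref{prop:pruning}. Take an inclusion-minimal $A'$ with $|A'|\ge b+h\,c(A')$ and $h=\lceil 2b/\eta\rceil$. Then $|A'|=b+h\,c(A')$, and by minimality $|A_i\cap A'|\le b+h\,c(A_i\cap A')$ for every row. Curve-freeness gives $c(A')>0$, the translation of \cref{lem:c-translation} makes $c(A')=\dim U$, and \cref{prop:low-rank-gap} gives $\sum_i c(A_i)\ge(1-\delta-\eta/2)dn$. These configurations have $b+hd$ columns, up to $\Theta(\ell^2/\eta^3)$. What is $O(\ell/\eta)$ is the dimension $d$, and that is the parameter \cref{thm:low-lms} depends on.

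\emph{Step 3 states a target rather than a mechanism, and the target has the wrong shape.} The needed tool is \cref{prop:canonical-config}: $U\cap K_A(A_i)$ is determined by at most $\ell+1+c(A_i)$ points. So a row of codimension $c_i$ has roughly $a_d^{\ell+1+c_i}$ possibilities, and for RLC the factor $a_d^{c_i}$ is cancelled by $q^{-c_i}$ inside $q^{n\phi_{\mbf V}(U,R)}$ (\cref{lem:rlc-fixed-pair}).

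A dimension-independent bound $|\mc F|\le q^{O(r^2)}\exp(O(n(\ell+\log(1/\eta))))$ cannot hold. The rank-$D$ configurations alone give about $\binom{D}{\lceil(1-\delta)D\rceil}^{n}=2^{\Omega(\ell n/\eta)}$ distinct profiles for constant $\delta$, one per admissible $(S_1,\dots,S_n)$. So a uniform decay $q^{-\eta n/2}$ is not enough even for RLC. The paper instead uses $q^{n\phi_{\mbf V}(U,R)}\le q^{-\eta dn-w}$, whose exponent grows with the dimension $d$ and the excess codimension $w$, and stratifies the union bound by $(d,w)$.

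For RRS, \cref{thm:low-lms} only gives decay of order $(Dk/(\eta q))^{\eta n/2+w/(2D)}$, which does not improve with $d$. So the $2^{\Theta(Dn)}$ count must be paid at rate $\eta n/2$, and this, not $\log(rk/\eta)$, is where the $\ell/\eta^2$ term comes from. Indeed, plugging your claimed RRS count into \cref{cor:rrs-equiv} would give $q\ge n\exp(O((\ell\log(\ell+1)+\log(1/\eta))/\eta))$ with no $\ell/\eta^2$ term at all, which signals the count is not attainable. Also, $\mc F$ does not depend on the code, so there is no separate ``Vandermonde'' factor in the count. The RLC/RRS difference lies entirely in the fixed-pair probability, which the paper handles for RRS by restricting to $d\le D$ (so $c_i\le D$) and by feeding $w$ into the gap $3\eta/2+w/(dn)$.

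\emph{The MCA/CA/PG part needs an input you omit.} \cref{thm: main-ca-mca-gg} also requires $(\delta(1+\ell/(b-\ell)),L)$-list decodability. The $n\ell(1-R)/\eta$ term in \eqref{eq:analog-a} is exactly $n\ell L$ with $L=\lceil(1-R)/\eta\rceil$, supplied by \cref{thm:rlc-ld,thm:rrs-ld} at the same alphabet sizes. You also need to check that $\delta(1+\ell/(b-\ell))<1-R-3\eta/2$.
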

Before we compare our results to prior work, we explain in \Cref{rem:counting-conceptual} how our previous results on LCL equivalence theorems feed in to \Cref{thm:curve-decoding-informal}, and the additional technical and conceptual contribution that is required on top of it.

\begin{remark}[Conceptual contribution on top of our LCL results]
\label{rem:counting-conceptual}
With the LCL transfer theorems in place, it remains to count $|\mc{F}|$ to union bound more efficiently in the non-containment direction of the theorems. To do so, we implement a profile canonicalization method\footnote{\cite{gg26} canonicalized list-decoding profiles through the $k$-weakly partition hypergraphs of \cite{AlrabiahGGLZ2025}, and for list-recovery through the Brascamp-Lieb inequalities of \cite{bcdz25a}. The list-recovery canonicalization is not exact as no single combinatorial method is currently known for capturing list-recovery.} through the subspace-design code pruning techniques of \cite{Sri25, AHS26}. The application of pruning to curve decoding was first observed in \cite{gg25a}, and here we implement it directly at the level of local profiles.

In particular, we prove curve-decoding for the random ensembles of codes directly at the level of local profiles, without appealing to properties of subspace design codes, or to the results of \cite{gg25a} on curve decoding for subspace design codes. We believe this is of particular interest, and more firmly shows that previous arguments that were crafted for and seemed specific to subspace-design codes can directly be implemented at the level of local profiles. This profile-level viewpoint is also what lets combine the probability bounds with the refined counting in our final union bounds.
\end{remark}
\renewcommand{\arraystretch}{1.5}
\begin{table}[H]\centering
\begin{tabular}{|c|c|c|c|}
\hline
    & Source & $a$ & $|\F|$ \\
    \hline\hline
    \multirow{3}{*}{RLC}
        & \cite{gg25a}&  \cellcolor{red!10} $(\ell/\eta)^{O(\ell)}$ & \cellcolor{red!10}
        $\exp\left( (\ell/\eta)^{O(\ell)} \right)$\\
    \hhline{~|-|-|-}
        &\cite{ggsw} & \cellcolor{green!10}
        $O(\ell^2/\eta^3)$& \cellcolor{yellow!10} $\exp(\ell^2/\eta^4)$\\
    \hhline{~|-|-|-}
        &\cite{YZ26} & \cellcolor{red!10}
        $(1/\eta)^{O(\ell/\eta)}$& \cellcolor{yellow!10} $\exp\left(\frac{\ell \log (1/\eta)}{\eta}\right)$\\
    \hhline{~|-|-|-}
        &This work &\cellcolor{green!10}$O(\ell^2/\eta^3)$ & \cellcolor{green!10} $\exp\left(\frac{\ell+\log(1/\eta)}{\eta}\right)$ \\
    \hline\hline
    \multirow{3}{*}{RRS}
        &\cite{gg25a} & \cellcolor{red!10} 
        $(\ell/\eta)^{O(\ell)}$
    &
        \cellcolor{red!10} $n \cdot \exp\left( (\ell/\eta)^{O(\ell)} \right)$
        \\
    \hhline{~|-|-|-}
        &\cite{ggsw} & \cellcolor{green!10} $O(\ell^2/\eta^3)$
        & \cellcolor{yellow!10}
        $n \cdot \exp(\ell^4/\eta^7)$ 
      \\
          \hhline{~|-|-|-}
        &\cite{YZ26} & \cellcolor{red!10} 
        $\left( 1/\eta \right)^{O(\ell/\eta)}$
        & \cellcolor{yellow!10}
        $n \cdot \exp(\ell^2/\eta^3 )$
        \\
    \hhline{~|-|-|-}
        &This work & \cellcolor{green!10} $O(\ell^2/\eta^3)$& \cellcolor{green!10} $n \cdot\exp\left(\frac{\ell}{\eta^2}+\frac{\ell\log \ell}{\eta}\right)$\\
    \hline
\end{tabular}
\caption{Results for curve decodability of random linear codes (RLC) and random Reed--Solomon codes (RRS), with high probability.  We report results for codes of constant rate $R \in (0,1)$ and sufficiently large length $n$ that are $(\ell, \delta, a, b)$-curve-decodable for $\delta =1-R-\eta$ and $b = O(\ell/\eta)$.  This is the value of $b$ relevant for obtaining results for proximity gaps, CA, and MCA. 
The boxes colored green have the best known dependence on $\ell$ and $\eta$.  The boxes colored yellow and red have somewhat worse and significantly worse dependence, respectively.}\label{tab:CD}
\end{table}

We end this section with a comparison of curve decoding alphabet size bounds against prior work, shown in \Cref{tab:CD}. The brief history shown in \cref{tab:CD} can be summarized as follows.  The work \cite{gg25a} was the first to establish $(\ell, \delta, a,b)$-curve-decoding for codes of rate $R$ with $\delta = 1 - R - \eta$; this is optimal as $\eta \to 0$.  However, that work did not result in a good value for the parameter $a$ when the curve degree $\ell$ is large.  
It also required the field size to be quite large, doubly-exponential in $\ell$.  The value of $a$ was ameliorated in \cite{ggsw}; this work also substantially improved the requirement on the alphabet size, but still left room for improvement.  The work \cite{YZ26} further improved the alphabet size, but at the cost of a worse guarantee on $a$.

We note that when the degree of the curve being considered is constant---that is, if $\ell = O(1)$, such as in the case of line-decoding---we achieve alphabet sizes matching the best known upper bounds for the corresponding family even for list-decoding. A similar alphabet size was achieved by \cite{YZ26} for curve-decoding but with a worse dependence on the parameter $a$.\footnote{As written, \cite{YZ26} does not directly establish curve-decoding---they focus on PG and CA---but their methods for correlated agreement do directly achieve curve-decoding as well with the parameters listed in our table. Lifting these bounds to proximity gaps produces the same bounds as claimed in \cite{YZ26}.} This gap is due to using a pruning algorithm based on \cite{KoppartyRSW2023} instead of the improvements of \cite{Sri25, AHS26}.

Our work attains the best of both worlds: we obtain a good alphabet size---improving over \cite{YZ26} for large $\ell$---and we match the value of $a$ from \cite{ggsw}.

\subsection{Our Techniques}\label{sec:overview}
Our proof contains three main ingredients: an improved analysis of the
low-rate RRS process of \cite{lms25} in \cref{sec:rrs-improvement}, a black-box argument that converts
non-containment into containment for nested codes in~\cref{sec:black-box}, and a compressed
count of the local profiles witnessing failure of curve decoding in~\cref{sec:profile-counting}. We put these together in \cref{sec:alphabet} to get alphabet size bounds for curve decoding. We briefly discuss our techniques one-by-one, corresponding to the three ingredients and sections.

\paragraph{Improved Non-containment for RRS.}
We first prove non-containment statement in \cref{cor:rrs-equiv}. The proof follows a similar outline to that in \cite{lms25}. The key result (\Cref{thm:low-lms}) establishes that the probability that a particular profile $(\mathbf{V}, U)$ is contained in a random RS code is not too large, with improved parameters relative to  \cite{lms25} (see \cref{prop:lms-fixed-pair} for their analog).

We analyze a random process where we expose each evaluation point one at a time; this same process was also analyzed in \cite{lms25}. At each step of the process, we expose an evaluation point $\alpha_t$, and we consider the space $S_t$ of potential bad sets of codewords (corresponding to low-degree polynomials $f_1, \ldots, f_r$ for which  $(f_1(\alpha_t), \ldots, f_r(\alpha_t))$ satisfies constraint $t$.
As in \cite{lms25}, we define a dynamic potential that counts the dimension in our current candidate set $S_t$, minus the anticipated number of constraints the future evaluation point reveals would impose.
This dynamical potential starts at $-\eta n$ due to the low-rate hypothesis; at each step, it can increase only when an unlikely \emph{defect event} occurs, and by at most one. Therefore, after all evaluation points have been revealed, with high probability it remains negative, which implies $S_n$ has dimension zero.  In this case, there are no ``bad'' sets for the profile contained in the random RS code. This is the same outline in \cite{lms25} that we describe in the language of our paper.

The main insights for our two improvements discussed in \cref{rem:improve-lms} is somewhat technical.  
The first insight pertains to the handling of a reduction in dimension $d_t$ of the current solution space, which is always at most $r$. The accounting in \cite{lms25} separates the process into $r$ levels corresponding to this dimension, and distributes the $\eta n$ initial potential gap equally among them. The process is successful if, at every level, the potential slack is not consumed by the defect events, so they can tolerate $\eta n/r$ many events. In contrast, we handle reduction in $d_t$ dynamically. We track the sum of the (normalized) dynamic potential and $d_t$ as our monovariant, so initially it is at most $-\eta n+r$. This enables us to tolerate $\eta n-r=\Omega(\eta n)$ many defect events, which is the source of our improvement to the exponent (i.e. first bullet point of \cref{rem:improve-lms}).

Our second improvement in \cref{rem:improve-lms} is an improved bound on the defect event probability. Roughly, the defect is some the event where we anticipate an evaluation point to impose $c$ constraints by counting dimensions over $\K=\mb{F}(X)$, but it imposes fewer than $c'$ constraints once we specify evaluation point $X=\alpha_t$. This corresponds to the vanishing of a generically non-zero determinant minor upon specialization, and is therefore controlled by the degree of the polynomials. Our insight is a better \emph{presentation} for these polynomials without degree below-up to $r^{r}$ in \cite{lms25}  as dimension $d_t$ is reduced. The crucial insight is to remember the ``numerators and denominators'' of quotient steps we have taken in the past, to maintain a good presentation.
This gives the improved defect probability bounds.

\paragraph{From Non-containment to Containment.}
We begin by explaining how to reduce containment to proving upper bounds on the following quantity: for each row span $U$, let $X(U)$ be the number matrices satisfying $\mbf{V}$ and has rows in $U$ and columns in $C$. 
In contrast, the containment in \cref{def:lcl-informal} where we require row span \emph{exactly} $U$ not inside $U$. To bridge this gap, we use an idea of \cite{lms25} for RLCs: suppose $C$ does not contain any witness matrix satisfying $\mbf{V}$ and with row span exactly $U$, then
\begin{equation}
\label{eq:key-count-overview}
X(U)\le \sum_{W\subsetneq U}X(W).
\end{equation}
Lower bounding $X(U)$ in terms of $\phi_{\mbf{V}}(U, R)$ usually follows from constraint counting. An upper bound on $X(W)$ is similar when $C$ is an RLC, but this is the entire difficulty for RRS and arbitrary subspace design code. We motivate our approach first with RRS codes.

Observe that upper bounding $X(W)$ is has natural non-containment flavor instead, which motivates our
reduction from containment to non-containment. The first step is also suggested by
\cite{lms25}. Fix any $W\subset U$, choose proper
subspace $W_1\subsetneq W$ attaining the maximum in $R_{\mbf V}(W)$. If we quotient by $W_1$ and we consider an RRS code with rate just below $R_{\mbf{V}}(W)$, the resulting quotient profile has negative potential, so our non-containment direction of \cref{cor:rrs-equiv} makes it unlikely to occur. Climbing back to the original rate  incurs a controlled dimension
cost, so we may upper bound $X(W)$ in terms of $\phi_\mbf{V}(W, R)$. Since $R>R_{\mbf{V}}(U)$, $\phi_\mbf{V}(W, R)<\phi_\mbf{V}(U, R)$, which gives a contradiction to \cref{eq:key-count-overview} upon checking appropriate rate slacks.

The quotient handles profiles that remain in $W/W_1$, but it annihilates the candidates whose row
span is already contained in $W_1$.  Our first new idea is to handle
these candidates recursively: inside $W_1$, we choose its own critical proper subspace $W_2$ and repeat.  This produces a \emph{local chain} of
critical subspaces $W_j$ and a decreasing chain of local thresholds rates; see
\cref{def:local-chains,prop:decreasing-rate}.  The chain specifies both which quotient $W_j$ to take and what rate $R_j$ RRS code we descend to.  When we climb back to the original rate, the
resulting dimension costs telescope to the potential $\phi_{\mbf{V}}(W, R)$, contradicting \cref{eq:key-count-overview}.  This gives the non-containment statement in \cref{cor:rrs-equiv}.

Our second insight is that the lower-rate non-containment conditions
used by this argument is a deterministic property of a nested code
family $C_1\subset \dots \subset C_k\subset(\F^s)^n$ where $\dim C_j=j$. D
We call this property $(\mbf V,\gamma)$-\emph{goodness} (see
\cref{def:goodness}). Dscending to $C_j$ is exactly the analog of ``lower RRS to rate $j/sn$'', and the natural nesting of RRS codes is good with
high probability.  For subspace-design codes, the
subspace-design guarantee certifies goodness deterministically with
the same slack $\gamma$.  The argument therefore applies to every
$(\gamma,r)$-optimal nested subspace design code, proving the LCL equivalence theorems for these codes
(see \cref{thm:sdc-good,thm:sdc-equiv}), rather than merely establishing the
existence of one suitable code (see \cref{rem:why-blackbox}).  Finally,
the bounds of \cite{GK16} certify goodness for the natural nesting of
folded Reed--Solomon codes, giving our instantiation in \cref{thm:frs-equiv}.

\paragraph{Recap of the Curve Decoding Profile Family.}
Recall row-span constrained LCL properties $\mc{F}$ from \cref{def:lcl-informal}.
It was shown in \cite{ggsw} that curve decodability can be seen as such a property via some profile family $\mathcal{F}=\mc{F}_{\ell, \delta, a, b}$. We briefly recap this construction.

Consider a ``bad set'' $X \subset \F^n$ for $(\ell,\delta, a,b)$-curve decodability.  Such a set has size $a$, and is indexed by the set $A = \{ \alpha \in \F \,:\, \Delta(\sum_{j=0}^\ell u_j \alpha^j, f(\alpha) ) \leq \delta \}$ as in \Cref{def:curve-decoding}.  The $\alpha$th element of $X$ (for $\alpha \in A)$ represents the codeword $f(\alpha) \in C$ that is $\delta$-close to $u(\alpha)$, for some degree-$\ell$ curve $u:\F \to \F^n$.
For $X$ to be bad, it should be the case that (1) for all $\alpha \in A$, the $\alpha$th element of $X$, call it $X[\alpha]$, is $\delta$-close to $u(\alpha)$, for some degree-$\ell$ curve $u$; but (2) there is no set of $b \leq a$ elements of $X$ that themselves lie on a degree-$\ell$ curve.  Below, we explain how to capture this property as a collection of LCL profiles $(\mathbf{V}, U)$.  Item (1) is captured by $\mathbf{V}$, and Item (2) is captured by $U$.

To see how to capture Item (1), 
 let $A_i \subset A$ be the set of $\alpha$ so that $X[\alpha]$ and $u(\alpha)$ agree in the $i$th coordinate. Then consider 
\[ V_i = K_A(A_i) := \left\{ y \in \F^A : y|_{A_i}\text{ agrees with a polynomial of degree at most }\ell\right\}. \]
Recalling that $M_X$ is the matrix with the elements of $X$ as columns, it is not hard to see that if $X$ satisfies (1) with agreement sets $A_i$ as above, then the $i$th row of $M_X$ lies in $V_i$ for each $i$.

Item (2) is captured through the row-span constraint $U$.  We say that $U$ is \emph{$(\ell,A,b)$-curve-free} if, for all sets $B \subset A$ of size $b$, $U|_B$ is not contained in $K_B(B) \subset \F^B$.  Then it can be shown that for any $X$ that satisfies (2), $M_X$ has a row span $U$ that is $(\ell,A,b)$-curve-free.

Thus, the family $\mathcal{F}$ that captures curve decodability consists of all $(\mathbf{V}, U)$ so that $\mathbf{V}$ looks like the above (for some choice of $A_1, \ldots, A_n$), and so that $U$ is $(\ell,A,b)$-curve-free.  To bound the size of $\mathcal{F}$, the work \cite{ggsw} counted the number of such $\mathbf{V}$ and such $U$.  Their count was dominated by $2^{an}$, the number of ways to choose the sets $A_i$.
\paragraph{Compressing the Curve Decoding Profile Family.}
Our main improvement in this direction comes by improving the bound on the size of the curve decoding profile family $\mathcal{F}=\mc{F}_{\ell, \delta, a, b}$.  This leads to a smaller union bound over all the profiles in the family $\mathcal{F}$, which means we can get away with less concentration for each profile $(\mathbf{V}, U)$.  In prior work, the field size $|\F|$ had to be taken fairly large to get the desired concentration; a better union bound means that we can take $|\F|$ smaller.

Our first observation inspired by \cite{gg26} to do a more careful dimension-aware counting. We emphasize further conceptual contributions from our canonicalization. For RLC and RRS, we prove curve decodability directly, without go through equivalence theorems through subspace design codes or appealing to \cite{gg25a}. The key idea is do pruning inspired by \cite{AHS26,Srivastava2025} directly on the level of local profiles, instead of the subspace design codes. This perspective is what gives us our canonical profiles, and also improves the count and final alphabet size that would arise from counting profiles directly via \cite{gg26} techniques and \cite{ggsw} equivalences.

We now present a simplified version of our idea, the first step for dimension-aware counting is to observe that for any choice of $A_i$ where $U \cap K_A(A_i)$ has a fixed codimension $c$ in $U$, there is a ``canonical'' $T \subset A_i$ so that $|T| = \ell + 1 + c$, and so that $U \cap K_A(T)= U \cap K_A(A_i)$.  Thus, to count the number of spaces $V_i \cap U$, it suffices to count the number of such $T$.  This is at most $a^{\ell + 1 + c}$, which is much smaller than $2^a$. When we combine for the $A_i$'s, we obtain $2^{(\ell+1)n+\sum_i c_i}$ as the profile count for fixed codimensions $c_1, \dots, c_n$ of $U \cap K_A(A_i)$. 

The bound arising from canonical profile counting is particular nice for RLC, where the probability that it contains a particular $r$-local LCL profile $(\mathbf{V}, U)$ with particular codimensions $c_i$ is $q^{Rn\dim U-\sum_i c_i}$ (see \cref{lem:rlc-fixed-pair}); so the $c_i$'s cancel at each level. This idea is enough to improve the alphabet size for curve-decoding to $(\ell/\eta)^{O(\ell/\eta)}$ for RLC.

For RRS, we have the improved LCL equivalence theorem (\cref{cor:rrs-equiv}), but that does not allow us to break things up by codimension in the same nice way as RLC. Naively, we can bound $c_i\le \dim U\le r$, but that is rather lossy.
Instead, we prove a ``pre-processing'' theorem, which states that, with high probability, any bad example for curve decoding for random RS codes must be fairly low-dimensional, of dimension at most $D =O( (\ell + 1)/\eta)$, thereby allowing us to bound $c_i\le D$ instead. Using these ideas, we reduce the union bound needed when we apply our result from \Cref{cor:rrs-equiv}(1) and improve the alphabet size for curve-decoding to $(\ell/\eta)^{O(\ell/\eta^2)}$ for RRS.

The RLC bound of $(\ell/\eta)^{O(\ell/\eta)}$ and RRS bound of $(\ell/\eta)^{O(\ell/\eta^2)}$ we attain above are discussed in \cref{rem:nice-stop}. Our final alphabet size bounds are slightly better by some log-factors in the exponent; these additional improvements come from a refinement to this counting argument, which was suggested by ChatGPT 5.6-Sol.

\subsection{Related Work}\label{sec:related}

\paragraph{Local Equivalence Theorems.}  As explained in \Cref{sec:overview}, \cite{lms25} showed that any property captured by the avoidance of a small collection $\mathcal{F}$ of LCL profiles $(\mathbf{V}, U)$ can be transferred from random linear codes to random RS codes with approximately the same rate.  In more detail, that work showed that both families of random ensembles have a ``threshold rate'' $R^\star$, so that a random code in the ensemble of rate less than $R^{\star} - \eta$ will avoid $\mathcal{F}$ with high probability, while a code of rate more than $R^\star + \eta$ will with high probability not avoid $\mathcal{F}$; and it showed that these thresholds are approximately the same for random linear codes and random RS codes.  This line of work goes back to \cite{mrrsw20} (see also \cite{GLMRS21,gm22}), which showed a similar connection between random linear codes and Gallager's ensemble of LDPC codes.  However, the link between random linear codes and random RS codes is much more involved, and it is this link that we strengthen in our work. 

The area of subspace design codes at this time already had exciting developments: \cite{Sri25, CZ25, AHS26} showed optimal list-decoding of subspace-design codes by , and new connections to Brascamp-Lieb inequalities were brought in by \cite{bcdz25a} in further work.

This sparked an exciting line of work, the work \cite{bcdz25b} showed that a similar local equivalence result holds for \emph{subspace design codes}.  In more detail, they showed that, for an LCL family $\mathcal{F}$, all subspace design codes of rate less than $R_{\mathcal{F}}$ avoids $\mathcal{F}$.  They show a converse, which is that for any $R > R_{\mathcal{F}}$, there is some subspace design code of rate $R$ that does not avoid $\mathcal{F}$.  This connection was exploited by \cite{bcdz25a}, who used it to show that random linear codes and random RS codes achieve list-recovery capacity.
We note that the equivalence theorems and algorithms of \cite{bcdz25a} have had applications beyond coding theory; for example, they are the primary engine in the breakthrough result of \cite{ChatterjeeGGRT2026}, showing the bipartite matching is in NC.

Our secondary proof of the containment of all optimal nested subspace design codes generalizes \cite{bcdz25b}'s equivalence theorems to now capture \textit{all} optimal nested subspace-design codes which include all known explicit constructions. 

\paragraph{Correlated Agreement and Curve Decoding.} Correlated agreement (as well as the related notions of proximity gaps and mutual correlated agreement) is relevant to interactive protocols.  Briefly, in these settings one may wish to test whether or not every point on a degree-$\ell$ curve $u(T)$ is close to a code $C$.  If $C$ has $(\ell, \delta, \eps)$-CA for sufficiently small $\eps$, then one may choose a random point $u(\alpha)$ on the curve and check its distance to $C$.  If $\Delta(u(\alpha), C) \leq \delta$, then one may conclude that all points on $u$ are $\delta$-close to $C$, and one would be wrong with probability at most $\eps$.  This was first observed in \cite{RVW13}, and further work has developed both guarantees and impossibility results~\cite{AHIV17,BKS18,BGKS20,bciks23,whir,gg25a,dg25,cs25,bchks25,hkk2026,ggsw,YZ26}.  See~\cite{ABF26} for a survey of work on proximity gaps and related notions.

It was known that the best possible value of $\delta$ (for $(\ell, \delta, a, b)$-curve decodability, or $(\ell, \delta, \eps)$-CA) must satisfy $\delta \leq 1 - R$ for a rate-$R$ code; but until recently it was not known how to achieve this, with prior constructions requiring $\delta < 1 - \sqrt{R}$.  The work \cite{gg25a} introduced curve decoding, and showed that a family of codes called \emph{subspace design codes} can have $(\ell, \delta, a, b)$-curve decodability for $\delta$ approaching $1 - R$; this implies proximity gaps, CA, and MCA with $\delta$ approaching $1 -R$.  However, subspace design codes necessarily have larger alphabets (in particular, they are $\F$-linear codes over an alphabet $\F^s$, for sufficiently large $s$), so it is also interesting to ask about $\F$-linear codes with alphabet $\F$ itself, ideally for $\F$ not too large.
Via techniques from \cite{bcdz25b,lms25}, the work \cite{gg25a} extended their results for subspace design codes to random ensembles of codes, including random linear codes and random RS codes.  Random RS codes are of particular interest, as Reed--Solomon codes are standard in applications.  However, as discussed above, the other parameters for curve decoding were suboptimal.

The work \cite{ggsw} improved the results of \cite{gg25a} for random ensembles of codes by more directly casting curve decodability as an LCL property; the work \cite{gg25a} had gone through a proxy called $\bigvee$-decoding, which led to some parameter loss.  As shown in \Cref{tab:CD}, the work \cite{ggsw} improved the ``$a$'' parameter in curve decodability (and the corresponding parameter $a = |\F| \cdot \eps$ for proximity gaps, CA, and MCA), and also substantially improved the required alphabet size over \cite{gg25a}.  

The work \cite{YZ26} took a different approach to random ensembles, considering a syndrome-based view.  That work studied CA for random linear codes and random RS codes, and further improved the dependence on $\eta$ and $\ell$ in the alphabet size $|\F|$, for $(\ell, \delta = 1 - R - \eta, \eps)$-CA.  However, the parameter $a = |\F|\cdot \eps$, the number of points that must be ``close'' to $C$ on a degree-$\ell$ curve in order to conclude that all points on the curve are close, was worse, with an exponential (rather than polynomial) dependence on $\eta$ and $\ell$.

\paragraph{Alphabet Size for List Decoding and List Recovery.}
There has also been extensive work on alphabet-size for random and explicit constructions of generalized Singleton bound achieving list-decoding codes \cite{Elias1957,GR08,GuruswamiX2022, GuoR2022,BrakensiekGM2023, JeronimoMST2025, AlrabiahGGLZ2025, LiS2025, lms25, SrivastavaT2025, bcdz25a,gg26}. Related works such as \cite{LiS2025, JeronimoS2025,GoyalG2025-Structure,ggh26} also work on list-recovery alphabet sizes. Our methods most directly build on the framework of \cite{gg26}.

\paragraph{Concurrent Works}
In concurrent and independent work, Yuan and Zhu~\cite{rag} extend the LCL threshold equivalence of \cite{lms25} to random one-point algebraic geometry codes under explicit conditions.
They also identify the gap in the above-threshold argument of \cite{lms25} for random Reed--Solomon codes and repair it using a related argument that follows chains of subspaces and apply non-containment statements to smaller subcodes.
Our work formulates this containment mechanism deterministically for row-span constrained profiles and suitable nested code families, including optimal nested subspace design codes, and additionally strengthens the RRS non-containment bound and compresses curve-decoding profiles to improve alphabet sizes.

In a breakthrough, \cite{BrakensiekCPZZ2026} showed that all Reed-Solomon codes of fixed rate $R$ over sufficiently large prime-fields are list-decodable to capacity $\delta = 1-R-\eps$ with polynomial $q^{O_{R, \eps}(1)}$ list-sizes. Subsequent work \cite{rsldmca2026,jeronimo2026} established the same quantitative results for proximity gaps and mutual correlated agreement; see \cite{HarshaKS2026} for an exposition. These results are algorithmic and apply to arbitrary evaluation points. Our work uses LCL equivalence and profile counting to obtain guarantees for random evaluation points in arbitrary characteristic, and achieve the best known proximity gap and MCA error guarantees. 

\cite{AlrabiahGG2026} recently gave an explicit construction of Reed-Solomon codes over large extension fields that are near-optimal subspace design codes over a subfield. They used this to show that these codes consequentially have list-decoding and proximity-gaps at capacity. Our results give randomized constructions over smaller fields, provide guarantees for bounded-degree curves, and improve the alphabet-size bounds for random linear codes.

Recent works on lower-bounds for list-decoding and proximity gaps close to capacity \cite{Zhu26,DKTZ26, AlrabiahGG2026-Johnson} concern particular evaluation points and parameter regimes, and do not apply in our setting of random choices of evaluation points.

\subsection{Acknowledgment of LLM Use}
ChatGPT-5.6-Sol was used in the following ways:
\begin{itemize}
    \item In \cref{sec:rrs-improvement}, we used ChatGPT to understand prior work. The dynamic accounting that improves the exponent in \cref{thm:low-lms} was made by the authors. After reading our proof, ChatGPT suggested the idea of presentations \cref{eq:presentation} to further improve the numerator of the base from $r^{O(r)}k$ to $O(rk)$. See \cref{rem:improve-lms} for a detailed comparison with prior work.
    \item  In \cref{sec:profile-counting}, we observed that the canonical-profile ideas of \cite{gg26,Sri25,AHS26} yield alphabet size bounds of $(\ell/\eta)^{O(\ell/\eta)}$ for RLC and $n(\ell/\eta)^{O(\ell/\eta^2)}$ for RRS (see \cref{rem:nice-stop}). In particular, the method of canonicalization and defining \cref{eq:def-low-strata} was by us. ChatGPT then suggested a double-counting refinement (see \cref{thm:codim-N-count}), that improves the RLC bound from $(\ell/\eta)^{O(\ell/\eta)}$ to
    
\[\exp\left(O\left(\ell\log\ell+\frac{\log(1/\eta)}{\eta}+\frac{\ell}{\eta} \right)
\right).\]

The authors subsequently observed that the counting could be further improved to remove the first term of $\ell\log \ell$. ChatGPT also suggested an RRS refinement inspired by \cite{gg26} (see \cref{lem:low-family-count}), which improves the RRS alphabet size bound from $n(\ell/\eta)^{O(\ell/\eta^2)}$ to
\[
n\exp\left(
O\left(\frac{\ell\log\ell}{\eta}+\frac{\ell}{\eta^2}\right)\right).
\]
\item Outside of the contributions described above, ChatGPT was used only for copyediting and proof checking. All proofs appearing in the manuscript were written by the authors. All responsibility for errors  lies with the authors.
\end{itemize}

\section{Preliminaries}
\label{sec:prelim}

We begin by introducing the basic coding-theoretic definitions we will be using. For any two vectors $x, y$ in $\Sigma^n$, where $\Sigma$ is some alphabet, we define the fractional Hamming distance $\Delta(x, y) = \frac{1}{n} |\{i\in [n]: x_i \ne y_i\}|$ to be the fraction of coordinates where they differ.
For a set $S\subset \Sigma^n$, we define $\Delta(x, S) = \min_{y\in S} \Delta(x,y)$
to be the fractional distance of $x$ to its closest vector in $S$.
Throughout this paper, all distances are taken to be fractional unless stated otherwise.

\begin{definition}[Additive codes]\label{def: additive-code}
    Let $\F$ be the finite field with $q$ elements and $\Sigma = \F^s$ for some positive integer $s$.
    A code $C \subset \Sigma^n$ is said to be $\F$-additive (or just additive when the field $\F$ is clear from context) if $C$ is an $\F$-linear subspace of $\Sigma^n$.
\end{definition}

The two fundamental quantities associated with a code are its rate and distance.
For a code $C \subset \Sigma^n$, we define its (relative) distance as $\delta(C) = \min_{x, y \in C, x\ne y} \Delta(x,y)$.
Moreover, its rate $R(C)$ is defined as $R(C)= \frac{1}{n} \log_{|\Sigma|}|C|$.

Following \cite{ABF26}, we define the \emph{$r$-fold interleaving} of a code $C\subset(\F^s)^n$ by
\begin{equation}
  C^{\equiv r}
  :=
  \left\{
    (c^{(1)},\ldots,c^{(r)}):c^{(j)}\in C\text{ for every }j\in[r]
  \right\}.
\end{equation}
We view an element of $C^{\equiv r}$ as an $sn\times r$ matrix over
$\F$, whose $(i,h)$-th row is
$(c^{(1)}_{i,h},\ldots,c^{(r)}_{i,h})$.  When $s=1$, this is the usual
$n\times r$ matrix representation.

We focus on three families of random codes, defined in \cref{subsec:codes}.

\subsection{Row-span Constrained Local Properties}
\label{subsec:lcl}

As described in \cref{sec:overview}, a \emph{local profile} gives a set
of coordinate-wise linear constraints that a ``bad'' witness matrix
$M$ might satisfy.  We first recall local profiles from \cite{lms25},
as well as row-span constrained LCL properties from \cite{ggsw}.

Given an $\F$-vector space $V$, let $\mc{L}(V)$ denote the set of
linear subspaces of $V$.

\begin{definition}[$r$-Local Profile and Containment,
  {\cite[Definitions~3.1--3.2]{ggsw}}]
\label{def:local-profile}
Let $\F$ be a finite field and let $n,r\in\N$.  An $r$-local profile is
a tuple
\begin{equation}
  \mbf{V}=(V_1,\ldots,V_n)\in\mc{L}(\F^r)^n.
\end{equation}
An $r$-local pair is a pair $(\mbf{V},U)$ with
$U\in\mc{L}(\F^r)$.
A code $C\subset\F^n$ is said to \emph{contain} $(\mbf{V},U)$ if there
exists a matrix $M\in C^{\equiv r}$ such that:
\begin{itemize}
\item the $i$th row of $M$ lies in $V_i$ for every $i\in[n]$; and
\item $\on{rowspan}_{\F}M=U$.
\end{itemize}
We call such an $M$ a \emph{witness matrix}, and its columns
\emph{witness columns}.

It is convenient to let $\mc{M}_{\mbf{V}}(U)$ denote the set of
matrices satisfying the local constraints and having row span exactly
$U$, and let $\mc{M}_{\mbf{V}}^{\star}(U)$ denote the linear space of
matrices satisfying the local constraints and having row span contained
in $U$.
\end{definition}

\begin{definition}[Row-span Constrained LCL Property,
  {\cite[Definition~3.3]{ggsw}}]
\label{def:rsclp}
A \emph{row-span constrained $r$-local LCL family} $\mc{F}$ is a
collection of $r$-local pairs $(\mbf{V},U)$, where
$\mbf{V}=(V_1,\ldots,V_n)\in\mc{L}(\F^r)^n$ is an $r$-local profile
and $U\in\mc{L}(\F^r)$ is the row-span constraint of the witness
matrix.

A code $C$ \emph{contains} $\mc{F}$ if it contains some pair
$(\mbf{V},U)\in\mc{F}$, and it \emph{avoids} $\mc{F}$ otherwise.
We say that the \emph{row-span constrained $r$-local property} given by
$\mc{F}$ is the property of avoiding $\mc{F}$.
\end{definition}

Next, we recall the potential and threshold functions, first defined for
LCL properties by \cite{lms25} and generalized to row-span constrained
LCL properties by \cite{ggsw}.

\begin{definition}[Potential and Threshold,
  {\cite[Definition~3.7]{ggsw}}]
\label{def:potential-threshold}
For an $r$-local profile $\mbf{V}$, a subspace $X\subset\F^r$, and a
real number $R$, define the \emph{potential}
\begin{equation}
\begin{aligned}
  \phi_{\mbf{V}}(X,R)
  &:=(R-1)\dim_{\F}X
    +\mb{E}_{i\in[n]}\dim_{\F}(X\cap V_i)\\
  &=R\dim_{\F}X
    -\mb{E}_{i\in[n]}
       \on{codim}_{X}(X\cap V_i).
\end{aligned}
  \label{eq:def-potential}
\end{equation}
For an $r$-local pair $(\mbf{V},U)$ and $W\subsetneq U$, define
\begin{equation}
  R_{\mbf{V}}(U,W)
  :=
  1-\mb{E}_{i\in[n]}
  \left[
    \frac{
      \dim_{\F}(U\cap V_i)-\dim_{\F}(W\cap V_i)
    }{
      \dim_{\F}U-\dim_{\F}W
    }
  \right].
  \label{eq:def-relative-threshold}
\end{equation}
Define the \emph{threshold} of $(\mbf{V},U)$ by
\begin{equation}
  R_{\mbf{V}}(U)
  :=
  \max_{W\subsetneq U}R_{\mbf{V}}(U,W).
  \label{eq:def-pair-threshold}
\end{equation}
For a row-span constrained $r$-local LCL family $\mc{F}$, define
\begin{equation}
  R_{\mc{F}}
  :=
  \min_{(\mbf{V},U)\in\mc{F}}R_{\mbf{V}}(U).
  \label{eq:Rf}
\end{equation}
\end{definition}

The potential and threshold functions are closely related. One such useful consequence is the following identity (see e.g. \cite[Lemma 3.8]{ggsw})
\begin{equation}
\label{eq:potential-gap}
\phi_{\mbf{V}}(U,R)-\phi_{\mbf{V}}(W,R)=  \left(\dim_{\F}U-\dim_{\F}W\right)
  \left(R-R_{\mbf{V}}(U,W)\right).
\end{equation}

Finally, we recall quotient profiles and the maximal-quotient lemma from
\cite{ggsw}.

\begin{definition}[Quotient Profile,
  {\cite[Definition~3.11]{ggsw}}]
\label{def:quotient-profile}
Let $\mbf{V}=(V_1,\ldots,V_n)$ be an $r$-local profile and let
$\pi:\F^r\to\F^{r'}$ be a surjective linear map, where $r'\le r$.
Define the $r'$-local profile
\begin{equation}
  \pi\mbf{V}:=(\pi V_1,\ldots,\pi V_n).
  \label{eq:def-quotient-profile}
\end{equation}
For an $r$-local pair $(\mbf{V},U)$, if
$W:=\ker\pi\subsetneq U$, then $(\pi\mbf{V},\pi U)$ is a
\emph{quotient pair} of $(\mbf{V},U)$.  The condition
$\ker\pi\subsetneq U$ is equivalent to $\pi U\ne\{0\}$.
\end{definition}

We record the behavior of the potential function and profile containment under quotients. 

\begin{lemma}[{\cite[3.12 and 3.13]{ggsw}}]\label{lem:simple-quotient}
Suppose code $C$ contains $r$-local pair $(\mbf{V}, U)$ and $\pi: \FF_q^r\to \FF_q^{r'}$ is a surjective linear map with kernel $W\subsetneq U\in\mc{L}(\mb{F}_q^r)$ and image $\pi U\ne 0$. Then, $C$ contains $(\pi\mbf{V}, \pi U)$ and, for any $R\in\mb{R}$,
\begin{equation}
	\phi_{\pi\mbf{V}}(\pi U,R)
	=
	\phi_{\mbf{V}}(U,R)-\phi_{\mbf{V}}(W,R)
\end{equation}
\end{lemma}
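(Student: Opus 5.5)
The lemma has two parts: containment of the quotient pair, and the potential identity. I would prove them separately, both directly from the definitions.

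\emph{Containment.} Let $M\in C^{\equiv r}$ be a witness matrix for $(\mbf{V},U)$, so its $i$th row $m_i$ lies in $V_i$ and $\on{rowspan}_{\F}M=U$. View $\pi$ as right multiplication by an $r\times r'$ matrix $P$, so the $i$th row of $MP$ is $\pi(m_i)$. The candidate witness is $M':=MP$, which I would check in three steps.
\begin{itemize}
\item Each column of $M'$ is an $\F$-linear combination of columns of $M$. Since $C$ is linear, these columns lie in $C$, so $M'\in C^{\equiv r'}$.
\item The $i$th row of $M'$ is $\pi(m_i)\in\pi V_i$.
\item The row span of $M'$ is $\pi(\on{rowspan}M)=\pi U$, which is nonzero by hypothesis.
\end{itemize}
Hence $M'$ witnesses that $C$ contains $(\pi\mbf{V},\pi U)$. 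In the additive setting the same argument applies, because the $\F$-linear map acts on the $\F$-coordinates row by row.

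\emph{Potential identity.} Since $W=\ker\pi\subset U$, rank--nullity gives $\dim\pi U=\dim U-\dim W$. The key step is a coordinatewise identity:
\[
\dim(\pi U\cap \pi V_i)=\dim(U\cap V_i)-\dim(W\cap V_i).
\]
I expect this to be the main obstacle, because the correct identity uses $W+V_i$ and needs care. I would argue as follows.
\begin{itemize}
\item $\pi^{-1}(\pi V_i)=V_i+W$, so $\pi U\cap\pi V_i=\pi\bigl(U\cap(V_i+W)\bigr)$.
\item Since $W\subset U$, the modular law gives $U\cap(V_i+W)=(U\cap V_i)+W$.
\item Applying $\pi$ kills $W$, so $\pi\bigl((U\cap V_i)+W\bigr)$ has dimension $\dim\bigl((U\cap V_i)+W\bigr)-\dim W$.
\item By the dimension formula for a sum, $\dim\bigl((U\cap V_i)+W\bigr)=\dim(U\cap V_i)+\dim W-\dim(U\cap V_i\cap W)$.
\item Since $W\subset U$, we have $U\cap V_i\cap W=W\cap V_i$.
\end{itemize}
Combining these gives the claimed coordinatewise formula.

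Finally, substitute both dimension formulas into \eqref{eq:def-potential}:
\[
\phi_{\pi\mbf{V}}(\pi U,R)=(R-1)(\dim U-\dim W)+\mb{E}_i\bigl[\dim(U\cap V_i)-\dim(W\cap V_i)\bigr].
\]
By linearity of the expectation, this is exactly $\phi_{\mbf{V}}(U,R)-\phi_{\mbf{V}}(W,R)$.
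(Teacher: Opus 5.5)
Your proof is correct. The paper does not reprove this lemma; it imports it from \cite{ggsw}. Your argument is the standard one behind that result: applying $\pi$ rowwise to the witness (the same map the paper uses in \cref{lem:profile-quotients}), then computing $\dim(\pi U\cap\pi V_i)=\dim(U\cap V_i)-\dim(W\cap V_i)$ from $\pi^{-1}(\pi V_i)=V_i+W$ and the modular law.
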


We now recall the key lemma about maximal quotients from \cite{ggsw}.

\begin{lemma}[Maximal Quotient, {\cite[Lemma~3.14]{ggsw}}]
\label{lem:maximal-quotient}
Let $(\mbf{V},U)$ be an $r$-local pair with $U\ne\{0\}$, and suppose
\begin{equation}
  R\le R_{\mbf{V}}(U)-\eta
\end{equation}
for some $R\in\R$ and $\eta\ge0$.  There is a surjective linear map
$\pi:\F^r\to\F^{r'}$, depending only on
$(\mbf{V},U),R$, and $\eta$, such that $\ker\pi\subsetneq U$ and:
\begin{itemize}
\item every code containing $(\mbf{V},U)$ also contains
$(\pi\mbf{V},\pi U)$,
\item $\phi_{\pi\mbf{V}}(U',R)\le -\eta\dim_\F U'$
for every nonzero subspace $U'\subset\pi U$.
\end{itemize}
Any quotient map $\pi$ whose kernel is an inclusion-maximal element of $\argmax_W R_{\mbf{V}}(U, W)$ suffices.
\end{lemma}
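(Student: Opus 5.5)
The plan is to take $\pi$ to be the quotient by an inclusion-maximal element $W^\star$ of $\argmax_{W\subsetneq U} R_{\mbf V}(U,W)$ and verify the two bullets. The first bullet is immediate from \cref{lem:simple-quotient}: $\ker\pi=W^\star\subsetneq U$, so any code containing $(\mbf V,U)$ contains $(\pi\mbf V,\pi U)$. The map depends only on $(\mbf V,U)$, since $W^\star$ is chosen from data of the pair alone, and a fortiori only on $(\mbf V,U),R,\eta$.

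For the second bullet, fix a nonzero $U'\subset\pi U$ and let $W'=\pi^{-1}(U')\cap U$. Then $W^\star\subsetneq W'\subset U$ and $\pi W'=U'$. Applying the potential identity of \cref{lem:simple-quotient} to the pair $(\mbf V,W')$ with kernel $W^\star$ gives
\begin{equation*}
\phi_{\pi\mbf V}(U',R)=\phi_{\mbf V}(W',R)-\phi_{\mbf V}(W^\star,R).
\end{equation*}
This step is purely linear-algebraic: $\pi V_i\cap U'=\pi(V_i\cap W')$ because $W^\star\subset W'$, so $\dim(\pi V_i\cap U')=\dim(V_i\cap W')-\dim(V_i\cap W^\star)$. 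Since $\dim U'=\dim W'-\dim W^\star$, \cref{eq:potential-gap} turns the target inequality into
\begin{equation*}
R_{\mbf V}(W',W^\star)\ \ge\ R+\eta .
\end{equation*}
Because $R+\eta\le R_{\mbf V}(U)=R_{\mbf V}(U,W^\star)=:R^\star$, it suffices to show $R_{\mbf V}(W',W^\star)\ge R^\star$.

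If $W'=U$ this holds with equality. If $W'\subsetneq U$, consider the function $g(X)=\phi_{\mbf V}(X,R^\star)$. By \cref{eq:potential-gap}, $W'\mapsto g(U)-g(W')$ equals $(\dim U-\dim W')(R^\star-R_{\mbf V}(U,W'))\ge0$, with equality exactly when $W'$ is a maximizer. Since $W'\supsetneq W^\star$ and $W^\star$ is inclusion-maximal among maximizers, $W'$ is not a maximizer, so $g(W')<g(U)=g(W^\star)$.

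This gives $g(W')-g(W^\star)<0$, that is, $(\dim W'-\dim W^\star)(R^\star-R_{\mbf V}(W',W^\star))<0$, hence $R_{\mbf V}(W',W^\star)>R^\star$. Here $g(U)=g(W^\star)$ holds because $R^\star=R_{\mbf V}(U,W^\star)$. This yields the required bound $\phi_{\pi\mbf V}(U',R)\le-\eta\dim U'$, with strict inequality unless $U'=\pi U$.

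The main point to check carefully is the quotient potential identity for a sub-pair $W'$ rather than $U$ itself. That is, I must confirm that $\pi$ restricted to $W'$ still has kernel $W^\star$ and that the intersection formula holds; this follows since $V_i\cap(W'+\ker\pi)=V_i\cap W'$. The rest is the potential-gap bookkeeping above, where inclusion-maximality is what rules out ties.
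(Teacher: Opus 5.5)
Your proof is correct. It is the standard argument behind \cite[Lemma~3.14]{ggsw}, which the paper cites rather than reproving: quotient by an inclusion-maximal maximizer $W^\star$, pull each nonzero $U'\subset\pi U$ back to $W'\supsetneq W^\star$ so that $\phi_{\pi\mathbf{V}}(U',R)=\phi_{\mathbf{V}}(W',R)-\phi_{\mathbf{V}}(W^\star,R)$, and use inclusion-maximality to get $R_{\mathbf{V}}(W',W^\star)\ge R_{\mathbf{V}}(U)$.
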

The last statement is not in the statement of \cite[Lemma 3.14]{ggsw}, but it is clear by its proof in \cite{ggsw}. We formally record it here additionally for convenience.

We will usually apply \cref{lem:maximal-quotient} from a potential bound
like $\phi_{\mbf{V}}(U,R)\le-\eta\dim_{\F}U$.
We apply \cref{eq:potential-gap} with $W=\{0\}$ and trivial equality $\phi_{\mbf{V}}(\{0\},R)=0$ to obtain
\begin{equation}
  R\le R_{\mbf{V}}(U,0)-\eta
  \le R_{\mbf{V}}(U)-\eta,
  \label{eq:root-gap-implies-threshold}
\end{equation}
so the hypothesis of \cref{lem:maximal-quotient} holds.
\subsection{Curve Decoding}
\label{subsec:curve-dec}

We begin by recalling the definition of curve decoding, first introduced in \cite{gg25a}. This is the main LCL property we study in this paper.
\begin{definition}
\label{def: curve-decodability}
    An additive code $C\subset \Sigma^n$ is $(\ell, \delta, a, b)$-curve-decodable if, for every $u_0, u_1,\cdots, u_\ell\in \Sigma^n$ and every function $f: \F \to C$, whenever the set 
    \[ A=\left\{\alpha \in \F : \Delta\left(\smudef, f(\alpha)\right) \le  \delta\right\}\] has at least $a$ elements, there exist $c_0, c_1,\cdots, c_\ell \in C$ such that 
    \[\left|\left\{\alpha \in A  :  f(\alpha)=\smcdef \right\}\right|\ge b \ .  \]
\end{definition}

The key observation to cast curve decoding as a local property is the following definition. 

\begin{definition}[Curve-freeness]
\label{def:curve-free}
For $S\subset A\subset\F$, define
\begin{equation}
K_A(S)
:=
\left\{
y\in\F^A:
y|_S\text{ agrees with a polynomial of degree at most }\ell
\right\}.
\end{equation}
A subspace $U\subset\F^A$ is
\emph{$(\ell,A,b)$-curve-free} if
$
U\not\subset K_A(B)
$ for every $B\in\binom{A}{b}$.
\end{definition}

Since every row of a witness matrix with row-span $U$ already lies in
$U$, we use the normalized local spaces
$V_i=U\cap K_A(A_i)$ below. This is equivalent to
\cite[Theorem~4.4]{ggsw}, which uses $K_A(A_i)$.

\begin{theorem}[{\cite[Theorem 4.4]{ggsw}}]
\label{thm:ggsw-family}
Let $\ell+1\le b\le a\le q$ and $\delta\in(0,1)$.  A linear code $C$
fails to be $(\ell,\delta,a,b)$-curve-decodable if and only if there are
$A\in\binom{\F}{a}$, an $(\ell,A,b)$-curve-free subspace
$U\subset\F^A$, and sets $A_1,\ldots,A_n\subset A$ such that  $C$ contains $(\mbf V,U)$ where
\begin{equation}
  V_i:=U\cap K_A(A_i)
\end{equation}
for every $i\in [n]$ and, for every $\alpha\in A$,
\begin{equation}
  \left|\left\{i\in[n]:\alpha\in A_i\right\}\right|
  \ge(1-\delta)n.
\end{equation}
The collection $\mc{F}_{\ell, \delta, a, b}$ of all resulting profile pairs has size at most
$q^{a+a^2}2^{an}$.
\end{theorem}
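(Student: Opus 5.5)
We prove the two directions of the equivalence separately and then count the family.

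\emph{Failure implies containment.} Suppose $C$ is not $(\ell,\delta,a,b)$-curve-decodable. Then there are $u_0,\dots,u_\ell$ and $f:\F\to C$ whose agreement set has at least $a$ elements. Shrinking it, we obtain a set $A$ of size exactly $a$ on which $\Delta(\sum_j u_j\alpha^j,f(\alpha))\le\delta$ for every $\alpha\in A$, and such that no choice of $c_0,\dots,c_\ell\in C$ agrees with $f$ on $b$ points of $A$. Let $M\in C^{\equiv a}$ be the $n\times A$ matrix with columns $f(\alpha)$, and put $U=\operatorname{rowspan}M$.

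\begin{itemize}
\item For each coordinate $i$, let $A_i=\{\alpha\in A: f(\alpha)_i=\sum_j (u_j)_i\alpha^j\}$. Row $i$ of $M$ agrees on $A_i$ with the degree-$\ell$ polynomial $\sum_j(u_j)_iX^j$, so row $i$ lies in $U\cap K_A(A_i)=V_i$.
\item Double counting the closeness condition gives that each $\alpha$ lies in at least $(1-\delta)n$ of the sets $A_i$.
\item For curve-freeness, suppose $U\subset K_A(B)$ for some $|B|=b$. Then every row of $M$ restricted to $B$ is the evaluation of a degree-$\le\ell$ polynomial. Interpolating on $\ell+1$ points of $B$ (possible since $b\ge\ell+1$), we can write $M|_B$ as $\sum_j c_j\alpha^j$. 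Each $c_j$ is a fixed $\F$-linear combination of columns $f(\beta)$, $\beta\in B$, and hence lies in $C$ by linearity. This contradicts the choice of $A$.
\end{itemize}

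\emph{Containment implies failure.} Suppose $C$ contains such a pair via a witness $M$ with row span exactly $U$, which is curve-free. Row $i$ lies in $K_A(A_i)$, so it agrees on $A_i$ with some polynomial $p_i$ of degree at most $\ell$. Let $u_j$ collect the $j$th coefficients of the $p_i$, and let $f(\alpha)$ be column $\alpha$ of $M$ for $\alpha\in A$ and $0$ elsewhere. The multiplicity condition then makes every $\alpha\in A$ $\delta$-close, so the agreement set has size at least $a$.

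If curve decoding succeeded, some $c_j\in C$ would agree with $f$ on a set $B$ of $b$ points. Since $f=0$ off $A$, we may need to handle $B\not\subset A$. This is the delicate step. I would first try to restrict attention to agreement within $A$, by replacing $f$ off $A$ with values chosen to be far from the curve; alternatively, the definition may only count $B\subset A$ anyway. Once $B\subset A$, each row of $M|_B$ is a coordinate of $\sum_j c_j\alpha^j$ and so has degree at most $\ell$. Hence $U|_B\subset K_B(B)$, contradicting curve-freeness.

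\emph{Counting.} We count profile pairs $(\mathbf V,U)$:
\begin{itemize}
\item $A$ has at most $q^a$ choices;
\item $U\subset\F^A$ is determined by a basis of at most $a$ vectors, giving at most $q^{a^2}$ choices;
\item each $A_i$ has at most $2^a$ choices, giving $2^{an}$ in total.
\end{itemize}
Since $\mathbf V$ is determined by $U$ and the $A_i$, this yields $q^{a+a^2}2^{an}$.

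The main obstacle is the reverse direction: the witness $M$ has row span exactly $U$, and we must confirm that curve-freeness of $U$, rather than of some proper subspace, is exactly what rules out a degree-$\ell$ curve in $C$ through $b$ points.
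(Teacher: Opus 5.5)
Your forward implication (failure $\Rightarrow$ containment) and your size bound are correct. These are also the only parts of the statement the paper uses later; the paper does not reprove the theorem but imports it.

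The reverse implication has a genuine gap, and it is not the one you name at the end. Exactness of the row span is precisely what gives $U|_B\subset K_B(B)$, and you handle that correctly. The real problem is where agreements are counted. In the definition of curve decodability they are counted inside the \emph{full} close set $A^\ast=\{\alpha\in\mathbb{F}:\Delta(u(\alpha),f(\alpha))\le\delta\}$. This set depends on $f$ on all of $\mathbb{F}$, so $B$ may contain points outside your $A$. Your hope that the definition only counts $B\subset A$ is therefore false.

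With $f=0$ off $A$, the argument genuinely breaks. Suppose $C$ is supported on at most $\delta n$ coordinates. This is compatible with containing a curve-free pair: take $C=\operatorname{span}(e_1)$ with $\delta n\ge 1$, a witness whose only nonzero row is $(\beta^{\ell+1})_{\beta\in A}$, $A_1=\emptyset$, and $A_i=A$ for $i\ge 2$.
\begin{itemize}
\item The rows of $M$ outside the support of $C$ vanish, so $p_i=0$ there (forced whenever $|A_i|\ge\ell+1$).
\item Then $u(\alpha)$ has weight at most $\delta n$, so every $\alpha$ is $\delta$-close to $f(\alpha)=0$.
\item The zero curve then agrees with $f$ on all $q-a$ points outside $A$, which can exceed $b$.
\end{itemize}
Your alternative repair, choosing $f(\alpha)$ for $\alpha\notin A$ at distance $>\delta$ from $u(\alpha)$, fails in exactly this situation. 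There, every codeword of $C$ lies within distance $\delta$ of $u(\alpha)$.

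The missing idea is a case split.
\begin{itemize}
\item \emph{Far codewords exist.} If for every $\alpha\notin A$ some codeword is $\delta$-far from $u(\alpha)$, extend $f$ by such codewords. Then $A^\ast=A$ exactly, and your curve-freeness argument finishes.
\item \emph{All of $C$ is close to one point.} Otherwise there is $v\in\mathbb{F}^n$ with $\Delta(v,c)\le\delta$ for every $c\in C$, and you should discard the witness altogether. Take the constant curve $u\equiv v$ and $f(\alpha)=\alpha^{\ell+1}c^\ast$ for a nonzero $c^\ast\in C$; one exists because $U\ne\{0\}$ forces $M\ne 0$. Then $A^\ast=\mathbb{F}$ has $q\ge a$ elements. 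On a coordinate $i$ with $c^\ast_i\ne 0$, a codeword curve $\sum_j c_j\alpha^j$ agrees with $f$ only at roots of the nonzero polynomial $c^\ast_iX^{\ell+1}-\sum_j c_{j,i}X^j$, so on at most $\ell+1$ points.
\end{itemize}
In the second case, $\ell+1<b$ holds automatically. A curve-free $U$ can exist only if $b\ge\ell+2$, because for $|B|=\ell+1$ one has $K_A(B)=\mathbb{F}^A$. With this dichotomy the reverse implication is complete.
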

We now explain the curve decoding parameters we need to obtain downstream proximity gap and (mutual) correlated agreement implications, and the relevant threshold rate.
\begin{corollary}[{\cite[Corollary~4.6]{ggsw}}]
\label{cor:curve-profile-threshold}
Throughout the paper we take
\begin{equation}
  D:=\ceil{\frac{\ell+1}{\eta}},
  \qquad
  b:=\ceil{\frac{4\ell}{\eta}+\ell},
  \qquad
  a:=\ceil{\frac{100\ell^2}{\eta^3}}
  \label{eq:ggsw-param}
\end{equation}
With this choice of parameters,
\begin{equation}
\label{eq:cd-threshold}
R_{\mc{F}_{\ell, \delta, a, b}} \ge 1-\delta-\eta
\end{equation}
\end{corollary}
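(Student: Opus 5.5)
To prove \eqref{eq:cd-threshold}, I would fix an arbitrary pair $(\mbf V,U)\in\mc F_{\ell,\delta,a,b}$ arising from \cref{thm:ggsw-family}, with $U\ne 0$. Then I would exhibit a single proper subspace $W\subsetneq U$ with $R_{\mbf V}(U,W)\ge 1-\delta-\eta$. Since $R_{\mbf V}(U)$ is a maximum over $W$, this gives the bound, and minimizing over the family gives $R_{\mc F}\ge 1-\delta-\eta$.

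By \cref{eq:potential-gap}, it is equivalent to show that at $R^\ast:=1-\delta-\eta$ we have $\phi_{\mbf V}(U,R^\ast)\le \phi_{\mbf V}(W,R^\ast)$. In codimension form this reads
\[
R^\ast(\dim U-\dim W)\le \mb E_i\left[\on{codim}_U(U\cap V_i)-\on{codim}_W(W\cap V_i)\right].
\]
So the task is to find a subspace $W$ losing only a little potential, with the coordinatewise codimensions in $U$ large on average.

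\textbf{Step 1: upper-bound $\dim_\F(U\cap V_i)$.} For each $i$ we have $V_i=U\cap K_A(A_i)$, and $K_A(A_i)$ has codimension $\max(|A_i|-\ell-1,0)$ in $\F^A$. This gives $\on{codim}_U(V_i)\le |A_i|-\ell-1$, which is the wrong direction. What I actually want is a lower bound on codimension, obtained from curve-freeness.

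\textbf{Step 2: choose $W$ via the canonicalization/pruning idea.} If $|A_i\cap B|\ge b$ for a set $B$, then $U|_B\not\subset K_B(B)$, so $U\cap K_A(A_i)$ is a proper subspace of $U$ of controlled codimension. I would take $W=0$ first and aim to show directly
\[
\mb E_i\,\dim(U\cap V_i)\le (\delta+\eta)\dim U.
\]
To do this, double count pairs $(i,\alpha)$ with $\alpha\in A_i$. Each $\alpha$ lies in at least $(1-\delta)n$ sets $A_i$. I would then use a greedy pruning argument in the style of \cite{Sri25,AHS26}: pick points $\alpha_1,\alpha_2,\dots$ of $A$ so that evaluation at $\{\alpha_1,\dots,\alpha_m\}$, together with the degree-$\ell$ interpolation, cuts down $U$ by one dimension each step while keeping $U$ restricted to these points curve-free. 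Then for each $i$, $\dim(U\cap V_i)\le \dim U-\#\{j:\alpha_j\in A_i\}+(\ell+1)$. Averaging over $i$ gives roughly
\[
\dim U\left(1-(1-\delta)\right)+(\ell+1)\left(1+\frac{\dim U}{b}\right).
\]

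\textbf{Step 3: handle large and small $\dim U$ separately.} When $\dim U\ge D=\lceil(\ell+1)/\eta\rceil$, the additive $(\ell+1)$ loss is at most $\eta\dim U$. When $\dim U<D$, I would instead choose $W$ to be a suitable subspace (possibly $0$) and use that $a\ge 100\ell^2/\eta^3$ is large compared with $bD$. Pigeonholing over $A$ then lets me find the required evaluation points with $|A_i|$-intersections large enough. The value $b=\lceil 4\ell/\eta+\ell\rceil$ enters exactly in guaranteeing that the pruning never terminates early: a curve-free $U$ cannot become contained in $K_B(B)$ for $|B|\ge b$.

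\textbf{Main obstacle.} The hardest part is Step 2: making the pruning yield a per-coordinate bound $\dim(U\cap K_A(A_i))\le \dim U-|\{j:\alpha_j\in A_i\}|+\ell+1$ uniformly in $i$. This needs the chosen points to be in "general position" for every $A_i$ at once. I would try first to choose the $\alpha_j$ randomly from $A$ and bound the expected defect, using $a\gg b\,D/\eta$ to absorb the failures into the $\eta$ slack.
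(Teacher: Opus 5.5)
Your reduction is right: it suffices to exhibit one $W\subsetneq U$ with $\phi_{\mathbf V}(U,R^\ast)\le\phi_{\mathbf V}(W,R^\ast)$. For $\dim U\ge D$ the choice $W=0$ does work, and more simply than you suggest. Let $S\subseteq A$ be any information set of $U$ with $|S|=d:=\dim U$. Restriction to $S$ is injective on $U$, so $\dim(U\cap V_i)\le\dim K_S(S\cap A_i)\le d-|S\cap A_i|+\ell+1$ holds for every $i$ at once. Averaging with the coverage condition gives $\mathbb{E}_i\dim(U\cap V_i)\le\delta d+\ell+1\le(\delta+\eta)d$. No general position, random choice, or extra $(\ell+1)\dim U/b$ term is involved; that term, if it were really present, would break the bound.

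The genuine gap is the case $\dim U<D$. This is where the content of the statement lies, and you leave it as ``a suitable subspace (possibly $0$)'' plus unspecified pigeonholing. In this regime your Step~2 target with $W=0$ is false. Take $U=\{p|_A:\deg p\le\ell+1\}$, which is $(\ell,A,b)$-curve-free since $b\ge\ell+2$, and set every $A_i=A$, so coverage holds trivially. Then $V_i=\{p|_A:\deg p\le\ell\}$, and $\mathbb{E}_i\dim(U\cap V_i)=\ell+1>(\delta+\eta)(\ell+2)$ whenever, e.g., $\delta+\eta\le1/2$; here $\dim U=\ell+2<D$. The common subspace $U\cap K_A(A)$ lies in every $V_i$ and must be quotiented out. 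Since the target inequality is a property of the pair itself, no clever or random choice of evaluation points can rescue it.

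The missing idea is the minimal-set pruning of \cref{prop:pruning}:
\begin{itemize}
\item Let $c(S):=\operatorname{codim}_U(U\cap K_A(S))$ and $h:=\lceil 2b/\eta\rceil$. The inequality $a\ge b+hD$ makes $\{S\subseteq A:|S|\ge b+hc(S)\}$ nonempty; take $A'$ inclusion-minimal in it.
\item Curve-freeness gives $c(A')\ge1$. Minimality gives $|A'|=b+hc(A')$, and $|A_i\cap A'|\le b+hc(A_i\cap A')$ for all $i$.
\item Set $W:=U\cap K_A(A')\subsetneq U$, so $\dim U-\dim W=c(A')$. Since $V_i+W\subseteq U\cap K_A(A_i\cap A')$, we get $R_{\mathbf V}(U,W)=\mathbb{E}_i\operatorname{codim}_U(V_i+W)/c(A')\ge\mathbb{E}_i\, c(A_i\cap A')/c(A')$.
\item Double counting pairs $(i,\alpha)$ with $\alpha\in A_i\cap A'$ yields $\sum_i c(A_i\cap A')\ge(1-\delta)c(A')n-\delta bn/h\ge(1-\delta-\eta/2)c(A')n$. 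Hence $R_{\mathbf V}(U,W)\ge1-\delta-\eta/2$.
\end{itemize}
The paper quotes this corollary from \cite{ggsw} without proof. The argument above is exactly \cref{prop:low-rank-gap} read as a quotient via \cref{lem:simple-quotient}, with the high-rank case handled by restricting to $D$ independent coordinates as in \cref{lem:high-rank-family-bound}.
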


We now informally recall the notions of proximity gaps and
(mutual) correlated agreement. Informally,
with respect to degree-$\ell$ curves, an $\F$-additive code
$C\subset\Sigma^n$ is said to have
\begin{itemize}
\item $(\ell,\delta,\eps,\gamma)$ \emph{proximity gap (PG)} if, for
every degree-$\ell$ curve
$u(x)=\sum_{j=0}^{\ell}u_jx^j$ and every $\delta'\le\delta$, either at
most an $\eps$ fraction of the points $u(\alpha)$ are
$\delta'$-close to $C$, or every point $u(\alpha)$ is
$\delta'/(1-\gamma)$-close to $C$.

\item $(\ell,\delta,\eps,\gamma)$ \emph{correlated agreement (CA)} if,
for every degree-$\ell$ curve
$u(x)=\sum_{j=0}^{\ell}u_jx^j$ and every $\delta'\le\delta$, either at
most an $\eps$ fraction of the points $u(\alpha)$ are
$\delta'$-close to $C$, or there is a codeword curve
$c(x)=\sum_{j=0}^{\ell}c_jx^j$, with $c_j\in C$, such that $u(x)$ and
$c(x)$ agree identically on a common set $S\subset[n]$ containing at least a $1-{\delta'}/(1-\gamma)$ fraction of the coordinates.

\item $(\ell,\delta,\eps)$ \emph{mutual correlated agreement (MCA)}
if, for every degree-$\ell$ curve
$u(x)=\sum_{j=0}^{\ell}u_jx^j$ and every $\delta'\le\delta$, the
following holds for at least a $1-\eps$ fraction of
$\alpha\in\F$: for every $S\subset[n]$ of size at least
$(1-\delta')n$, if some $v\in C$ agrees with $u(\alpha)$ on $S$, then
there is a codeword curve
$c(x)=\sum_{j=0}^{\ell}c_jx^j$, with $c_j\in C$, that agrees
identically with $u(x)$ on $S$.
\end{itemize}

We state the formal definitions. More details can be found in the survey
\cite{ABF26}.

\begin{definition}[Proximity Gaps (PG)]\label{def: curve PG}
   An $\FF$-additive code $C\subset  \Sigma^n$ is said to have $(\ell, \delta, \eps, \gamma)$ proximity gap if for all $u_0, u_1, \cdots, u_\ell \in \Sigma^n$, and for all $\delta'\le \delta$, \[\mb{P}_{\alpha\in \FF}\left(\Delta\left(\sum_{j=0}^\ell u_j \alpha^j, C\right)\le \delta'\right)>\eps \implies \forall \alpha \in \FF, \ \Delta\left(\sum_{j=0}^\ell u_j \alpha^j, C\right)\le \frac{\delta'}{1-\gamma}\ .   \]
   When $\gamma = 0$, we say that $C$ has an $(\ell, \delta, \eps)$ proximity gap.  
\end{definition}

\begin{definition}[Correlated Agreement (CA)]\label{def: curve CA}
    An $\FF$-additive code $C\subset  \Sigma^n$ is said to have $(\ell, \delta, \eps, \gamma)$ correlated agreement if for all $u_0, u_1, \cdots, u_\ell \in \Sigma^n$ and $\delta'\le \delta$ 
    \[\mb{P}_{\alpha\in \FF}\left( \Delta\left(\sum_{j=0}^\ell u_j \alpha^j, C\right)\le \delta'\right)>\eps\implies
    \exists c_0, c_1,\cdots c_{\ell} \in C:
    \left|\left\{i : \bigvee_{j=0}^\ell  c_{j, i}\ne u_{j, i}\right\}\right| \le\frac{\delta' n}{1-\gamma} . \] 
\end{definition}

\begin{definition}[Mutual Correlated Agreement (MCA)]\label{def: curve MCA}
    An $\FF$-additive code $C\subset  \Sigma^n$ is said to have $(\ell, \delta, \eps)$ mutual correlated agreement if for all $u_0, u_1, \cdots, u_\ell \in \Sigma^n$ and $\delta'\le \delta$ 
    \[\mb{P}_{\alpha\in\FF}\left(\exists S \subset  [n], c\in C, j\in\{0, 1, \dots, \ell\}: |S|\ge (1-\delta')n, c_{\mid S}=\sum_{j=0}^\ell \alpha^j u_{j\mid S},  c'_{\mid S}\ne u_{j\mid S} \forall c'\in C \right)\le \eps  . \]
\end{definition}

We note that $(\ell,\delta,\eps,\gamma)$ correlated agreement implies
an $(\ell,\delta,\eps,\gamma)$ proximity gap, while
$(\ell,\delta,\eps)$ mutual correlated agreement implies
$(\ell,\delta,\eps,\gamma=0)$ correlated agreement and hence
$(\ell,\delta,\eps,\gamma=0)$ proximity gap.
We record how these properties follow from curve decodability.  The
following theorem is written essentially verbatim from \cite{ggsw},
although the underlying implications were first derived in
\cite{gg25a}.

\begin{theorem}[{\cite[Corollary 2.11]{ggsw}}]\label{thm: main-ca-mca-gg}
If an $\F$-additive code $C\subset (\F^s)^n$ is \begin{equation}
\left(\delta\left(1+\frac{\ell}{b-\ell}\right), L\right)\text{ list decodable \quad and \quad } \left(\ell, \delta, a, b\right)\text{ curve-decodable}
\end{equation} 
then $C$ has: \begin{enumerate}
    \item for any $t\ge 1$, $(\ell, \delta, (t-1)L+a, t)$ curve decodability,
    \item for any $m >1$, $\left(\ell, \delta, \frac{\ell m L + a}{q}, \frac{1}{m}\right)$ correlated agreement and proximity gap,
    \item $(\ell, \delta, \frac{\ell n L+a}{q})$ mutual correlated agreement.
\end{enumerate}
\end{theorem}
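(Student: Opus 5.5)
The plan is to prove item 1 first by a peeling-plus-pigeonhole argument, and then to deduce items 2 and 3 from it by choosing $t$ appropriately. Everything rests on one agreement observation.

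\textbf{Agreement observation.} Suppose a codeword curve $c(x)=\sum_j c_jx^j$ ($c_j\in C$) agrees with $f$ on a set $B\subset A$ of size $s>\ell$. For each coordinate $i$, either $c_i(x)\equiv u_i(x)$ as polynomials, or the two agree at most $\ell$ points of $B$. Each $\alpha\in B$ has $c(\alpha)=f(\alpha)$ within $\delta$ of $u(\alpha)$, so double counting the disagreeing pairs $(\alpha,i)$ gives
\[ \frac{1}{n}\left|\{i: c_i\not\equiv u_i\}\right| \le \frac{\delta s}{s-\ell}. \]
In particular, for $s\ge b$ the curve $c$ agrees identically with $u$ on at least a $1-\delta(1+\ell/(b-\ell))$ fraction of coordinates. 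Consequently $c(\alpha)$ lies within radius $\delta(1+\ell/(b-\ell))$ of $u(\alpha)$ for \emph{every} $\alpha\in\F$.

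\textbf{Item 1.} Let $G$ be the set of codeword curves that agree identically with $u$ on at least a $1-\delta(1+\ell/(b-\ell))$ fraction of coordinates.
\begin{enumerate}
\item First bound $|G|\le L$. Take any $L+1$ curves in $G$. Two distinct degree-$\ell$ curves coincide at most at $\ell$ points, so if $q>\ell\binom{L+1}{2}$ there is an $\alpha$ at which all of them take distinct values. This gives $L+1$ distinct codewords in the ball of radius $\delta(1+\ell/(b-\ell))$ around $u(\alpha)$, contradicting list decodability. In the degenerate regime where $q$ is smaller than this, the conclusion of item 1 should be vacuous or handled directly; this is the step where I expect bookkeeping trouble.
\item Let $A'\subset A$ be the set of $\alpha$ with $f(\alpha)=c(\alpha)$ for some $c\in G$. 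If $|A\setminus A'|\ge a$, apply curve decodability to $u$ and $f$ restricted to $A\setminus A'$. This yields a codeword curve agreeing with $f$ on $b$ points of $A\setminus A'$. By the observation that curve lies in $G$, so those points belong to $A'$, a contradiction.
\item Hence $|A'|\ge (t-1)L+1$. These points are covered by at most $L$ curves of $G$, so by pigeonhole some curve in $G$ agrees with $f$ on at least $t$ points.
\end{enumerate}

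\textbf{Item 2 (CA and PG).} Fix $\delta'\le\delta$ and apply item 1 with $\delta$ replaced by $\delta'$ (closeness only improves) and $t=\ell m+1$. If more than $\ell mL+a$ points $\alpha$ are $\delta'$-close to $C$, define $f(\alpha)$ to be a closest codeword. Item 1 then gives a codeword curve agreeing with $f$ on $t$ points. The observation with $s=t$ bounds the fraction of disagreeing coordinates by
\[ \frac{\delta' t}{t-\ell}\le \frac{\delta'}{1-1/m}, \]
which is CA with $\gamma=1/m$. PG follows because CA implies PG.

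\textbf{Item 3 (MCA).} Call $\alpha$ bad if there are $S_\alpha$ with $|S_\alpha|\ge(1-\delta')n$ and $f(\alpha)\in C$ such that $f(\alpha)$ agrees with $u(\alpha)$ on $S_\alpha$, but no codeword curve agrees identically with $u$ on $S_\alpha$. Suppose there are more than $\ell nL+a$ bad points. Item 1 with $t=\ell n+1$, applied to $f$ on the bad set, gives a codeword curve $c$ agreeing with $f$ on a set $T$ of $t$ bad points. For each coordinate $i$ with $c_i\not\equiv u_i$, at most $\ell$ points $\alpha\in T$ can have $i\in S_\alpha$. Summing over all $n$ coordinates, at most $\ell n<t$ points of $T$ are affected. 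So some $\alpha\in T$ has $c\equiv u$ on $S_\alpha$, contradicting badness.

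The main obstacle is the list-size bound $|G|\le L$ in item 1: turning the pointwise list-decoding bound into a bound on the number of curves needs a point that separates all curves in a candidate family. Everything else is direct double counting.
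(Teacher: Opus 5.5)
The paper gives no proof of this statement; it quotes it from \cite{ggsw}, with the implications going back to \cite{gg25a}. So there is no in-text argument to compare yours against. Judged on its own, almost all of your plan is sound:
\begin{itemize}
\item the agreement bound $|\{i:c_i\not\equiv u_i\}|\le \delta s n/(s-\ell)$;
\item the covering-plus-pigeonhole argument for item 1;
\item the reduction of CA to item 1;
\item the $\ell n<t$ count for MCA.
\end{itemize}
The exception is the step you flagged, the bound $|G|\le L$, and it is a genuine gap for the theorem as stated.

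Your separating-point argument needs $q>\ell\binom{L+1}{2}$, which is not a hypothesis. Your hope that the complementary regime is vacuous is also false in general. Item 1 with $t=b+1$ already has content once $q\ge bL+a$, and $bL+a\le \ell\binom{L+1}{2}$ as soon as $L$ is moderately larger than $2b/\ell$. In that regime list decodability of $C$ does not by itself bound $|G|$. Indeed, $G$ is the list of the interleaved code $C^{\ell+1}$ around $(u_0,\dots,u_\ell)$, and interleaving can lengthen lists (already for $C=\mathbb{F}_2^n$). The separating point is exactly where the field size has to enter. So, as written, you prove the theorem only under an extra field-size assumption.

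For the way the paper uses the theorem, one observation closes the gap:
\begin{itemize}
\item If $t\le b$, item 1 follows immediately from curve decodability.
\item If $t\ge b+1$, item 1 is vacuous unless $q\ge (t-1)L+a>bL$. When $2b\ge \ell(L+1)$ this forces $q>\ell\binom{L+1}{2}$, so your argument applies.
\item Items 2 and 3 inherit this.
\item In every application here, $b\ge 4\ell/\eta+\ell$ and $L\le 2/\eta+1$, so $2b\ge \ell(L+1)$ holds.
\end{itemize}

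You can also weaken the field-size requirement. Separate with an arbitrary combination $\sum_j\lambda_j c_j$, $\lambda\in\mathbb{F}^{\ell+1}$, instead of a point on the moment curve. These are still codewords within radius $\delta(1+\ell/(b-\ell))$ of $\sum_j\lambda_j u_j$. Each pair of distinct curves collides only on a proper subspace of $\mathbb{F}^{\ell+1}$, and $\mathbb{F}^{\ell+1}$ is not a union of $q$ proper subspaces. Hence $q\ge\binom{L+1}{2}$ suffices, which the reduction above gives whenever $2b\ge L+1$. Some condition of this kind is still needed for this proof.

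Two small repairs:
\begin{itemize}
\item For non-integer $m$, take $t=\lceil \ell m\rceil$. Any integer $t\ge \ell m$ gives $t/(t-\ell)\le m/(m-1)$.
\item ``Restricting $f$'' to $A\setminus A'$, and passing from $\delta$ to $\delta'$, means redefining $f$ to be $\delta$-far from $u$ off the chosen set. This is possible because a radius-$\delta$ ball holds at most $L<q\le|C|$ codewords in the nontrivial case.
\end{itemize}
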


\subsection{Random Code Ensembles}
\label{subsec:codes}
We now define our three main random code ensembles. These definitions are standard and, in particular, are taken verbatim from \cite{ggsw}. For each code, we record a fixed-pair estimate that upper-bounds the probability that a local profile pair $(\mbf{V}, U)$ is contained in a random code $C$ of rate $R$ in terms of the potential $\phi_\mbf{V}(U, R)$, as well as results showing that it is list decodable to capacity.

The remaining ingredient of the paper is to union bound over all pairs in specific families $\mc{F}$ corresponding to curve decoding, at which point we obtain curve decodability and capacity-achieving list decodability for the random codes with high probability, and apply \cref{thm: main-ca-mca-gg} to conclude proximity gaps and (mutual) correlated agreement.

\begin{definition}[Random linear codes]
\label{def:rlc}
    A random linear code $C\subset \F^n$ of rate $R$ is given by taking 
    \[ C = \{(\mathsf{Enc}_1(x),\cdots, \mathsf{Enc}_n(x)) : x\in \F^{Rn}\}\] where $\mathsf{Enc}_i : \F^{Rn}\to \F$ are independent uniformly random linear maps.
    More generally, a random $\F$-additive code $C \subset (\F^s)^n$ of rate $R$ is given by taking 
    \[ C = \{ (\mathsf{Enc}_1(x), \ldots, \mathsf{Enc}_n(x)) : x \in \F^{Rsn} \}\] where $\mathsf{Enc}_i: \F^{Rsn} \to \F^s$ are independent uniformly random linear maps. 
\end{definition}

We record the fixed-pair probability bound and list decodability of RLCs. The former is observed in \cite{lms25,ggsw} and follows from dimension-counting observations on $|\mc{M}_\mbf{V}^\star(U)|$ as well as \cite[Lemma 4.1]{lms25}.
\begin{lemma}[{\cite[Corollary~3.18]{ggsw}}]
\label{lem:rlc-fixed-pair}
If $C$ is a random linear code of rate $R\in [0, 1]$ and $(\mathbf{V}, U)$  is an $r$-local pair, then
\begin{equation}
\mb{P}\left(C\text{ contains }(\mbf V,U)\right) \le \mb E\left|
C^{\equiv r}\cap\mc M_{\mbf V}(U)
\right|
\le q^{n\phi_{\mbf V}(U,R)}.
\end{equation}
\end{lemma}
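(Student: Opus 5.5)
The plan is a first-moment computation: reduce the containment probability to an expected count of witness matrices, then evaluate that count by dimension counting. The first inequality is Markov's inequality. If $C$ contains $(\mbf V,U)$ then $C^{\equiv r}\cap\mc M_{\mbf V}(U)$ is nonempty, so
\[
\mb P\left(C\text{ contains }(\mbf V,U)\right)\le \mb E\left|C^{\equiv r}\cap\mc M_{\mbf V}(U)\right|.
\]
Since $\mc M_{\mbf V}(U)\subset\mc M^\star_{\mbf V}(U)$, it then suffices to bound
\[
\sum_{M\in\mc M_{\mbf V}(U)}\mb P\left(M\in C^{\equiv r}\right).
\]

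For a fixed $M$ with row span exactly $U$, its column space has dimension $\dim U=:u$. I will choose $u$ linearly independent columns; every other column is a fixed linear combination of these. So $M\in C^{\equiv r}$ if and only if those $u$ independent vectors of $\F^n$ all lie in $C$. For a random linear code given by $\mathsf{Enc}$, I will use the standard fact (essentially \cite[Lemma 4.1]{lms25}) that for $u$ linearly independent vectors $y_1,\dots,y_u\in\F^n$,
\[
\mb P\left(y_1,\dots,y_u\in C\right)\le q^{-u(1-R)n}.
\]
This follows by a union bound over $u$-tuples of preimages $x_1,\dots,x_u\in\F^{Rn}$, and one may assume these preimages are independent. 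The count of such tuples is at most $q^{Rnu}$. For each tuple, the events $\mathsf{Enc}(x_j)=y_j$ are jointly uniform, since a random linear map on independent inputs gives independent uniform outputs; so each tuple succeeds with probability $q^{-nu}$. Dependent preimage tuples would force dependent images, which I need to rule out briefly.

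Next I count $\mc M_{\mbf V}(U)$. Its size is at most $|\mc M^\star_{\mbf V}(U)|=\prod_i|U\cap V_i|=q^{\sum_i\dim(U\cap V_i)}$, because the rows may be chosen independently from $U\cap V_i$. Multiplying the count by the per-matrix probability gives the exponent
\[
\sum_i\dim(U\cap V_i)-(1-R)n\dim U=n\phi_{\mbf V}(U,R),
\]
which is exactly the claimed bound.

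The only delicate point I anticipate is the per-matrix probability bound: making sure the injectivity and independence issues in the union bound over preimages do not cost extra factors. The rest is bookkeeping.
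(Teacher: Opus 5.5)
Your proposal is correct and follows the same route as the paper, which cites it: Markov's inequality, the count $|\mathcal{M}^\star_{\mathbf V}(U)| = q^{\sum_i \dim(U\cap V_i)}$, and the fact that a matrix of rank $\dim U$ lies in $C^{\equiv r}$ with probability at most $q^{-(1-R)n\dim U}$. The paper takes this last fact from \cite[Lemma 4.1]{lms25}, and your union bound over preimage tuples proves it directly. The point you flag is immediate: if $\mathsf{Enc}(x_j)=y_j$ with the $y_j$ independent, then any relation $\sum_j c_j x_j=0$ maps to $\sum_j c_j y_j=0$, so the preimages are automatically independent and restricting the union bound to independent tuples costs nothing.
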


\begin{theorem}[{\cite[Corollary~3.11]{gg26}}]
\label{thm:rlc-ld}
For every $\eta,R\in(0,1)$, there is an absolute constant $c_0>0$
such that, if $q\ge \exp\left(c_0\log(2/\eta)/\eta\right)$ and $n$ is
sufficiently large, a random linear code $C\subset\F^n$ of rate $R$ is
$\left(1-R-3\eta/2,\ceil{(1-R)/\eta}\right)$ list-decodable with
probability at least $1-8q^{-\eta n/8}$.
\end{theorem}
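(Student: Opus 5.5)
The plan is to cast list decoding as a row-span constrained LCL property, bound the probability of each bad pair with \cref{lem:rlc-fixed-pair}, and then take a union bound. The union bound has to be organized by codimension, so that the counting loss per coordinate is paid for by the $q^{-c_i}$ factors in the fixed-pair bound.

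\textbf{Step 1 (the family).} Set $\rho:=1-R-3\eta/2$ and $L:=\ceil{(1-R)/\eta}$. If $C$ fails $(\rho,L)$-list-decodability, there are a center $y$ and $L+1$ distinct codewords $c_1,\ldots,c_{L+1}$ within radius $\rho$ of $y$. Shifting by one of them gives $r:=L$ nonzero difference vectors in $C$. At coordinate $i$, the set of $j$ with $c_{j,i}=y_i$ defines a subspace of $\F^{r}$, which gives a local profile $V_i$. Every $j\in[L+1]$ lies in the agreement set on at least $(1-\rho)n$ coordinates. We take $U$ to be the actual row span, which is a subspace with no zero coordinate because the codewords are distinct. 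The pairs $(\mbf V,U)$ arising this way form a family $\mc F$. We then pass to a sub-pair on which the potential is uniformly negative. Using \cref{lem:maximal-quotient}, or by restricting to a minimal sub-collection of codewords as in the generalized-Singleton argument, we ensure every surviving pair satisfies $\phi_{\mbf V}(U,R)\le -(\eta/4)\dim U$.

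\textbf{Step 2 (threshold).} We show that the pairs obtained in Step 1 have threshold at least $R+\eta/2$. This is the standard generalized-Singleton computation. For any subspace $W$, $\mb E_i\on{codim}_W(W\cap V_i)$ is at least $(1-\rho)$ times a quantity comparable to $\dim W$. The reason is that each agreement class merges codewords, and every codeword agrees with $y$ on a $(1-\rho)$ fraction of the coordinates; an averaging argument over the $L+1$ codewords gives the bound. The slack $\eta/2$ is what the choice $L\approx(1-R)/\eta$ buys.

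\textbf{Step 3 (canonical counting, the main obstacle).} A naive count gives $|\mc F|\approx 2^{rn}$, which forces $q\ge 2^{\Omega(r/\eta)}$ and is too weak. Instead we group the pairs by the codimension vector $(c_i)$, where $c_i:=\on{codim}_U(U\cap V_i)$. For a fixed $U$ with $\dim U=d$, we claim the number of possible subspaces $U\cap V_i$ of codimension $c$ is at most $r^{O(c)}$. The justification is that $U\cap V_i$ is cut out by $c$ independent equalities among the coordinate functionals, each of which is specified by a pair of indices in $[r+1]$. Combining this with \cref{lem:rlc-fixed-pair} in the form $q^{Rnd-\sum_i c_i}$, the contribution of a fixed $U$ factorizes over coordinates. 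The threshold condition from Step 2 must then be used in a per-coordinate averaged form, so that each coordinate contributes at most
\[
\sum_{c} r^{O(c)}q^{-c}\, q^{-(\eta/4)(\cdot)},
\]
rather than only a global bound. With $r=O(1/\eta)$, the per-coordinate loss $r^{O(c)}$ is absorbed once $q\ge r^{O(1/\eta)}=\exp(O(\log(2/\eta)/\eta))$, and the total is $q^{-\Omega(\eta n)}$. Making this step precise is where the technical work lies. The danger is that the codimension budget $\sum_i c_i$ is tied to the potential only on average, so I would weight the count by $q^{-c_i}$ coordinatewise and keep the $\eta$-slack global. The slack between $3\eta/2$ and $\eta$ is there to pay for this.

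\textbf{Step 4 (finish).} Finally we sum over $U$, of which there are at most $q^{r^2}$, and over $d\le r$. For $n$ large these factors are $q^{O(1/\eta^2)}$, which is negligible against $q^{-\eta n/4}$. This leaves a failure probability of $8q^{-\eta n/8}$ after adjusting constants.
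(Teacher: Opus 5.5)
The paper does not prove this statement. It quotes it from \cite[Corollary~3.11]{gg26} and describes that proof as canonicalizing list-decoding profiles through the weakly-partition-connected hypergraphs of \cite{AlrabiahGGLZ2025}, followed by codimension-sensitive counting against \cref{lem:rlc-fixed-pair}. Your Steps 3--4 fit that template, and Step 3 is routine rather than the main obstacle. There are at most $\binom{L+1}{2}^{c_i}$ choices for $U\cap V_i$, and the only constraint is the global one $\sum_i c_i\ge(R+\eta/4)dn$. The summand $q^{Rdn}\bigl(\binom{L+1}{2}/q\bigr)^{\sum_i c_i}$ is decreasing in $\sum_i c_i$, so a fixed $U$ contributes at most $(d+1)^nL^{2dn}q^{-\eta dn/4}$, which is small once $q\ge L^{O(1/\eta)}$.

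\textbf{The gap is in Steps 1--2, which you treat as standard.} Your count needs certificates with two properties at once. (a) They satisfy $\phi_{\mathbf V}(U,R)\le-\Omega(\eta)\dim U$ at the \emph{actual} row span $U$, since that is what \cref{lem:rlc-fixed-pair} charges. (b) Their local spaces are cut out by pair equalities.
\begin{itemize}
\item \emph{The maximal quotient route fails (b).} \cref{lem:maximal-quotient} gives (a), but its kernel $W$ is generally not of coordinate type. The space $(V_i+W)/W$ has annihilator $\mathrm{Ann}(V_i)\cap\mathrm{Ann}(W)$, which need not be spanned by differences $x_j-x_{j'}$. Reading it off from the original $c_i$ pairs costs $L^{O(c_i)}$, but the quotient bound only pays $q^{-(c_i-\operatorname{codim}_W(W\cap V_i))}$. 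The unpaid $L^{O(n\dim W)}$ against a margin of only $q^{-\Omega(\eta n(\dim U-\dim W))}$ gives just $q\ge\exp(O(\log(1/\eta)/\eta^2))$.
\item \emph{The sub-collection route needs an argument you do not give.} After restricting to a sub-collection $J$ of codewords, (a) is the entire content. Step 2's claim that $\mathbb{E}_i\operatorname{codim}_W(W\cap V_i)$ is comparable to $(1-\rho)\dim W$ for every $W$ is false for the unrestricted configuration. If the agreement hypergraph is disconnected (possible once $\rho\ge 1/2$), the indicator of a component avoiding the reference codeword lies in every $V_i$, and in $U$ when the differences are independent. No averaging over codewords fixes this; only choosing $J$ correctly does.
\end{itemize}

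\textbf{The missing argument has two parts.}

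(i) Let $\mathcal H$ be the agreement hypergraph with edges $e_i=\{j:c_{j,i}=y_i\}$, and set $\mathrm{wt}(\mathcal H):=\sum_i(|e_i|-1)_+$. Then $\mathrm{wt}(\mathcal H)\ge((L+1)(1-\rho)-1)n\ge\theta L$ with $\theta:=(R+\eta/2)n$; this is exactly where $L=\lceil(1-R)/\eta\rceil$ is used. For any partition $\mathcal P$, let $p_{\mathcal P}(e)$ be the number of parts meeting $e$ and $\mathcal H|_P$ the induced hypergraph. Then $\mathrm{wt}(\mathcal H)=\sum_i(p_{\mathcal P}(e_i)-1)_++\sum_{P\in\mathcal P}\mathrm{wt}(\mathcal H|_P)$. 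So a violated inequality $\sum_i(p_{\mathcal P}(e_i)-1)_+<\theta(|\mathcal P|-1)$ pushes excess weight into some part. Iterating gives $J$ with $|J|\ge 2$ satisfying $\sum_i(p_{\mathcal P}(e_i\cap J)-1)_+\ge\theta(|\mathcal P|-1)$ for every partition $\mathcal P$ of $J$.

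(ii) This controls the actual row span $U$ of the $J$-subcollection, even when those codewords are linearly dependent.
\begin{itemize}
\item Let $\psi_j\in U^*$ be the restricted coordinate functionals, with $\psi_{j_0}=0$ for the reference codeword. Set $\kappa(S):=\dim\operatorname{aff}\{\psi_j:j\in S\}$.
\item Then $\operatorname{codim}_U(U\cap V_i)=\kappa(e_i\cap J)$ when $e_i\cap J\ne\emptyset$ (and $0$ otherwise), and $\dim U=\kappa(J)$.
\item Take a maximal chain of flats $\emptyset=F_0\subsetneq F_1\subsetneq\cdots\subsetneq F_{\kappa(J)+1}=J$ of this point configuration. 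Let $\mathcal P$ consist of the layers $F_s\setminus F_{s-1}$.
\item Picking one element of a nonempty $S$ from each layer it meets gives affinely independent points, since each chosen point lies outside the flat containing the earlier ones. Hence $p_{\mathcal P}(S)-1\le\kappa(S)$, with equality at $S=J$.
\item The partition inequality for this single $\mathcal P$ gives $\sum_i\operatorname{codim}_U(U\cap V_i)\ge\theta\dim U$, i.e.\ $\phi_{\mathbf V}(U,R)\le-(\eta/2)\dim U$, and the pair structure is preserved.
\end{itemize}

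With this in place, plus a factor $2^{L+1}$ for the choice of $J$, your Steps 3--4 go through. This is also why the paper's curve-decoding analogue prunes to sub-collections (\cref{prop:pruning}) and proves negativity of the actual row span directly (\cref{prop:low-rank-gap}), rather than quotienting.
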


The second random code family consists of Reed--Solomon codes
with random evaluation points.

\begin{definition}[Random Reed--Solomon codes]
\label{def:rrs}
    A Reed--Solomon code over a field $\F$ with evaluation points $\alpha_1,\cdots, \alpha_n$ is defined as follows:
    \[\mathsf{RS}(\boldsymbol{\alpha}, k)=\mathsf{RS}_{\F}(\alpha_1,\dots, \alpha_n, k):= \{(f(\alpha_1),\dots, f(\alpha_n)): f\in \F[X], \deg f<k\}\]
    A random Reed--Solomon code of rate $R$ is one in which $k=Rn$ and $(\alpha_1,\ldots,\alpha_n)$ is sampled uniformly from the ordered tuples of distinct elements of $\F$; that is, the evaluation points are sampled without replacement.
\end{definition}
We record the fixed-pair bound and list decodability of RRS codes. In \cref{sec:rrs-improvement}, we improve the former to remove the $r$ in the denominator of the exponent of \cref{eq:lms-fixed-pair}, thereby improving the alphabet size bound.

\begin{proposition}[{\cite[Proposition 6.1]{lms25}}]
\label{prop:lms-fixed-pair}
Let $n<q$, let $\mbf V=(V_1,\ldots,V_n)$ be an $r$-local profile,
let $U\in\mc L(\F^r)$, and let $k=Rn\in\N$. Suppose that, for some
$\eta>0$,
\begin{equation}
\phi_{\mbf V}(U',R)
\le
-\eta\dim_{\F}U'
\label{eq:lms-fixed-pair-hypothesis}
\end{equation}
for every nonzero $U'\subset U$.
Then, for $\boldsymbol{\alpha}$ sampled uniformly without replacement
from $\F$,
\begin{equation}
\mb{P}_{\boldsymbol{\alpha}}\left(
  \mathsf{RS}(\boldsymbol{\alpha},k)^{\equiv r}
  \cap\mc M_{\mbf V}^{\star}(U)
  \ne\{0\}
\right)
\le
(2^r-1)
\left(
  \frac{(4r)^{4r}k}{\eta(q-n)}
\right)^{\eta n/(2r)}.
\label{eq:lms-fixed-pair}
\end{equation}
\end{proposition}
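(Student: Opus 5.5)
The strategy is the exposure process of \cite{lms25}: reveal the evaluation points $\alpha_1,\alpha_2,\ldots,\alpha_n$ one at a time and track the candidate solution space. Write an element of $\mathsf{RS}(\boldsymbol{\alpha},k)^{\equiv r}\cap\mc M^\star_{\mbf V}(U)$ as an $r$-tuple of polynomials $(f_1,\ldots,f_r)$ of degree $<k$. Identify this tuple with a $k\times r$ coefficient matrix whose row span lies in $U$; this space has dimension $k\dim U$. For each $t$, let $S_t$ be the subspace of tuples satisfying $(f_1(\alpha_j),\ldots,f_r(\alpha_j))\in V_j$ for all $j\le t$. We must bound $\mb P(S_n\ne 0)$.

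To measure the constraint imposed at step $t$, work generically over $\K=\F(X)$. Specializing $X=\alpha_t$, the constraint $(f_h(X))_h\in V_t$ cuts $S_{t-1}$ by some number of dimensions. The generic count (over $\K$) is at least the expected codimension dictated by $\on{codim}$ in the relevant subspace. A \emph{defect} occurs when the specialization at $X=\alpha_t$ drops fewer dimensions than the generic count. Such a drop means a nonzero minor of a polynomial matrix vanishes at $\alpha_t$. The minor has degree $O(rk)$, and $\alpha_t$ is uniform over at least $q-n$ remaining points. So each defect has conditional probability at most $O(rk)/(q-n)$, provided the presentation of $S_{t-1}$ keeps degrees bounded.

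The potential is the following dynamic quantity. Let $\Phi_t$ be $\dim S_t$ minus the anticipated future constraint count $\sum_{j>t}\on{codim}$, taken relative to the current row-span subspace of $S_t$ (of dimension $d_t\le r$). The hypothesis $\phi_{\mbf V}(U',R)\le-\eta\dim U'$ for all nonzero $U'\subseteq U$ makes the initial value at most $-\eta n$ per unit of dimension. As the row-span dimension $d_t$ shrinks, we re-anchor to the smaller subspace $U'$. The monovariant is $\Psi_t=\Phi_t/k$ (suitably normalized) plus $d_t$. It starts at most $-\eta n+r$, and it increases by at most $1$ per step, and only on a defect. Each decrease of $d_t$ is paid for by the $d_t$ term, and the hypothesis applied to the new $U'$ keeps the potential part consistent. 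If $S_n\ne 0$, then $\Psi_n\ge 0$, so at least $\eta n-r\ge \eta n/2$ defects must have occurred (using $\eta n\ge 2r$).

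Next, union bound over the set of defect times. At most $\binom{n}{\eta n/2}\le(2e/\eta)^{\eta n/2}$ subsets are possible. Each specified subset has probability at most $(rk/(q-n))^{\eta n/2}$, by the conditional bound applied sequentially. This gives $(2erk/(\eta(q-n)))^{\eta n/2}$. The extra factor $r$ out front comes from a union over the at most $r$ possible final values of the dimension $d$ (or over the level at which the process ends).

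The main obstacle is keeping the defect polynomial degree at $O(rk)$ uniformly as $d_t$ drops. Naive re-bases after each quotient compound degrees to $r^{O(r)}k$. To avoid this, maintain an explicit presentation of $S_t\otimes\K$ as a kernel. Store the ``numerator and denominator'' matrices from past quotient steps, so that every relevant minor is a product of at most $r$ evaluation-dependent factors, each of degree $<k$. I expect making this bookkeeping rigorous, and checking that the monovariant is genuinely nonincreasing off defects across a dimension drop, to be the hardest part.
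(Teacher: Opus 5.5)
Your outline (exposure process, normalized dynamic potential plus $d_t$ as the monovariant, a degree-$d(k-1)$ defect determinant via presentations, and a union over $\binom{n}{L}$ defect times) is the route the paper takes to its sharper \cref{thm:low-lms}, which implies this cited bound when $\eta n\ge 2r$. However, the step you defer as ``bookkeeping'' is exactly where a missing idea is needed.

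For re-anchoring to work, $m_t$ must be the maximum of $\varphi_t(\mathscr W)/\dim_{\mathbb K}\mathscr W$ over \emph{all} $S_t$-generated subspaces, because the potential of the current span says nothing about a subspace it may later collapse to. But then ``increases only on a defect'' holds only for defects of whichever subspace attains the new maximum, and that subspace is chosen after $\alpha_t$ is revealed. Such defects are not rare. Take $d=2$, $k\ge 2$ and $V_1=\mathbb F\times\{0\}$. The line $\mathbb K\cdot(1,X-\alpha_1)$ is $S_0$-generated, has $c_1=1$, and is defective at $\alpha_1$. So already at the first reveal some generated subspace is defective with probability one, and your sequential bound $O(rk)/(q-n)$ per defect time is unjustified.

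The missing device is the branching of \cref{prop:branching}. If the span does not drop but the new maximizer $\mathscr W_t$ is proper, continue with both the restriction $S'_{t-1}\cap\mathscr W_t$ and the quotient $\pi_t(S'_{t-1})$; a witness survives in one of them.
\begin{itemize}
\item The restriction satisfies $w_t+m'_{t-1}\le d_{t-1}+m_{t-1}$, since $w_t\le d_{t-1}-1$ absorbs the $+1$.
\item The quotient has dimension $d_{t-1}-w_t$ and normalized potential at most $m'_{t-1}$. This follows from $\varphi^{\mathrm{out}}_t(\mathscr X)=\varphi'_{t-1}(\mathscr Y)-\varphi'_{t-1}(\mathscr W_t)$ and the fact that $\mathscr W_t$ attains $m'_{t-1}$.
\end{itemize}
Hence $d_t+m_t$ can rise only when the maximizer is the whole unchanged span. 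That is a defect for the $\mathscr F_{t-1}$-measurable space $\mathscr U_{t-1}$, which is what a single determinant controls.

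Three consequences of this for your write-up:
\begin{itemize}
\item The prefactor comes from a labelled union bound over the at most $d$ leaves of this tree (the children's dimensions add up), not from ``final values of the dimension''.
\item It is these quotient steps that force presentations $S_t\simeq E/Z$ with $Z\subset E\subset\mathbb F[X]^d_{<k}$. The defect polynomial is then one minor of size at most $d$ with entries of degree $<k$, not a product of factors.
\item Normalize by $\dim_{\mathbb K}$, not by $k$.
\end{itemize}

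Second, you assume $\eta n\ge 2r$, which is not a hypothesis of the statement. Since $\phi_{\mathbf V}(U,R)\ge-(1-R)\dim U$, the hypothesis forces $\eta\le 1$. So in the only nontrivial regime $(4r)^{4r}k<\eta(q-n)$, the condition $q>d(k-1)$ needed at initialization holds automatically. For $\eta n\ge 2r$ your bound $r\bigl(2erk/(\eta(q-n))\bigr)^{\eta n/2}$ does imply the stated one (smaller base below $1$, larger exponent, and $r\le 2^r-1$); you should say so explicitly.

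For $\eta n<2r$, however, the stated bound has exponent in $(0,1)$ and is still nontrivial for large $q$. There your monovariant starts at most at $d-\eta n$ with $d=\dim U$. If $d\le\eta n$, one forced mark already suffices. If $\eta n<d$, the starting bound is nonnegative and the argument gives nothing, so that regime needs a separate argument.
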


\begin{theorem}[{\cite[Theorem 1.1]{AlrabiahGGLZ2025}}]\label{thm:rrs-ld}
For every $\eta\in(0,1)$, there is an absolute constant $c_0>0$
such that, if $q\ge n\exp\left(c_0/\eta^2\right)$, then with high
probability a random Reed--Solomon code $C\subset\F^n$ of rate $R$ is
\begin{equation}
	\left(1-R-\frac{3\eta}{2},\ceil{(1-R)/\eta}\right)
	\text{ list decodable}.
\end{equation}
\end{theorem}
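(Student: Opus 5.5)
We would not take this result as a black box. Instead we would rederive it from the improved non-containment statement \cref{cor:rrs-equiv}(1) together with a crude count of list-decoding profiles, as suggested in \cref{rem:apply-to-ld-lr}.

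\textbf{Casting list decoding as a profile family.} Set $\rho:=1-R-3\eta/2$, $L:=\ceil{(1-R)/\eta}$ and $r:=L+1$. Suppose $C$ is not $(\rho,L)$-list decodable. Then there are a center $y$ and distinct codewords $c^{(1)},\dots,c^{(r)}$, each $\rho$-close to $y$. Let $S_i\subset[r]$ be the set of $j$ with $c^{(j)}_i=y_i$. Then row $i$ of the witness matrix lies in
\[V_i:=\{x\in\F^r: x_j=x_{j'}\text{ for all } j,j'\in S_i\}.\]
Each $j$ lies in $S_i$ for at least $(1-\rho)n$ coordinates. The row span $U$ avoids every hyperplane $\{x_j=x_{j'}\}$, which encodes that the codewords are distinct. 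So failure of list decoding means that $C$ contains some pair of the family $\mc F_{\rm LD}$ of all such $(\mbf V,U)$. Counting choices of the sets $S_i$ and of $U$ gives
\[|\mc F_{\rm LD}|\le 2^{rn}q^{r^2}.\]
No canonicalization is needed for this bound.

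\textbf{Threshold.} The step we expect to be the main obstacle is showing $R_{\mc F_{\rm LD}}\ge R+\eta/2$, say. That is, for every admissible pair and every proper $W\subsetneq U$, we need $R_{\mbf V}(U,W)\ge R+\eta/2$.

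The first attempt would be the generalized-Singleton-type potential computation. Write $\dim U-\dim W$ and $\dim(U\cap V_i)-\dim(W\cap V_i)$ in terms of the partition that the $S_i$ induce on the coordinates of $\F^r$. Then use that every index $j$ is in at least $(1-\rho)n=(R+3\eta/2)n$ of the $S_i$, and that $U$ separates all coordinates, to lower bound the average codimension drop per unit of dimension. This is the standard argument that the random-linear-code threshold for list decoding with these parameters sits at $1-\rho-\eta/2$ up to lower-order terms. The choice $L\approx(1-R)/\eta$ is exactly what absorbs the $1/L$ loss.

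Alternatively, one can argue contrapositively from \cref{thm:rlc-ld}. If the threshold were below $R+\eta/2$, then the RLC containment direction of \cite{lms25} would contradict the high-probability list decodability of random linear codes at a slightly higher rate. We would try the direct computation first.

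\textbf{Union bound.} Given the threshold bound, apply \cref{cor:rrs-equiv}(1) with slack $\eta/2$. For large $n$ the hypotheses $\eta n\ge 4r$ and $q>\max\{n,r(k-1)\}$ hold, since $q\ge n$ times a large constant. This gives
\[\mb P(C\text{ not list decodable})\le 2^{rn}q^{r^2}\,r\left(\frac{4erk}{\eta(q-n)}\right)^{\eta n/4}.\]
This is $o(1)$ as soon as $(q-n)\ge \frac{8erk}{\eta}\,2^{8r/\eta}$. Since $r=O(1/\eta)$ and $k\le n$, that condition reads $q\ge n\exp(O(1/\eta^2))$. The $q^{r^2}$ factor is negligible because the exponent $\eta n/4$ grows with $n$ while $r^2$ does not.
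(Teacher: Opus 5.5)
The paper does not prove this theorem. It is quoted from \cite{AlrabiahGGLZ2025}, and \cref{rem:apply-to-ld-lr} only predicts that the paper's tools should recover it. Your route is the one that remark envisions, and its union-bound half is sound. With slack $\eta/2$, \cref{cor:rrs-equiv}(1) has exponent $\eta n/4$. So even the trivial count $2^{rn}q^{r^2}$ yields $q\ge n\exp(O(1/\eta^2))$, once you split the exponent to absorb $q^{r^2}$. (With the $\eta n/(2r)$ exponent of \cite{lms25}, the same count would only give $n\exp(O(1/\eta^3))$.)

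The gap is the threshold bound $R_{\mathcal{F}_{\mathrm{LD}}}\ge R+\eta/2$. This is the whole combinatorial content of the theorem, and you do not prove it.

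(i) The quantifier is wrong. Since $R_{\mathbf V}(U)$ is a maximum, you need only \emph{some} $W\subsetneq U$ with $R_{\mathbf V}(U,W)\ge R+\eta/2$, and ``every $W$'' is false. For $U=\mathbb{F}^r$ and $W=\{x_1=0\}$, every $V_i$ contains $\mathbf{1}\notin W$, so $R_{\mathbf V}(U,W)=0$.

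(ii) The target leaves essentially no room. The generalized Singleton rate is $1-\frac{L+1}{L}\rho\ge R+\frac{\eta}{2}+\frac{3\eta^2}{2(1-R)}$. It is not $1-\rho-\eta/2=R+\eta$, which for small $\eta$ lies above the generalized Singleton bound. So your lemma must be sharp up to $o(\eta)$; any constant-factor loss in the $\rho/L$ term destroys the slack.

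(iii) Your sketch does not handle a general separating $U$. Here $\operatorname{codim}_U(U\cap V_i)$ is the affine dimension of the points $\{e_j^*|_U\}_{j\in S_i}$ in $U^*$, which is not a function of the partition. Moreover, the good $W$ depends on $U$. For example, take $r=3$, $U=\operatorname{span}(\mathbf{1},u)$, all $|S_i|\in\{0,3\}$ and $\rho<1/2$. Then $W=\{0\}$ gives only $(1-\rho)/2<1-\frac{3}{2}\rho$, and one must take $W=\operatorname{span}(\mathbf{1})$. Proving the bound uniformly over all such $U$ is exactly the statement that random linear codes achieve the generalized Singleton bound. That is the generalized-Singleton analysis (via weakly-partition-connected agreement hypergraphs) underlying \cite{AlrabiahGGLZ2025,gg26}.

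(iv) Your fallback through \cref{thm:rlc-ld} cannot supply the bound. Applying that theorem at a rate $R'>R$ with parameter $\eta'$ needs $\eta'\le\eta-\frac{2}{3}(R'-R)$ to keep the radius at least $\rho$. When $(1-R)/\eta$ is an integer, it also needs $\eta'\ge\eta\frac{1-R'}{1-R}$ to keep the list size at most $L$. These are incompatible once $\eta<\frac{2}{3}(1-R)$, so no positive slack results.

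To close the gap, import a threshold lemma for the list-decoding family that is sharp to the generalized Singleton bound. Alternatively, use a random-linear-code theorem at radius $\frac{L}{L+1}(1-R'-\varepsilon)$ with list size $L$, combined with the random-linear-code containment direction. With that in hand, the rest of your argument goes through.
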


We introduce our last family of codes, random Low-Density Parity-Check (LDPC) codes.
In this paper, by ``random LDPC codes'' we mean \emph{Gallager's ensemble} of random LDPC codes~\cite{gallager}.

\begin{definition}[Gallager random LDPC codes]
\label{def:rldpc}
    Assume $s\mid n$ and $(1-R)s\in\N$. Let $t=(1-R)s$, and let $G_i = (V, W_i, E_i)$ for $i=1,\cdots, t$ be uniformly random $(1,s)$ regular bipartite graphs with a shared left vertex set and disjoint right vertex sets $W_i$ of size $n/s$. The graph $G$ is then defined as $\bigsqcup_i G_i$; that is, $G = (V,W,E)$, where $W = \bigsqcup_i W_i$ and $E = \bigsqcup_i E_i$.  The edge weights $w_e$ are sampled independently and uniformly at random from $\F^\ast$. A \emph{Random LDPC Code} $C$ of rate $R$ is then the LDPC code derived from $G$ and the weights $\{w_e\}$, as described above.
\end{definition}

We now record the statement analogous to \cref{lem:rlc-fixed-pair} for RLDPC codes. The fact that the ``good event'' happens with high probability was first observed in \cite[Theorem 2.14]{mrrsw20}. 

\begin{proposition}[Fixed-pair RLDPC estimate, {\cite[Lemmas~3.22--3.23]{ggsw}}]
\label{prop:rldpc-fixed-pair}
For any $\eta, R \in (0,1)$ and any positive integer $r$, there exists an $s_0=s_0(\eta, r, R, q)$ such that the following holds for all odd $s>s_0$ such that $(1-R)s\in\N$, and all sufficiently large multiples $n$ of $s$. Let $\mbf{V}=(V_1,\ldots,V_n)$ be an $r$-local profile, let
$U\in\mc{L}(\F^r)$, and let 
\begin{equation}
\mathsf{Good}
    \coloneqq
    \left\{
	        \delta(C)\ge \frac{1}{2}h_q^{-1}(1-R)
    \right\}.
\end{equation}
Then, with probability over $C=\mathsf{RLDPC}(n,q,s,R)$,
\begin{equation}
\mb{P}
    \left(\mathsf{Good}\right)
 \ge 1-o_{n\to\infty}(1)
\quad\text{and}\quad 
\mb{P}\left(
    C\text{ contains }(\mathbf{V},U)
\middle|
    \mathsf{Good}
\right)
\le q^{n\phi_{\mathbf{V}}(U,R)+\eta n}.
\end{equation}
\end{proposition}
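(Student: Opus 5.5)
The approach is a first-moment argument over witness matrices, conditioned on $\mathsf{Good}$, following the template of \cite{mrrsw20} for Gallager's ensemble and its row-span constrained version in \cite{ggsw}. For the first claim, $\mb P(\mathsf{Good})\ge1-o(1)$, I would cite \cite[Theorem~2.14]{mrrsw20}. That result says Gallager's ensemble with odd $s$ has relative distance at least $\frac12 h_q^{-1}(1-R)$ with probability $1-o_{n\to\infty}(1)$, once $s$ exceeds a threshold depending on $R$ and $q$; the odd-$s$ hypothesis is what excludes the parity obstruction to this. For the second claim I would bound
\begin{equation}
\mb P\left(C\text{ contains }(\mbf V,U)\,\middle|\,\mathsf{Good}\right)\le \frac{1}{\mb P(\mathsf{Good})}\sum_{M\in\mc M_{\mbf V}(U)}\mb P\left(M\in C^{\equiv r}\wedge\mathsf{Good}\right).
\end{equation}

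The key structural input is the decomposition of the code into $t=(1-R)s$ independent layers. Each layer is a uniformly random partition of $[n]$ into $n/s$ blocks of size $s$, with independent uniform weights in $\F^\ast$. Hence $\mb P(M\in C^{\equiv r})$ factors as a product over layers. Inside one layer, the probability depends only on the \emph{type} of $M$, namely the empirical distribution $\tau$ of its rows on $U$. Concretely, it equals the probability that a random $s$-to-$1$ grouping of the rows, with random nonzero weights, makes every block sum to zero in $\F^r$.

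I would estimate this by a Fourier/characteristic-function computation for a single block, followed by a local limit estimate for the multinomial partition. The result is a bound of the form $q^{-n(1-R)\,\Lambda_s(\tau)}$. Here $\Lambda_s(\tau)\to\dim U$ as $s\to\infty$, uniformly over types that are \emph{spread}, meaning $\tau$ is not concentrated on a proper subspace of $U$ up to small mass. Separately, the number of $M\in\mc M_{\mbf V}(U)$ of type $\tau$ is at most $q^{nH_q(\tau)}$. Summing over all types gives the trivial bound $q^{\sum_i\dim(U\cap V_i)}$. With spread types contributing $q^{-(1-R)n\dim U+o_s(1)n}$, this yields $q^{n\phi_{\mbf V}(U,R)+\eta n/2}$. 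There are only $\mathrm{poly}(n)^{q^r}$ types, so this step costs nothing in the exponent.

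The remaining, non-spread types are the main obstacle. These are matrices in which some nonzero functional $\lambda\in U^\ast$ vanishes on all but a $\beta$-fraction of rows. For such $M$, the column combination $M\lambda^{\top}$ is a nonzero codeword, since the row span is exactly $U$, and it has weight at most $\beta n$. Choosing $\beta<\frac12h_q^{-1}(1-R)$ makes this incompatible with $\mathsf{Good}$, so these terms vanish after intersecting with $\mathsf{Good}$.

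The delicate part is the intermediate regime, where $\tau$ is concentrated on a proper subspace only up to mass larger than $\beta$. There the single-layer bound is genuinely worse than $q^{-(1-R)\dim U}$. I would try to handle it by induction on $\dim U$. Split the contribution along the subspace $W$ carrying most of the mass. Use that $\Lambda_s$ degrades by at most a multiple of the off-$W$ mass $\beta$, and that the entropy of such types is correspondingly reduced by at most $O(\beta\log(1/\beta))$ relative to the spread case. Then choose $s_0(\eta,r,R,q)$ large enough that all these error terms total at most $\eta n$. Dividing by $\mb P(\mathsf{Good})\ge1/2$ is absorbed into the $\eta n$ slack for large $n$, giving the claimed $q^{n\phi_{\mbf V}(U,R)+\eta n}$.
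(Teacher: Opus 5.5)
The paper gives no proof of this proposition: it imports it from \cite[Lemmas~3.22--3.23]{ggsw}, with the bound on $\mathbb{P}(\mathsf{Good})$ taken from \cite[Theorem~2.14]{mrrsw20}. Your architecture is the standard argument behind those citations, and it does prove the statement: independence of the layers, reduction of a uniformly random partition to i.i.d.\ block sums at a $\mathrm{poly}(n)$ cost, a character-sum bound for one block, and $\mathsf{Good}$ to eliminate types concentrated near a hyperplane of $U$.

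Your last paragraph, however, misdiagnoses the difficulty. The ``intermediate regime'' is empty, and what you assert about it is backwards. Take a block of $s$ rows drawn i.i.d.\ from $\tau$, supported on $U$ with $d=\dim U$, and i.i.d.\ uniform weights in $\mathbb{F}^\ast$. Then
\[
  p_s(\tau)=q^{-d}\sum_{\lambda\in U^\ast}\left(\frac{q\,\tau(\ker\lambda)-1}{q-1}\right)^{s}.
\]
Only hyperplanes $\ker\lambda$ of $U$ enter, since mass on a smaller $W$ is mass on every hyperplane containing $W$. Fix $\beta:=\frac12 h_q^{-1}(1-R)$. For each nonzero $\lambda$ there are exactly two cases:
\begin{itemize}
\item If $\tau(\ker\lambda)>1-\beta$, the column combination $M\lambda$ is a nonzero codeword of relative weight $<\beta$, so $M$ is excluded by $\mathsf{Good}$.
\item If $\tau(\ker\lambda)\le 1-\beta$, the base is at most $1-\frac{q\beta}{q-1}\le 1-\beta$.
\end{itemize}
There is no third case. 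Hence every $M$ compatible with $\mathsf{Good}$ has $p_s(\tau)\le q^{-d}\left(1+q^{r}(1-\beta)^{s}\right)$. Since $\beta$ depends only on $(R,q)$, you can choose $s_0(\eta,r,R,q)$ with $(1-R)\log_q\left(1+q^{r}(1-\beta)^{s_0}\right)\le \eta/4$. Multiplying by the total count $q^{\sum_i\dim(U\cap V_i)}$ then finishes the proof.

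Larger off-hyperplane mass makes the character sum smaller, not larger. So your claims that the single-layer bound is ``genuinely worse'' in that regime, and that $\Lambda_s$ degrades with the off-$W$ mass, are false. Take the spreadness threshold equal to the distance threshold $\beta$, and delete the induction on $\dim U$ entirely.

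One further correction concerns where odd $s$ is used. It is indispensable in the step above that discards the terms with negative base, which your proposal does not flag; you attribute the parity hypothesis only to the distance statement. For $q=2$ and even $s$, take $U=\mathbb{F}_2^2$ with rows split evenly between $(1,0)$ and $(0,1)$. Every hyperplane of $U$ has mass at most $1/2$. Yet $\lambda=(1,1)$ gives $M\lambda=\mathbf{1}$ and contributes $(-1)^s=+1$, so $p_s(\tau)=\frac12=2q^{-d}$. Thus your uniform convergence claim $\Lambda_s(\tau)\to\dim U$ over spread types holds only because $s$ is odd.
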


The following theorem is immediate by rerunning the refined list-decoding
profile count underlying \cref{thm:rlc-ld} (quoted from
\cite[Corollary~3.11]{gg26}) with the fixed-pair RLDPC estimate and
good-event reduction already recorded in \cref{prop:rldpc-fixed-pair}.

\begin{theorem}[{\cite[Corollary~3.11]{gg26} combined with \cite[Lemmas~3.22--3.23]{ggsw}}]
\label{thm:rldpc-ld}
There is an absolute constant $c_0>0$ such that, for every $\eta, R\in(0,1)$, there exists
$s_0=s_0(\eta,q,R)$ such that the following holds for every odd
$s>s_0$ such that $(1-R)s\in\N$.  Let $C=\mathsf{RLDPC}(n,q,s,R)$ be a random LDPC code from
Gallager's ensemble of rate $R$. If
$q\ge \exp\left(c_0\log(2/\eta)/\eta\right)$\footnote{Note that this alphabet size suffices for list decoding of RLDPC codes due to the counting of list-decoding profiles of \cite{gg26} and the fixed-pair RLDPC estimate of \cite{ggsw}.} and $n$ is a sufficiently large multiple of $s$, then with high
probability $C$ is
\begin{equation}
    \left(
        1-R-\frac{3\eta}{2},
        \ceil{2(1-R)/\eta}
    \right)
    \text{ list decodable}.
\end{equation}
\end{theorem}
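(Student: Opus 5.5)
The theorem immediately above, \cref{thm:rldpc-ld}, is a list-decoding statement for Gallager's ensemble. I would prove it by rerunning the random-linear-code argument behind \cref{thm:rlc-ld} with the fixed-pair RLDPC estimate in place of \cref{lem:rlc-fixed-pair}.

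\textbf{Step 1 (LCL reformulation).} Set $L=\ceil{2(1-R)/\eta}$ and $\rho=1-R-3\eta/2$. Failure of $(\rho,L)$-list decodability means some Hamming ball of radius $\rho$ contains $L+1$ codewords. As in \cite{gg26}, this is the same as the code containing a pair from a row-span constrained $r$-local family $\mc{F}_{\mathrm{LD}}$, with $r=L+1$ (or $r=L$ after translating the center).

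\textbf{Step 2 (canonical profiles).} From \cite[Corollary~3.11]{gg26} I take the refined count of canonical profiles, stratified by their potential. I would extract a bound of roughly the following form, summed over all canonical pairs:
\begin{equation}
  \sum_{(\mbf V,U)\in\mc F_{\mathrm{LD}}}
  q^{n\phi_{\mbf V}(U,R)}
  \le
  q^{-c\eta n}
\end{equation}
for some absolute constant $c$. This should hold once $q\ge\exp(c_0\log(2/\eta)/\eta)$ and $n$ is large. The extra slack from choosing the list size $\ceil{2(1-R)/\eta}$ instead of $\ceil{(1-R)/\eta}$ is there to buy room, so that I can afford $c\eta\ge 2\eta'$ below.

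\textbf{Step 3 (fixed-pair estimate).} Fix a small $\eta'=c\eta/2$ and apply \cref{prop:rldpc-fixed-pair} with $r$ fixed and parameter $\eta'$. This determines $s_0(\eta',r,R,q)$, which depends only on $(\eta,q,R)$ since $r$ is a function of $R$ and $\eta$. For odd $s>s_0$ it gives
\[
\mb P(C\text{ contains }(\mbf V,U)\mid\mathsf{Good})\le q^{n\phi_{\mbf V}(U,R)+\eta' n}.
\]

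\textbf{Step 4 (union bound).} A union bound over $\mc F_{\mathrm{LD}}$ conditioned on $\mathsf{Good}$ gives failure probability at most $q^{\eta' n}q^{-c\eta n}\le q^{-c\eta n/2}$. Adding $\mb P(\neg\mathsf{Good})=o(1)$ then yields the high-probability conclusion.

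\textbf{Main obstacle.} The delicate step is Step 2. The bound \cite[Corollary~3.11]{gg26} is stated for random linear codes as a probability bound, not as a potential sum. I need to verify that their argument really factors through $\sum q^{n\phi}$, i.e.\ that they only use \cref{lem:rlc-fixed-pair} pair by pair, so that the RLDPC estimate can be substituted directly. I also need to check that the additive $\eta' n$ loss, which is uniform over pairs, is absorbed by the $\eta$-slack in the potentials. This works because the loss multiplies the whole sum by $q^{\eta' n}$, while each term already carries a factor $q^{-\Omega(\eta n)}$ after the stratified count. The constant $c_0$ stays absolute because the RLDPC loss only affects $s_0$, not $q$.
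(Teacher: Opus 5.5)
Your proposal is correct and takes essentially the same approach as the paper. The paper's entire justification is that one reruns the canonical list-decoding profile count of \cite{gg26} with the RLDPC fixed-pair estimate and good-event reduction of \cref{prop:rldpc-fixed-pair} in place of \cref{lem:rlc-fixed-pair}, which is what you do.

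Two small remarks. The sum in Step 2 should range over the canonical, negative-potential family rather than all of $\mc F_{\mathrm{LD}}$, since raw profiles can have positive potential. And because the loss $\eta'$ in \cref{prop:rldpc-fixed-pair} can be taken arbitrarily small at the cost of enlarging $s_0$, your argument does not actually use the doubled list size.
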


The fixed-pair statements above are the probability estimates that we will use. In \cite[Corollaries~3.19 and~3.26]{ggsw}, maximal quotienting and a union bound over a general family $\mc F$ turn the RLC and RLDPC fixed-pair estimates into general local threshold theorems. Similarly, \cite[Corollary~3.21]{ggsw} applies maximal quotienting and a union bound to \cref{prop:lms-fixed-pair} to obtain the low-rate RRS threshold theorem.  These general results replace the potential of each individual profile by a uniform worst-case gap and then take a single union bound over $\mc F$. 

For the particular curve decoding local property in \cref{subsec:curve-dec}, that is too coarse. In \cref{sec:profile-counting}, we instead compress $\mc F$ and apply the fixed-pair estimates separately to the resulting families, retaining some extra dependence for a refined analysis. For random Reed--Solomon codes, we additionally improve \cref{prop:lms-fixed-pair} in \cref{sec:rrs-improvement} before carrying out this refined union bound.

\subsection{Subspace-Design Codes}
\label{subsec:sdc}
This section concerns $\F$-additive codes $C\subset(\F^s)^n$.
We introduce notation and results from \cite{gg25a,ggsw} that we will use.

\begin{definition}[Subspace-Design Code]\label{def: Subspace-designs}
    For any function $\tau: \mathbb N\rightarrow \R_{\le 1}$, an $\FF_q$-additive code $C\subset  (\FF_q^s)^n$ is said to be a $\tau$-\textit{subspace design} code if for every $r\in \mathbb N$, and every $\F_q$-linear subspace $A$ of $C$ of dimension at most $r$, the following holds: 
    \[\frac{1}{n}\sum\limits_{i=1}^n \dim A_i\le \dim(A)\cdot \tau(r) \]
    where $A_i = \{a\in A\mid a_i=0\}$.
\end{definition}

\begin{definition}\label{def:sdc}
For an additive code $C\subset \left(\mb{F}_q^s\right)^n$ and dimension $r$, let
\begin{equation}
\sigma(r)\coloneqq \max\left\{\mb{E}_{i\in [n]}\left[\frac{\dim V_i}{\dim V}\right]:V\subset C, \dim V=r\right\}
\end{equation}
where $V_i = \{x\in V:x_i=0\}$. 
In the message space with encoding maps $\on{Enc}_i:\mb{F}_q^k\to\mb{F}_q^s$, we have
\begin{equation}\label{eq:sdc-message}
	\sigma(r)\coloneqq \max\left\{\mb{E}_{i\in [n]}\left[\frac{\dim (A\cap \ker(\on{Enc}_i))}{\dim A}\right]:A\subset \mb{F}_q^k, \dim A=r\right\}
\end{equation}
Following the notation of \cite{gg25a}, we can define the monotone envelope $\tau(r)\coloneqq \max\{\sigma(1), \dots, \sigma(r)\}$.  Observe that with this choice, $C$ is by definition a $\tau$-subspace design code.
\end{definition}

The paper \cite{bcdz25b} uses slightly different language; they say that code $C$ of rate $R$ is \emph{$r$-subspace designable if}, for every $d\in [r]$
\[ 
	\sigma(d)\le R+\frac{1}{nd}.
\]
We recall that $\sigma (r)\ge R-r/sn$ always. Thus, having the subspace design property with ``good'' parameters means that we should have $\tau(d)\le R+\eta$ for some ``small'' $\eta>0$.

A canonical subspace design code family are folded Reed-Solomon codes, defined below.

\begin{definition}[$s$-Folded Reed-Solomon Codes \cite{GR08}]\label{def: FRS Codes}
        Let $n, k, s$ be positive integers and $\FF_q$ be a field with $q>sn$ and $\gamma$ be an element of the field.
        Additionally, let $\alpha_1, \alpha_2, \ldots, \alpha_n \in \FF_q$ be distinct elements such that $\alpha_i \gamma^t \ne \alpha_j$ for all $i\ne j$ and $t<s$.
        
        The $s$-Folded Reed-Solomon Code $\mathcal C\subseteq (\FF_q^s)^n$ with message length $k$ on $(\alpha_1, \ldots, \alpha_n)$ is given by: \[\on{FRS}_{q, n, k, s}( \bar\alpha)=\left\{\left(\begin{bmatrix}
            f(\alpha_1) \\
            f(\alpha_1\gamma)\\
            \vdots\\
            f(\alpha_1 \gamma^{s-1})
        \end{bmatrix}, \begin{bmatrix}
            f(\alpha_2) \\
            f(\alpha_2\gamma)\\
            \vdots\\
            f(\alpha_2 \gamma^{s-1})
        \end{bmatrix}, \cdots, \begin{bmatrix}
            f(\alpha_n) \\
            f(\alpha_n\gamma)\\
            \vdots\\
            f(\alpha_n \gamma^{s-1})
        \end{bmatrix} \right)\Bigg| f\in \FF_{q}[x], \deg f<k \right\} \ . \qedhere \]
\end{definition}

\begin{theorem}[\cite{GK16}]\label{thm: GKS16 subspace design}
    Folded Reed-Solomon codes of order $s$ and rate $R$ are $\tau$-subspace-design codes for $\tau(r)= \frac{sR }{s-r+1}$ for all $1\le r\le s$ and $\tau(r)=1$ otherwise.
\end{theorem}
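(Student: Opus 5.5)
We plan to prove this by the standard folded-Wronskian argument of Guruswami--Kopparty. Throughout, the rate is $R=k/(sn)$. By \cref{def: Subspace-designs}, the target inequality for an $r$-dimensional $\F_q$-subspace $A$ of messages, i.e.\ of polynomials of degree $<k$, is
\begin{equation}
\sum_{i=1}^n \dim A_i \le \frac{rk}{s-r+1},
\end{equation}
where $A_i=\{f\in A: f(\alpha_i\gamma^u)=0 \text{ for all } 0\le u<s\}$. For $r>s$ we have $\tau(r)=1$, and the inequality is trivial because $\dim A_i\le \dim A$. For subspaces of dimension $r'<r$ we will use the bound at $r'$ together with the monotonicity of $r\mapsto sR/(s-r+1)$. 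So it suffices to treat $\dim A=r\le s$. As is standard, we assume $\gamma$ has multiplicative order at least $k$ (e.g.\ $\gamma$ primitive); this is needed in the next step.

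\textbf{Step 1 (nonvanishing of the folded Wronskian).} Fix a basis $f_1,\dots,f_r$ of $A$ and set
\begin{equation}
W(X):=\det\big[f_j(\gamma^{u}X)\big]_{0\le u<r,\ 1\le j\le r}.
\end{equation}
Each entry has degree $<k$, so $\deg W\le r(k-1)$. We must show $W\not\equiv 0$. For this we use the known fact that $f_1,\dots,f_r$ are linearly independent over $\F_q$ if and only if the folded Wronskian is nonzero, provided $\gamma$ has order $\ge k$. The proof views $f\mapsto f(\gamma X)$ as a linear operator that is diagonal on monomials with distinct eigenvalues $\gamma^e$, $e<k$, and then applies a Vandermonde/Moore-type argument over $\F_q(X)$. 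We will cite this fact rather than reprove it. We expect this step to be the main technical obstacle, and if done from scratch it requires the order condition on $\gamma$.

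\textbf{Step 2 (many roots with multiplicity).} Fix $i\in[n]$ and $t\in\{0,\dots,s-r\}$, and let $d_i=\dim A_i$. Changing basis only rescales $W$ by a nonzero constant, so we may choose the basis so that $f_1,\dots,f_{d_i}$ span $A_i$. For $j\le d_i$ and every $u<r$, the polynomial $f_j(\gamma^uX)$ vanishes at $X=\alpha_i\gamma^t$, because $\alpha_i\gamma^{t+u}$ lies in the $i$-th fold (as $t+u\le s-1$). Hence each of the first $d_i$ columns is divisible by $(X-\alpha_i\gamma^t)$, and $(X-\alpha_i\gamma^t)^{d_i}$ divides $W$. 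The points $\alpha_i\gamma^t$ for $i\in[n]$ and $0\le t\le s-r$ are pairwise distinct. Indeed, if $\alpha_i\gamma^t=\alpha_j\gamma^{t'}$ with $t\ge t'$, then $\alpha_i\gamma^{t-t'}=\alpha_j$ with $0\le t-t'<s$. This forces $i=j$ by the hypothesis of \cref{def: FRS Codes}, and then $t=t'$ because $\gamma$ has order $\ge k$; we assume this order is also at least $s$, as is standard.

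\textbf{Step 3 (count).} Summing the multiplicities over all these distinct roots gives
\begin{equation}
(s-r+1)\sum_i d_i\le \deg W\le r(k-1)<rk=r\cdot sRn.
\end{equation}
Dividing by $n$ gives $\frac1n\sum_i\dim A_i\le r\cdot \frac{sR}{s-r+1}$, which is exactly $\tau(r)$-subspace design. This is also consistent with $\tau$ being the monotone envelope in \cref{def:sdc}.
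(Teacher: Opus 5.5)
The paper gives no proof of this statement; it is quoted from \cite{GK16}. Your proposal is essentially the standard Guruswami--Kopparty folded-Wronskian argument from that source: show $W\not\equiv 0$, find multiplicity $\dim A_i$ at each of the $s-r+1$ shifted points $\alpha_i\gamma^t$, and compare with $\deg W\le r(k-1)$. It is correct under the usual FRS conventions.

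One small hypothesis is missing. In Step 2, the deduction from $\alpha_i\gamma^{t-t'}=\alpha_i$ to $t=t'$ also needs $\alpha_i\neq 0$. The paper's \cref{def: FRS Codes} does not literally exclude $\alpha_i=0$, and the hypothesis is genuinely necessary. Suppose $\alpha_1=0$, $s\ge 3$ and $k=sm+2$. Then the first fold costs only one root, so $X\prod_{j=2}^{m+1}\prod_{u<s}(X-\alpha_j\gamma^u)$ (degree $k-1$) vanishes on $m+1>k/s$ folds, which violates the $r=1$ bound. State this alongside your assumptions on the order of $\gamma$, for instance by assuming that all $sn$ points $\alpha_i\gamma^u$ are distinct.
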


We next state what it means for profile containment for $\F$-additive codes \cite{bcdz25b}. Essentially, we unfold each symbol into $s$ scalar coordinates
and impose $V_i$ on each scalar row in block $i$. It turns out this folding and unfolding preserves the threshold, as observed in \cite{bcdz25b}. We state verbatim the version in \cite{ggsw}.

\begin{definition}[Folded Containment]
\label{def:folded-profile}
Let $C\subset(\F^s)^n$ be an $\F$-additive code.  For an
$r$-local profile
$\mbf{V}=(V_1,\ldots,V_n)\in\mc{L}(\F^r)^n$, define the
\emph{duplicated profile}
$\mbf{V}^{\otimes s}\in\mc{L}(\F^r)^{sn}$ by
\begin{equation}
  \left(\mbf{V}^{\otimes s}\right)_{(i,h)}:=V_i
  \label{eq:tensor-profile}
\end{equation}
for every $i\in[n]$ and $h\in[s]$.
We say that $C$ contains $(\mbf{V},U)$ if the ``unfolded'' version
$C\subset\F^{sn}$ contains $(\mbf{V}^{\otimes s},U)$.
Here, by the ``unfolded code,'' we mean the image of
$C\subset(\F^s)^n$ in $\F^{sn}$ under the natural identification.
For a family $\mc{F}$ of $r$-local pairs, define
\begin{equation}
  \mc{F}^{\otimes s}
  :=
  \left\{
    (\mbf{V}^{\otimes s},U):
    (\mbf{V},U)\in\mc{F}
  \right\}.
\end{equation}
Thus, an $\F$-additive code $C\subset(\F^s)^n$ contains $\mc{F}$ if
and only if its unfolding $C\subset\F^{sn}$ contains
$\mc{F}^{\otimes s}$.
\end{definition}
\begin{proposition}[{\cite[Proposition 3.9]{ggsw}}]\label{prop:unfolding}
Let $(\mbf{V}, U)$ be any $r$-local pair on plain codes in $\mb{F}_q^n$ and let $W\subsetneq U$.
Then, for every $s\in\mb{N}$
\begin{equation}
R_{V^{\otimes s}}(U,W)=R_V(U,W).
\end{equation}
Consequently, for any row-span constrained local LCL family, $R_{\mc{F}}=R_{\mc{F}^{\otimes s}}$ is independent of $s$.
\end{proposition}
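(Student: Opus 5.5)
The plan is a direct computation from the definition of the relative threshold in \cref{eq:def-relative-threshold}. The key observation is that $R_{\mbf V}(U,W)$ depends on the profile $\mbf V$ only through the empirical distribution of the pair
\[
\bigl(\dim_{\F}(U\cap V_i),\ \dim_{\F}(W\cap V_i)\bigr),\qquad i\in[n],
\]
and duplicating every coordinate $s$ times does not change that distribution. So I expect no real obstacle; the argument is bookkeeping.

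Concretely, fix $(\mbf V,U)$ and $W\subsetneq U$, and write $d:=\dim_{\F}U-\dim_{\F}W>0$. This quantity is the same for $\mbf V$ and $\mbf V^{\otimes s}$, since $U$ and $W$ are unchanged. Index the coordinates of $\mbf V^{\otimes s}$ by pairs $(i,h)\in[n]\times[s]$, as in \cref{eq:tensor-profile}. Then $(\mbf V^{\otimes s})_{(i,h)}=V_i$, so the summand attached to $(i,h)$ in \cref{eq:def-relative-threshold} is
\[
\frac{\dim_{\F}(U\cap V_i)-\dim_{\F}(W\cap V_i)}{d}.
\]
This is independent of $h$. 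Averaging over the $sn$ coordinates, I first average over $h\in[s]$ for each fixed $i$, which leaves the $i$th summand unchanged, and then average over $i\in[n]$. This gives
\[
\mb E_{(i,h)\in[n]\times[s]}\left[\cdots\right]=\mb E_{i\in[n]}\left[\cdots\right],
\]
and hence $R_{\mbf V^{\otimes s}}(U,W)=R_{\mbf V}(U,W)$. The same calculation applied to \cref{eq:def-potential} shows $\phi_{\mbf V^{\otimes s}}(X,R)=\phi_{\mbf V}(X,R)$ for every $X$, which is a useful sanity check against \cref{eq:potential-gap}.

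For the consequence, I would take the maximum over $W\subsetneq U$ in \cref{eq:def-pair-threshold}. The range of $W$ is the same on both sides, so $R_{\mbf V^{\otimes s}}(U)=R_{\mbf V}(U)$ for every pair. Taking the minimum over $(\mbf V,U)\in\mc F$, which is in bijection with $\mc F^{\otimes s}$ via $(\mbf V,U)\mapsto(\mbf V^{\otimes s},U)$, then gives $R_{\mc F^{\otimes s}}=R_{\mc F}$ by \cref{eq:Rf}.

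The only point needing care is that this map is indeed a bijection, or at least a surjection onto $\mc F^{\otimes s}$, so that the minima range over the same set of values. This holds by the definition of $\mc F^{\otimes s}$ as an image. Injectivity is not needed, since repeated values do not affect a minimum.
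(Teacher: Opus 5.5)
Your proof is correct. The paper does not prove this statement itself; it cites \cite{ggsw}. Your argument is the intended one: each summand of \cref{eq:def-relative-threshold} for $\mbf V^{\otimes s}$ depends only on the block index $i$, so the average over $(i,h)$ collapses to the average over $i$, and the maximum over $W\subsetneq U$ and the minimum over the image family $\mc F^{\otimes s}$ then carry over unchanged.
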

Previous works on the local equivalence of subspace design codes take the following existential form. In particular, the second containment statement is proven by taking a random additive code and observing it is subspace design with high probability, and has the same threshold as the unfolded plain RLC by \cref{prop:unfolding}. We quote the version of \cite{ggsw} but it is observed proven first in \cite[Theorems~4.1--4.2]{bcdz25b}.
\begin{theorem}[{\cite[Theorem 3.27]{ggsw}, }]\label{thm:sdc-threshold-pair}
Let $(\mbf{V}, U)$ be an $r$-local pair. Then
\begin{enumerate}
	\item Let $C\subset(\mb{F}_q^s)^n$ be an additive $\tau$-subspace design code. If $\tau(r) < R_\mbf{V}(U)$, then $C$ does not contain $(\mbf{V}, U)$.
	\item For any $\eta>0$, there exists 
    a $\tau$-subspace design additive code $C\subset(\mb{F}_q^s)^n$ with $\tau(r)\le R_{\mbf{V}}(U)+\eta$ that contains $(\mbf{V}, U)$.
\end{enumerate}
\end{theorem}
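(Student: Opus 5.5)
For part 1 I would argue directly from a witness matrix by building a small subcode of $C$ that violates the subspace-design inequality. For part 2 I would use a random additive code, which is simultaneously a good subspace design and above the RLC containment threshold.

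\emph{Part 1.} Suppose, for contradiction, that $C$ contains $(\mbf V,U)$.
\begin{enumerate}
\item Unfold $C$ as in \cref{def:folded-profile} and fix a witness $M\in C^{\equiv r}$, an $sn\times r$ matrix whose rows in block $i$ lie in $V_i$ and whose row span is exactly $U$.
\item Consider the linear map $T:\F^r\to C$, $x\mapsto Mx$. Its kernel is $U^{\perp}$. Let $S_i\subset U\cap V_i$ be the span of the $s$ rows in block $i$. Then $Mx$ vanishes on block $i$ if and only if $x\in S_i^{\perp}$.
\item Fix an arbitrary $W\subsetneq U$ and set $A':=T(W^{\perp})\subset C$. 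Then $\dim A'=\dim U-\dim W\le r$. Its block-$i$ vanishing subspace is $A'_i=T(W^\perp\cap S_i^\perp)$, so
\[\dim A'_i=\dim U-\dim(W+S_i)\ge(\dim U-\dim W)-\bigl(\dim(U\cap V_i)-\dim(W\cap V_i)\bigr).\]
The inequality uses $W+S_i\subset W+(U\cap V_i)$ together with $W\cap(U\cap V_i)=W\cap V_i$.
\item Average over $i$ and divide by $\dim A'$. The left side is at most $\tau(\dim A')\le\tau(r)$, by \cref{def: Subspace-designs} and the monotonicity of $\tau$. The right side equals $R_{\mbf V}(U,W)$ by \cref{eq:def-relative-threshold}.
\item Hence $R_{\mbf V}(U,W)\le\tau(r)$ for every $W\subsetneq U$. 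Taking the maximum gives $R_{\mbf V}(U)\le\tau(r)$, which contradicts the hypothesis $\tau(r)<R_{\mbf V}(U)$.
\end{enumerate}
The only care needed is that a block carries $s$ scalar rows; replacing them by their span $S_i$ handles this.

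\emph{Part 2.} Take a random $\F$-additive code $C\subset(\F^s)^n$ of rate $R:=R_{\mbf V}(U)+\eta/2$, as in \cref{def:rlc}.
\begin{enumerate}
\item Since the maps $\mathsf{Enc}_i$ are independent and uniform, every scalar coordinate is an independent uniform linear functional. So the unfolded code is exactly a random linear code of length $sn$ and rate $R$.
\item By \cref{prop:unfolding}, $R_{\mbf V^{\otimes s}}(U)=R_{\mbf V}(U)<R$. The RLC above-threshold containment theorem (the RLC threshold theorem of \cite{lms25,ggsw}) then shows that the unfolded code contains $(\mbf V^{\otimes s},U)$ with probability $1-o(1)$ as $n\to\infty$. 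By \cref{def:folded-profile}, this means $C$ contains $(\mbf V,U)$.
\item Separately, I would show that $\tau(r)\le R+\eta/2$ with high probability once $s$ is large in terms of $r,\eta,q$. For each $A\subset\F^{Rsn}$ of dimension $d\le r$, the quantities $\dim(A\cap\ker\mathsf{Enc}_i)$ are independent across $i$. The event that any one of them is positive has probability at most about $q^{d-s}$. A Chernoff bound, union-bounded over the at most $q^{Rsnr}$ subspaces $A$, then shows that the average exceeds $d(R+\eta/2)$ with probability tending to $0$, provided $s\gg r/\eta$.
\item A union bound over the two failure events gives a single code that contains $(\mbf V,U)$ and has $\tau(r)\le R_{\mbf V}(U)+\eta$.
\end{enumerate}

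The main obstacle is step 3 of Part 2, the subspace-design concentration, since the union over subspaces is exponential in $sn$. The key point is that the per-coordinate defect probability $q^{d-s}$ is tiny compared with $q^{-Rsr}$ once $s$ is large. If the direct estimate is too lossy, I would instead bound the expected number of nonzero vectors of $A$ vanishing on each block, in the style of \cref{lem:rlc-fixed-pair}.
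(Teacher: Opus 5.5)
Your Part 1 is correct and complete. The space $T(W^{\perp})$ is exactly the column span of the quotient witness $\pi M$ with $\ker\pi=W$, so you are applying the subspace-design inequality to the subcode generated by each quotient witness. The paper gives no argument for this part; it only cites \cite{ggsw,bcdz25b}.

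Part 2 follows the route the paper sketches: a random additive code unfolds to a random linear code, \cref{prop:unfolding} preserves the threshold, RLC containment applies, and the code is a subspace design with high probability (the paper takes this last fact as known). Two steps need repair.

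The easy one is the limit. The profile $\mathbf{V}\in\mathcal{L}(\mathbb{F}^r)^n$ fixes $n$, and the code must live in $(\mathbb{F}_q^s)^n$ for that same $n$. So ``probability $1-o(1)$ as $n\to\infty$'' is not available; the only free parameter is $s$. The RLC containment failure probability is uniform in the profile, of the form $q^{O(r^2)-\Omega(\eta sn)}$ at unfolded length $sn$. So you can send $s\to\infty$ instead, choosing $Rsn\in\mathbb{Z}$ and $\eta$ small enough that $R\le 1$. The latter requires $R_{\mathbf{V}}(U)<1$. When $R_{\mathbf{V}}(U)=1$, one has $U\cap V_i\subseteq W$ for some $W\subsetneq U$ and all $i$, so no code contains $(\mathbf{V},U)$ and that case must be excluded from the statement anyway.

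The genuine gap is step 3 of Part 2. Knowing only $\Pr[A\cap\ker\mathsf{Enc}_i\neq\{0\}]\le q^{d-s}$, the best bound on more than $(R+\eta/2)n$ bad blocks is about $2^nq^{-(s-d)(R+\eta/2)n}$. This must beat roughly $q^{dRsn}$ subspaces of dimension $d$. The coefficient of $s$ in the union-bound exponent is $n(dR-R-\eta/2)$, which is positive for $d\ge 2$ as soon as $R(d-1)>\eta/2$. Your comparison of $q^{d-s}$ with $q^{-Rsr}$ likewise fails whenever $Rr\ge 1$.

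This is not mere lossiness. A first-moment count for $d=2$ compares $2Rsn$ degrees of freedom for a pair $x_1,x_2$ against about $s-1$ conditions per block for $[\mathsf{Enc}_ix_1,\mathsf{Enc}_ix_2]$ to have rank at most $1$. It suggests that some $2$-dimensional $A$ meet the kernels in far more than $(R+\eta/2)n$ blocks (about $2Rn$ when $R<1/2$). Such $A$ are harmless for the subspace-design inequality, because each such block adds only $1$ to $\sum_i\dim(A\cap\ker\mathsf{Enc}_i)$. But they defeat any argument that only counts bad blocks.

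What you need is a dimension-sensitive tail. For $j\le d$,
\[
\Pr[\dim(A\cap\ker\mathsf{Enc}_i)\ge j]\le\binom{d}{j}_q q^{-js}\le 4q^{-j(s-d)},
\]
so by independence $\Pr[\sum_i\dim(A\cap\ker\mathsf{Enc}_i)\ge m]\le(4(d+1))^n q^{-(s-d)m}$. Take $m=d(R+\eta/2)n$ and union bound over at most $4q^{dRsn}$ subspaces. This gives $4(4(d+1))^n q^{-d\eta sn/2+O(d^2n)}$, which tends to $0$ as $s\to\infty$ once $s\gg r/\eta$.

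Equivalently, encode $\dim(A\cap\ker\mathsf{Enc}_i)=j_i$ as a $d$-local profile with $\operatorname{codim}V_i=j_i$, and apply the first-moment bound of \cref{lem:rlc-fixed-pair} to $d$-tuples of codewords. Counting single nonzero vectors of $A$ in each kernel, as your fallback is phrased, only reproduces the lossy factor $q^{d-s}$.
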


For completeness, we remark that \cite[Theorem~4.7]{gg25a} gives a direct
curve decoding theorem for subspace design codes. Since we are only interested in curve decoding fo random ensembles, we omit this result; we will only require \cref{cor:curve-profile-threshold}.

\subsection{Notations and Conventions}
\label{sec:notations}
We let $[n]\coloneqq \{1, \ldots, n\}$ with the convention $[0]=\emptyset$. Let $\binom{n}{\le k}$ and $\binom{n}{\ge k}$ denote the sum of binomial coefficients $\binom{n}{i}$ for $i\le k$ and $i\ge k$, respectively. Let $\binom{S}{k}$ denote the set of $k$-element subsets of $S$. $\binom{S}{\le k}$ and $\binom{S}{\ge k}$ denote the set of subsets of $S$ with size at most $k$ and at least $k$, respectively. For a finite set $S$, let $\mb{E}_{s\in S}$ denote the expectation over a uniform random element $s$ of $S$.

For a matrix $A\in\F^{m\times n}$, let $A_{i\star}\in\F^n$ be the $i$th row of $A$ and let $A_{\star j}\in\F^m$ be the $j$th column of $A$. Given an $\F$-vector space $V$, let $\mc{L}(V)$ denote the set of
linear subspaces of $V$.

We also adopt standard notation from asymptotic analysis: as $x\to\infty$, we write  $f(x)=o(g(x))$ if $f(x)/g(x)\to 0$;  $f(x)=O(g(x))$ or $g(x)=\Omega(f(x))$ if there exists a finite, positive constant $C$ such that $f(x)\le Cg(x)$ for all sufficiently large $x$.

We use $h_q:[0,1] \to [0,1]$ and $h_q^{-1}$ to denote the $q$-ary entropy function and its inverse, i.e.
\[ h_q(x) := x \log_q(q-1) + x \log_q\left(\frac{1}{x}\right) + (1 - x) \log_q\left( \frac{1}{1 - x}\right).\]

\section{Local Profile Non-containment for Random Reed--Solomon Codes}
\label{sec:rrs-improvement}

Let $\mbf{V}=(V_1,\ldots,V_n)$ be an $r$-local profile
over $\F$, let $0\ne U\subset\F^r$, let $d:=\dim_{\F}U$, and let
  $q:=\abs{\F}$.  In every probability statement,
  $\alpha_1,\ldots,\alpha_n$ are sampled uniformly without replacement
  from $\F$.  Recall the normalized potential
\begin{equation}
  \phi_{\mbf{V}}(W,R)
  :=R\dim_{\F}W
    -\frac1n\sum_{i=1}^n
      \on{codim}_{W,\F}(W\cap V_i).
  \label{eq:intro-potential}
\end{equation}
Recall the matrix space $\mc M_{\mbf V}^{\star}(U)$ from
\cref{def:local-profile}, where the star denotes that the row span is
contained in $U$, and let $\mathsf{RS}(\boldsymbol\alpha,k)$ denote the
Reed--Solomon code from \cref{def:rrs}. As before, let $C^{\equiv r}$ denote the $r$-wise interleaving of code $C$.

Our main result in this section is an improved fixed-profile theorem of
\cite{lms25}, i.e. the low-rate non-containment direction of \cref{cor:rrs-equiv}. We state it formally below.
\begin{theorem}[Improved RRS Non-containment]
\label{thm:low-lms}
Let $q>\max\{n,d(k-1)\}$, $k=Rn\in [n]$, and $\eta \in (0, 1]$ such that
$\phi_{\mbf{V}}(W,R)\le-\eta\dim_{\F}W$
for every $0\ne W\subset U$, and that $d:=\dim_\F U \le \eta n/2$. Then
\begin{equation}
  \mb{P}_{\boldsymbol\alpha}\left(
    \mathsf{RS}(\boldsymbol\alpha,k)^{\equiv r}
    \cap\mc{M}_{\mbf{V}}^\star(U)\ne\{0\}
  \right)
  \le
  d\left(
    \frac{2ed(k-1)}{\eta(q-n)}
  \right)^{\eta n-d}
  \le r\left(
    \frac{2erk}{\eta(q-n)}
  \right)^{\eta n/2}.
  \label{eq:relative-bound}
\end{equation}
\end{theorem}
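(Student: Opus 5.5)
The plan is to run the sequential-exposure process of \cite{lms25}, replacing its level-by-level accounting with a single dynamic monovariant.

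\emph{Reduction to polynomials.} Fix a basis $B\in\F^{d\times r}$ of $U$. A matrix in $\mathsf{RS}(\boldsymbol\alpha,k)^{\equiv r}\cap\mc M^\star_{\mbf V}(U)$ is then the same as a tuple $g=(g_1,\ldots,g_d)$ with $g_j\in\F[X]_{<k}$, viewed as the matrix $(g(\alpha_i)B)_i$, such that $g(\alpha_i)\in W'_i$ for each $i$. Here $W'_i\subset\F^d$ is the pullback of $U\cap V_i$, and its codimension is $c_i:=\on{codim}_U(U\cap V_i)$.

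\emph{The process.} Let $S_t\subset\F[X]_{<k}^d$ be the space of tuples satisfying the first $t$ constraints, so $\dim S_0=dk$. Let $Y_t\subset\K^d$, with $\K=\F(X)$, be the $\K$-span of $S_t$ evaluated at the generic point $X$, and set $d_t:=\dim_\K Y_t\le d$. The number of constraints that coordinate $j$ is \emph{anticipated} to impose on $S_t$ is the generic rank
\[
e_{t,j}:=\on{codim}_{Y_t}\bigl(Y_t\cap (W'_j\otimes\K)\bigr).
\]
The actual drop $\dim S_{t-1}-\dim S_t$ is the rank of evaluation at $\alpha_t$ composed with $\F^d\to\F^d/W'_t$. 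That rank is at least the generic rank $e_{t-1,t}$ unless a certain nonzero minor vanishes at $X=\alpha_t$; call that event a \emph{defect}.

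\emph{The monovariant.} I will track
\[
\Psi_t:=\dim S_t-\sum_{j>t}e_{t,j}-(\text{normalization})+d_t .
\]
I will then verify three things.
\begin{itemize}
\item At $t=0$, using $Y_0=\K^d$ and the hypothesis $\phi_{\mbf V}(U,R)\le -\eta d$ (after normalizing by $k=Rn$), one gets $\Psi_0\le-\eta n+d$.
\item At a non-defect step, $\Psi_t\le\Psi_{t-1}$. When $Y_t$ shrinks, the anticipated future constraints drop. The hypothesis $\phi_{\mbf V}(W,R)\le-\eta\dim W$ for all $0\ne W\subset U$ is what shows that this drop is paid for by the decrease in $d_t$ together with the dimension lost.
\item At a defect step, $\Psi_t$ increases by at most $1$.
\end{itemize}
If $S_n\neq0$ then $\Psi_n\ge0$, so at least $m:=\eta n-d\ge\eta n/2$ defects must occur.

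\emph{Probability bound.} Condition on the history. If each defect has probability at most $d(k-1)/(q-n)$, then a union bound over the positions of the $m$ defects gives
\[
\binom{n}{m}\Bigl(\frac{d(k-1)}{q-n}\Bigr)^m\le\Bigl(\frac{2ed(k-1)}{\eta(q-n)}\Bigr)^{m},
\]
using $m\ge\eta n/2$. The leading factor $d$ comes from summing over the possible final values of $d_t$, or over which level triggered the failure. The second inequality in the statement then follows from $d\le r$, $k-1<k$, $m\ge\eta n/2$, and the base being less than $1$ (assuming it is; otherwise the bound is trivial).

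\emph{Main obstacle.} The hardest part is justifying the per-defect probability $d(k-1)/(q-n)$. One needs the relevant minor to be a polynomial in $X$ of degree at most $d(k-1)$, which is nonzero by genericity. It is not clear this survives repeated quotienting: naive Gaussian elimination over $\K$ would blow the degree up to $r^{O(r)}$. I would first try to carry a \emph{presentation} of $Y_t$, namely a basis of polynomial vectors obtained by remembering the numerators and denominators of past quotient steps, so that every minor stays of degree at most $d(k-1)$. A secondary concern is that $Y_t$ need not be $\F$-rational, while the hypothesis quantifies only over $\F$-subspaces $W\subset U$. For the accounting step I would try to replace $Y_t$ by the smallest $\F$-rational subspace containing it, and check that this does not increase the anticipated constraint counts.
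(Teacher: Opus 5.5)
Your outline matches the paper's: sequential exposure, a dynamic potential starting near $-\eta n$ that rises only at defects, and a presentation that keeps the relevant minor of degree at most $d(k-1)$. Your presentation idea is exactly how the paper obtains $p_d=d(k-1)/(q-n)$.

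\textbf{The gap is your second bullet.} Your $\Psi_t$ is built from the potential of the full span $Y_t$ along a single chain. For that quantity, the claim that a non-defect step cannot increase $\Psi_t$ when $Y_t$ shrinks is false. The hypothesis on $\phi_{\mathbf V}(W,R)$ cannot repair it, because it only constrains potentials at time $0$.

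The problem is that a proper subspace $\mathscr W\subsetneq Y_{t-1}$ can suffer a defect at a step that is not a defect for $Y_{t-1}$.
\begin{itemize}
\item Take $d=2$, $A_t(x,y)=x$, $S_{t-1}=\operatorname{span}_{\mathbb F}\{(X-\beta,0),(1,1)\}$ and $\alpha_t=\beta$. The specialized constraint still has full rank $1$ on $S_{t-1}$, but it vanishes on $S_{t-1}\cap\mathbb K e_1$, and $S_t$ collapses onto $\mathbb K e_1$.
\item With $S_{t-1}=\{(p,q):p(\beta)=q(\beta)\}$ instead, the line $\mathbb K e_1$ gains potential while the span does not shrink at all.
\end{itemize}
Such subspace defects can therefore accumulate invisibly to $\Psi$. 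When the span later collapses onto $\mathscr W$, $\Psi$ absorbs all of them at once, a jump of up to $\Theta(\eta n)$ (e.g.\ if $c_j(\mathscr W)=0<c_j(Y_{t-1})$ for the remaining $j$). Meanwhile $d_t$ drops only by $\dim Y_{t-1}-\dim\mathscr W$.

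Charging these events to probability instead would need a union bound over the many candidate subspaces $\mathscr W$, which are not fixed in advance; the target bound cannot afford that. Separately, your ``$+1$ per defect'' needs the potential divided by $\dim_{\mathbb K}$; unnormalized, a single defect can add up to $d$.

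\textbf{The missing idea is how the paper handles proper subspaces.}
\begin{itemize}
\item It sets $m_t:=\max_{\mathscr W\in\mathscr G(S_t)}\varphi_t(\mathscr W)/\dim_{\mathbb K}\mathscr W$ over all $S_t$-generated subspaces and uses the monovariant $d_t+m_t$. Since $\mathscr G(S'_{t-1})\subset\mathscr G(S_{t-1})$, collapsing the span only discards candidates, and $m$ grows by at most $1$ per step.
\item When $d+m$ would increase with $d$ unchanged and the maximizer $\mathscr W_t$ is proper, the process branches into the restriction $S'_{t-1}\cap\mathscr W_t$ and the quotient $\pi_t(S'_{t-1})$. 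The quotient-potential identity shows neither child increases $d+m$.
\item Hence every genuine increase (a ``mark'') is a defect for the whole, $\mathscr F_{t-1}$-measurable span of its branch, and that is where your $p_d$ applies.
\item Because $\mathbb K$-dimensions add across children, the tree has at most $d$ nonzero leaves. Labeling states by subsets of $[d]$ makes each labeled path predictable. This, not ``final values of $d_t$'', is the source of the leading factor $d$; a single chain has nothing to sum over.
\end{itemize}

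\textbf{On your $\mathbb F$-rationality concern.} Once all generated subspaces are tracked, the initial bound must hold for every $\mathscr W\in\mathscr G(S_0)$, not just $\mathbb K^d$. The paper handles this not with a rational hull but by specializing a $\mathbb K$-basis of $\mathscr W$ at a point where a maximal minor is nonzero (\cref{lem:initial}). This produces an $\mathbb F$-subspace $W$ of the same dimension with $\varphi_0(\mathscr W)\le n\phi_{\mathbf V}(W,R)$.
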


In this section, we give a self-contained proof of \cref{thm:low-lms}, including the stochastic process in the proof of \cite[Proposition 6.1]{lms25}. Here, we begin with a brief overview to outline the steps:

\begin{enumerate}
\item We first remove irrelevant ambient coordinates to assume without loss of generality that $U=\F^d$. We choose isomorphism $U\simeq \F^d$ and replace $V_i$ by the image of $U\cap V_i$ under the isomorphism.
\item We run a stochastic process where we reveal the evaluation points $\boldsymbol{\alpha}$, and therefore
the constraints they introduce, one by one in a random process indexed by $t\in[n]$. We let $S_t\subset \F[X]_{<k}^d$ be the candidates remaining subject to the constraints we see up to time $t$. \footnote{Technically, we could have taken quotients and the $S_t$ is not a subspace of $\F[X]_{<k}^d$, but morally it is useful to view it as a symbolic version the space of possible witness matrices.}
\item We define a monovariant $m_t+d_t$ that tracks the predicted dimension of the final solution space $S_n$, pretending that the unexposed evaluation points are generic. To facilitate this, we lift the potential to symbolic space over $\K=\F(X)$ and define a dynamic potential $\varphi_t$.
\item Each time step $t$, as we reveal $\alpha_{t}$, we impose the constraints introduced by $\alpha_{t}$ on the previous candidates $S_{t-1}$ to obtain $S_{t-1}'$. Usually, we produce one child state by taking $S_{t}=S_{t-1}'$, but sometimes we fix a critical subspace, branch into two children $S_t^{\on{in}}$ and $S_t^{\on{out}}$, and look for candidates in either the quotient by or the restriction to the subspace instead.
\item At every step, including branching, we show that the monovariant of a child state cannot increase unless
there is a ``defect event'', which occurs only if a symbolically nonzero determinant vanishes when we choose evaluation point $\alpha_t$. On this defect event, the monovariant increases by at most $1$. The defect happens with probability $p_d$ by a Schwartz-Zippel type lemma.
\item We show that initially at time $t=0$, the dynamic potential (and therefore the monovariant), is at most $-\Omega(\eta n)$ by assumptions on the negativity of the ordinary potential $\phi_{\mbf{V}}(U, R)$. At the end, if there is a witness matrix, one of the branches of the tree must terminate with positive potential (and therefore the monovariant). 
\item We need $L=\Omega(\eta n)$ defects along some branch of $n$ steps for the monovariant to become positive at the end and produce a witness matrix. This gives a bound of the shape $\binom{n}{L}p_d^L$.
\end{enumerate}
We note that the first improvement in \cref{rem:improve-lms} comes from tracking a dynamic monovariant in Step 3 rather than a fixed schedule; and the second improvement in \cref{rem:improve-lms} is an improved bound on $p_d$ in Step 5 by a more careful analysis of the degree of the determinant polynomial.

\subsection{Dynamic Potential}
\label{subsec:inputs}

For Step 1, we choose an isomorphism
$B:\F^d\to U$ and replace each $V_i$ by
$B^{-1}(U\cap V_i)$.  Multiplication by $B$ identifies the corresponding matrix spaces and preserves the potential of every subspace of
$U$.  We carry out the proof with
$U=\F^d$ and undo the identification at the end, i.e. the ambient dimension is $d$ and locality $r$ appears only through $d\le r$. We reuse $\mbf V$ and
$U$ for the resulting pair.

We now define the dynamic potential and our monovariant, per Steps 2 and 3.  Let $\K:=\F(X)$. We treat $n, k, d$ as fixed throughout the process. Initially without revealing any $\boldsymbol{\alpha}$, we
\begin{itemize}
    \item Choose for
every $i$ a full-row-rank parity-check matrix $A_i\in \F^{\on{codim}_{\F^d}V_i\times d}$ over $\F$ with
kernel $V_i$, and extend $A_i$ coefficientwise to $\F[X]^d$ and $\K$-linearly to $\K^d$. 
\item Define initial candidates of the witness matrix to be $S_0:=\F[X]_{<k}^{d}$. By Step 1, locality which is the number of columns of the matrix, is now $d$ instead of $r$.
\end{itemize}
Then, define $\U_t:=\on{span}_{\K}S_t$ and
$d_t:=\dim_{\K}\U_t$, so initially $\U_0 = \K^d$ and $d_0 = d$. 

As we reveal $\boldsymbol{\alpha}$ one-by-one, the stochastic process of how $S_t$ is derived from $S_{t-1}$ upon revealing $\alpha_t$ is deferred to \cref{subsec:gap-process}. In this section, we study the situation after we observe $\alpha_t$ but before we define $S_t$. On the previous candidates $S_{t-1}$, we define the subspace of those satisfying the constraint imposed by $A_t$ and $\alpha_t$, i.e.
\begin{equation}
  S_{t-1}':=\{P\in S_{t-1}:A_t(P)(\alpha_t)=0\}.
  \label{eq:exposure}
\end{equation}
Analogously, we let $\U_{t-1}':=\on{span}_{\K}S_{t-1}'$ and
$d_{t-1}':=\dim_{\K}\U_{t-1}'$.

\begin{remark}
We look ahead slightly to explain the difference between $S_{t-1}'$ versus $S_t$.
\begin{itemize}
    \item Usually, there is no difference: each state $S_{t-1}$ gives rise to one child state $S_{t}=S_{t-1}'$, so we will produce the chain
\[ \F[X]_{<k}^d=S_0\supset S_1\supset\cdots\supset S_n.\]
\item In rare cases (see \cref{subsec:gap-process}), we \emph{branch} from $S_{t-1}$ into two children corresponding to a restriction to and a quotient by some subspace $\W_t\subset \U_{t-1}'$. We first compute $S_{t-1}'$ as above, but then the next step would split into $S_t^{\on{in}}$ and $S_t^{\on{out}}$ (see \cref{prop:branching}). 
\item In general, this chain $(S_0, \dots, S_n)$ refers to the candidates and spaces inherited by each branch; sometimes we abuse notation and say ``consider $S_0, \dots, S_n$ along a branch'' in this way.
\end{itemize}
\end{remark}

The above remark on branching is only to build intuition and explain our notation. In this section, we only consider the transition from $S_{t-1}$ to $S_{t-1}'$ in \cref{eq:exposure}. We define the dynamic potential function indexed by time $t$, supposing the current candidate space at time $t$ along a branch is $S_t$.

\begin{definition}[Dynamic potential]
\label{def:dynamic-potential}
For a nonzero $\F$-linear subspace $S$ with $\U = \on{span}_\K S$, define
\[
  \msc{G}(S)
  :=\left\{
      0\ne\W\subset\U:
      \on{span}_{\K}(S\cap\W)=\W
    \right\}
\]
to be $S$-generated spaces.
For $\W\in\msc{G}(S_t)$, define
\begin{equation}
  \varphi_t(\W)
  :=\dim_{\F}(S_t\cap\W)-\sum_{i=t+1}^n c_i(\W)
  \quad\text{where}\quad
  c_i(\W)
  :=\dim_{\K}A_i(\W).
  \label{eq:dynamic-potential}
\end{equation}
Analogously, we define $\varphi_{t-1}'(\W) :=\dim_\F (S_{t-1}'\cap \W)-\sum_{i=t+1}^nc_i(\W)$.
\end{definition}
Note that $\msc{G}(S_t)$ are live spaces at time $t$ in \cite{lms25}.
We make a few remarks to intuitively explain what $\varphi_t$ does, and why it is a dynamic version of $\phi_{\mbf{V}}$. We use script letters for spaces over $\K$ and ordinary letters for spaces over
$\F$. 
\begin{itemize}
\item At time $t=0$, 
$\varphi_0$ is morally $\phi_{\mbf V}$ up to a change in normalization by $n$ (see \cref{lem:initial}).
\item After we reveal some evaluation points $\boldsymbol{\alpha}$ to obtain candidates $S_t$, $\dim_{\F}(S_t\cap\W)$ counts the remaining candidate degrees of
freedom. By rank-nullity, $c_i(\W)= \on{codim}_{\W,\K}
      \left(\W\cap\ker_{\K}A_i\right)$ is the generic number of constraints imposed
on $\W$ by the future constraints $V_i$ given by $A_i$.

\item The family $\msc G(S)$ is finite: every member is the $\K$-span
of an $\F$-subspace of the finite-dimensional space $S$.  It contains
$\on{span}_{\K}S$ whenever $S\ne\{0\}$.
\end{itemize}

We now record three deterministic ingredients proven in \cite{lms25}, in preparation of the stochastic process analysis. Our first lemma concerns the initialization of the dynamic potential. This combines the rank-preserving specialization, local codimension comparison, and coordinate dimension estimates in
\cite[Lemma 7.1, 7.6, 7.7]{lms25}.

\begin{lemma}
\label{lem:initial}
Let $q > d(k-1)$. For every $\W\in\msc{G}(S_0)$, there exists a nonzero subspace
$W\subset U$ such that $\dim_{\F}W=\dim_{\K}\W$ and
\begin{equation}
  \frac{\varphi_0(\W)}{\dim_{\K}\W}
  \le
  n\frac{\phi_{\mbf V}(W,R)}{\dim_{\F}W}.
  \label{eq:root-comparison}
\end{equation}
\end{lemma}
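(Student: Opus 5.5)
Fix $\W\in\msc{G}(S_0)$ with $m:=\dim_{\K}\W$; the plan is to produce $W\subset U=\F^d$ with $\dim_\F W=m$ and compare the two sides term by term. Since $S_0=\F[X]_{<k}^d$ and $\on{span}_{\K}(S_0\cap\W)=\W$, the $\F$-space $S_0\cap\W$ consists of $d$-tuples of polynomials of degree $<k$, and we may pick a $\K$-basis $P_1,\ldots,P_m\in S_0\cap\W$. Arranging them as a $d\times m$ matrix over $\F[X]_{<k}$ of $\K$-rank $m$, some $m\times m$ minor is nonzero; it has degree at most $m(k-1)\le d(k-1)<q$, so there is $\beta\in\F$ at which it does not vanish. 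Set $W:=\on{span}_\F\{P_1(\beta),\ldots,P_m(\beta)\}$; then $\dim_\F W=m$ (rank-preserving specialization, as in \cite[Lemma 7.1]{lms25}).

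For the constraint count we need $c_i(\W)=\dim_{\K}A_i(\W)\ge \dim_\F A_i(W)=\on{codim}_W(W\cap V_i)$. Since $A_i$ has entries in $\F$, the matrix $A_i[P_1\cdots P_m]$ over $\F[X]$ has $\K$-rank $c_i(\W)$, and its specialization at $\beta$ is $A_i[P_1(\beta)\cdots P_m(\beta)]$, whose rank is $\dim_\F A_i(W)$. Specialization of a polynomial matrix cannot increase rank (a nonzero minor of the specialization lifts to a nonzero minor), so $c_i(\W)\ge \on{codim}_W(W\cap V_i)$ for every $i$, which is the local codimension comparison of \cite[Lemma 7.6]{lms25}.

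For the coordinate dimension estimate we need $\dim_\F(S_0\cap\W)\le km$. Because $\W$ has $\K$-dimension $m$, it is cut out by $d-m$ independent $\K$-linear equations, and choosing coordinates $J$ of size $m$ on which the projection $\W\to\K^J$ is injective (possible since the basis matrix has a nonzero $m\times m$ minor on the rows $J$), the projection $S_0\cap\W\to \F[X]_{<k}^J$ is injective $\F$-linear. Hence $\dim_\F(S_0\cap\W)\le km=Rn\,m$ (as in \cite[Lemma 7.7]{lms25}).

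Combining the three steps,
\[
\varphi_0(\W)\le Rn\,m-\sum_{i=1}^n\on{codim}_W(W\cap V_i)=n\,\phi_{\mbf V}(W,R),
\]
and dividing by $m=\dim_\F W$ gives \cref{eq:root-comparison}. The main point needing care is the simultaneous use of one specialization point $\beta$: only the rank-preservation for the basis matrix needs a good $\beta$ (hence the hypothesis $q>d(k-1)$), while the inequality for each $A_i$ holds at every $\beta$, so no union bound over $i$ is required.
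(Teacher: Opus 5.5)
Your proposal is correct and follows the paper's proof essentially step for step. You use the same coordinate projection to get $\dim_{\mathbb{F}}(S_0\cap\W)\le km$, the same nonzero $m\times m$ minor of degree at most $m(k-1)<q$ to choose $\beta$, and the same ``specialization cannot increase rank'' argument to get $\dim_{\mathbb{F}}A_i(W)\le c_i(\W)$. The only cosmetic difference is your choice of $W$: you take the span of $P_1(\beta),\ldots,P_m(\beta)$, while the paper takes $W=\{P(\beta):P\in S_0\cap\W\}$. These coincide, since the latter contains the former and both have dimension $m$.
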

\begin{proof}
Let $T:=S_0\cap \msc{W}$, so $w:=\dim_\K \W\le \dim_\K \U_0 = d$. Since $\W$ is $S_0$-generated, we can pick a $\K$-basis of $\W$ inside $T$, say $P_1, \dots, P_w\in T\cap\W$. Its columns are independent, so projection onto some $w$ coordinates indexed by $J\in \binom{d}{w}$ is an isomorphism. In particular, it embeds $T$ into $\F[X]_{<k}^w$ indexed by $J$, so
\[ \dim_\F T \le \dim_\F (\F[X]_{<k}^w) = kw.\]

For the matrix with columns $P_1, \dots, P_w$, the $w\times w$ minor indexed by $J$ has nonzero determinant with degree at most $w(k-1) \le d(k-1)<q$, so there is some $\beta\in\F$ at which its determinant is nonzero. This implies $\dim_\F W\ge w$ where
\[ W := \{P(\beta):P\in T\}\subset \F^d.\]
On the other hand, specialization at $X=\beta$ cannot increase rank generically, so $\dim_\F W\le w$, so $\dim_\F W=w$. 
Similarly, specialization to $\beta$ cannot increase the rank of $\F$-matrix $A_i$ on $T$, so 
\[ \on{codim}_W(W\cap V_i)=\dim_\F A_i(W) \le \dim_\K A_i(\W)=c_i(\W).\]
Therefore, using $k=Rn$, we obtain 
\[  \varphi_0(\W)
  \le kw-\sum_{i=1}^n\on{codim}_{W}(W\cap V_i)=n\phi_{\mbf V}\left(W,\frac{k}{n}\right)= n\phi_{\mbf V}(W,R).\qedhere\]
\end{proof}

Consequently, the hypothesis of \cref{thm:low-lms} gives, for any $\W\in\msc G(S_0)$, that
\begin{equation}
  \frac{\varphi_0(\W)}{\dim_{\K}\W}\le-\eta n.
  \label{eq:initial-potential-gap}
\end{equation}
Let $\mathscr{F}_t :=\sigma(\alpha_1, \dots, \alpha_t)$.
We call reveal $t$ a \emph{defect for $\W$} when, upon revealing $\alpha_t$, the number of constraints it imposes on candidates in $\W$ is less than the number $c_t(\W)$ we predicted, i.e. 
\begin{equation}
  \dim_{\F}\{A_t(P)(\alpha_t):P\in S_{t-1}\cap\W\}
  <c_t(\W).
  \label{eq:def-defect}
\end{equation}
We next show the deterministic part of \cite[Lemma~7.9]{lms25}, which says that the dynamic potential can only increase under a defect, and not by much.
\begin{lemma}
\label{lem:one-step}
For every $t$, we have that $\msc{G}(S_{t-1}')\subset\msc{G}(S_{t-1})$, and that
\begin{equation}
  \varphi_{t-1}'(\W)-\varphi_{t-1}(\W)
  \le\dim_{\K}\W
  \label{eq:one-step-dynamics}
\end{equation}for every
$\W\in\msc{G}(S_{t-1}')$.
The left hand side is positive if and only if reveal $t$ is defective for $\W$.
\end{lemma}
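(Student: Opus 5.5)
The plan is to compare the two potentials term by term. Fix $\W\in\msc{G}(S_{t-1}')$; note $\varphi_{t-1}'(\W)-\varphi_{t-1}(\W)$ only makes sense once we check $\W\in\msc{G}(S_{t-1})$. Since $S_{t-1}'\subset S_{t-1}$ and $\W\subset \U_{t-1}'\subset\U_{t-1}$, we have
\[
\W=\on{span}_\K(S_{t-1}'\cap\W)\subset \on{span}_\K(S_{t-1}\cap\W)\subset\W,
\]
so equality holds throughout and $\W$ is $S_{t-1}$-generated. This gives the inclusion $\msc{G}(S_{t-1}')\subset\msc{G}(S_{t-1})$.

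For the increment, the sums $\sum_{i\ge t+1}c_i(\W)$ in $\varphi_{t-1}'$ and the sum $\sum_{i\ge t}c_i(\W)$ in $\varphi_{t-1}$ differ exactly by $c_t(\W)$. Hence
\[
\varphi_{t-1}'(\W)-\varphi_{t-1}(\W)=c_t(\W)-\bigl(\dim_\F(S_{t-1}\cap\W)-\dim_\F(S_{t-1}'\cap\W)\bigr).
\]
Next, apply rank–nullity to the $\F$-linear evaluation map
\[
E:S_{t-1}\cap\W\to\F^{\on{codim}V_t},\qquad P\mapsto A_t(P)(\alpha_t).
\]
Its kernel is $S_{t-1}'\cap\W$, by \cref{eq:exposure}. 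So the bracketed difference equals $\dim_\F\on{im}E$, which is the left-hand side of \cref{eq:def-defect}. Therefore
\[
\varphi_{t-1}'(\W)-\varphi_{t-1}(\W)=c_t(\W)-\dim_\F\on{im}E.
\]
This is positive exactly when reveal $t$ is defective for $\W$, which settles the ``if and only if'' clause.

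For the upper bound, we use $\dim_\F\on{im}E\ge0$ together with $c_t(\W)=\dim_\K A_t(\W)\le\dim_\K\W$, since the image of $\W$ under a linear map has dimension at most $\dim_\K\W$. This gives \cref{eq:one-step-dynamics}.

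The only potential subtlety is a definitional one: the ``if and only if'' statement requires that \cref{eq:def-defect} be read on $S_{t-1}\cap\W$ with $\W$ regarded as an element of $\msc{G}(S_{t-1})$, which the inclusion proved in the first step guarantees. No rank inequality is needed beyond the observation above, since the sign of $c_t-\dim\on{im}E$ is exactly the defect criterion. In particular, we do not need the companion fact that $\dim\on{im}E\le c_t$, which would follow from specialization not increasing rank; we may remark on it without using it.
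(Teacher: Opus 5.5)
Your proof is correct and follows the paper's argument essentially line for line. You get the inclusion $\msc{G}(S_{t-1}')\subset\msc{G}(S_{t-1})$ by monotonicity of $\K$-spans, and then the identity $\varphi_{t-1}'(\W)-\varphi_{t-1}(\W)=c_t(\W)-c'$, where $c'=\dim_{\mathbb{F}}(S_{t-1}\cap\W)-\dim_{\mathbb{F}}(S_{t-1}'\cap\W)$ is the rank of the specialized constraint; you spell out the rank--nullity step that the paper states in one line. The bound then follows from $c'\ge 0$ and $c_t(\W)\le\dim_{\K}\W$, and you also make explicit the check $\W\subset\U_{t-1}'\subset\U_{t-1}$ that the paper leaves implicit.
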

\begin{proof}
Fix any $\W\in \msc{G}(S_{t-1}' )$, so $T':=S_{t-1}' \cap \W$ generates $\W$. Then, $T:=S_{t-1}\cap \W$ also generates $\W$, giving $\W\in\msc{G}(S_{t-1})$. Now, by definition of $S_{t-1}' $ from $S_{t-1}$ and \cref{def:dynamic-potential}, we see that
\begin{equation}
\begin{aligned}
 c'&:=\dim_{\F}\{A_t(P)(\alpha_t):P\in T\} = \dim_\F T - \dim_\F T',
\\ \varphi_{t-1}'(\W)-\varphi_{t-1}(\W) &= \dim_\F T' - \dim_\F T+c_t(\W) = c_t(\W)-c'.
\end{aligned}
\end{equation}
Since $c'\ge 0$, we have that $\varphi_{t-1}'(\W)-\varphi_{t-1}(\W)
\le c_t(\W)
\le \dim_{\K}\W$,
which proves \cref{eq:one-step-dynamics}. Moreover, by
\cref{eq:def-defect}, reveal $t$ is defective for $\W$ if and
only if $c'<c_t(\W)$, which, by the preceding identity, holds if and only if $\varphi_{t-1}'(\W)-\varphi_{t-1}(\W)>0$.
\end{proof}

Note that, without branching, the process will choose $S_{t}=S_{t-1}'$, so this lemma shows that the normalized potential is usually unchanged; it can increase only at a defect and by at most one.   

\subsection{The Stochastic Process}
\label{subsec:gap-process}
In this section, we define the stochastic process formally to explain how parent state $S_{t-1}$ transitions to children state $S_t$ from intermediate state $S_{t-1}'$. One key idea that will facilitate the improvement on defect probability bound $p_d$ is the following notion of a presentation that we will construct.

\begin{definition}
For an $\F$-linear space $S$, we say that $(Z, E)$ \emph{presents} $S$ if $Z\subset E\subset \F[X]_{<k}^d$ and, after writing $\msc{Z}:= \on{span}_\K Z$ and $\msc{E}:=\on{span}_\K E$, we have that $E\cap\msc{Z}=Z$, as well as isomorphisms 
\begin{equation}
  E/Z\simeq S
  \qquad\text{and}\qquad 
  \msc E/\msc Z\simeq\U:=\on{span}_\K S.
  \label{eq:presentation}
\end{equation}
\end{definition}

A presentation is immediate for $S_0=\F[X]_{<k}^d$: we take
$E=\F[X]_{<k}^d$ and $Z=\{0\}$. We show that it is preserved whenever the candidate space is replaced by a subspace, e.g. from $S_{t-1}$ to $S_{t-1}'$.

\begin{lemma}
\label{lem:easy-presentation}
If $(Z, E)$ presents $S$ with $\F$-linear subspace
$S'\subset S$, then $(Z, E')$ presents $S'$ where $E'\subset E$ is the preimage under the quotient map of the set in $E/Z\simeq S$ identified with $S'\subset S$.
\end{lemma}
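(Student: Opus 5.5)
We set up notation and then check the defining conditions of a presentation one at a time. Let $\pi: E\to E/Z\simeq S$ be the quotient map composed with the given isomorphism, and set $E':=\pi^{-1}(S')$. This is an $\F$-subspace with $Z\subset E'\subset E\subset\F[X]_{<k}^d$, since $0\in S'$ and $\ker\pi=Z$. Write $\msc{E}':=\on{span}_\K E'$, $\msc Z$ as before, and $\U':=\on{span}_\K S'$.

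The $\F$-level conditions are immediate. Because $E'\subset E$, we have $E'\cap\msc Z\subset E\cap\msc Z=Z$, and $Z\subset E'\cap\msc Z$ is trivial, so $E'\cap\msc Z=Z$. The map $\pi$ restricts to a surjection $E'\to S'$ with kernel $Z$, which gives $E'/Z\simeq S'$.

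The real content is the $\K$-level isomorphism $\msc E'/\msc Z\simeq\U'$, and it has to be compatible with the $\F$-level one. I would interpret the isomorphisms in \cref{eq:presentation} as compatible, meaning that the $\K$-isomorphism $\Pi:\msc E/\msc Z\to\U$ restricts on $E/Z$ (which embeds in $\msc E/\msc Z$ because $E\cap\msc Z=Z$) to the given map $E/Z\to S\subset\U$. Under this reading, $\Pi$ sends the image of $E'$ onto $S'$. Since $\Pi$ is $\K$-linear, it sends the $\K$-span of that image, namely $(\msc E'+\msc Z)/\msc Z=\msc E'/\msc Z$ (using $Z\subset E'$), onto $\on{span}_\K S'=\U'$. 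Being the restriction of an isomorphism, this map is also injective, so $\msc E'/\msc Z\simeq\U'$, as required.

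The main obstacle is the compatibility issue. If the definition only asks for abstract isomorphisms, then for the $\K$-level statement I would instead compare dimensions. We have $\dim_\K\msc E'/\msc Z=\on{rank}_\K$ of the image of $E'$ in $\msc E/\msc Z$, and $\dim_\K\U'$ is the $\K$-rank of $S'$ inside $\U$. These two ranks agree once the identification $E/Z\simeq S$ is induced by a $\K$-linear map, which is exactly how presentations get constructed in the process. So I would state the compatibility explicitly as part of the definition, treating the identifications as induced by a single $\K$-linear map $\msc E\to\U$ with kernel $\msc Z$, and then run the argument above.
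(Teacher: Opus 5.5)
Your proof is correct and follows the paper's argument. You use the sandwich $Z\subset E'\cap\mathscr Z\subset E\cap\mathscr Z=Z$ and the restricted isomorphism $E'/Z\simeq S'$. You then restrict the $\K$-isomorphism $\mathscr E/\mathscr Z\simeq\mathscr U$ to $\mathscr E'/\mathscr Z$, where it stays injective and, by $\K$-linearity, maps the $\K$-span of $E'/Z$ onto $\on{span}_\K S'=\mathscr U'$.

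The compatibility of the two isomorphisms that you state explicitly is exactly what the paper assumes implicitly when it writes $\psi(E'/Z)=S'$. So your fallback dimension-count paragraph is not needed.
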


\begin{proof}
Let $\U=\on{span}_\K S$ and $\U'=\on{span}_\K S'$.
Note that $Z\subset E'\subset E$ and $E'/Z\simeq S'$. Observe that 
\[ Z\subset E'\cap \msc{Z}\subset E\cap\msc{Z}=Z,\]
so $E'\cap \msc{Z}=Z$. Finally, the restriction $\psi$ to $\msc{E'}/\msc{Z}$ of the isomorphism $\msc{E}/\msc{Z}\simeq \msc{U}$ is a $\K$-linear injection, so it commutes with $\on{span}_\K$. Thus, $\psi(\msc{E'}/\msc{Z}) = \on{span}_\K (\psi(E'/Z)) = \on{span}_\K S'=\U'$.
\end{proof}

Recall from \cref{eq:exposure} that $S_{t-1}'$ is defined to be polynomial tuples $P$ in $S_{t-1}$ that satisfy the constraint
$A_t(P)(\alpha_t)=0$.  Usually, we let $S_t=S_{t-1}'$. To define the stochastic process formally, we handle the only situation where we branch: we
identify a critical subspace $\W_t$ of $\U_{t-1}'$ called ``worrisome'' in
\cite{lms25}, and branch into a restriction and a quotient with
$\W_t$.  

The following proposition says that spans, dimensions, constraints,
potentials, and presentations are all well-behaved upon splitting at $\W_t$. This is in the proof of \cite[Lemma 7.10]{lms25}.

\begin{proposition}[Branching]
\label{prop:branching}
Let $S_{t-1}'\ne\{0\}$ and let
$\W_t\in\msc G(S_{t-1}')$ be a proper subspace of $\U_{t-1}'$
with $w_t:=\dim_\K \W_t$.  Let $ \pi_t:\U_{t-1}'\to \U_{t-1}'/\W_t$ be the quotient map, and define
\[
  S_t^{\on{in}}:=S_{t-1}'\cap\W_t,
  \qquad
  S_t^{\on{out}}:=\pi_t(S_{t-1}').
\]
Then, the following holds:
\begin{enumerate}
\item $\on{span}_{\K}(S_t^{\on{in}})=\W_t$ and has $\K$-dimension $w_t$;  
$\on{span}_{\K}(S_t^{\on{out}})
=\pi_t(\U_{t-1}')\simeq\U_{t-1}'/\W_t$ 
and has $\K$-dimension
$d_{t-1}'-w_t$. As $0\ne \W_t\subsetneq \U_{t-1}'$, both $\K$-spans are nonzero.
\item For each $i>t$, 
$S_t^{\on{in}}$ inherits the future constraint $A_i^{\on{in}}:=A_i\vert_{\W_t}$ and $S_t^{\on{out}}$ inherits
\[
  A_i^{\on{out}}:
  \U_{t-1}'/\W_t
  \longrightarrow
  A_i(\U_{t-1}')/A_i(\W_t)
\quad \text{where}\quad
  A_i^{\on{out}}(\pi_t(P))
  :=A_i(P)+A_i(\W_t).
\]
For any nonzero $P\in S_{t-1}'$ satisfying all
future parent constraints, the appropriate nonzero child $P$  or $\pi_t(P)$ satisfies the corresponding inherited
constraints.

\item
Let $\varphi_t^{\on{in}}$ and $\varphi_t^{\on{out}}$ be defined by
\cref{def:dynamic-potential} on $S_t^{\on{in}}$ and
$S_t^{\on{out}}$ with constraints in (2).  Then,
$\msc G(S_t^{\on{in}})\subset\msc G(S_{t-1}')$ and
$\varphi_{t-1}'$ agrees with $\varphi_t^{\on{in}}$ on the restriction.
Moreover, every
$\msc X\in\msc G(S_t^{\on{out}})$ has a generated preimage
$\msc Y:=\pi_t^{-1}(\msc X)\in\msc G(S_{t-1}')$ such that
$\W_t\subsetneq\msc Y\subset\U_{t-1}'$ and
\begin{equation}
\dim_{\K}\msc X=\dim_{\K}\msc Y-w_t,
\qquad
\varphi_t^{\on{out}}(\msc X)
=\varphi_{t-1}'(\msc Y)-\varphi_{t-1}'(\W_t).
\label{eq:quotient-identity}
\end{equation}

\item Suppose $(Z,E)$ presents $S_{t-1}'$, let
$H:=E\cap\msc H$ where $\msc H$ is the preimage under the quotient map
of the set in
$\msc E/\msc Z\simeq\U_{t-1}'$ identified with
$\W_t\subset\U_{t-1}'$.  Then,
$\on{span}_{\K}H=\msc H$, and $S_t^{\on{in}}$ is presented by
$(Z,H)$ while $S_t^{\on{out}}$ is presented by $(H,E)$.
\end{enumerate}
\end{proposition}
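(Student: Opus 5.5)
Each of the four items is a piece of linear algebra over $\K$ and $\F$, so I will check them in order, using throughout that $\W_t\in\msc G(S_{t-1}')$ means $\on{span}_\K(S_{t-1}'\cap\W_t)=\W_t$.

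For (1), the in-part is exactly this generation property, and its dimension is $w_t$ by definition. For the out-part, $\pi_t$ is $\K$-linear, so it commutes with $\K$-spans:
\[
\on{span}_\K\pi_t(S_{t-1}')=\pi_t\left(\on{span}_\K S_{t-1}'\right)=\pi_t(\U_{t-1}'),
\]
which has dimension $d_{t-1}'-w_t$ by rank–nullity. Both spans are nonzero because $0\ne\W_t\subsetneq\U_{t-1}'$.

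For (2), I first check that $A_i^{\on{out}}$ is well defined: it is the map induced by $A_i$ on quotients, since $A_i(\W_t)$ contains $A_i$ of the kernel of $\pi_t$. Take a nonzero $P\in S_{t-1}'$ with $A_i(P)(\alpha_i)=0$ for the future $i$. If $P\in\W_t$, then $P$ is a nonzero element of $S_t^{\on{in}}$ and it satisfies the restricted constraints. Otherwise $\pi_t(P)\neq0$ and $A_i^{\on{out}}(\pi_t(P))$ is the class of $A_i(P)$. I will interpret the out-constraint as vanishing of the specialization at $\alpha_i$ modulo the specialized image of $A_i(\W_t)$; this is the one place where the meaning of a quotient constraint has to be fixed carefully, and under that reading the constraint holds.

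For (3), the restriction claim is immediate. Every $\W\in\msc G(S_t^{\on{in}})$ has $S_{t-1}'\cap\W=S_t^{\on{in}}\cap\W$, so it lies in $\msc G(S_{t-1}')$. The values $c_i$ coincide because $A_i^{\on{in}}=A_i|_{\W_t}$, so $\varphi_{t-1}'$ and $\varphi_t^{\on{in}}$ agree there. For the quotient, fix $\msc X\in\msc G(S_t^{\on{out}})$ and set $\msc Y=\pi_t^{-1}(\msc X)$, which strictly contains $\W_t$ since $\msc X\ne0$. I then verify three things.
\begin{enumerate}
\item \emph{$\msc Y$ is generated.} We have $\pi_t(S_{t-1}'\cap\msc Y)=S_t^{\on{out}}\cap\msc X$, which spans $\msc X$, and $S_{t-1}'\cap\msc Y$ also contains $S_{t-1}'\cap\W_t$, which spans $\W_t$. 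Together these span $\msc Y$.
\item \emph{The $\F$-dimensions split.} The map $\pi_t$ restricted to $S_{t-1}'\cap\msc Y$ surjects onto $S_t^{\on{out}}\cap\msc X$ with kernel $S_{t-1}'\cap\W_t$. Hence
\[
\dim_\F(S_t^{\on{out}}\cap\msc X)=\dim_\F(S_{t-1}'\cap\msc Y)-\dim_\F(S_{t-1}'\cap\W_t).
\]
\item \emph{The constraint counts split.} The quotient constraint satisfies
\[
\dim_\K A_i^{\on{out}}(\msc X)=\dim_\K\left(A_i(\msc Y)/A_i(\W_t)\right)=c_i(\msc Y)-c_i(\W_t).
\]
\end{enumerate}
Subtracting the constraint counts from the dimensions gives \cref{eq:quotient-identity}, and the $\K$-dimension identity $\dim_\K\msc X=\dim_\K\msc Y-w_t$ is rank–nullity again.

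For (4), let $\rho:\msc E\to\msc E/\msc Z\simeq\U_{t-1}'$ and $\msc H=\rho^{-1}(\W_t)$. We have $E/Z\simeq S_{t-1}'$, and this isomorphism is compatible with $\rho$ because $E\cap\msc Z=Z$. Under it, $H/Z=(E\cap\msc H)/Z$ corresponds to $S_{t-1}'\cap\W_t=S_t^{\on{in}}$. Since that space spans $\W_t$ and $\msc Z$ is spanned by $Z\subset H$, the span of $H$ is all of $\msc H$. The conditions for $(Z,H)$ then follow as in \cref{lem:easy-presentation}, since $H\cap\msc Z=Z$. For $(H,E)$, I check three things.
\begin{enumerate}
\item $E\cap\msc H=H$ holds by definition.
\item $E/H\simeq(E/Z)/(H/Z)\simeq S_{t-1}'/S_t^{\on{in}}\simeq\pi_t(S_{t-1}')=S_t^{\on{out}}$.
\item $\msc E/\msc H\simeq(\msc E/\msc Z)/(\msc H/\msc Z)\simeq\U_{t-1}'/\W_t$, which matches the out-span from (1).
\end{enumerate}

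I expect the main obstacle to be bookkeeping rather than depth. The delicate points are identifying the spans through the isomorphism $\msc E/\msc Z\simeq\U_{t-1}'$ in (4), and giving the quotient constraints in (2) a meaning after specialization.
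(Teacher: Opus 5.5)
Your proposal is correct and follows the paper's proof essentially step by step. The steps match: span commuting with $\pi_t$ plus rank--nullity for (1), the coset computation for (2), kernel generators plus lifted generators and rank--nullity on $\F$-dimensions and constraint counts for (3), and the isomorphism theorems with the check $\on{span}_\K H=\msc H$ for (4). Your explicit reading of the out-constraint in (2) as vanishing after specialization modulo the specialized image of $A_i(\W_t)$ (realized through the presentation) is a slightly more careful version of the paper's symbolic ``$A_i(P)=0$''. Invoking \cref{lem:easy-presentation} for $(Z,H)$ is a harmless shortcut.
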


\begin{proof}
For (1), the definition of an $S_{t-1}'$-generated space
gives $\W_t\ne\{0\}$ and that
\[
  \on{span}_{\K}(S_t^{\on{in}})=\on{span}_{\K}(S_{t-1}'\cap\W_t)=\W_t,
\]
which has dimension $w_t$.  Moreover, since
$\on{span}_{\K}S_{t-1}'=\U_{t-1}'$ and $\pi_t$ is $\K$-linear,
\[
  \on{span}_{\K}(S_t^{\on{out}})
  =\on{span}_{\K}\pi_t(S_{t-1}')
  =\pi_t(\on{span}_{\K}S_{t-1}')
  =\pi_t(\U_{t-1}')
  =\U_{t-1}'/\W_t.
\]
By rank-nullity, this space has dimension
$d_{t-1}'-w_t$. This proves (1).

For (2), we first check that $A_i^{\on{out}}$ is well-defined: if $\pi_t(P)=\pi_t(P')$, then
$P-P'\in\W_t$, so
$A_i(P)-A_i(P')\in A_i(\W_t)$. Now, suppose nonzero $\ne P\in S_{t-1}'$ satisfies $A_i(P)=0$ for some $i>t$. 
If $P\in\W_t$, then $P\in S_t^{\on{in}}$ is nonzero and satisfies
$A_i^{\on{in}}$.  If $P\notin\W_t$, then
$\pi_t(P)\in S_t^{\on{out}}$ is nonzero, and 
\[  A_i^{\on{out}}(\pi_t(P))
  =A_i(P)+A_i(\W_t) =0,\]
so $\pi_t(P)$ satisfies $ A_i^{\on{out}}$. This proves (2).

For (3), if
$\msc X\in\msc G(S_t^{\on{in}})$, then $\msc X\subset\W_t$, so
 $S_t^{\on{in}}\cap\msc X
  =(S_{t-1}'\cap\W_t)\cap\msc X
  =S_{t-1}'\cap\msc X$. Therefore $\msc X\in\msc G(S_{t-1}')$. As $S_t^{\on{in}}$ and $A_i^{\on{in}}$ are simply restrictions to $\W_t$, we have $\varphi_t^{\on{in}}(\msc X)
  =\varphi_{t-1}'(\msc X)$.
Now let $\msc X\in\msc G(S_t^{\on{out}})$ and set
$\msc Y:=\pi_t^{-1}(\msc X)$.  Since $\msc X\ne\{0\}$, we have
$\W_t\subsetneq\msc Y\subset\U_{t-1}'$.  Now, observe that
\[
  \pi_t(S_{t-1}'\cap\msc Y)
  =\pi_t(S_{t-1}')\cap\msc X
  =S_t^{\on{out}}\cap\msc X.
\]
Thus $S_{t-1}'\cap\msc Y$ maps to a generating set of $\msc X$. It
contains $S_{t-1}'\cap\W_t$, which generates
$\ker(\pi_t|_{\msc Y})=\W_t$.  Generators of the kernel together with
lifts of generators of the quotient generate $\msc Y$.  Hence
$\msc Y\in\msc G(S_{t-1}')$.
Finally, rank-nullity for $\pi_t$ and its induced candidate and
constraint maps gives
\begin{align*}
  \dim_{\K}\msc X
  &=\dim_{\K}\msc Y-w_t,\\
  \dim_{\F}(S_t^{\on{out}}\cap\msc X)
  &=\dim_{\F}(S_{t-1}'\cap\msc Y)
    -\dim_{\F}(S_{t-1}'\cap\W_t),\\
  c_i^{\on{out}}(\msc X)
  &=c_i(\msc Y)-c_i(\W_t),
\end{align*}
for $i>t$. Substituting these
identities into the definition of $\varphi_{t-1}'$ proves (3).

For (4), the presentation identifies
$S_{t-1}'\simeq E/Z$ inside
$\U_{t-1}'\simeq\msc E/\msc Z$, which identifies the intermediate spaces $H$ and $\msc H$. Therefore, the isomorphisms required in the presentation $(Z,H)$ of $S_t^{\on{in}}$ and $(H,E)$ of $S_t^{\on{out}}$ follow from the correspondence and
isomorphism theorems.
It suffices to check that
$\on{span}_{\K}H=\msc H$.  Indeed, $\msc H/\msc Z\simeq \W_t$ and
$H/Z\simeq S_{t-1}'\cap\W_t$ by the isomorphism theorems. Now, quotient $H/Z$ spans $\msc{H}/\msc{Z}$ as $\W_t\in\msc G(S_{t-1}')$. Since $H$ also contains $Z$, it spans the entire $\msc H$.
\end{proof}

We now formally define the stochastic process and our monovariant. For every $t$ and state $S_t$, our monovariant is $d_t+m_t$ where
\begin{equation}
  m_t:=\max_{\W\in\msc G(S_t)}
       \frac{\varphi_t(\W)}{\dim_{\K}\W}\qquad \text{and}\quad d_t:=\on{dim}_\K \U_t,
  \label{eq:gap-function}
\end{equation}
where $\U_t = \on{span}_\K S_t$ as before.
Immediately before reveal
$t$, every active branch carries $\mathscr F_{t-1}$-measurable data
$S_{t-1}$ and its $\K$-span $\U_{t-1}$.
After revealing $\alpha_t$, we define $S_{t-1}'$ and $\U_{t-1}'$ with dimension $d_{t-1}'$ by \cref{eq:exposure}. We similarly define
\begin{equation}
  m_{t-1}':=\max_{\W\in\msc G(S_{t-1}')}
       \frac{\varphi_{t-1}'(\W)}{\dim_{\K}\W}.
\end{equation}

The maximum exists by finiteness, and we let $\W_t$ be the maximizer. We break ties by maximizing $w_t:=\dim_\K \W_t$ and with any deterministic rule thereafter, so $\W_t$ is $\mathscr F_t$-measurable. 
At initialization, \cref{lem:initial} gives $m_0+d_0\le d-\eta n$. For $t\ge 1$, 
if $S_t\ne\{0\}$, every space generated by $S_t$ was generated before the reveal, so
\cref{lem:one-step} gives
\begin{equation}
  m_{t-1}'\le m_{t-1}+1.
  \label{eq:one-step-gap}
\end{equation}
Since $S_{t-1}'\subset S_{t-1}$, one has
$d_{t-1}'\le d_{t-1}$. 
We apply exactly one of the following four
updates.
\begin{enumerate}
\item \emph{Terminate.}  If $S_{t-1}'=\{0\}$, stop and call the branch \emph{inactive} 
\item \emph{Continue.}  Otherwise, if
$d_{t-1}'<d_{t-1}$ or 
$d_{t-1}'+m_{t-1}'\le d_{t-1}+m_{t-1}$, set $S_t=S_{t-1}'$ and continue.
\item \emph{Mark.}  Otherwise,
$d_{t-1}'=d_{t-1}$ and
$d_{t-1}'+m_{t-1}' > d_{t-1}+m_{t-1}$. If $\W_t=\U_{t-1}'$, mark time $t$, set $S_t=S_{t-1}'$, and continue.
\item \emph{Branch.} Otherwise, $\W_t\subsetneq \U_{t-1}'$.
Apply \cref{prop:branching} and continue with both
$S_t^{\on{in}}$ and $S_t^{\on{out}}$
\end{enumerate}

Thus, all choices depend only on $\alpha_1, \dots, \alpha_t$, i.e. they are $\mathscr F_t$-measurable. 
The update is designed so that continuing a branch or replacing it by
either child never increases $d_t+m_t$. The only
nontrivial structural check is at \emph{branch}, where the two parts of
\cref{prop:branching} are applied separately. In all other cases, $d_t=d_{t-1}'$ and $m_t=m_{t-1}'$. The sole permitted
potential increase is at \emph{mark}. We record and check this monovariant below.
\begin{proposition}
\label{prop:monovariant}
Along every active branch $S_t$ indexed by time $t$, $S_t$ has a presentation and
\[
  d_t+m_t\le d_{t-1}+m_{t-1}+\mbf{1}\left\{\text{the branch is marked at time $t$}\right\}.
\]
Moreover, every mark is a defect for the $\mathscr F_{t-1}$-measurable space $\U_{t-1}$.
\end{proposition}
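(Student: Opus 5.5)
We argue by induction on $t$, checking each of the four updates in turn. Presentations come directly from earlier lemmas. $S_0$ is presented by $(\{0\},\F[X]_{<k}^d)$. In the \emph{continue} and \emph{mark} updates, $S_t=S_{t-1}'\subset S_{t-1}$, so \cref{lem:easy-presentation} transfers the presentation of $S_{t-1}$ to $S_t$. In the \emph{branch} update, we first present $S_{t-1}'$ by \cref{lem:easy-presentation}. Then \cref{prop:branching}(4) presents $S_t^{\on{in}}$ by $(Z,H)$ and $S_t^{\on{out}}$ by $(H,E)$. \emph{Terminate} produces no active child, so there is nothing to check.

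For the monovariant, consider the \emph{continue} case first. If $d_{t-1}'+m_{t-1}'\le d_{t-1}+m_{t-1}$, we are done. Otherwise $d_{t-1}'<d_{t-1}$, so $d_{t-1}'\le d_{t-1}-1$. Combined with \cref{eq:one-step-gap}, $m_{t-1}'\le m_{t-1}+1$, this gives $d_t+m_t\le d_{t-1}+m_{t-1}$.

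In the \emph{mark} case, $d_t=d_{t-1}$ and $m_t\le m_{t-1}+1$, so the increase is at most $1=\mbf 1\{\text{marked}\}$. It remains to show the mark is a defect for $\U_{t-1}$. Since $\W_t=\U_{t-1}'$ and $d_{t-1}'=d_{t-1}$, we have $\U_{t-1}'=\U_{t-1}$, and this space lies in $\msc G(S_{t-1}')$. We are in the case $m_{t-1}'>m_{t-1}$, so
\[
\varphi_{t-1}'(\U_{t-1})/d_{t-1}=m_{t-1}'>m_{t-1}\ge\varphi_{t-1}(\U_{t-1})/d_{t-1}.
\]
Thus $\varphi_{t-1}'(\U_{t-1})>\varphi_{t-1}(\U_{t-1})$. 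By the last clause of \cref{lem:one-step}, reveal $t$ is then defective for $\U_{t-1}$, and $\U_{t-1}$ is $\mathscr F_{t-1}$-measurable.

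The main obstacle is the \emph{branch} case, where we must show both children satisfy $d_t+m_t\le d_{t-1}+m_{t-1}$ without a mark. Here $d_{t-1}'=d_{t-1}$ and $m_{t-1}'>m_{t-1}$, and $\W_t\subsetneq\U_{t-1}'$ attains $m_{t-1}'$, with ties broken toward maximal dimension.

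\emph{Inner child.} By \cref{prop:branching}(3), $\varphi_t^{\on{in}}=\varphi_{t-1}'$ on $\msc G(S_t^{\on{in}})\subset\msc G(S_{t-1}')$, so $m_t^{\on{in}}\le m_{t-1}'\le m_{t-1}+1$. Moreover $d_t^{\on{in}}=w_t\le d_{t-1}-1$. Together these give $d_t+m_t\le d_{t-1}+m_{t-1}$.

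\emph{Outer child.} Take $\msc X\in\msc G(S_t^{\on{out}})$ with preimage $\msc Y$. By \cref{eq:quotient-identity},
\[
\varphi_t^{\on{out}}(\msc X)=\varphi_{t-1}'(\msc Y)-\varphi_{t-1}'(\W_t)\le m_{t-1}'\dim\msc Y-m_{t-1}'w_t=m_{t-1}'\dim\msc X.
\]
Equality would force $\msc Y$ to be a maximizer of strictly larger dimension than $\W_t$, contradicting the tie-break. Hence $m_t^{\on{out}}<m_{t-1}'$, but only strictly, not by a full unit. We also have $d_t^{\on{out}}=d_{t-1}-w_t\le d_{t-1}-1$. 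So we need $m_t^{\on{out}}\le m_{t-1}+w_t$, which follows if $m_{t-1}'\le m_{t-1}+1$ and $w_t\ge1$.

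There is a subtlety: $m_t^{\on{out}}<m_{t-1}'$ alone is not enough, and it is the bound $m_{t-1}'\le m_{t-1}+1$ that closes the gap. I will verify carefully that \cref{eq:one-step-gap} applies to every $\W\in\msc G(S_{t-1}')$, using the inclusion $\msc G(S_{t-1}')\subset\msc G(S_{t-1})$ from \cref{lem:one-step}. The outer inequality then reads $d_t+m_t\le (d_{t-1}-w_t)+m_{t-1}+1\le d_{t-1}+m_{t-1}$, as required.
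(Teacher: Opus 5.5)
Your proof is correct and follows the paper's argument case by case: presentations come from \cref{lem:easy-presentation} and \cref{prop:branching}(4), the continue and mark cases use \cref{lem:one-step}, and the branch case uses $m_{t-1}'\le m_{t-1}+1$ together with $1\le w_t\le d_{t-1}-1$. The strict inequality $m_t^{\on{out}}<m_{t-1}'$ that you derive from the tie-break is valid but not needed. The paper only uses $m_t^{\on{out}}\le m_{t-1}'$, and your explicit computation from \cref{eq:quotient-identity} already gives that.
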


\begin{proof}
We induct along an active branch. For the base case, $S_0=\mb{F}[X]_{<k}^d$ is presented by $(\{0\}, \mb{F}[X]_{<k}^d)$ and $d_0+m_0\le d-\eta n$. We go through the four cases above.
\begin{enumerate}
\item \emph{Terminate.} Then there is nothing to prove.
\item \emph{Continue.} Here, $d_t=d_{t-1}'$ and $m_t=m_{t-1}'$; the existence of a presentation of $S_t=S_{t-1}'\subset S_{t-1}$ follows from \cref{lem:easy-presentation}. If
$d_t+m_t\le d_{t-1}+m_{t-1}$, the lemma is proved. Otherwise, if
$d_t<d_{t-1}$, then \cref{lem:one-step} gives
$m_t\le m_{t-1}+1$ so $(d_t+m_t)-(d_{t-1}+m_{t-1})
  \le 1-(d_{t-1}-d_t)\le 0$.
\item \emph{Mark.} Here, $d_t=d_{t-1}'$ and $m_t=m_{t-1}'$; the existence of a presentation of $S_t=S_{t-1}'\subset S_{t-1}$ follows from \cref{lem:easy-presentation}. Moreover, $\U_t=\U_{t-1}=\W_t$, so $m_t>m_{t-1}$ or else we are in \emph{continue}. 
As $\W_t$ attains $m_t$, 
\[
 \frac{\varphi_t(\U_{t-1})}{d_{t-1}}
 =m_t>m_{t-1}
 \ge\frac{\varphi_{t-1}(\U_{t-1})}{d_{t-1}}.
\]
Thus \cref{lem:one-step} shows that the reveal is defective. The lemma follows, since $m_t-m_{t-1}\le 1$ by \cref{lem:one-step} and $d_{t-1}=d_t$.
\item \emph{Branch.} 
The existence of a presentation of $S_{t-1}'$ follows from \cref{lem:easy-presentation}, and presentations of both children follow from \cref{prop:branching}(4). Here, $\W_t\subsetneq \U_{t-1}'$, so
$w_t:=\dim_{\K}\W_t<d_{t-1}'$.  By \cref{prop:branching}, the branches satisfy
\[
  d_t^{\on{in}}=w_t,\qquad m_t^{\on{in}}=m_{t-1}',
  \qquad
  d_t^{\on{out}}=d_{t-1}'-w_t,\qquad
  m_t^{\on{out}}\le m_{t-1}'.
\]
Indeed, $\W_t$ attains $m_{t-1}'$ in the restriction, while the quotient-potential identity bounds the normalized potential of every quotient-generated space by
$m_{t-1}'$. Since the branch case occurred instead of continue, $d_{t-1}'=d_{t-1}$, and $m_{t-1}'\le m_{t-1}+1$ by \cref{lem:one-step}. Since $1\le w_t\le d_{t-1}'-1$,
\[
\begin{aligned}
  d_t^{\on{in}}+m_t^{\on{in}}
  &=w_t+m_{t-1}'
   \le d_{t-1}+m_{t-1},\\
  d_t^{\on{out}}+m_t^{\on{out}}
  &\le d_{t-1}'-w_t+m_{t-1}'
   \le d_{t-1}+m_{t-1}.
\end{aligned}
\]
Thus a proper split cannot increase $d_t+m_t$ for either child.\qedhere
\end{enumerate}
\end{proof}

\begin{lemma}[Completion]
\label{lem:completion}
If $\mathsf{RS}(\boldsymbol\alpha,k)^{\equiv d}
  \cap\mc{M}_{\mbf{V}}^\star(\F^d)\ne \{0\}$, then some branch in the reveal process realized by $\boldsymbol{\alpha}$ runs actively to completion with $S_n\ne \{0\}$ and $d_n+m_n>0$.
\end{lemma}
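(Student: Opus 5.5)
The plan is to follow a single nonzero witness through the reveal process and check that it never leaves the branch it currently lives on. Suppose $\mathsf{RS}(\boldsymbol\alpha,k)^{\equiv d}\cap\mc{M}_{\mbf{V}}^\star(\F^d)\ne\{0\}$. Its columns are evaluations of polynomials of degree $<k$, so a nonzero matrix in this intersection is the same as a nonzero tuple $P\in\F[X]_{<k}^d=S_0$ with $A_i(P)(\alpha_i)=0$ for every $i\in[n]$. I will prove by induction on $t$ the following invariant.

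\emph{Invariant.} There is an active branch at time $t$ whose state $S_t$ contains a nonzero element $P_t$ such that, for every $i>t$, $P_t$ satisfies the time-$i$ constraint inherited by that branch, specialized at $\alpha_i$.

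For $t=0$ take $P_0=P$. Each element of $S_t$ is represented through the presentation $(Z,E)$ of $S_t$ given by \cref{prop:monovariant}, that is, as a class $Q+Z$ with $Q\in E$. The inherited constraint at time $i$ then reads: $A_i(Q)(\alpha_i)$ lies in the specialization at $\alpha_i$ of the image under $A_i$ of the part being quotiented out. The presentation is exactly what makes this condition well defined on classes. I would verify this bookkeeping once, in the language of \cref{prop:branching}(2), and treat it as routine afterwards.

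For the inductive step, assume the invariant holds at time $t-1$ with element $P_{t-1}\in S_{t-1}$. Since $P_{t-1}$ satisfies the time-$t$ constraint at $\alpha_t$, we have $P_{t-1}\in S_{t-1}'$. In particular $S_{t-1}'\ne\{0\}$, so the \emph{terminate} case cannot occur on this branch. There are two possibilities.
\begin{itemize}
\item In the \emph{continue} and \emph{mark} cases, $S_t=S_{t-1}'$ and the constraints are unchanged, so we set $P_t:=P_{t-1}$.
\item In the \emph{branch} case, apply \cref{prop:branching}(2). If $P_{t-1}\in\W_t$, follow the child $S_t^{\on{in}}$ with $P_t:=P_{t-1}$. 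Otherwise follow the child $S_t^{\on{out}}$ with $P_t:=\pi_t(P_{t-1})\ne 0$.
\end{itemize}
In the second subcase, note that $A_i(P_{t-1})(\alpha_i)=0$ certainly lies in the specialization of the quotiented image, so $P_t$ satisfies every inherited future constraint of the child. The one point needing care is that \cref{prop:branching}(2) is phrased for constraints $A_i(P)=0$. I must confirm that the same argument goes through verbatim with evaluation at $\alpha_i$ inserted, because the branching constructions are compatible with specialization through the presentation of \cref{prop:branching}(4). I expect this compatibility check to be the main obstacle, although it should be short.

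After step $n$ the invariant produces a branch that ran actively to completion with $0\ne P_n\in S_n$. It remains to show $d_n+m_n>0$. Since $S_n\ne\{0\}$, the space $\U_n=\on{span}_{\K}S_n$ is nonzero and lies in $\msc G(S_n)$, so $d_n\ge 1$. At time $n$ there are no future constraints, so the sum in \cref{def:dynamic-potential} is empty and
\[
  \varphi_n(\U_n)=\dim_\F(S_n\cap\U_n)=\dim_\F S_n\ge 1.
\]
Hence $m_n\ge \varphi_n(\U_n)/d_n\ge 1/d_n>0$, and therefore $d_n+m_n>0$, as claimed.
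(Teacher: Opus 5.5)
Your proposal is correct and is essentially the paper's argument. You lift the nonzero matrix to a polynomial witness $P$ with $A_i(P)(\alpha_i)=0$ for all $i$, forward it to a nonzero child at each branching via \cref{prop:branching}(2), and conclude from $\varphi_n(\mathscr{U}_n)=\dim_{\mathbb{F}} S_n>0$ (there are no remaining costs at time $n$) that $d_n+m_n\ge m_n>0$.

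The specialization point you flag is one the paper glosses over, since \cref{prop:branching}(2) is phrased for $A_i(P)=0$. It resolves exactly as you indicate: the witness satisfies each constraint exactly, so $0=A_i(P)(\alpha_i)$ lies in every specialized quotient image, and the tracked element stays nonzero.
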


\begin{proof}
A nonzero matrix in the intersection has a polynomial lift satisfying
every revealed constraint.  At each branch,
\cref{prop:branching}(2) forwards this witness to a nonzero child, so that branch runs to completion as $S_t\ne \{0\}$ for every $t$. For $S_n\ne\{0\}$, there are no remaining costs, so 
\[ m_n+d_n\ge m_n\ge \frac{\varphi_n(\U_n)}{d_n}=\frac{\dim_{\F}S_n}{d_n}>0.\qedhere\]
\end{proof}

\subsection{Defect Probability Bounds}
\label{subsec:low-rate-lms}

By \cref{lem:initial,lem:completion}, conditioned on the event whose
probability is bounded in \cref{thm:low-lms}, the quantity $d_t+m_t$
starts negative with magnitude $\Omega(n)$ and must finish positively.
By \cref{prop:monovariant}, this can happen only if there are
$L=\Omega(n)$ marked steps, which suggests that the event is unlikely.

Since every mark is a defect, it suffices to bound the probability of a defect, where we recall from \cref{eq:def-defect} that reveal $t$ is a \emph{defect for $\W$} if
\[
  \dim_{\F}\{A_t(P)(\alpha_t):P\in S_{t-1}\cap\W\}
  <c_t(\W).\]

Our defect probability bound below improves on \cite{lms25} by using the presentation $(Z, E)$ of $S_{t-1}$ that we construct inductively in \cref{prop:monovariant}. Since $Z$ and $E$ contain $d$-tuples of plain polynomials of degree less than $k$, the symbolically nonzero determinant $f(X)$ will have $\deg f\le d(k-1)$, so the defect $f(\alpha_t)=0$ occurs with small probability.

We emphasize that our proof mechanism is identical to \cite{lms25}, but the presentation of $S_{t-1}$ gives better degree bounds on the polynomial involved. The corresponding estimate in \cite{lms25} tracks candidate and map
degrees, which starts at $d$ but may increase by a factor of $d$ at every quotient step. They obtain a bound with
$p_d\le {2d^{2d}(k+1)}/{q}$. 

\begin{proposition}
\label{prop:uniform-defect}
Condition on any active branch immediately before revealing $\alpha_t$, its state and ambient are $\mathscr F_{t-1}$-measurable, and
\begin{equation}
  \mb{P}\left(
    \text{reveal $t$ is a defect for $\U_{t-1}$}
    \mathrel{\big|}\mathscr F_{t-1}
  \right)
  \le p_d,
  \qquad
  p_d:=\frac{d(k-1)}{q-n}.
  \label{eq:uniform-defect-p}
\end{equation}
\end{proposition}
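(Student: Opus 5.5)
We will prove \cref{prop:uniform-defect} by reducing the defect event to the vanishing of a single polynomial in $\alpha_t$ whose degree is controlled by the presentation $(Z,E)$ of $S_{t-1}$ from \cref{prop:monovariant}. First, the measurability claim is bookkeeping: every update rule uses only $\alpha_1,\dots,\alpha_{t-1}$, so $S_{t-1}$, $\U_{t-1}$, $(Z,E)$ and $c:=c_t(\U_{t-1})=\dim_\K A_t(\U_{t-1})$ are $\mathscr F_{t-1}$-measurable. Conditioned on $\mathscr F_{t-1}$, the variable $\alpha_t$ is uniform on the at least $q-n$ elements of $\F$ not yet used. It therefore suffices to exhibit a nonzero polynomial $f\in\F[X]$, determined by $\mathscr F_{t-1}$, with $\deg f\le d(k-1)$, such that a defect forces $f(\alpha_t)=0$. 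The bound $p_d$ then follows from the elementary fact that $f$ has at most $d(k-1)$ roots among the $\ge q-n$ candidates.

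To construct $f$, use the presentation $E/Z\simeq S_{t-1}$ and $\msc E/\msc Z\simeq\U_{t-1}$. Here $E\subset\F[X]^d_{<k}$ consists of honest polynomial tuples. Pull the constraint $A_t$ back to $E$: it becomes the $\F$-linear map $P\mapsto A_t(\psi(P))$. I expect this map to be induced by $A_t$ applied to $P$ itself, modulo the image of $Z$, and the point of storing numerators and denominators is to make this true. One way to arrange it is to carry along $\F$-linear maps inherited through quotient steps via \cref{prop:branching}(2), i.e. maps $E\to\F[X]^{c_t}$ with $Z$ mapping into a fixed subspace. The generic rank $c$ is then the $\K$-rank of $A_t$ on $\msc E$ minus its $\K$-rank on $\msc Z$.

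Next, choose elements $P_1,\dots,P_a$ of $Z$ and $P_{a+1},\dots,P_{a+c}$ of $E$ so that $A_t(P_1),\dots,A_t(P_{a+c})\in\F[X]^{\mathrm{rows}}$ are $\K$-independent, with the first $a$ spanning $A_t(\msc Z)$. This is possible because $\msc Z$ and $\msc E$ are spanned by $Z$ and $E$. Each $A_t(P_j)$ has entries of degree $<k$, since $A_t$ has constant $\F$-entries. Pick an $(a+c)\times(a+c)$ minor $f$ that is nonzero in $\K$. If $f(\alpha_t)\ne 0$, the specialized vectors are $\F$-independent, so the images of $P_{a+1},\dots,P_{a+c}$ in $S_{t-1}$ impose $c$ independent constraints at $\alpha_t$. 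The quantity in \cref{eq:def-defect} is then at least $c$, so there is no defect.

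The main obstacle is the degree bound. Naively $\deg f\le(a+c)(k-1)$, and $a+c$ may exceed $d$, because $\dim_\K\msc E\le d$ but the $\K$-rank of $A_t$ on $\msc E$ is also at most $d$. So $a+c\le\operatorname{rank}A_t|_{\msc E}\le\dim\msc E\le d$, which gives $\deg f\le d(k-1)$ exactly. Two points need care here: that specializing the $Z$-part (the quotient) at $\alpha_t$ cannot make it absorb the $E$-part, and that $E\cap\msc Z=Z$ makes the count of $\F$-constraints on $E/Z$ match the count on $E$ minus the count on $Z$. The condition $q>d(k-1)$ ensures $f$ is not identically zero as a function, though only the root count is actually needed.
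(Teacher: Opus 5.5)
Your proposal is correct and follows the paper's proof essentially step for step. The shared steps are: the presentation $(Z,E)$ from \cref{prop:monovariant}; the identification (via \cref{prop:branching}(2)) of the inherited constraint with the map induced by $A_t$ on $\mathscr{E}/\mathscr{Z}$, giving $c=\dim_{\K}A_t(\mathscr{E})-\dim_{\K}A_t(\mathscr{Z})$; a $\K$-basis of $A_t(\mathscr{Z})$ taken from $Z$ and extended by images of elements of $E$ (your $a$ is the paper's $z$); a nonzero $(z+c)$-minor $f$ with $\deg f\le (z+c)(k-1)\le d(k-1)$; and a root count over the at least $q-n$ unused field elements.

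The point you flag but leave open is settled in the paper by the fact that specialization cannot increase rank. Hence $\{A_t(Q)(\alpha_t):Q\in Z\}$ and $\{A_t(P)(\alpha_t):P\in E\}$ span $\mathbb{F}$-spaces of dimension at most $z$ and $z+c$, so exactly $z$ and $z+c$ when $f(\alpha_t)\ne 0$. The specialized constraint on $E/Z\simeq S_{t-1}$ therefore has rank exactly $c$. Your second concern is then automatic, since the image of $Z$ lies inside the image of $E$.
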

\begin{proof}
By \cref{prop:monovariant}, a presentation $(Z, E)$ with $\K$-spans $(\msc{Z}, \msc{E})$ exists for $S_{t-1}$. By the compatibility of the inherited constraints in \cref{prop:branching}(2), the current generic constraint is the map induced by the original $A_t$ on $\msc E/\msc Z$. Hence
\[ c:=c_t(\U_{t-1})
=\dim_\K A_t(\msc{E})-\dim_\K A_t(\msc{Z}).\]
Let $z:=\dim_\K A_t(\msc{Z})$. As $\msc{Z}$ is the $\K$-span of $Z$, we can choose $\K$-basis of $A_t(\msc{Z})$ given by $\{A_t(Q_1), \dots, A_t(Q_z)\}$ where $Q_i\in Z\subset \F[X]_{<k}^d$. We can extend this by $\{A_t(P_1), \dots, A_t(P_c)\}$ for $P_i\in E\subset \F[X]_{<k}^d$ to form a $\K$-basis of $A_t(\msc{E})$.  By $\K$-independence, putting these columns together gives a matrix with some invertible $(z+c)\times (z+c)$ minor, i.e. its determinant $f(X)$ is nonzero. 

Now, as $P_i, Q_i\in \F[X]_{<k}^d$, we see that
\begin{equation}
\label{eq:deg-f-bd} 
z+c=\dim_\K A_t(\msc{E})\le \dim_\K \msc{E} \le d\implies \deg f\le (z+c)(k-1)\le d(k-1).
\end{equation}

If $f(\alpha_t)\ne 0$, then the $z+c$ columns remain independent upon specializing to $X=\alpha_t$. Since specialization cannot increase ranks, we see that $\{A_t(P)(\alpha_t):P\in E\}$ and $\{A_t(P)(\alpha_t):P\in Z\}$ have $\F$-dimensions $z+c$ and $z$, respectively. Then, via the presentation $S_{t-1}\simeq E/Z$, the inherited specialized constraint has rank
\[
(z+c)-z=c=c_t(\U_{t-1}).
\]
Therefore, reveal $t$ is a defect for $\U_{t-1}$ only if $f(\alpha_t)=0$, which happens with probability at most $p_d$ by \cref{eq:deg-f-bd} and that $\alpha_t$ is uniform over the $q-t+1\ge q-n$ remaining field elements.
\end{proof}

By \cref{prop:uniform-defect}, any specific branch has at least $L$ marked defects with small probability. 
The remaining question is how to bound, at initialization, the probability that such a branch exists. 

In principle, branching can occur at every time step, for a total of $2^n$ leaves. The key observation is that the sum of dimensions $d_t$ over all branches cannot increase. Indeed, this is true trivially at continuation, and at branching it holds by \cref{prop:branching}(1). Thus, the number of nonzero terminal branches of the tree is at most $d$.
To union bound over them, we specify the branches with $d$ labels and union bound over ``the branch containing label $j\in [d]$'', so that each labeled path is progressively measurable.

\begin{proposition}
\label{prop:predictable-marks}
Assume every $\mathscr{F}_{t-1}$-measurable active branch is
marked at time $t$ with conditional probability at most $p$. Then, for every
$L\in\N$,
\begin{equation}
  \mb{P}\left(
    \text{some branch contains at least $L$ marks}
  \right)
  \le d\binom{n}{L}p^L.
  \label{eq:many-marks}
\end{equation}
\end{proposition}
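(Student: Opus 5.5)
The plan is to label the branches of the reveal tree with the index set $[d]$ in a progressively measurable way, and then bound the probability that each labeled path collects $L$ marks.

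\textbf{Labeling.} At time $0$ the unique active branch carries all labels $[d]$; note $d_0=d$. We maintain the invariant that every active branch $S_t$ carries a set $J_t$ of labels with $|J_t|=d_t$, and that label sets of distinct active branches are disjoint.
\begin{itemize}
\item At \emph{continue} or \emph{mark}, $d_t=d_{t-1}'\le d_{t-1}$. We keep the smallest $d_t$ labels according to a fixed deterministic rule and discard the rest.
\item At \emph{branch}, \cref{prop:branching}(1) gives $d_t^{\on{in}}+d_t^{\on{out}}=d_{t-1}'\le d_{t-1}$. We hand the first $w_t$ labels to $S_t^{\on{in}}$ and the next $d_{t-1}'-w_t$ labels to $S_t^{\on{out}}$.
\item At \emph{terminate}, all labels are dropped.
\end{itemize}
Every active branch has $d_t\ge 1$, so each active branch holds at least one label. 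Hence every branch is the path of some label. All these choices are functions of $\alpha_1,\dots,\alpha_t$.

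\textbf{Paths for a fixed label.} Fix $j\in[d]$ and define the path $\Gamma^{(j)}$: at time $t$, $\Gamma^{(j)}$ is the active branch holding $j$, if any. Once $j$ is dropped, the path stops. This sequence is adapted to $(\mathscr F_t)$. Let $N_j$ be the number of times $t$ at which the branch holding $j$ at time $t-1$ is marked at time $t$. Each branch containing at least $L$ marks is a final segment of some $\Gamma^{(j)}$ that is still active. Moreover, the branch's history traced back to the root is exactly the path of any label it holds. So it suffices to show
\[
\mb P(N_j\ge L)\le \binom nL p^L \quad\text{for each } j,
\]
and then take a union bound over $j\in[d]$.

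\textbf{Bounding $\mb P(N_j\ge L)$.} Let $I_t$ be the indicator that $\Gamma^{(j)}$ is active at time $t-1$ and marked at time $t$. By hypothesis,
\[
\mb P(I_t=1\mid\mathscr F_{t-1})\le p,
\]
since the branch is $\mathscr F_{t-1}$-measurable. For a fixed set $T=\{t_1<\dots<t_L\}\subset[n]$, we apply the tower property by conditioning successively on $\mathscr F_{t_L-1},\mathscr F_{t_{L-1}-1},\dots$. Since $I_{t_1},\dots,I_{t_{L-1}}$ are $\mathscr F_{t_L-1}$-measurable, this gives
\[
\mb P\left(I_{t_1}=\dots=I_{t_L}=1\right)\le p^L.
\]
The event $N_j\ge L$ forces all indicators in some $L$-subset $T$ to equal $1$. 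A union bound over the $\binom nL$ choices of $T$ gives $\binom nL p^L$. Multiplying by $d$ labels yields \cref{eq:many-marks}.

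\textbf{Main obstacle.} The delicate point is measurability. The branch followed by label $j$ must be determined by $\alpha_1,\dots,\alpha_{t-1}$ before reveal $t$, so that the conditional hypothesis applies. This holds because the label-assignment rule depends only on already-computed data, namely $d_{t-1}'$, $w_t$, and the deterministic tie-breaking rule, all of which are $\mathscr F_t$-measurable. Everything else is routine.
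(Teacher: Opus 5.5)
Your proposal is correct and follows essentially the same route as the paper. Both arguments deterministically assign labels from $[d]$ so that each state carries $d_t$ labels, using that the children's dimensions sum to the parent's at a branch. Both then use the $\mathscr F_{t-1}$-measurability of the branch carrying a fixed label, apply iterated conditioning to get $p^L$ for a fixed set of $L$ times, and union bound over the $\binom{n}{L}$ time sets and the $d$ labels. One minor wording slip: a branch with $L$ marks coincides, from the root up to its end, with the path of any label it holds, so it is an initial segment of that path rather than a ``final segment''; your next sentence already says this correctly.
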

\begin{proof}
We formally explain the labeling process: we assign every state in the process a subset of labels in $[d]$ with a deterministic rule such that 
\begin{itemize}
    \item The root has labels $[d]$,
    \item Children of a single parent state $S_t$ contain disjoint subsets of parent labels,
    \item Any state $S_t$ with $\K$-span $\U_t$ has $\dim_\K \U_t$ many labels.
\end{itemize}
At a non-branch step, assign the child a deterministically chosen subset of the parent's labels of the required size. When we branch, \cref{prop:branching} implies the dimensions of the two
children sum to the dimension of the parent. Thus every nonzero state carries at least one label, while each label follows at most one state. Hence, every nonzero terminal branch is identified by one of its labels.

Fix a label $a\in[d]$. Immediately before reveal $t$, the branch carrying $a$ is determined by $\alpha_1, \dots, \alpha_{t-1}$ and is therefore
$\mathscr F_{t-1}$-measurable. Let $X_{a,t}$ be the indicator that this branch is active and marked at time $t$, setting $X_{a,t}=0$ if the label has already been discarded. By assumption and iterated
conditioning, we have that 
\begin{equation}
  \mb{P}\left(X_{a,t}=1\mid\mathscr F_{t-1}\right)\le p \implies
  \mb{P}\left(X_{a,t}=1\text{ for every }t\in I\right)\le p^L.
\end{equation}
for any fixed set of time indices $I=\{t_1, \dots, t_L\}$ with $1\le t_1<\dots< t_L\le n$. A union bound over $I\in\binom{[n]}{L}$ and $a\in [d]$ gives \cref{eq:many-marks}.
\end{proof}

We now combine the previous ingredients to prove \cref{thm:low-lms}.

\begin{proof}[Proof of \cref{thm:low-lms}]
Along any surviving branch, if $N_{\mathrm{mark}}$ denotes its
number of marks, then
\[
  0<d_n+m_n\le d_0+m_0+N_{\mathrm{mark}}
  \le d-\eta n+N_{\mathrm{mark}},
\]
by \cref{lem:initial}. Thus $N_{\mathrm{mark}}\ge L$, where \begin{equation}
  L:=\lfloor \eta n-d\rfloor +1\ge\frac{\eta n}{2}>0.
  \label{eq:mark-threshold}
\end{equation}
At every mark, \cref{prop:monovariant} identifies a defect for an
ambient fixed before $\alpha_t$, and \cref{prop:uniform-defect}
bounds its conditional probability by $p_d$. If
$2ep_d/\eta\ge1$, the first bound in \cref{eq:relative-bound} is
trivial; hence assume $2ep_d/\eta<1$. Hence
\cref{prop:predictable-marks} and \cref{eq:mark-threshold} give
\[
  \mb{P}\left(\text{some branch satisfies }S_n\ne \{0\}\right)
  \le d\binom{n}{L}p_d^{L} \le d\left(\frac{enp_d}{L}\right)^{\eta n-d},
\]
which is at most the bounds in \cref{eq:relative-bound} after
substituting the definition of $p_d$ and using $d\le r$ and $d\le \eta n/2$.
By \cref{lem:completion}, the event that
$\mathsf{RS}(\boldsymbol\alpha,k)^{\equiv r}
    \cap\mc{M}_{\mbf{V}}^\star(U)\ne\{0\}$ is contained in the event that some branch runs to completion. Undoing the coordinate identification $U\simeq \F^d$ proves
\cref{eq:relative-bound}.
\end{proof}

\section{Non-containment to Containment for Nested Codes}
\label{sec:black-box}

In the previous section, the key \cref{thm:low-lms} showed that a random Reed--Solomon code is unlikely to contain a local pair whenever the potential has a uniform negative gap on every nonzero subspace.  We now prove the reverse direction for some \emph{nested codes}. All dimensions and subspaces are over $\F$.

\begin{definition}
    A family of $\F$-additive codes $C_1,\dots,C_k\subset(\F^s)^n$ is a
    $k$-nested code if $\dim C_i=i$ and $C_i\subset C_{i+1}$.
\end{definition}

For any additive code $C$, a nesting can be defined by repeatedly dropping one basis element at a time. Observe that codes such as Reed--Solomon and folded Reed--Solomon codes are naturally nested, motivating us to treat $\F$-additive codes with explicit folding $s$. 
Recall that, for an $r$-local profile $\mbf V=(V_1,\dots,V_n)$ over $\F$ and subspace
$U\subset\F^r$, code $C\subset \F^n$ contains $r$-local pair $(\mbf{V}, U)$ if and only if $C^{\equiv r}\cap \mc{M}_\mbf{V} (U) \ne \emptyset$.
For $C\subset(\F^s)^n$, we unfold each symbol into $s$ scalar
coordinates and impose $V_i$ on each of the $s$ scalar rows in block
$i$ (see \cref{subsec:sdc}).

Our goal of this section is the containment direction: if a nested code $C_k$ has rate $R:=k/sn >R_\mbf{V}(U)$, then it contains a witness matrix satisfying $\mbf{V}$ with rowspan exactly $U$. The overall plan is similar to the RLC case in \cite{lms25} and \cite{ggsw}:
\begin{enumerate}
    \item We first lower bound $\dim (C_k^{\equiv r}\cap \mc{M}^\star_\mbf{V} (U))\ge sn\phi_{\mbf{V}}(U, R)$ by a dimension counting argument.
    \item We then upper bound it by the same potential plus some error terms, and apply this to every proper subspace $X$ of $U$ to control $\dim (C_k^{\equiv r}\cap \mc{M}^\star_\mbf{V} (X))$ by $\phi_{\mbf{V}}(X, R)$
    \item If no witness matrix has rowspan exactly $U$, then all rowspans lie in proper subspaces $X$. Hence, $q^{sn\phi_{\mbf{V}}(U, R)}$ is at most the sum of the same term over each $X\subsetneq U$, plus some error.
    \item However, the potential at $U$ is bigger than that at any subspace $X\subsetneq U$ by an additive slack, see e.g. \cref{eq:potential-gap}, enough to absorb the loss in the union bound and errors in Step 3, a contradiction. Therefore, $C^{\equiv r}\cap \mc{M}_\mbf{V} (U) \ne \emptyset$ giving containment.
\end{enumerate}

Observe that the only code-specific statement is Step 2, which is morally a non-containment type statement. Our goal in this section is to turn
such black-box non-containment statements into containment at high rate for a nested family of codes. We write an abstract non-containment statement below. The quotient maps $\pi$ we consider here are exactly those in \cref{lem:maximal-quotient}.

\begin{definition}[Goodness]
\label{def:goodness}
Let $(\mathbf V,U)$ be an $r$-local pair and let
$(C_1,\ldots,C_k)$ be a $k$-nested family in $(\F^s)^n$. Set $C_0=\{0\}$. The family $(C_1,\ldots,C_k)$ is
$(\mathbf V,\gamma)$-good for $U$ if the following holds.

For every $\{0\}\neq X\subset U$ and surjective linear map
$\pi:\F^r\to\F^{r'}$ whose kernel is an inclusion-maximal element of
$\argmax_{W\subsetneq X}R_{\mathbf V}(X,W)$, we have
\[
C_{j_\gamma(X)}^{\equiv r'}\cap \mc{M}_{\pi \mbf{V}}^\star(\pi X) = \{0\}\quad\text{where}\quad j_\gamma(X):=
\min\left\{
k,
\max\left\{
0,
\left\lceil
\left(R_{\mathbf V}(X)-\gamma\right)sn
\right\rceil-1
\right\}
\right\}.\]
\end{definition}

We parse this definition and level $j_{\gamma}(X)$. The code $C_j$ has rate
$j/(sn)$, so $j_\gamma(X)$ is the largest level $j$ such that the rate of $C_j$ is $j/sn < R_{\mathbf V}(X)-\gamma$.  If the right hand side is negative, then
$j_\gamma(X)=0$ and the required condition is trivial; the
upper clipping merely stops at the final available code $C_k$.  Thus,
goodness says that, once we descend to the available level $j_\gamma(X)$
corresponding to the threshold of $(\mbf{V}, X)$ minus $\gamma$, every critical
quotient in the sense of \cref{lem:maximal-quotient} is
absent.

\subsection{Containment for Good Nested Codes}
We now give the key theorem---the black-box containment statement for good nested codes.

\begin{theorem}\label{thm:key}
Let $(\mathbf V,U)$ be an $r$-local pair with $U\ne\{0\}$ and
$\mathbf V\in\mc L(\F^r)^n$, and let
$C_1\subset \dots\subset C_k\subset(\F^s)^n$ be a $(\mathbf V,\gamma)$-good $k$-nested family of codes.  Then, $C_k$ contains
$(\mathbf V,U)$ whenever its rate
\begin{equation}
\label{eq:key-level}
    R:=\frac{k}{sn}>R_{\mbf{V}}(U)+\gamma r+\frac{2r}{sn}.
\end{equation}
\end{theorem}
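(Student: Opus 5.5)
We follow the four-step plan sketched at the start of this section. Throughout we work with the unfolded code $C_k\subset\F^{sn}$ and the duplicated profile $\mbf V^{\otimes s}$; by \cref{prop:unfolding}, thresholds are unchanged. For a subspace $X\subset U$, write $N(X):=\dim_\F\left(C_k^{\equiv r}\cap\mc M^\star_{\mbf V}(X)\right)$.

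\emph{Lower bound.} First we lower bound $N(U)$ by constraint counting. The space $\mc M^\star_{\mbf V}(U)$ consists of $sn\times r$ matrices whose $(i,h)$th row lies in $U\cap V_i$, so it has dimension $s\sum_i\dim(U\cap V_i)$. Requiring each of the $r$ columns to lie in $C_k$ is the same as requiring $M\in C_k\otimes U$ intersected with the coordinatewise constraints. Hence
\[ N(U)\ge k\dim U-s\sum_i \on{codim}_U(U\cap V_i)=sn\,\phi_{\mbf V}(U,R). \]

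\emph{Upper bound.} Next, for each nonzero $X\subsetneq U$ we aim to prove
\[ N(X)\le sn\,\phi_{\mbf V}(X,R)+\gamma\, sn\dim X+\dim X. \]
We argue recursively along a local chain $X=W_0\supsetneq W_1\supsetneq\cdots\supsetneq W_m=\{0\}$. Here $W_{j+1}$ is an inclusion-maximal element of $\argmax_{W\subsetneq W_j}R_{\mbf V}(W_j,W)$, and $\pi_j$ denotes the quotient map with kernel $W_{j+1}$. Fix $j$ and let $j_\gamma:=j_\gamma(W_j)$. Goodness says that $C_{j_\gamma}^{\equiv r'}\cap\mc M^\star_{\pi_j\mbf V}(\pi_jW_j)=\{0\}$. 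Since $C_{j_\gamma}$ has codimension $k-j_\gamma$ in $C_k$, the map $M\mapsto \pi_j M$ sends $C_k^{\equiv r}\cap\mc M^\star_{\mbf V}(W_j)$ into $C_k^{\equiv r'}\cap\mc M^\star_{\pi_j\mbf V}(\pi_jW_j)$. The image of this map meets $(C_{j_\gamma})^{\equiv r'}$ trivially, and $(C_{j_\gamma})^{\equiv r'}$ has codimension $(k-j_\gamma)\dim \pi_jW_j$ inside $C_k^{\equiv r'}$ restricted to row span $\pi_j W_j$. So the image has dimension at most $(k-j_\gamma)(\dim W_j-\dim W_{j+1})$. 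The kernel of the map is $C_k^{\equiv r}\cap\mc M^\star_{\mbf V}(W_{j+1})$, which gives
\[ N(W_j)\le N(W_{j+1})+(k-j_\gamma)(\dim W_j-\dim W_{j+1}). \]
Now $k-j_\gamma\le sn(R-R_{\mbf V}(W_j)+\gamma)+1$, and by the definition of $W_{j+1}$ we have $R_{\mbf V}(W_j)=R_{\mbf V}(W_j,W_{j+1})$. Identity \cref{eq:potential-gap} then shows that the main term is exactly $sn(\phi_{\mbf V}(W_j,R)-\phi_{\mbf V}(W_{j+1},R))$. Summing over $j$ telescopes to $sn\,\phi_{\mbf V}(X,R)$, with errors $(\gamma sn+1)\dim X$.

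Two caveats need checking in this step. The first is the clipping in $j_\gamma$. If $R_{\mbf V}(W_j)-\gamma<0$ the goodness condition is vacuous, but then the bound $k-j_\gamma\le k$ still holds; we must check that it remains compatible with the inequality used. We would handle this by noting that $k\le sn(R-R_{\mbf V}(W_j)+\gamma)$ in that case. If instead $j_\gamma=k$ because of upper clipping, the bound is trivial. The second caveat is that the monotonicity of thresholds along the chain (\cref{prop:decreasing-rate}) is not logically needed for the telescoping, but it keeps the levels consistent.

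\emph{Contradiction.} Finally, suppose $C_k$ does not contain $(\mbf V,U)$. Then every matrix in $C_k^{\equiv r}\cap\mc M^\star_{\mbf V}(U)$ has row span inside some proper subspace $X\subsetneq U$. Therefore
\[ q^{N(U)}\le\sum_{X\subsetneq U}q^{N(X)}. \]
There are at most $q^{r^2/4+O(r)}$ such $X$, and certainly at most $q^{r^2}$ of them, counted by dimension. By \cref{eq:potential-gap},
\[ \phi_{\mbf V}(U,R)-\phi_{\mbf V}(X,R)\ge(\dim U-\dim X)(R-R_{\mbf V}(U))\ge R-R_{\mbf V}(U), \]
and this exceeds $\gamma r+2r/(sn)$ by \cref{eq:key-level}. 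Thus each term satisfies
\[ q^{N(X)}\le q^{N(U)-sn(R-R_{\mbf V}(U))+\gamma sn r+r}<q^{N(U)-r}, \]
and the sum is less than $q^{N(U)}$ once we sort $X$ by dimension $d'$. Each dimension class contains at most $q^{d'(r-d')}$ subspaces, which needs to be absorbed into the slack. This bookkeeping is the main obstacle: the slack $2r/(sn)$ provides only $2r$ in the exponent, whereas the Grassmannian count is about $q^{d'(\dim U-d')}$.

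To fix this, we would compare each $X$ with $\dim X=d'$ against the gap $(\dim U-d')(R-R_{\mbf V}(U))$, which scales with $\dim U-d'$. Alternatively, we would strengthen the upper bound by using that all $N(X)$ are contained in the single subspace $N(W)$ for a hyperplane-type critical $W$. If the direct count fails, the fallback is to note that containment in $\bigcup_X$ of linear subspaces of a space whose dimension exceeds each by at least $r+1$ is impossible whenever the number of subspaces is less than $q^{r+1}$. Making this step match the stated slack exactly is where the most care is needed.
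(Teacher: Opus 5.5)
Your lower bound and your one-level recursion $N(W_j)\le N(W_{j+1})+(k-j_\gamma(W_j))(\dim W_j-\dim W_{j+1})$ agree with the paper (cf.\ \eqref{eq:key-one-level}). The step after it fails.

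Upper clipping is not trivial. We have $j_\gamma(W_j)=k$ exactly when $R_{\mbf V}(W_j)>R+\gamma$. If moreover $R_{\mbf V}(W_j)>R+\gamma+1/(sn)$, then $k-j_\gamma=0>sn(R-R_{\mbf V}(W_j)+\gamma)+1$, so your per-level bound breaks. Such levels do occur, because $R>R_{\mbf V}(U)$ says nothing about the thresholds of subspaces of $U$.

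Consequently your claimed bound $N(X)\le sn\,\phi_{\mbf V}(X,R)+(\gamma sn+1)\dim X$ is false in general. Take $U=\F^2$ and $X=\on{span}(e_1)$, and let the $V_i$ alternate between two lines $L_1,L_2\ne X$. Then $R_{\mbf V}(U)=1/2$, but $X\cap V_i=\{0\}$ for every $i$. Your right-hand side is then $sn(R-1+\gamma)+1<0$ for every $R$ in the nonempty range $1/2+2\gamma+4/(sn)<R<1-\gamma-1/(sn)$.

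The correct per-level estimate is $k-j_\gamma(W_j)\le 1+\gamma sn+sn\max\{0,R-R_{\mbf V}(W_j)\}$. To sum it you \emph{do} need \cref{prop:decreasing-rate}, contrary to your remark. Monotonicity makes the levels with $R_{\mbf V}(W_j)\le R$ a tail of the chain, so the sum telescopes to $sn\,\phi_{\mbf V}(W_t,R)$ for the first such $W_t$ (this is \cref{lem:key}). Since $W_t\subset X\subsetneq U$, \eqref{eq:potential-gap} still gives $\phi_{\mbf V}(W_t,R)\le\phi_{\mbf V}(U,R)-(R-R_{\mbf V}(U))$. That is all the final step needs.

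The counting obstacle you flag is resolved by one observation you don't make: it suffices to take the union over the \emph{hyperplanes} $W$ of $U$. Any matrix whose row span is a proper subspace of $U$ has row span inside some codimension-one $W\subset U$, and $\mc M^\star_{\mbf V}(X)\subset\mc M^\star_{\mbf V}(W)$ whenever $X\subset W$. There are fewer than $q^r$ such $W$. The corrected upper bound gives $N(W)\le r+\gamma rsn+sn\phi_{\mbf V}(U,R)-sn(R-R_{\mbf V}(U))<sn\phi_{\mbf V}(U,R)-r$. So the slack $2r/(sn)$ pays exactly $r$ for the rounding errors and $r$ for the number of hyperplanes. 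The chain $q^{sn\phi_{\mbf V}(U,R)}\le q^{N(U)}\le\sum_W q^{N(W)}<q^r\cdot q^{sn\phi_{\mbf V}(U,R)-r}$ is then the contradiction. No Grassmannian bookkeeping over all dimensions $d'$ is needed, and your fallback becomes a proof only once the union is restricted to these maximal subspaces.
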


We will prove this theorem with the four-step outline above. 
At each row-span level $X$, goodness requires us to descend to $C_{j_\gamma(X)}$ of rate
at most $R_{\mathbf V}(X)-\gamma$ and
then climbs back to $C_k$, which incurs arbitrarily small additive error in rate. Recall from
\cref{def:potential-threshold} that
\begin{equation}
  R_{\mathbf V}(U)=\max_{W\subsetneq U}R_{\mathbf V}(U,W),
  \label{eq:local-threshold-max}
\end{equation}
and define $R_{\mathbf V}(\{0\})=0$ by convention. 

\begin{definition}[Local Chains]
\label{def:local-chains}
    Given an $r$-local pair $(\mathbf V, U)$, define its \emph{local chain} as $(W_1, \dots, W_m)$, where $W_m = \{0\}$, $W_1 = U$, and $W_{i+1}$ is a maximal element of 
    \[ \argmax_{W\subsetneq W_i} R_{\mathbf V} (W_i, W).\] Define the corresponding \emph{local threshold chain} $(R_1, \dots, R_{m-1})$ by $R_i = R_{\mathbf V} (W_i, W_{i+1})$.
\end{definition}

To understand the definition and our proof strategy, it is helpful to first consider one level.  Start $W_1=U$ and choose $W_2\subsetneq W_1$ to be an
inclusion-maximal maximizer in \cref{eq:local-threshold-max}, so
\begin{equation}
  R_1:=R_{\mathbf V}(W_1,W_2)=R_{\mathbf V}(W_1).
  \label{eq:first-local-threshold}
\end{equation}
Note that $R_1$ is the rate at which the potential of $W_1$ and $W_2$ agree, i.e. $  \phi_{\mathbf V}(W_1,R_1)
  =\phi_{\mathbf V}(W_2,R_1)$.
In fact, at the lower rate $R_1-\gamma$, if $\pi_1$ is a quotient by $W_2$, \cref{lem:maximal-quotient} implies that 
\begin{equation}
  \phi_{\pi_1\mathbf V}(Y,R_1-\gamma)\le -\gamma \dim Y
\end{equation}
for every nonzero $Y\subset\pi_1W_1$.
Therefore, any matrix with rowspan $X\subset W_1$ but $X\not\subset W_2$ has a nonzero image under $\pi_1$, so goodness (or the relevant non-containment input) rules it out at $C_{j_\gamma(W_1)}$.  When we add dimensions and climb back to $C_k$, the corresponding dimension cost is upper bounded by
$(\dim W_1-\dim W_2)(k-j_\gamma(W_1))$, as required in Step 2.

This quotient does not see matrices whose row span $X\subset W_2$ as they map to zero. Now, $W_2$ replaces $W_1$ as the remaining subspace to handle, so we choose a critical subspace $W_3\subsetneq W_2$, descend to its own threshold level, and repeat. This is the motivation for \cref{def:local-chains}. The next lemma says that the threshold rates decrease along the chain, so the initial condition $R>R_1$ propagates down as we repeat the argument above.

\begin{proposition}
\label{prop:decreasing-rate}
    For any $r$-local pair $(\mathbf V, U)$ and its local threshold chain $(R_1, \dots, R_{m-1})$, 
    \[ R_1\ge R_2\ge \dots \ge R_{m-1}.\]
\end{proposition}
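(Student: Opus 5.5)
It suffices to show $R_i \ge R_{i+1}$ for a single step, i.e.\ for $W_i \supsetneq W_{i+1} \supsetneq W_{i+2}$, where $W_{i+1}$ is an inclusion-maximal maximizer of $R_{\mathbf V}(W_i,\cdot)$ and $W_{i+2}$ is any proper subspace of $W_{i+1}$. (If $W_{i+1}=\{0\}$ the chain has ended, so there is nothing to prove.) The tool is the potential-gap identity \cref{eq:potential-gap}. It expresses $R_{\mathbf V}(X,Y)$ as the rate at which $\phi_{\mathbf V}(X,\cdot)$ and $\phi_{\mathbf V}(Y,\cdot)$ cross. Since $\phi_{\mathbf V}(X,R)$ is affine in $R$ with slope $\dim X$, the threshold $R_{\mathbf V}(X,Y)$ is a ``slope'' in the following sense. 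Write $a(X):=\dim X$ and $b(X):=\mb{E}_i\dim(X\cap V_i)$. Then
\[
1-R_{\mathbf V}(X,Y)=\frac{b(X)-b(Y)}{a(X)-a(Y)},
\]
so the statement is a mediant (convexity) inequality for the points $(a,b)$.

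Set $W:=W_i$, $W':=W_{i+1}$, $W'':=W_{i+2}$. Then
\[
(a(W)-a(W''))\bigl(1-R_{\mathbf V}(W,W'')\bigr)
=(a(W)-a(W'))\bigl(1-R_i\bigr)+(a(W')-a(W''))\bigl(1-R_{i+1}\bigr).
\]
Thus $1-R_{\mathbf V}(W,W'')$ is a weighted average of $1-R_i$ and $1-R_{i+1}$, with positive weights because both inclusions are strict. Equivalently, $R_{\mathbf V}(W,W'')$ is a convex combination of $R_i$ and $R_{i+1}$ with positive weights. Since $W''\subsetneq W$, maximality of $W'$ gives $R_{\mathbf V}(W,W'')\le R_i$. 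A convex combination of $R_i$ and $R_{i+1}$ with positive weights that is at most $R_i$ forces $R_{i+1}\le R_i$.

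The main check is that the telescoping identity really holds. It follows directly from \cref{eq:potential-gap}:
\[
\phi(W)-\phi(W'')=\bigl[\phi(W)-\phi(W')\bigr]+\bigl[\phi(W')-\phi(W'')\bigr]
\]
at any rate $R$. Each bracket equals (difference of dimensions)$\times(R-\text{threshold})$; comparing coefficients of the affine functions of $R$ gives the weighted-average identity.

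This argument uses only that $W_{i+1}$ is some maximizer. Inclusion-maximality is not needed here, so there is no real obstacle beyond this bookkeeping.
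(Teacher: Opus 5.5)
Your proof is correct and is essentially the paper's argument. The paper also combines $R_{\mathbf V}(W_i,W_{i+2})\le R_i$ with the additivity of potential gaps, written as a chain of equalities of $\phi_{\mathbf V}$ evaluated at the rates $R_i$ and $R_{i+1}$, and uses the strict inclusion $W_{i+2}\subsetneq W_{i+1}$ to divide; your weighted-average (mediant) identity repackages the same computation, and you are right that inclusion-maximality is not used.
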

\begin{proof}
Let $(W_1, \dots, W_m)$ be the corresponding local chain.
    Fix any $i\in [m-2]$. Note that
    \begin{align*}
        \phi_{\mathbf V}(W_{i+2}, R_i)& \leq 
        \phi_{\mathbf V}(W_i, R_{i})\\
        & = \phi_{\mathbf V}(W_{i+1}, R_i)\\
        & =(R_{i}-R_{i+1})(\dim W_{i+1})+\phi_{\mathbf V}(W_{i+1}, R_{i+1})
        \\
        & = (R_{i}-R_{i+1})(\dim W_{i+1})+\phi_{\mathbf V}(W_{i+2}, R_{i+1})\\
        & = (R_i-R_{i+1})(\dim W_{i+1}-\dim W_{i+2}) + \phi_{\mathbf V}(W_{i+2}, R_i)
    \end{align*}
    Thus, the result follows.
\end{proof}

Before we formalize the argument, we record the following lemma about quotients. This is analogous to \cref{lem:simple-quotient} but at the level of $\F$-linear spaces $C^{\equiv r}\cap \mc{M}_{\mbf V}^\star(U)$ instead of $\phi_\mbf{V}(U,  R)$.

\begin{lemma}\label{lem:profile-quotients}
    For any additive code $C$ and $r$-local pair $(\mathbf V,U)$, the
    following statements hold:
    \begin{enumerate}
        \item Let $\pi:\F^r\to\F^{r'}$ be a surjective linear map such
        that $W=\ker\pi\subsetneq U$.  Then
        \[
        \dim\left(C^{\equiv r}\cap\mc M_{\mathbf V}^{\star}(U)\right)
        \le
        \dim\left(C^{\equiv r'}\cap\mc M_{\pi\mathbf V}^{\star}(\pi U)\right)
        +\dim\left(C^{\equiv r}\cap\mc M_{\mathbf V}^{\star}(W)\right).
        \]
        \item  Let $C'\subset C$ be an additive subcode.  Then
        \[
        \dim\left(C^{\equiv r}\cap\mc M_{\mathbf V}^{\star}(U)\right)
        \le
        \dim\left((C')^{\equiv r}\cap\mc M_{\mathbf V}^{\star}(U)\right)
        +(\dim C-\dim C')\dim U.
        \]
    \end{enumerate}
\end{lemma}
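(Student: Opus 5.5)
Both parts are rank--nullity statements for $\F$-linear maps between the relevant matrix spaces. Throughout we view $C^{\equiv r}\cap\mc M^\star_{\mbf V}(U)$ as an $\F$-linear space of $sn\times r$ matrices (unfolded, with $V_i$ imposed on every scalar row of block $i$). This space is linear because the constraints ``row in $V_i$'' and ``row in $U$'' are both linear, so ``row span contained in $U$'' is a linear condition.

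For part (1), we would consider right multiplication by the matrix of $\pi$, i.e.\ the map $\Pi: M\mapsto M\pi^{\top}$, which applies $\pi$ to each row. We first check that $\Pi$ maps $C^{\equiv r}\cap\mc M^\star_{\mbf V}(U)$ into $C^{\equiv r'}\cap\mc M^\star_{\pi\mbf V}(\pi U)$. Each column of $M\pi^\top$ is an $\F$-linear combination of columns of $M$, hence lies in $C$ by additivity. Each row of $M$ in block $i$ lies in $V_i\cap U$, so its image lies in $\pi V_i\cap\pi U$. We then identify the kernel: $M\pi^\top=0$ exactly when every row lies in $\ker\pi=W$. Thus the kernel of $\Pi$ restricted to our space is $C^{\equiv r}\cap\mc M^\star_{\mbf V}(U)\cap\{\text{rows in }W\}$, which equals $C^{\equiv r}\cap\mc M^\star_{\mbf V}(W)$ since $W\subset U$. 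Rank--nullity then gives dimension $=\dim\mathrm{image}+\dim\ker$, and bounding the image by its ambient target space yields the inequality.

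For part (2), we would use a similar projection onto a complement. Choose an $\F$-linear projection $p:C\to C/C'$, of dimension $\dim C-\dim C'$, and apply it columnwise to $M\in C^{\equiv r}\cap\mc M^\star_{\mbf V}(U)$. Its kernel is exactly the set of matrices whose columns all lie in $C'$, i.e.\ $(C')^{\equiv r}\cap\mc M^\star_{\mbf V}(U)$. It then remains to bound the image by $(\dim C-\dim C')\dim U$, and this is the only slightly delicate point: the naive bound is $(\dim C-\dim C')\,r$. To get $\dim U$, we write $M=NB$, where $B\in\F^{u\times r}$ has rows forming a basis of $U$ and $u:=\dim U$. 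This factorization is valid because the row span of $M$ lies in $U$. The columns of $N$ are then fixed linear combinations of columns of $M$, hence lie in $C$, and the map $M\mapsto N$ is an injective linear map into $C^{\equiv u}$. Applying $p$ columnwise to $N$ instead, the kernel is still the set of $M$ with all columns of $N$ in $C'$, equivalently all columns of $M$ in $C'$, and the image now lives in $(C/C')^{u}$, which has dimension $(\dim C-\dim C')\dim U$.

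No step is a real obstacle. The main care is in part (2): passing to the coordinates $N$ relative to a basis of $U$, so that the codimension cost is $\dim U$ per codimension of $C'$ rather than $r$. We must also check that the local constraints and additivity are preserved under these column and row operations.
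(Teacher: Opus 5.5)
Your proposal is correct and follows the paper's proof. Part (1) is the same rank--nullity argument for applying $\pi$ row-wise, with kernel $C^{\equiv r}\cap\mc M^\star_{\mbf V}(W)$. Part (2) is the same columnwise quotient $C\to C/C'$ with kernel $(C')^{\equiv r}\cap\mc M^\star_{\mbf V}(U)$. Your factorization $M=NB$ is an explicit-coordinates version of the paper's remark that the image lies in the $r$-tuples over $C/C'$ ``whose rows lie in $U$''. That space has dimension $(\dim C-\dim C')\dim U$, which is exactly the bound you get from $(C/C')^{\dim U}$.
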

\begin{proof}
Let $\Im$ denote the image of a linear map.
For (1), define the natural map
    \[
    \pi':C^{\equiv r}\cap\mc M_{\mathbf V}^{\star}(U)
    \longrightarrow
    C^{\equiv r'}\cap\mc M_{\pi\mathbf V}^{\star}(\pi U).
    \]
   by  applying $\pi$ to the
    rows of each matrix. By rank-nullity, 
    \[
    \dim\left(C^{\equiv r}\cap\mc M_{\mathbf V}^{\star}(U)\right)
    =\dim\ker(\pi')+\dim\Im(\pi').
    \]
    Moreover,
    $\dim\Im(\pi')\le
    \dim(C^{\equiv r'}\cap\mc M_{\pi\mathbf V}^{\star}(\pi U))$
    and
    $\ker(\pi')=C^{\equiv r}\cap\mc M_{\mathbf V}^{\star}(W)$.

For (2), let $\sigma:C\to C/C'$ be the quotient map and define
\begin{equation}
\sigma':
C^{\equiv r}\cap\mc M_{\mathbf V}^{\star}(U)
\longrightarrow (C/C')^r,
\qquad
\sigma'(c_1,\ldots,c_r)
:=
(\sigma(c_1),\ldots,\sigma(c_r)).
\end{equation}
Then, its kernel is $(C')^{\equiv r}\cap\mc M_{\mathbf V}^{\star}(U)$. Moreover, $\sigma'$ preserves every linear relation among the columns.
Since the rows of $(c_1,\ldots,c_r)$ lie in $U$, the rows of
$\sigma'(c_1,\ldots,c_r)$ also lie in $U$.  The space of
$r$-tuples over $C/C'$ whose rows lie in $U$ has dimension
$\dim(C/C')\dim U$, so
\[
  \dim\Im (\sigma')
  \le(\dim C-\dim C')\dim U.
\]
The result follows from rank-nullity.
\end{proof}

The next lemma records simultaneously the lower and upper bound on $\dim\left(C_k^{\equiv r}\cap\mc M_{\mathbf V}^{\star}(U)\right)$ alluded to in Steps 1 and 2 of the proof outline. As a reminder, we will apply the lower bound to $U$ and the upper bound to proper subspaces of $U$.

\begin{lemma}
\label{lem:key}
Let $(\mathbf V,U)$ be an $r$-local pair with local chain $(W_1, \dots, W_m)$ and let
$C_1, \dots, C_k\subset(\F^s)^n$ be a
$(\mathbf V,\gamma)$-good $k$-nested family for $\gamma\ge0$ and $R:=k/(sn)$. For $t:=\min\left\{i\in[m]:R_{\mathbf V}(W_i)\le R\right\}$,
\begin{equation}
  sn\phi_{\mathbf V}(U,R)
  \le
  \dim\left(C_k^{\equiv r}\cap
       \mc M_{\mathbf V}^{\star}(U)\right)
  \le
  r+\gamma rsn+sn\phi_{\mathbf V}(W_t,R).
  \label{eq:solution-dimension-bounds}
\end{equation}
\end{lemma}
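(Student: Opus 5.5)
The lower bound is a constraint count. The space $C_k^{\equiv r}\cap\mc M_{\mathbf V}^{\star}(U)$ is the kernel of a linear map, so its dimension is at least the dimension of the domain minus the number of constraints. The domain is the set of $r$-tuples of codewords of $C_k$ whose unfolded rows lie in $U$; fixing a basis of $U$, this is $C_k^{\dim U}$, of dimension $k\dim U=Rsn\dim U$. For each of the $sn$ scalar rows in block $i$ we then require the row to lie in $U\cap V_i$. This costs at most $\on{codim}_U(U\cap V_i)$ linear constraints per row. Summing over all rows gives
\[
\dim\ge sn\left(R\dim U-\mb E_i\on{codim}_U(U\cap V_i)\right)=sn\phi_{\mathbf V}(U,R).
\]

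For the upper bound I will peel the local chain from $W_1$ down to $W_t$. For each $i<t$, set $j_i:=j_\gamma(W_i)$ and let $\pi_i$ be a quotient with kernel $W_{i+1}$. By \cref{def:local-chains}, $W_{i+1}$ is an inclusion-maximal element of $\argmax$, so goodness applies and gives $C_{j_i}^{\equiv r'}\cap\mc M^\star_{\pi_i\mathbf V}(\pi_iW_i)=\{0\}$. Applying \cref{lem:profile-quotients}(2) to $C_{j_i}\subset C_k$, and then part (1) inside $C_{j_i}$ (whose quotient term vanishes), yields
\[
\dim\left(C_k^{\equiv r}\cap\mc M^\star(W_i)\right)\le (k-j_i)(\dim W_i-\dim W_{i+1})+\dim\left(C_k^{\equiv r}\cap\mc M^\star(W_{i+1})\right).
\]
Here I apply part (2) to the quotient piece, using $\dim\pi_iW_i=\dim W_i-\dim W_{i+1}$. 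Cleanly: $\dim(C_k\text{-space at }W_i)\le\dim(C_k^{\equiv r'}\cap\mc M^\star_{\pi\mathbf V}(\pi W_i))+\dim(C_k\text{-space at }W_{i+1})$, and the first term is at most $0+(k-j_i)\dim\pi_iW_i$. At the bottom I bound the space at $W_t$ by its trivial upper estimate, and I need that estimate to be $sn\phi(W_t,R)$ plus error. I suspect this is precisely where the choice of $t$ matters.

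For the rate bookkeeping, note that for $i<t$ we have $R_i=R_{\mathbf V}(W_i)>R$, so $j_i$ is not clipped at $k$ from above in a harmful way. By definition $j_i\ge(R_i-\gamma)sn-1$, hence $k-j_i\le sn(R-R_i+\gamma)+1$; if $j_i$ is clipped to $k$ the term is $0$, which is fine. Summing with weights $\dim W_i-\dim W_{i+1}$ gives error at most $\gamma sn\dim U+\dim U\le\gamma rsn+r$. The main term is $sn\sum_{i<t}(R-R_i)(\dim W_i-\dim W_{i+1})$. By \cref{eq:potential-gap} this equals the telescoping sum $sn\sum_{i<t}\left(\phi(W_i,R)-\phi(W_{i+1},R)\right)=sn(\phi(U,R)-\phi(W_t,R))$.

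This does not match the target: the target has $\phi(W_t,R)$ alone, with no $\phi(U,R)$ term. So the bottom term at $W_t$ must instead be discarded or bounded differently. I would reconsider the bound as follows. Since $R_i>R$ for $i<t$, each main summand is negative, so the terms are nonpositive; moreover, the clipping $j_i\le k$ actually makes each term $\max(0,\cdot)$, which is $\le\gamma sn+1$ per unit of dimension. Then the bottom term at $W_t$ must supply $sn\phi(W_t,R)$, plus errors, by repeating the same chain argument below $t$. There $R_i\le R$, so $j_i\approx(R_i-\gamma)sn$, giving costs $sn(R-R_i+\gamma)$ which telescope to $sn\phi(W_t,R)$, using $\phi(W_m,R)=0$ and monotonicity from \cref{prop:decreasing-rate} to ensure $R_i\le R$ for all $i\ge t$. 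Combining the two ranges, the total error over the whole chain is $\sum(\gamma sn+1)(\dim W_i-\dim W_{i+1})\le\gamma rsn+r$, giving the claimed bound. The main obstacle is this bookkeeping of the clipping in $j_\gamma$ and the ceiling losses, so that both ranges together cost only $r+\gamma rsn$.
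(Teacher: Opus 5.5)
Once you discard the false starts, your argument is the paper's proof. You apply \cref{lem:profile-quotients}(1) in $C_k$ and then part (2) to the quotient piece at every level of the full local chain down to $W_m=\{0\}$, and goodness kills the $C_{j_i}$ term. The per-level cost obeys $k-j_i\le 1+\gamma sn+sn\max\{0,R-R_i\}$. This is at most $1+\gamma sn$ for $i<t$, and for $i\ge t$ its $sn(R-R_i)$ part telescopes to $sn\phi_{\mathbf V}(W_t,R)$ by \cref{prop:decreasing-rate} and \cref{eq:potential-gap}.

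In the write-up, drop your first attempt's claim that $k-j_i\le sn(R-R_i+\gamma)+1$ also covers the clipped case $j_i=k$. Its right side is negative when $R_i>R+\gamma+1/(sn)$, so the inequality fails there. The $\max\{0,\cdot\}$ form is what makes the clipping bookkeeping work.
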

\begin{proof}
We first prove the lower bound by dimension counting.  Inside $C_k^{\equiv r}$, the
condition that every row lie in $U$ leaves a space of dimension
$k\dim U$.  For each $i\in[n]$, imposing that the $s$ corresponding rows in block $i$ each lie in $V_i$ costs at most
$s\on{codim}_{U}(U\cap V_i)$ linear constraints.  Therefore,
\[
\dim\left(C_k^{\equiv r}\cap\mc M_{\mathbf V}^{\star}(U)\right)
\ge
k\dim U-s\sum_{i=1}^n\on{codim}_{U}(U\cap V_i) 
=sn\phi_{\mathbf V}(U,R).
\]

For the upper bound, take the corresponding local threshold chain $(R_1, \dots, R_{m-1})$, so that
\begin{equation}
\label{eq:Riji}
R_i:=R_{\mathbf V}(W_i,W_{i+1})
    =R_{\mathbf V}(W_i)\qquad 
j_i:=j_\gamma(W_i)=\min(k,\max(0,\lceil(R_{\mathbf V}(W_i)-\gamma)sn\rceil-1)).
\end{equation}
Choose surjective linear map $\pi_i:\F^r\to\F^{r'}$ with kernel $W_{i+1}$, so that $r'=r-\dim W_{i+1}$. Applying \cref{lem:profile-quotients} first to the profile quotient
and then to the inclusion $C_{j_i}\subset C_k$ gives
\begin{equation}
\label{eq:key-one-level}
\begin{aligned}
&\dim\left(C_k^{\equiv r}\cap
           \mc M_{\mathbf V}^{\star}(W_i)\right) \\
\le & 
\dim\left(C_k^{\equiv r'}\cap
          \mc M_{\pi_i\mathbf V}^{\star}(\pi_iW_i)\right)
+\dim\left(C_k^{\equiv r}\cap
          \mc M_{\mathbf V}^{\star}(W_{i+1})\right) \\
\le &
\left(\dim W_i-\dim W_{i+1}\right)(k-j_i)+\dim \left(C_{j_i}^{\equiv r'}\cap
          \mc M_{\pi_i\mathbf V}^{\star}(\pi_iW_i)\right)
+\dim\left(C_k^{\equiv r}\cap
          \mc M_{\mathbf V}^{\star}(W_{i+1})\right).
\end{aligned}
\end{equation}
The middle term is zero by goodness of the nested code. Iterating \cref{eq:key-one-level} gives
\begin{equation}
\begin{aligned}
\label{eq:key-chain-sum}
\dim\left(C_k^{\equiv r}\cap\mc M_{\mathbf V}^{\star}(U)\right)
 & \le
\sum_{i=1}^{m-1}
\left(\dim W_i-\dim W_{i+1}\right)(k-j_i)\\
& \le r+\gamma rsn +
sn\sum_{i=1}^{m-1}
\left(\dim W_i-\dim W_{i+1}\right)
\max\{0,R-R_i\},
\end{aligned}
\end{equation}
where we check that $k-j_i \le
1+\gamma sn+sn\max\{0,R-R_i\}$ for each $i$, and that $W_m=\{0\}$ so
 \[
\sum_{i=1}^{m-1}
\left(\dim W_i-\dim W_{i+1}\right)
 =\dim U\le r.
\]

The thresholds $R_i$ are nonincreasing by \cref{prop:decreasing-rate}.  By the definition of $t$, we have $R_i>R$ for $i<t$ and $R_i\le R$ for $i\ge t$.  Let $R_{t-1}':=R$ and $R_i':=R_i$ for $i\ge t$. The sum in \cref{eq:key-chain-sum} telescopes:
\begin{align*}
\sum_{i=1}^{m-1}
\left(\dim W_i-\dim W_{i+1}\right)
\max\{0,R-R_i\} & = \sum_{i=t}^{m-1}
\left(\dim W_i-\dim W_{i+1}\right)(R-R_i)
\\ & = \sum_{i=t}^{m-1}
\dim W_i\left(R_{i-1}'-R_i'\right)
\\ & =\sum_{i=t}^{m-1}
\left(
\phi_{\mathbf V}(W_i,R_{i-1}')
-\phi_{\mathbf V}(W_i,R_i')
\right)\\
&=\sum_{i=t}^{m-1}
\left(
\phi_{\mathbf V}(W_i,R_{i-1}')
-\phi_{\mathbf V}(W_{i+1},R_i')
\right)\\
&=\phi_{\mathbf V}(W_t,R).
\end{align*}
where crucially in the penultimate step we use the fact that the local chain definition satisfies
\[\phi_{\mathbf V}(W_i,R_i)
=
\phi_{\mathbf V}(W_{i+1},R_i),\]
similar to \cref{eq:Riji}. Combining this bound with \cref{eq:key-chain-sum} gives the upper bound.
\end{proof}
We are now ready to prove \cref{thm:key}, executing the final Step 4 of the proof outline. We recall from\cref{eq:potential-gap} that
\[ 
\phi_{\mathbf V}(U,R)-\phi_{\mathbf V}(W,R)
=
(\dim U-\dim W)\left(R-R_{\mathbf V}(U,W)\right),
\]
for every $W\subsetneq U$. This is \cref{eq:potential-gap} and can be seen directly from the definitions of $\phi$ and $R_{\mbf{V}}(U)$.
\begin{proof}[Proof of \cref{thm:key}]
Let $\eta:=R-R_{\mbf{V}}(U)$. By the display above and \cref{eq:key-level}, we see that
\[ 
\phi_{\mathbf V}(U,R)-\phi_{\mathbf V}(W,R)\ge \eta >\gamma r+\frac{2r}{sn},
\]
for every $W\subsetneq U$. By the lower bound in \cref{lem:key}, we see that
\[
\dim\left(C_k^{\equiv r}\cap\mc M_{\mathbf V}^{\star}(U)\right)
\ge sn\phi_{\mathbf V}(U,R).
\]
By the upper bound in
\cref{lem:key} applied to $W$, we see that
\[
\dim\left(C_k^{\equiv r}\cap\mc M_{\mathbf V}^{\star}(W)\right)
\le
r+\gamma rsn+sn\phi_{\mathbf V}(W_t,R)
\le r+\gamma rsn+sn\phi_{\mathbf V}(U,R)-sn\eta
<sn\phi_{\mathbf V}(U,R)-r,
\]
where $W_t\subset W\subsetneq U$ is the subspace selected by the
local chain for $W$, and the middle inequality uses the potential gap
above.
Now, suppose that $C_k$ does not contain $(\mbf{V}, U)$,  so every matrix in
$C_k^{\equiv r}\cap\mc M_{\mathbf V}^{\star}(U)$ has row span
contained in some codimension-one subspace of $U$. There are at most $q^r$ codimension-one subspaces, so combining the two displays above with the union bound, we have that
\[
    q^{sn\phi_{\mathbf V}(U, k/sn)}
    \leq \left|C_k^{\equiv r}\cap\mc M_{\mathbf V}^{\star}(U)\right|\le \sum_{\substack{W\subsetneq U\\ \dim U/W=1}}
      \left|C_k^{\equiv r}\cap\mc M_{\mathbf V}^{\star}(W)\right| < q^r\cdot q^{sn\phi_{\mathbf V}(U,R) -r}.
\]
This is a contradiction, so $C_k$ contains $(\mbf{V}, U)$, and we are done!
\end{proof}

\subsection{Local Equivalence for Random Reed-Solomon Codes}

We now use non-containment direction (\cref{thm:low-lms}) to show random Reed-Solomon codes are good with high probability, thereby concluding the containment direction, and thus the local equivalence theorem refinement of \cite[Theorem 3.1]{lms25}. This is stated informally in \cref{sec:intro} as \cref{cor:rrs-equiv}. 

We recall the canonical nesting of Reed-Solomon codes $\mathsf{RS}(\boldsymbol{\alpha}, j)$ with $j\in [k]$, where evaluation points $\boldsymbol{\alpha}:=(\alpha_1, \dots, \alpha_n)$ are sampled uniformly from the ordered $n$-tuples of distinct elements of $\F$ of size $q$. Since it is a $\F$-linear code, the folding parameter $s=1$ in this section.

\begin{theorem}[Goodness of RRS]\label{thm:RRS-is-good}
Let $(\mbf V,U)$ be an $r$-local pair with $U\ne\{0\}$ and let $k\in[n]$. Suppose that $q:=|\F|>\max\{n, r(k-1)\}$ and $\gamma \ge 2r/n$, then
\begin{equation}
\label{eq:goodness-rrs-1}
\mb{P}_{\boldsymbol{\alpha}}\left(
  (\mathsf{RS}(\boldsymbol\alpha,1),\dots,
   \mathsf{RS}(\boldsymbol\alpha,k))
  \text{ is }(\mathbf V,\gamma)\text{-good for }U
\right)
\ge 1-q^{2r^2}r\left(
    \frac{2erk}{\gamma(q-n)}
  \right)^{\gamma n/2}.
\end{equation}
Moreover, $\mathsf{RS}(\boldsymbol{\alpha}, k)$ contains $(\mbf{V}, U)$ with at least the same probability if rate 
\begin{equation}
\label{eq:goodness-rrs-2}
R:=\frac{k}{n}>R_{\mbf{V}}(U)+\gamma r+\frac{2r}{n}.
\end{equation}
\end{theorem}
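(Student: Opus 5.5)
The plan is to verify goodness by a union bound over all pairs $(X,\pi)$ appearing in \cref{def:goodness}, applying \cref{thm:low-lms} to each quotient pair at the level $j_\gamma(X)$. The ``moreover'' part then follows from \cref{thm:key} with $s=1$.

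\textbf{Reduction for a single pair.} Fix a nonzero $X\subset U$ and a surjection $\pi:\F^r\to\F^{r'}$ whose kernel $W$ is an inclusion-maximal maximizer of $R_{\mbf V}(X,\cdot)$. Set $j:=j_\gamma(X)$.
\begin{itemize}
\item If $j=0$, the condition is trivial, since $C_0=\{0\}$.
\item Otherwise $j\le \lceil (R_{\mbf V}(X)-\gamma)n\rceil-1$, so the rate $R_j:=j/n$ satisfies $R_j< R_{\mbf V}(X)-\gamma$.
\end{itemize}
By \cref{lem:maximal-quotient} with slack $\gamma$, we get $\phi_{\pi\mbf V}(Y,R_j)\le-\gamma\dim Y$ for every nonzero $Y\subset\pi X$. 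Hence the hypothesis of \cref{thm:low-lms} holds for the pair $(\pi\mbf V,\pi X)$ at dimension $j$ with $\eta=\gamma$.

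The remaining side conditions of \cref{thm:low-lms} also hold:
\begin{itemize}
\item $d=\dim\pi X\le r\le \gamma n/2$, because $\gamma\ge 2r/n$;
\item $q>\max\{n,d(j-1)\}$, because $j\le k$ and $d\le r$.
\end{itemize}
We also need $\mathsf{RS}(\boldsymbol\alpha,j)$ to be exactly the $j$-th code of the nesting, which it is by definition. \cref{thm:low-lms} then bounds the failure probability for this pair by $r\bigl(2erj/(\gamma(q-n))\bigr)^{\gamma n/2}$, which is at most the same expression with $j$ replaced by $k$.

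\textbf{Union bound.} The pairs $(X,\pi)$ are indexed by $X$ together with the kernel $W$. Only $W$ matters: the space $C^{\equiv r'}\cap\mc M^\star_{\pi\mbf V}(\pi X)$ being zero or not is invariant under post-composing $\pi$ with an automorphism of $\F^{r'}$. So it suffices to count pairs of subspaces $(X,W)$ of $\F^r$. There are at most $q^{r^2}$ subspaces of $\F^r$, so at most $q^{2r^2}$ pairs. This gives \cref{eq:goodness-rrs-1}.

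\textbf{Containment.} On the goodness event, \cref{thm:key} with $s=1$ and the rate hypothesis \cref{eq:goodness-rrs-2} shows that $\mathsf{RS}(\boldsymbol\alpha,k)$ contains $(\mbf V,U)$. Hence containment holds with at least the probability in \cref{eq:goodness-rrs-1}.

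\textbf{Main obstacle.} The delicate step is the rate bookkeeping at the boundary of $j_\gamma$. We need a strict inequality $R_j<R_{\mbf V}(X)-\gamma$ so that \cref{lem:maximal-quotient} applies with slack exactly $\gamma$, and the ceiling-minus-one in \cref{def:goodness} is what provides it. We also need $\pi\mbf V$ to be treated as a profile on $\F^{r'}$ with locality $r'\le r$, so that the constants in \cref{thm:low-lms} stay controlled by $r$.
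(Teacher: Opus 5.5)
Your proposal is correct and follows the paper's proof essentially verbatim. Like the paper, you reduce goodness to non-containment of each critical quotient $(\pi\mbf V,\pi X)$ at level $j_\gamma(X)$ via \cref{lem:maximal-quotient} and \cref{thm:low-lms}, take a union bound over the at most $q^{2r^2}$ pairs $(X,W)$, and invoke \cref{thm:key} with $s=1$ for containment.

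Your extra checks fill in details the paper leaves implicit: the side conditions $d\le\gamma n/2$ and $q>d(j-1)$, monotonicity of the bound in $j\le k$ and $r'\le r$, and the fact that only $\ker\pi$ matters. One small correction to your ``main obstacle'' remark: \cref{lem:maximal-quotient} needs only $R_j\le R_{\mbf V}(X)-\gamma$, so strictness is not actually essential.
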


\begin{proof}
For goodness, we unfold \cref{def:goodness}: Fix $\{0\}\ne X\subset U$, an inclusion-maximal
\[ W\in\argmax_{Y\subsetneq X}R_{\mbf V}(X,Y),\] and a quotient map
$\pi:\F^r\to\F^{r'}$ with kernel $W$.  Let
$j:=j_\gamma(X)$. It suffices to check that 
\begin{equation}
\label{eq:goodness-rrs-check}
C_j^{\equiv r'}\cap \mc{M}_{\pi\mbf{V}}^\star (\pi X) =\{0\}
\end{equation}
with high probability. If $j=0$, this is trivial. By \cref{lem:maximal-quotient} applied to $(\mbf{V}, X)$ and $\pi$, we see that
\[   \frac{j}{n}\le R_{\mbf V}(X)-\gamma
  =R_{\mbf V}(X,W)-\gamma\implies \phi_{\pi\mbf V}\left(Y,\frac{j}{n}\right)
  \le-\gamma\dim_{\F}Y,\]
for every nonzero $Y\subset \pi X$. This checks the condition in \cref{thm:low-lms}. Applying \cref{thm:low-lms}, \cref{eq:goodness-rrs-check} fails with sufficiently small probability that we obtain \cref{eq:goodness-rrs-1} upon union bounding over the $q^{2r^2}$ many subspaces $(X, W)$ of $\F^r$. On the event that the nested code is $(\mbf{V}, \gamma)$-good, \cref{eq:goodness-rrs-2} and \cref{thm:key} implies that the final code $\mathsf{RS}(\boldsymbol{\alpha}, k)$ contains $(\mbf{V}, U)$, as desired.
\end{proof}

We usually think of $r,\gamma >0$ as any constant and take $n$ (and therefore $k, q$) sufficiently large. This is the case in for example \cref{cor:rrs-equiv}. We spell out the exact parameter dependence below. 

\begin{theorem}[Fixed-Profile Equivalence for RRS]
\label{thm:rrs-equiv}
Let $(\mbf V,U)\in\mc{L}(\F^r)^{n}\times \mc{L}(\F^r)$ be an $r$-local profile, and let $C:=\mathsf{RS}(\boldsymbol{\alpha}, k)$ of rate $R=k/n$. For $\eta\in (0, 1]$ and $q:=|\F|>\max\{n, r(k-1)\}$,
\begin{enumerate}
\item If $R\le R_{\mbf V}(U)-\eta$ and $\eta n\ge2r$, then
\begin{equation}
  \mb{P}\left(C\text{ contains }(\mbf V,U)\right)
  \le
r
  \left(
    \frac{2e r k}{\eta(q-n)}
  \right)^{\eta n/2}.
  \label{eq:rrs-threshold-low}
\end{equation}

\item If $R>R_{\mbf V}(U)+\eta$ and $\eta n\ge2r(r+1)$, then
\begin{equation}
  \mb{P}\left(C\text{ contains }(\mbf V,U)\right)
  \ge 1-q^{2r^2}r
  \left(
    \frac{2e r k}
         {(\eta/r-2/n)(q-n)}
  \right)^{\frac{\eta n}{2r}-1}.
  \label{eq:rrs-threshold-high}
\end{equation}
\end{enumerate}
\end{theorem}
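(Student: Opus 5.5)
The plan is to derive both parts from results already established: \cref{thm:low-lms} together with \cref{lem:maximal-quotient} for part (1), and \cref{thm:RRS-is-good} with a suitable choice of $\gamma$ for part (2).

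\textbf{Part (1).} Assume $R\le R_{\mbf V}(U)-\eta$. Apply \cref{lem:maximal-quotient} to $(\mbf V,U)$. This gives a surjection $\pi:\F^r\to\F^{r'}$ with $\ker\pi\subsetneq U$, and two properties: any code containing $(\mbf V,U)$ also contains $(\pi\mbf V,\pi U)$, and $\phi_{\pi\mbf V}(U',R)\le-\eta\dim U'$ for every nonzero $U'\subset\pi U$.

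Containing $(\pi\mbf V,\pi U)$ yields a witness matrix with row span exactly $\pi U\ne0$. Such a matrix is a nonzero element of $C^{\equiv r'}\cap\mc M^\star_{\pi\mbf V}(\pi U)$. It then suffices to apply \cref{thm:low-lms} to the pair $(\pi\mbf V,\pi U)$ and check its hypotheses:
\begin{itemize}
\item the dimension condition $d=\dim\pi U\le r'\le r\le\eta n/2$ holds because $\eta n\ge 2r$;
\item the field-size condition $q>\max\{n,d(k-1)\}$ follows from $q>\max\{n,r(k-1)\}$.
\end{itemize}
The second form of \cref{eq:relative-bound} then gives exactly \cref{eq:rrs-threshold-low}, since $d\le r$. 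The only care needed is that \cref{thm:low-lms} is stated with locality $r$ and ambient $U$. We apply it with $r'$ in place of $r$, and use that the bound is monotone in $d$ and $r'$.

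\textbf{Part (2).} Take $\gamma:=\eta/r-2/n$ in \cref{thm:RRS-is-good}. Two conditions need checking.

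First, the rate condition \cref{eq:goodness-rrs-2}. We have
\[
R_{\mbf V}(U)+\gamma r+\frac{2r}{n}=R_{\mbf V}(U)+\eta-\frac{2r}{n}+\frac{2r}{n}=R_{\mbf V}(U)+\eta<R,
\]
so it holds.

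Second, the requirement $\gamma\ge 2r/n$. This is $\eta/r-2/n\ge 2r/n$, i.e. $\eta n\ge 2r(r+1)$, which is exactly the stated hypothesis. In particular $\gamma>0$.

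\cref{thm:RRS-is-good} then gives containment with probability at least
\[
1-q^{2r^2}r\left(\frac{2erk}{\gamma(q-n)}\right)^{\gamma n/2},\qquad \frac{\gamma n}{2}=\frac{\eta n}{2r}-1.
\]
This is precisely \cref{eq:rrs-threshold-high}.

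One edge case: if the base $2erk/(\gamma(q-n))\ge 1$, the stated lower bound is trivially nonpositive, so nothing needs proving there.

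The main point needing care is matching the exponent bookkeeping in part (1). Specifically, one must confirm that passing to the quotient pair, which has locality $r'\le r$ and dimension $d\le r$, does not break the hypothesis $d\le\eta n/2$ or the field-size condition. Beyond this, both parts are direct substitutions into earlier results.
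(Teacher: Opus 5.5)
Your proposal is correct and follows the paper's proof essentially step for step. Part (1) uses \cref{lem:maximal-quotient} and then \cref{thm:low-lms} on the quotient pair, with $\dim\pi U\le r\le \eta n/2$. Part (2) uses \cref{thm:RRS-is-good} with $\gamma=\eta/r-2/n$. The paper additionally records $\gamma\le 1$, which you leave implicit; it is needed because \cref{thm:RRS-is-good} internally invokes \cref{thm:low-lms} with $\gamma$ in the role of $\eta$, and it follows immediately from $\eta\le 1$.
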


\begin{proof}
For (1), since
$R\le R_{\mbf V}(U)-\eta$,
{\cref{lem:maximal-quotient,lem:simple-quotient}} gives a quotient pair
$(\pi\mbf V,\pi U)$ that is contained in $C$ whenever $C$ contains $(\mbf V,U)$, and that for every nonzero $W\subset \pi U$,
\[
  \phi_{\pi\mbf V}(W,R)\le-\eta\dim_{\F}W.
\]
This checks the condition in \cref{thm:low-lms}, so \cref{eq:rrs-threshold-low} follows from $\dim \pi U \le r\le \eta n/2$. 
For (2), let
\[ 
\gamma:=\frac{\eta}{r}-\frac{2}{n}.
\]
Observe that $\eta n\ge2r(r+1)$ implies that
$0<\gamma\le1$ and $\gamma n\ge2r$, while
\[
  R_{\mbf V}(U)+\gamma r+\frac{2r}{n}=
    R_{\mbf V}(U)+\eta<R.
\]
Thus, \cref{thm:RRS-is-good} applies, and the desired bound follows upon substituting our choice of $\gamma$.
\end{proof}

Recall that a code $C$ contains a row-span constrained $r$-local family $\mc{F}$ if $C$ contains some $r$-local pair $(\mbf{V}, U)\in\mc{F}$, and threshold rate 
$R_\mc{F} = \min\{R_{\mbf{V}}(U): (\mbf{V}, U)\in\mc{F}\}$.
The following theorem is immediate from this definition upon taking a union bound over the pairs in $\mc F$
in \cref{eq:rrs-threshold-low}.

\rrsequiv*

\subsection{Local Equivalence for Subspace Design Codes}
Recall from \cref{def:sdc} subspace design codes, where the non-containment theorems (e.g. \cref{thm:sdc-threshold-pair}), and therefore goodness. We generalize to nested codes.

\begin{definition}[Optimal Nested Subspace Design Codes]
\label{def:slacked-sdc-nesting}
Let $\gamma\ge0$. A $k$-nested code
$(C_1,\ldots,C_k)$ with $C_i\subset(\F^s)^n$ is a
\emph{$(\gamma, r)$-optimal nested subspace design code}, if for every $j\in [k]$, that
\[
  \tau_j(r)\le\frac{j}{sn}+\gamma.
\]
\end{definition}

At a single level $j$, this means that $C_j$ of rate $j/sn$ is $\gamma$-slacked
$r$-subspace designable in the sense of \cite{bcdz25b}. We simply rephrase it in terms of \cref{def:sdc} and extend to nested codes. The next proposition says that such subspace design codes are good, with equal slack $\gamma$ in subspace design as goodness. This is the deterministic counterpart of \cref{thm:RRS-is-good}.

\begin{theorem}[Goodness of SDC]
\label{thm:sdc-good}
Let $(\mbf V,U)$ be an
$r$-local pair with $U\ne\{0\}$ and let $\gamma \ge 0$. Then, any $(\gamma, r)$-optimal nested subspace design code $(C_1,\ldots,C_k)$ is
$(\mbf V,\gamma)$-good for $U$.
\end{theorem}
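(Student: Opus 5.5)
We unfold \cref{def:goodness} directly and reduce it to the non-containment half of \cref{thm:sdc-threshold-pair}, applied to the subcode $C_j$ and a quotient pair.

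Fix $\{0\}\ne X\subset U$ and an inclusion-maximal $W\in\argmax_{Y\subsetneq X}R_{\mbf V}(X,Y)$. Let $\pi:\F^r\to\F^{r'}$ have kernel $W$, and set $j:=j_\gamma(X)$. If $j=0$ there is nothing to prove, since $C_0=\{0\}$. Otherwise $j\ge1$, and by the definition of $j_\gamma$ we have $j/(sn)<R_{\mbf V}(X)-\gamma$. By \cref{def:slacked-sdc-nesting} the code $C_j$ is then a $\tau_j$-subspace design code with
\[
\tau_j(r)\le \frac{j}{sn}+\gamma<R_{\mbf V}(X).
\]

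The goal is that $C_j^{\equiv r'}\cap\mc M^\star_{\pi\mbf V}(\pi X)=\{0\}$. Suppose instead that a nonzero matrix $M$ lies in this space, and let $Y:=\on{rowspan}M\subset\pi X$, which is nonzero. Then $C_j$ contains the $r'$-local pair $(\pi\mbf V,Y)$ in the folded sense. By \cref{thm:sdc-threshold-pair}(1), applied with locality $r'\le r$ and using monotonicity of $\tau_j$, this forces
\[
R_{\pi\mbf V}(Y)\le\tau_j(r')\le\tau_j(r)<R_{\mbf V}(X).
\]
It therefore suffices to show that $R_{\pi\mbf V}(Y)\ge R_{\mbf V}(X)$ for every nonzero $Y\subset\pi X$, which gives the contradiction.

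This inequality is the main step. By \cref{lem:maximal-quotient} at rate $R^*:=R_{\mbf V}(X)$ with $\eta=0$, using that $W$ is an inclusion-maximal maximizer, we get $\phi_{\pi\mbf V}(Y',R^*)\le0$ for all nonzero $Y'\subset\pi X$. This bound alone goes the wrong way for our purpose, so we need the complementary information from maximality. The plan is to show $\phi_{\pi\mbf V}(Y,R)<0=\phi_{\pi\mbf V}(0,R)$ for every $R<R^*$.

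Let $Y=\pi\tilde Y$ with $W\subsetneq\tilde Y\subset X$. By \cref{lem:simple-quotient} and \cref{eq:potential-gap},
\[
\phi_{\pi\mbf V}(Y,R)=\phi_{\mbf V}(\tilde Y,R)-\phi_{\mbf V}(W,R)=(\dim\tilde Y-\dim W)\bigl(R-R_{\mbf V}(\tilde Y,W)\bigr).
\]
So it is enough to prove $R_{\mbf V}(\tilde Y,W)\ge R^*$. For this, write $\phi_{\mbf V}(X)-\phi_{\mbf V}(W)$ as the sum of the increments $\phi_{\mbf V}(X)-\phi_{\mbf V}(\tilde Y)$ and $\phi_{\mbf V}(\tilde Y)-\phi_{\mbf V}(W)$, evaluated at rate $R^*$. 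The total vanishes because $W$ attains the maximum. For $\tilde Y\ne X$, the first increment is at least $0$ because $R_{\mbf V}(X,\tilde Y)\le R^*$; hence the second increment is at most $0$. By \cref{eq:potential-gap}, this says exactly $R_{\mbf V}(\tilde Y,W)\ge R^*$, and the case $\tilde Y=X$ is immediate.

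With this, $\phi_{\pi\mbf V}(Y,R)\le\phi_{\pi\mbf V}(0,R)$ at every $R<R^*$, so taking the proper subspace $0\subsetneq Y$ in the definition gives $R_{\pi\mbf V}(Y)\ge R_{\pi\mbf V}(Y,0)\ge R^*$, which completes the contradiction.

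Two details remain to be checked. First, the folded-containment convention of \cref{def:folded-profile} must be compatible with quotients, since $\pi$ acts only on the rows. Second, $\tau_j$ must be monotone; this holds by construction of the envelope $\tau$ in \cref{def:sdc}.
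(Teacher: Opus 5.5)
Your proof is correct and follows the paper's argument: unfold goodness, use $j/(sn)<R_{\mbf V}(X)-\gamma$ together with $\tau_j(r')\le\tau_j(r)\le j/(sn)+\gamma$, and apply \cref{thm:sdc-threshold-pair}(1) to each $(\pi\mbf V,Y)$, with the key input being $R_{\pi\mbf V}(Y)\ge R_{\mbf V}(X)$. The only difference is that you re-derive this inequality by hand (correctly, and without actually needing inclusion-maximality), whereas the paper reads it off \cref{lem:maximal-quotient}; contrary to your remark that this lemma ``goes the wrong way'', it already suffices at $\eta=0$, since $\phi_{\pi\mbf V}(Y,R^*)\le 0$ combined with \cref{eq:potential-gap} for $W=\{0\}$ gives $R_{\pi\mbf V}(Y,\{0\})\ge R^*$ directly.
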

\begin{proof}
We unfold \cref{def:goodness}: Fix $\{0\}\ne X\subset U$, an inclusion-maximal
\[ W\in\argmax_{Y\subsetneq X}R_{\mbf V}(X,Y),\] and a quotient map
$\pi:\F^r\to\F^{r'}$ with kernel $W$.  Let
$j:=j_\gamma(X)$. It suffices to check that 
\begin{equation}
\label{eq:goodness-sdc-check}
C_j^{\equiv r'}\cap \mc{M}_{\pi\mbf{V}}^\star (\pi X) =\{0\}.
\end{equation}
If $j=0$, this is trivial. Applying definition of $j=j_\gamma(X)$ and \cref{lem:maximal-quotient} with $(\mbf{V}, X)$ and $\pi$ gives
\[   \frac{j}{sn}< R_{\mbf V}(X)-\gamma
  \le R_{\pi\mbf{V}}(Y)-\gamma\]
for every nonzero $Y\subset \pi X$. Since $\dim Y\le r$ and $\tau_j$ is nondecreasing,
\[
{\tau_j(r')\le\tau_j(r)}
 \le \frac{j}{sn}+\gamma<R_{\pi\mbf V}(Y),
\]
by the $\gamma$-slacked nested subspace design property.
Thus \cref{thm:sdc-threshold-pair}(1), applied to
$(\pi\mbf V,Y)$, rules out every possible nonzero row span $Y$
inside $\pi X$, which implies \cref{eq:goodness-sdc-check} and therefore goodness.
\end{proof}

We immediately obtain the following equivalence theorems for subspace design codes, first for fixed profile pairs, then for row-span constrained LCL properties.

\begin{theorem}[Fixed-Pair Equivalence for SDC]
\label{thm:sdc-equiv}
Let $(\mbf{V}, U)$ be an $r$-local pair and let $\gamma \ge 0$.
For a $(\gamma, r)$-optimal nested subspace design code $(C_1,\ldots,C_k)$, define rate and slack to be
\[
  R:=\frac{k}{sn}\qquad\text{and}\quad \eta := r\gamma+\frac{2r}{sn}.
\]
Then the following hold.
\begin{enumerate}
\item If $R< R_{\mbf{V}}(U)-\gamma$, then $C_k$ avoids $(\mbf{V}, U)$.
\item If $R> R_{\mbf{V}}(U)+\eta$, then $C_k$ contains $(\mbf{V}, U)$.
\end{enumerate}
\end{theorem}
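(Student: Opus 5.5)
Both parts follow directly from results already proved. Part (2) uses the black-box containment theorem (\cref{thm:key}) together with goodness of subspace design codes (\cref{thm:sdc-good}). Part (1) uses the single-code non-containment statement for subspace design codes (\cref{thm:sdc-threshold-pair}(1)), applied to the top level $C_k$.

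For (1), assume $R=k/(sn)<R_{\mbf V}(U)-\gamma$. By \cref{def:slacked-sdc-nesting} at level $j=k$, we have $\tau_k(r)\le k/(sn)+\gamma=R+\gamma<R_{\mbf V}(U)$. By \cref{def:sdc}, $C_k$ is a $\tau_k$-subspace design code. \Cref{thm:sdc-threshold-pair}(1) then says $C_k$ does not contain $(\mbf V,U)$. In the folded setting, containment is defined through the unfolded profile $\mbf V^{\otimes s}$ (\cref{def:folded-profile}). The threshold is unchanged under unfolding by \cref{prop:unfolding}, so there is no conflict between the folded and unfolded notions. If needed, I would instead route this step through the argument in the proof of \cref{thm:sdc-good}: first pass to a maximal quotient, then rule out every nonzero row span. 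That is the same input and is not needed here.

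For (2), assume $R>R_{\mbf V}(U)+\eta$, where $\eta=r\gamma+2r/(sn)$, and $U\neq\{0\}$. By \cref{thm:sdc-good}, the nested family $(C_1,\ldots,C_k)$ is $(\mbf V,\gamma)$-good for $U$. The rate hypothesis is exactly \cref{eq:key-level}, namely $R>R_{\mbf V}(U)+\gamma r+2r/(sn)$. \Cref{thm:key} therefore gives that $C_k$ contains $(\mbf V,U)$.

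The only place needing care is the edge case $U=\{0\}$. \Cref{thm:key} requires $U\neq\{0\}$, and $R_{\mbf V}(\{0\})$ is defined only by convention. In that case the pair is trivially contained via the zero matrix, whose row span is $\{0\}$, so (2) holds. For (1), the hypothesis $R<-\gamma\le 0$ cannot hold, so (1) is vacuous. Otherwise the proof is immediate, and no step is a real obstacle beyond checking that the slack parameters match those of \cref{thm:key} and \cref{def:slacked-sdc-nesting}.
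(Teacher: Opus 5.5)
Your proposal is correct and is essentially the paper's argument: part (1) is \cref{thm:sdc-threshold-pair}(1) applied to $C_k$ using $\tau_k(r)\le R+\gamma<R_{\mbf V}(U)$ from \cref{def:slacked-sdc-nesting}, and part (2) is \cref{thm:key} with goodness supplied by \cref{thm:sdc-good}. Your treatment of the degenerate case $U=\{0\}$ (containment via the zero matrix in (2), and (1) being vacuous) is a small addition that the paper leaves implicit.
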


We note that the non-containment is exactly \cref{thm:sdc-threshold-pair}(1); the containment direction follows from \cref{thm:key} where goodness is checked in \cref{thm:sdc-good}. We have now proven \cref{cor:sdc-equiv}.

\sdcequiv*

The containment direction of \cref{thm:sdc-threshold-pair} proven in \cite{ggsw} constructs
some subspace-designable code (via RLC's) containing the prescribed local property. In contrast, this theorem applies to every suitable nesting. We next apply
it to folded Reed--Solomon codes (see \cref{def: FRS Codes}) with the natural nesting. The containment direction of this theorem is therefore novel. 
By \cref{prop:unfolding}, the threshold itself is unchanged by
folding, so the following statement compares the folded construction
to the same pair threshold.

\begin{theorem}[Fixed-Pair Equivalence for FRS]
\label{thm:frs-equiv}
For any $\eta\in (0, 1]$, let $(\mbf{V}, U)$ be an $r$-local pair, and let
$C$ be an $s$-folded Reed-Solomon code of dimension $k$ with
rate $R:=k/(sn)$ and folding parameter $s\ge 4r^2/\eta$. Then
\begin{enumerate}
\item If $R\le R_{\mbf V}(U)-\eta/r$, then $C$ avoids $(\mbf{V}, U)$.
\item If $R\ge R_{\mbf V}(U)+\eta$, then $C$ contains $(\mbf{V}, U)$.
\end{enumerate}
\end{theorem}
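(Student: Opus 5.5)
We will deduce the theorem from \cref{thm:sdc-equiv}, applied to the natural nesting of folded Reed--Solomon codes. For $j\in[k]$, let $C_j$ be the $s$-folded RS code on the same evaluation points and the same $\gamma$-element, with message length $j$ (polynomials of degree $<j$). Then $C_1\subset\cdots\subset C_k$, $\dim C_j=j$, and $C_j$ has rate $j/(sn)$, so this is a $k$-nested code. The main task is to choose a slack $\gamma'$ for which this family is a $(\gamma',r)$-optimal nested subspace design code in the sense of \cref{def:slacked-sdc-nesting}.

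To bound the slack, apply \cref{thm: GKS16 subspace design} to each $C_j$ at rate $R_j=j/(sn)$ and dimension $r\le s$. This gives
\[
\tau_j(r)\le \frac{sR_j}{s-r+1}= R_j+\frac{(r-1)R_j}{s-r+1}\le R_j+\frac{r}{s-r}.
\]
Using $s\ge 4r^2/\eta$ (so $r\le s/2$ since $\eta\le 1$), we get $r/(s-r)\le 2r/s\le \eta/(2r)$. Thus we may take $\gamma':=\eta/(2r)$, uniformly over all $j$, and the nested family is $(\gamma',r)$-optimal.

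Part (1) then follows from \cref{thm:sdc-equiv}(1), or directly from \cref{thm:sdc-threshold-pair}(1). We have $\tau_k(r)\le R+\gamma' \le R_{\mbf V}(U)-\eta/r+\eta/(2r)<R_{\mbf V}(U)$, so $C=C_k$ avoids $(\mbf V,U)$. For part (2), \cref{thm:sdc-equiv}(2) needs $R>R_{\mbf V}(U)+r\gamma'+2r/(sn)$. Here $r\gamma'=\eta/2$, and $2r/(sn)\le 2r/s\le \eta/(2r)<\eta/2$ when $r\ge 1$. So the required slack is strictly less than $\eta$, and $R\ge R_{\mbf V}(U)+\eta$ suffices. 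By \cref{prop:unfolding}, the thresholds of folded and unfolded profiles agree, so comparing with the same $R_{\mbf V}(U)$ is legitimate.

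The main obstacle is making sure the Guruswami--Kopparty bound applies verbatim at every level $j$ of the nesting, with the same evaluation points and folding structure, and that the monotone envelope $\tau_j$ is handled correctly. If $r>s$ the bound degenerates to $\tau=1$, but $s\ge 4r^2/\eta$ rules this out. If the GK bound is stated with $\lceil\cdot\rceil$ or as $sR/(s-r+1)$ with a slightly different normalization, the constant $4$ in $s\ge 4r^2/\eta$ leaves enough room. The only other care needed is at small $j$: when $j_{\gamma'}(X)=0$, goodness is trivial, so no constraint on $C_j$ is needed there.
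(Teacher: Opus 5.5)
Your proposal is correct and takes essentially the paper's route: the natural nesting, \cref{thm: GKS16 subspace design} at every level $j$, then \cref{thm:sdc-equiv}. The only difference is your cruder uniform slack $\gamma'=\eta/(2r)$ in place of the paper's $\gamma=(r-1)/(s-r+1)$, which has the side benefit of making the part~(1) check $\gamma'<\eta/r$ explicit. One small slip: $\eta/(2r)<\eta/2$ fails at $r=1$, so when $r=1$, $n=1$, $s=4/\eta$ your sum $r\gamma'+2r/(sn)$ equals $\eta$ instead of lying strictly below it. Taking the paper's $\gamma$ (which is $0$ when $r=1$), or any slack in $[(r-1)/(s-r+1),\eta/(2r))$, fixes this.
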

\begin{proof}
Let $\{0\}=:C_0\subset C_1\subset \dots \subset C_k=C$ be the natural nesting of folded Reed-Solomon codes where $C_j$ contains evaluations of polynomials of degree less than $j$ (see \cref{def: FRS Codes}). Let
\[ \gamma:=\frac{r-1}{s-r+1}.\]

For every $j\in [k]$, $C_j$ has rate $R_j=j/sn$.
As $s\ge 4r^2/\eta\ge r$, \cref{thm: GKS16 subspace design} implies that
\[ 
\tau_j(r) \le \frac{sR_j}{s-r+1} = R_j+\frac{(r-1)R_j}{s-r+1}\le R_j+\gamma.
\]
Thus $(C_1,\ldots,C_k)$ is a $\gamma$-slacked
$r$-subspace-designable nesting.
Moreover, as $s\ge 4r^2/\eta$, we get
\[
  r\gamma+\frac{2r}{sn}
  =
  \frac{r(r-1)}{s-r+1}+\frac{2r}{sn} <
  \frac{2r^2}{s}+\frac{2r}{s}
  \le \eta.
\]
The two conclusions now follow immediately from
\cref{thm:sdc-equiv}.

\end{proof}

\begin{corollary}[Local Equivalence for FRS]
\label{cor:frs-equiv}
For any $\eta\in (0, 1]$, let $\mc{F}$ be row-span constrained $r$-local family, and let
$C$ be an $s$-folded Reed-Solomon code of dimension $k$ with
rate $R:=k/(sn)$ and folding parameter $s\ge 4r^2/\eta$. Then
\begin{enumerate}
\item If $R\le R_{\mc F}-{\eta/r}$, then $C$ avoids $\mc F$.
\item If $R\ge R_{\mc F}+\eta$, then $C$ contains $\mc F$.
\end{enumerate}
\end{corollary}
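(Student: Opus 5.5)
The plan is to deduce \cref{cor:frs-equiv} directly from the fixed-pair statement \cref{thm:frs-equiv}, using the definitions of containment and avoidance for a family (\cref{def:rsclp}) and of the family threshold $R_{\mc F}=\min_{(\mbf V,U)\in\mc F}R_{\mbf V}(U)$. No union bound or probability is needed, since folded Reed--Solomon codes are deterministic and \cref{thm:frs-equiv} holds pointwise for every pair. The hypotheses on $s$, $\eta$ and $r$ are identical in the two statements, so each pair can be handled by the fixed-pair theorem.

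For the avoidance direction (1), suppose $R\le R_{\mc F}-\eta/r$. Fix an arbitrary pair $(\mbf V,U)\in\mc F$. By minimality, $R_{\mc F}\le R_{\mbf V}(U)$, so $R\le R_{\mbf V}(U)-\eta/r$. Then \cref{thm:frs-equiv}(1) shows that $C$ avoids $(\mbf V,U)$. Since the pair was arbitrary, $C$ contains no pair of $\mc F$, which means it avoids $\mc F$.

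For the containment direction (2), suppose $R\ge R_{\mc F}+\eta$. Since $\mc F$ is finite (profiles over a finite field with bounded $r$ and $n$), the minimum defining $R_{\mc F}$ is attained by some pair $(\mbf V^\ast,U^\ast)\in\mc F$ with $R_{\mbf V^\ast}(U^\ast)=R_{\mc F}$. Then $R\ge R_{\mbf V^\ast}(U^\ast)+\eta$, and \cref{thm:frs-equiv}(2) shows that $C$ contains $(\mbf V^\ast,U^\ast)$, hence contains $\mc F$.

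The only point requiring any care is the implicit nondegeneracy in (2). The pair $(\mbf V^\ast,U^\ast)$ must have $U^\ast\neq\{0\}$, as required by \cref{thm:key} through \cref{thm:sdc-equiv}. This holds as long as the family is nonempty and consists of nonzero row-span constraints, which is the standing convention for row-span constrained families. Otherwise the statement is vacuous or trivial. Beyond this bookkeeping, there is no genuine obstacle, because all the work was done in \cref{thm:frs-equiv} via \cref{thm:sdc-equiv} and the subspace design bound of \cref{thm: GKS16 subspace design}.
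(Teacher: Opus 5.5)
Your proposal is correct and is essentially the paper's own argument: apply the fixed-pair \cref{thm:frs-equiv} to each pair in $\mc F$, using $R_{\mc F}\le R_{\mbf V}(U)$ for avoidance and a pair attaining the minimum $R_{\mc F}$ for containment. Your remark on the degenerate case $U^\ast=\{0\}$ is harmless bookkeeping, since the zero matrix trivially witnesses containment there.
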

This follows from \cref{thm:frs-equiv} by applying the fixed-pair
statement to each $(\mbf V,U)\in\mc F$ and using the definition of
$R_{\mc F}$.

\begin{remark}
    We note that the only two other constructions of explicit subspace design codes are given by Univariate Multiplicity codes \cite{GK16}, and the AEL based codes constructed by \cite{ggh26}. Univariate multiplicity codes have a nesting identical to the one for folded Reed-Solomon codes and thus the same local equivalence applies to them as well. 

    For \cite{ggh26}, a natural nesting directly works as well, since their construction is based on the AEL method where they show that a rate $1-\varepsilon$ code with $\mathsf{poly}(\varepsilon)$ distance, an inner subspace design code of rate $R$, $\tau(r)\leq R+\varepsilon$, and dimension $d\ge O_{q, r, 1/\varepsilon}(1)$, and a Ramanujan degree-$d$ spectral expander for the AEL method leads to a deterministic proof of the resulting code being a near optimal subspace design. 

    They then brute force over a choice of the inner subspace design code to find it, which exists since a random additive inner code suffices. The same observation occurs here that an inner random nested code suffices to get the nested $(\gamma, r)$-optimal nested subspace design code. These would then give a nested family of codes with gaps in dimension which can be filled in by arbitrary dimension additions. We do not present all details here as we believe it is out of scope.
\end{remark}

\section{Compressing the Curve Decoding Profile Family}
\label{sec:profile-counting}
By \cref{thm:ggsw-family}, a code $C$ fails $(\ell,\delta,a,b)$-curve decodability
precisely when it contains a profile in
$\mc{F}_{\ell,\delta,a,b}$. In this section, we treat $(\ell,\delta,a,b)$ as fixed, with choices of parameters in \cref{eq:ggsw-param}, and abbreviate the family as $\mc{F}$. We partition the family based on $(A, U)$.

\begin{definition}[Curve profile family]
\label{def:curve-profile-family}
Fix $A\in\binom{\F}{a}$ and an $(\ell,A,b)$-curve-free subspace
$U\subset\F^A$.  Let $\mc F_{A,U}$ consist of all pairs
$(\mbf V,U)$ for which there exist sets
$A_1,\ldots,A_n\subset A$ such that
\begin{itemize}
\item $V_i=U\cap K_A(A_i)$ for all $i\in[n]$;
\item $|\{i\in[n]:\alpha\in A_i\}|\ge(1-\delta)n$ for all
$\alpha\in A$.
\end{itemize}
Then
\begin{equation}
  \mc{F}
  =
  \bigcup_{ A\in\binom{\F}{a}}\bigcup_{U\subset\F^A\text{ is }(\ell,A,b)\text{-curve-free}}
  \mc F_{A,U}.
  \label{eq:curve-profile-decomposition}
\end{equation}
\end{definition}

There are at most $q^{a+a^2}$ choices for $(A,U)$, and \cref{thm:ggsw-family} bounds each $|\mc{F}_{A, U}|\le 2^{an}$
by counting the raw agreement sets $(A_1,\ldots,A_n)$. The latter is the dominant contribution to 
$|\mc{F}|$ and the bottleneck for the alphabet-size bounds. 
The main observation in this section is that one should instead count
the distinct subspaces $V_i=U\cap K_A(A_i)$
that arise from these sets, since many different choices of $A_i$ induce
the same subspace $V_i$. This observation was first made for list decoding in \cite{gg26}; we refine their counting technique slightly further for curve decoding.

We construct two certificate families,
$\mc F^{\on{high}}$ and $\mc F^{\on{low}}$.  If $C$ contains a
profile $(\mbf V,U)\in\mc{F}$ witnessed by a matrix
$M$, then it contains a profile in $\mc F^{\on{high}}$ when
$d:=\dim_{\F}U=\on{rank}M\ge D$, and it contains a profile in
$\mc F^{\on{low}}$ when $d\le D$. We set the dimension cutoff to be
\begin{equation}
  D:=\ceil{\frac{\ell+1}{\eta}}.
  \label{eq:def-D}
\end{equation}

In \cref{subsec:high-rank}, we construct
$\mc F^{\on{high}}$ and show that all its profiles have negative potential.
In \cref{subsec:low-rank}, we construct
$\mc F^{\on{low}}$, where the family is much smaller.
In \cref{subsec:rlc-count}, we give the codimension-sensitive count on $\mc F^{\on{low}}$ used for RLC and RLDPC. In \cref{sec:alphabet}, we combine the same count with the gap-sensitive estimate for RRS.

\subsection{High-Rank Family}
\label{subsec:high-rank}

For a high-rank witness matrix $M$, we retain $D$ independent columns.  The row
span of the resulting matrix is the full space on those $D$ coordinates. The choice of $D$ in \cref{eq:def-D} will be clear in
\cref{lem:high-rank-family-bound}: we want the potential gap $\eta D$ to dominate the loss of $\ell+1$. 

\begin{definition}[High-rank family]
\label{def:high-rank-family}
Fix $S\in\binom{\F}{D}$ and let
$\mc F^{\on{high}}_{S}$ consist of all pairs $(\mbf W,\F^S)$ for which
there exist sets $S_1,\ldots,S_n\subset S$ such that
\begin{enumerate}
\item $W_i=K_S(S_i)$ for all $i\in[n]$;
\item $|\{i\in[n]:\alpha\in S_i\}|\ge(1-\delta)n$ for all
$\alpha\in S$.
\end{enumerate}
Then, the \emph{high rank family} is defined as
\begin{equation}
  \mc F^{\on{high}}:=
  \bigcup_{S\in\binom{\F}{D}}
  \mc F^{\on{high}}_{S}.
  \label{eq:high-family-decomposition}
\end{equation}
\end{definition}

The family is small, and every one of its profiles has
negative potential.

\begin{lemma}
\label{lem:high-rank-family-bound}
If $R\le1-\delta-2\eta$, then $\phi_{\mbf W}(\F^S,R)\le-\eta D$ for every 
$(\mbf W,\F^S)\in\mc F^{\on{high}}$.
Moreover,
\begin{equation}
  |\mc F^{\on{high}}|
  \le\binom{q}{D}2^{Dn}.
  \label{eq:high-rank-family-size}
\end{equation}
\end{lemma}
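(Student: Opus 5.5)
The potential bound comes from a direct computation, and the size bound from a crude count. Throughout, fix $S\in\binom{\F}{D}$ and sets $S_1,\ldots,S_n\subset S$ as in \cref{def:high-rank-family}, with $W_i=K_S(S_i)$.

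\textbf{Codimension of each $W_i$.} The space $K_S(S_i)$ consists of vectors $y\in\F^S$ whose restriction to $S_i$ agrees with a polynomial of degree at most $\ell$. If $|S_i|\le \ell+1$, every restriction is interpolable, so the codimension is $0$. Otherwise, the restriction map $\F^S\to\F^{S_i}$ is surjective, and the degree-$\le\ell$ evaluations on $S_i$ form an $(\ell+1)$-dimensional subspace. This is where we use $|S_i|\le D\le a\le q$ together with the distinctness of the points. Hence in all cases
\[
\on{codim}_{\F^S}(W_i)=\max\{0,|S_i|-\ell-1\}\ge |S_i|-(\ell+1).
\]

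\textbf{Potential bound.} Averaging over $i$ and double counting incidences gives
\[
\mb{E}_i|S_i|=\frac{1}{n}\sum_{\alpha\in S}\left|\{i:\alpha\in S_i\}\right|\ge (1-\delta)D.
\]
Substituting into the second form of \cref{eq:def-potential} yields
\[
\phi_{\mbf W}(\F^S,R)\le RD-(1-\delta)D+(\ell+1)\le (1-\delta-2\eta)D-(1-\delta)D+(\ell+1)=-2\eta D+(\ell+1).
\]
Since $D=\ceil{(\ell+1)/\eta}$, we have $\ell+1\le \eta D$, so $\phi_{\mbf W}(\F^S,R)\le-\eta D$, as claimed.

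\textbf{Counting.} A pair in $\mc F^{\on{high}}$ is determined by the choice of $S$, of which there are $\binom{q}{D}$, together with the tuple $(S_1,\ldots,S_n)$. Each $S_i$ is a subset of $S$, giving at most $2^D$ choices per coordinate and $2^{Dn}$ in total. Different tuples may yield the same profile, which only helps. This gives $|\mc F^{\on{high}}|\le\binom{q}{D}2^{Dn}$.

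The only step needing care is the exact codimension of $K_S(S_i)$; both parts of the lemma are otherwise direct.
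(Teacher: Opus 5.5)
Your proof is correct and follows the paper's argument essentially step for step: the same bound $\on{codim}_{\mathbb{F}^S}K_S(S_i)\ge |S_i|-(\ell+1)$, the same incidence double count from the coverage condition, the absorption of $\ell+1\le\eta D$ into the $2\eta$ rate slack, and the same count $\binom{q}{D}2^{Dn}$ over choices of $S$ and of the tuple $(S_1,\ldots,S_n)$. One small remark: the exact codimension only needs the points of $S_i$ to be distinct field elements, which is automatic since $S\subset\mathbb{F}$, so the chain $|S_i|\le D\le a\le q$ you cite plays no role.
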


\begin{proof}
By the coverage condition, $\sum_{i=1}^n|S_i|\ge(1-\delta)Dn$. Moreover, for every $i\in[n]$,
\[
  \on{codim}_{\F^S}W_i = \on{codim}_{\F^S}K_S(S_i)
  =|S_i|-\dim_\F K_{S_i}(S_i) \ge |S_i|-\ell-1.
\]
As $\ell+1\le\eta D$ by our choice of $D$ and $R\le 1-\delta-2\eta$, we see that 
\[
  \sum_{i=1}^n\on{codim}_{\F^S}W_i
  \ge \sum_{i=1}^n|S_i|-(\ell+1)n
  \ge (1-\delta)Dn-(\ell+1)n\ge (1-\delta-\eta)Dn\ge (R+\eta)Dn. 
\]
Therefore, we have that
\[
  \phi_{\mbf W}(\F^S,R)
  =RD-\mb{E}_{i\in [n]}\on{codim}_{\F^S}W_i
  \le -\eta D.
\]
For \cref{eq:high-rank-family-size}, there are $\binom qD$ choices for $S$ and at most $2^D$
choices for each $S_i$.
\end{proof}

We next connect this certificate family to the original curve-profile
family.

\begin{lemma}
\label{lem:high-rank}
Suppose a code $C$ contains a profile
$(\mbf V,U)\in\mc F_{A,U}$ witnessed by a matrix $M$, and suppose
$\dim_{\F}U=\on{rank}M\ge D$.  Then there is a set $S\subset A$
indexing $D$ independent columns of $M$ such that $M|_S$ witnesses
containment of a profile in
$\mc F^{\on{high}}_{S}$.
\end{lemma}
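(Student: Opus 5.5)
Since $\on{rank}M=\dim_\F U\ge D$, the plan is to pick $S\subset A$ with $|S|=D$ such that the columns $M_{\star\alpha}$, $\alpha\in S$, are linearly independent. Set $M':=M|_S\in\F^{n\times S}$. Its columns are codewords of $C$, so $M'\in C^{\equiv D}$. Because its $D$ columns are independent, $\on{rank}M'=D$, and therefore $\on{rowspan}M'=\F^S$. This settles the row-span constraint of a pair $(\mbf W,\F^S)$.

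Next we define the local sets. Let $A_1,\ldots,A_n$ be the agreement sets witnessing $(\mbf V,U)\in\mc F_{A,U}$, and put $S_i:=A_i\cap S$. The coverage condition transfers directly: for $\alpha\in S\subset A$ we have
\[|\{i:\alpha\in S_i\}|=|\{i:\alpha\in A_i\}|\ge(1-\delta)n.\]
It remains to check the local constraints, namely that the $i$th row of $M'$ lies in $W_i:=K_S(S_i)$. The $i$th row $y$ of $M$ lies in $V_i=U\cap K_A(A_i)$, so $y|_{A_i}$ agrees with some polynomial $p$ of degree at most $\ell$. The $i$th row of $M'$ is $y|_S$, and on $S_i=S\cap A_i$ it agrees with the same $p$. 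Hence $y|_S\in K_S(S_i)$.

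Together these show that $M'$ witnesses containment of $(\mbf W,\F^S)\in\mc F^{\on{high}}_S$. The only point needing care is that restricting columns commutes with the membership conditions. For $K_S(S_i)$ this holds because the defining condition only involves $y|_{S_i}$, which is inherited from the condition on $y|_{A_i}$.

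I expect no real obstacle. One subtlety: $\mc F^{\on{high}}_S$ is indexed by $S\in\binom{\F}{D}$ and uses $\F^S$ rather than a normalized space. Since $S\subset A\subset\F$ has size exactly $D$, this matches, and we do not need the curve-freeness of $U$ here.
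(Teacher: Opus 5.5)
Your proposal is correct and follows the paper's proof essentially step for step. The paper likewise chooses $D$ independent columns indexed by $S$, observes that $\on{rowspan}_\F(M|_S)=\F^S$, sets $S_i:=S\cap A_i$, deduces $(M|_S)_{i\star}\in K_S(S_i)$ from $M_{i\star}\in U\cap K_A(A_i)$, and transfers the coverage condition unchanged.
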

\begin{proof}
Choose $D$ linearly independent columns of $M$ indexed by $S\subset A$. Let $M\vert_S$ be the submatrix, so $\on{rowspan}_\F(M\vert_S) = \F^S$. Since $M$ witnesses containment of $\mc{F}_{A, U}$, there exist $A_1, \dots, A_n\subset A$ such that for every $\alpha\in A$, $|\{i\in [n]:\alpha\in A_i\}|\ge (1-\delta)n$.
Moreover, as $S_i=S\cap A_i$,
\[M_{i\star}\in U\cap K_A(A_i)\implies (M\vert_S)_{i\star}\in K_S(S_i)=:W_i. \]
Now, for every $\alpha\in S$
\[|\{i\in [n]:\alpha\in S_i\}|=|\{i\in [n]:\alpha\in A_i\}|\ge (1-\delta)n.\]
Therefore, $M\vert_S$ witnesses the containment of $(\mbf{W}, \F^S)\in \mc{F}_S^{\on{high}}$.
\end{proof}
\subsection{Low-Rank Family}
\label{subsec:low-rank}
We now consider profiles whose row space has dimension
$d:=\dim_{\F}U\le D$. We are considering $(\ell, \delta, a, b)$-curve decodability. As in \cref{def:curve-profile-family,eq:curve-profile-decomposition}, we have \[\mathcal F = \bigcup_{ A\in\binom{\F}{a}}\bigcup_{U\subset\F^A\text{ is }(\ell,A,b)\text{-curve-free}}
  \mc F_{A,U}.\]
Fixing such a profile, we have $A \in \binom{\F}{a}$, an $(\ell, A, b)$-curve-free subspace $U\subset \F^A$ with $\dim U \le D$, sets $A_1, \dots, A_n \subset A$, $V_i = U\cap K_{A}(A_i)$, and $|\{i\in [n]\mid \alpha \in A_i\}|\geq (1-\delta)n$ for all $\alpha \in A$. Our goal is to produce a profile whose potential is negative. Our strategy is analogous to \cite{Sri25,AHS26, gg25a}, but at the level of local profiles themselves: we prune down to a subset $A'$ where each $A_i'$ imposes many independent polynomial constraints imposed on $U$, formally defined below.

\begin{definition}[Constraint codimension]
\label{def:constraint-codimension}
Fix a pair $(A,U)$ with $U\subset\F^A$. For $S\subset A$, define
\begin{equation}
c_{A, U}(S):=\on{codim}_{U}\left(U\cap K_A(S)\right).
  \label{eq:def-constraint-codimension}
\end{equation}
\end{definition}

We begin with some observations that $c(S)$ is preserved under two transformations of the ambient space: first, it is preserved when we prune the ambient coordinates; second, it is preserved under a \emph{curve translation} by a polynomial interpolating a fixed set of coordinates.
\begin{lemma}
\label{lem:c-restriction}
Fix a pair $(A,U)$ with $U\subset\F^A$.  If $S\subset A'\subset A$, then $c_{A, U}(S)=c_{A', U|_{A'}}(S)$.
\end{lemma}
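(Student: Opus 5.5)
The plan is to show that restricting to $A'$ is an isomorphism on $U$ that carries the subspace $U\cap K_A(S)$ onto $U|_{A'}\cap K_{A'}(S)$. Since codimension is preserved by linear isomorphisms, this gives the claim directly.

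The first step is to show that membership in $K_A(S)$ depends only on the coordinates indexed by $S$. By \cref{def:curve-free}, $y\in K_A(S)$ if and only if $y|_S$ agrees with a polynomial of degree at most $\ell$. As $S\subset A'$, for $y\in\F^A$ we have $(y|_{A'})|_S=y|_S$. Hence
\[ y\in K_A(S)\iff y|_{A'}\in K_{A'}(S). \]
Writing $\rho:\F^A\to\F^{A'}$ for the restriction map, this says exactly that $K_A(S)=\rho^{-1}(K_{A'}(S))$.

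Next, I will consider $\rho|_U:U\to U|_{A'}$, which is surjective by definition of $U|_{A'}$. Its image of $U\cap K_A(S)$ is $U|_{A'}\cap K_{A'}(S)$. The inclusion ``$\subset$'' follows from the equivalence above. For ``$\supset$'', any element of $U|_{A'}\cap K_{A'}(S)$ is $\rho(u)$ for some $u\in U$, and then $u\in K_A(S)$ by the same equivalence. Moreover, $\ker(\rho|_U)=U\cap\ker\rho$, and $\ker\rho$ consists of vectors vanishing on $A'\supset S$, so $\ker\rho\subset K_A(S)$ (take the zero polynomial). Hence $\ker(\rho|_U)\subset U\cap K_A(S)$.

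The conclusion then follows from a standard fact: if $f:U\to U'$ is surjective, $N\subset U$ is a subspace, and $\ker f\subset N$, then $\operatorname{codim}_U N=\operatorname{codim}_{U'}f(N)$. This holds by rank-nullity, since $\dim U-\dim N=(\dim U'+\dim\ker f)-(\dim f(N)+\dim\ker f)$. Applying it with $f=\rho|_U$ and $N=U\cap K_A(S)$ gives $c_{A,U}(S)=c_{A',U|_{A'}}(S)$.

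The only point needing care is the kernel containment in the second step. If the kernel were not contained in $N$, the codimensions could differ. This containment is exactly where the hypothesis $S\subset A'$ is used, beyond its use in the first step.
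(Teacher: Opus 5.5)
Your proof is correct. The paper takes a slightly more symmetric route. It maps both $U$ and $U|_{A'}$ to the common target $\F^S/K_S(S)$ via $u\mapsto u|_S+K_S(S)$. The kernels of these maps are exactly $U\cap K_A(S)$ and $U|_{A'}\cap K_{A'}(S)$, and the two images coincide because $S\subset A'$. By rank--nullity, both codimensions therefore equal the dimension of this common image.

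You instead work with the restriction $\rho|_U:U\to U|_{A'}$, together with a codimension-transfer fact for surjections whose kernel lies in the subspace. This forces you to check separately that $\rho$ carries $U\cap K_A(S)$ onto $U|_{A'}\cap K_{A'}(S)$ and that $\ker(\rho|_U)\subset K_A(S)$. In the paper's formulation both checks are automatic, since everything factors through $u\mapsto u|_S$. What your version gives in return is an explicit correspondence between the two subspaces, not just equality of their codimensions.

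One wording slip: your opening sentence calls restriction an isomorphism on $U$. That is false in general, since nonzero vectors of $U$ that vanish on $A'$ are sent to zero. Your actual argument handles this correctly through the kernel containment, so only the plan sentence needs fixing.
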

\begin{proof}
Consider the linear maps $\psi:U\to\F^S/K_S(S)$ and $\psi':U\vert_{A'}\to\F^S/K_S(S)$ both given by $u\mapsto u\vert_S+K_S(S)$. As $S\subset A'$, they have the same image. By rank-nullity, its dimension is $c_{A, U}(S)$ and also $c_{A', U\vert_{A'}}(S)$, so they must be equal.
\end{proof}

\begin{lemma}[Curve Translation]
\label{lem:c-translation}
For any $A\in\binom{\F}{\ge\ell+1}$ and $ U\subset\F^{A}$, fix $J\in\binom{A}{\ell+1}$.  For each $u\in U$, let $p_u\in\F[X]_{\le\ell}$ be the unique polynomial agreeing with $u$ on $J$. Let $\psi(u):=u-p_u|_{A}$ and let $U':=\psi(U)$.  Then, for every $S\subset A$,
\begin{equation}
  c_{A,U}(S)=c_{A, U'}(S).
\end{equation}
Moreover, $U'\cap K_{A}(A)=\{0\}$, so $\dim_{\F}U'=c_{A,U'}(A)=c_{A,U}(A)$.  If $U$ is $(\ell,A,b)$-curve-free, then so is $U'$.
\end{lemma}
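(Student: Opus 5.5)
The proof rests on one structural fact: the map $\psi$ only subtracts a degree-$\le\ell$ polynomial, and such polynomials lie in every $K_A(S)$.

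\textbf{Linearity.} First I will check that $\psi$ is linear. The interpolation map $u\mapsto p_u$ (Lagrange interpolation on $J$) is linear, so $\psi$ is linear and $U'=\psi(U)$ is a subspace.

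\textbf{Preservation of $c$.} For every $S\subset A$ we have $p_u|_A\in K_A(S)$. Hence $u\in K_A(S)$ if and only if $\psi(u)\in K_A(S)$, i.e.\ $\psi^{-1}(U'\cap K_A(S))=U\cap K_A(S)$.

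It remains to compute $\ker\psi$. Since $\ker\psi=\{u\in U:u=p_u|_A\}=U\cap K_A(A)$, this kernel is contained in $U\cap K_A(S)$. By rank--nullity applied to $\psi$ and to its restriction to $U\cap K_A(S)$,
\[
\dim U'=\dim U-\dim\ker\psi,\qquad \dim(U'\cap K_A(S))=\dim(U\cap K_A(S))-\dim\ker\psi.
\]
Subtracting gives $c_{A,U'}(S)=c_{A,U}(S)$.

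\textbf{Trivial intersection with $K_A(A)$.} Take $v=\psi(u)\in U'\cap K_A(A)$. Then $v$ vanishes on $J$, and $v$ agrees on $A$ with some polynomial of degree $\le\ell$. That polynomial has $\ell+1$ roots (the points of $J$), so it is zero and $v=0$. Therefore $U'\cap K_A(A)=\{0\}$, so $\dim U'=c_{A,U'}(A)$, and this equals $c_{A,U}(A)$ by the previous step.

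\textbf{Curve-freeness.} If $U'\subset K_A(B)$ for some $B\in\binom{A}{b}$, then each $u=\psi(u)+p_u|_A$ lies in $K_A(B)$, since $K_A(B)$ is a subspace containing $p_u|_A$. So $U\subset K_A(B)$. Contrapositively, if $U$ is $(\ell,A,b)$-curve-free then so is $U'$. This is the same fact used once more, so all steps are routine.
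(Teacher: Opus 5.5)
Your proof is correct and follows essentially the same route as the paper. Both use linearity of interpolation, the equivalence $u\in K_A(S)\iff\psi(u)\in K_A(S)$, the kernel $U\cap K_A(A)$ lying inside every $U\cap K_A(S)$, rank--nullity for $\psi$ and its restriction, and the argument that a polynomial of degree at most $\ell$ with $\ell+1$ roots vanishes. The only difference is in the curve-freeness step: you argue directly via $u=\psi(u)+p_u|_A$, while the paper reads it off from $c_{A,U'}(B)=c_{A,U}(B)$ together with ``$c=0$ iff $U\subset K_A(B)$'', which is the same fact.
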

\begin{proof}
Interpolation on $J$ is linear, so $\psi$ is linear. Put $P:=U\cap K_A(A)$. Then $\ker\psi=P$: indeed, $\psi(u)=0$ exactly when $u=p_u|_A$ is the evaluation of a polynomial of degree at most $\ell$. Fix $S\subset A$. Since $p_u|_A\in K_A(S)$ and $K_A(S)$ is linear,
\begin{equation}
u\in K_A(S)\quad\Longleftrightarrow\quad \psi(u)\in K_A(S).
\label{eq:translation-membership}
\end{equation}
Consequently, $\psi(U\cap K_A(S))=U'\cap K_A(S)$. Moreover, $P\subset K_A(S)$, so the restriction of $\psi$ to $U\cap K_A(S)$ also has kernel $P$. Rank--nullity gives
\begin{equation}
\dim U'=\dim U-\dim P,
\qquad
\dim(U'\cap K_A(S))=\dim(U\cap K_A(S))-\dim P.
\end{equation}
Subtracting proves $c_{A,U'}(S)=c_{A,U}(S)$. Every vector in $U'$ vanishes on $J$. If it also lies in $K_A(A)$, it is the evaluation of a degree-at-most-$\ell$ polynomial with the $\ell+1$ elements of $J$ as roots, and hence is zero. Thus $U'\cap K_A(A)=\{0\}$, and the asserted dimension identity follows by taking $S=A$. Finally, if $U'$ were not $(\ell,A,b)$-curve-free, then $c_{A,U'}(B)=0$ for some $B\in\binom Ab$; the equality just proved would give $c_{A,U}(B)=0$, contradicting the curve-freeness of $U$.
\end{proof}

Fix a pair $(A, U)$ from \cref{def:curve-profile-family} and let $c=c_{A, U}$. We hope to prune down to a subset $A'\subset A$ of coordinates
such that $|A'|=b+hc(A')$ and, for every $i\in[n]$,
\begin{equation}
\label{eq:def-h}
 |A_i\cap A'|\le b+h c(A_i\cap A')\quad\text{where}\quad h:=\left\lceil\frac{2b}{\eta}\right\rceil.
\end{equation}
We formally define this low-rank family below, before motivating it and the choice of $h$ above by seeing how such profiles have negative potential.

\begin{definition}[Low-rank family]
\label{def:low-rank-family}
Fix $A\subset \F$ and an $(\ell,A,b)$-curve-free subspace $U\subset\F^{A}$ with
\begin{equation}
\label{eq:def-low-cA}
c(A)=\dim_{\F}U\le D \quad\text{and}\quad |A|=b+h\cdot c(A),
\end{equation}
where $c=c_{A,U}$. 
Let $\mc F^{\on{low}}_{A,U}$ consist of all pairs $(\mbf V,U)$ for
which there exist $A_1,\ldots,A_n\subset A$ with
\begin{enumerate}
\item $V_i=U\cap K_{A}(A_i)$ for all $i\in[n]$;
\item $|A_i|\le b+hc(A_i)$ for all $i\in[n]$;
\item $|\{i\in[n]:\alpha\in A_i\}|\ge(1-\delta)n$ for all
$\alpha\in A$.
\end{enumerate}
Define the \emph{low-rank family} $\mc F^{\on{low}}$ to be the union of $\mc F^{\on{low}}_{A,U}$ over such pairs $(A, U)$.
\end{definition}

Note that $c(A)=\dim_\F U$ we can assume for free by \cref{lem:c-translation}. Aiming to prove potential
\begin{equation}
\label{eq:dream}
\phi_{\mathbf V}(U, R) = R \dim U- \mb{E}_{i\in [n]}\on{codim}_{U}\left(U\cap K_A(A_i)\right) = Rc(A) - \mb{E} c(A_i)
\end{equation}
is negative for $R\le 1-\delta-2\eta$, 
we see that condition \cref{eq:def-low-cA} on $A$ and the analogous condition (2) for $A_i$ allows us to upper bound the potential by relating it to sizes $|A|$ and $|A_i|$, at which point (3) allows us to conclude the following proposition. This motivates the family $\mc{F}^{\on{low}}$. 

\begin{proposition}
\label{prop:low-rank-gap}
Let $(\mbf V,U)\in\mc F^{\on{low}}_{A,U}$ have agreement sets
$A_1,\ldots,A_n$, let $c:=c_{A,U}$, and let
$d:=\dim_{\F}U=c(A)$. Then
\begin{equation}
  \sum_{i=1}^n c(A_i)
  \ge
    \left(1-\delta-\frac{\eta}{2}\right)dn.
  \label{eq:terminal-margin}
\end{equation}
Moreover, $\phi_{\mbf V}(U,R)\le-3\eta d/2$ whenever $R\le 1-\delta-2\eta$.
\end{proposition}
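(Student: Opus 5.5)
The plan is a direct double-counting argument: use condition (2) of \cref{def:low-rank-family} to bound each $c(A_i)$ below by $|A_i|$, sum over $i$, and then use the coverage condition (3) to bound $\sum_i|A_i|$ below by $|A|$.

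\textbf{Step 1 (lower bound on $\sum_i c(A_i)$).} Condition (2) gives $c(A_i)\ge (|A_i|-b)/h$ for every $i$. Summing over $i\in[n]$,
\[
\sum_{i=1}^n c(A_i)\ \ge\ \frac{1}{h}\sum_{i=1}^n |A_i| - \frac{bn}{h}.
\]

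\textbf{Step 2 (coverage).} Double counting pairs $(i,\alpha)$ with $\alpha\in A_i$, and using condition (3), gives
\[
\sum_{i=1}^n |A_i|=\sum_{\alpha\in A}|\{i:\alpha\in A_i\}|\ge (1-\delta)n|A|.
\]
Substituting $|A|=b+hd$ from \cref{eq:def-low-cA} yields
\[
\sum_{i=1}^n c(A_i)\ \ge\ (1-\delta)dn+\frac{(1-\delta)bn}{h}-\frac{bn}{h}\ =\ (1-\delta)dn-\frac{\delta bn}{h}.
\]

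\textbf{Step 3 (absorbing the loss).} It remains to show $\delta bn/h\le \eta dn/2$. Since $h\ge 2b/\eta$ and $\delta\le1$,
\[
\frac{\delta bn}{h}\le \frac{\eta\delta n}{2}\le \frac{\eta n}{2}.
\]
This is at most $\eta dn/2$ provided $d\ge1$. Curve-freeness guarantees this: $U\neq\{0\}$, since $\{0\}\subset K_A(B)$ for every $B$. This proves \cref{eq:terminal-margin}.

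This absorption is the only delicate point. The loss must be charged against $d$ rather than left as an absolute term, and that works because the choice $h=\lceil 2b/\eta\rceil$ scales with $b$, so the per-coordinate overhead $b/h$ is at most $\eta/2$.

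\textbf{Step 4 (potential).} By \cref{eq:dream} (using $V_i=U\cap K_A(A_i)$ and $c(A)=d$), and since $R\le 1-\delta-2\eta$,
\[
\phi_{\mbf V}(U,R)=Rd-\frac1n\sum_{i=1}^n c(A_i)\le (1-\delta-2\eta)d-\Bigl(1-\delta-\frac{\eta}{2}\Bigr)d=-\frac{3\eta d}{2}.
\]
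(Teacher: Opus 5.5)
Your proof is correct and follows the paper's argument: you double count incidences using condition (3), bound $|A_i|\le b+hc(A_i)$ via condition (2), substitute $|A|=b+hd$, and absorb the loss using $\delta b/h\le\eta/2\le\eta d/2$. Your curve-freeness justification of $d\ge1$ (valid because $\binom{A}{b}\neq\emptyset$, as $|A|\ge b$) supplies a detail that the paper simply asserts.
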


\begin{proof}
We double count the number of $(i,\alpha)\in[n]\times A$ such that $\alpha\in A_i$ to obtain
\[ 
(1-\delta)|A|n\le \sum_{i=1}^n |A_i| \le bn+h\sum_{i=1}^n c(A_i).
\]
Since $|A|=b+hc(A)=b+hd$, $h\ge 2b/\eta$, and $\delta\le1\le c(A)=d$, we obtain that
\begin{equation}
\label{eq:why-h}
\sum_{i=1}^n c(A_i)\ge
  \left((1-\delta)c(A)-\frac{\delta b}{h}\right)n\ge
  \left(1-\delta-\frac{\eta}{2}\right)dn.
\end{equation}
Now, the last statement follows from \cref{eq:dream} and that
$R\le1-\delta-2\eta$.
\end{proof}

We remark that \cref{eq:why-h} explains the choice of $h$ in \cref{eq:def-h}: we prefer $h$ as small as possible, but in order to absorb the loss $\delta b/h$ by the constant additive rate slack $\eta$, we require $h=\Omega(b/\eta)$.

Now, we briefly explain the pruning process to produce a profile in \cref{def:low-rank-family} from a profile $(\mbf{V}, U)\in\mc{F}_{\ell, \delta, a, b}$ with rank at most $D$. Let $A$ be the coordinate set. We prune it iteratively from $A$ to $A'$: whenever some $A_i'=A_i\cap A'$ satisfies $|A_i'| > b+h c(A_i')$, we simply replace $A'$ by $A_i'$ and continue. Throughout the process we maintain $|A'|\ge b+hc(A')$, and termination implies condition (2) from \cref{def:low-rank-family}.
The baseline $b$ ensures that we do not prune to a trivial profile where $c(A')=0$. Indeed, since $|A'|\ge b$, then
$U\vert_{A'}\subset K_{A'}(A')$ whenever $c(A')=0$, contradicting curve-freeness. After we produce $A'$, we apply the curve translation in \cref{lem:c-translation} to replace $U\vert_{A'}$ by $U'$ to ensure $c(A')=\dim_\F U'$. 

Note that \cref{lem:c-restriction} implies that, in the process of pruning $A$ to $A'$ and $A_i$ to $A_i'=A_i\cap A'$, we think of $c$ as static $c_{A, U}$. \cref{lem:c-translation} implies that $c$ is also preserved when we apply the curve translation to assume $c(A')=\dim U'$.
We prove this formally below.
\begin{proposition}[Pruning]
\label{prop:pruning}
Suppose that an $\F$-additive code $C$ contains a profile
$(\mbf V,U)\in\mc F_{A,U}$ with $\dim_{\F}U\le D$, then
there exist $A'\subset A$ and $U'\subset\F^{A'}$ such that
$C$ contains a profile in
$\mc F^{\on{low}}_{A',U'}$.
\end{proposition}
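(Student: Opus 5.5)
We run the pruning procedure sketched above and track three invariants. Fix a witness matrix $M$ with columns in $C$, rows $M_{i\star}\in V_i=U\cap K_A(A_i)$, and row span $U$. Let $c:=c_{A,U}$ throughout; by \cref{lem:c-restriction} this codimension function agrees with $c_{A',U|_{A'}}$ on subsets of any $A'\subset A$, so it is static during pruning.

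Initialize $A':=A$. We need $|A|\ge b+hc(A)$. Since $c(A)\le\dim U\le D$ and $a=\lceil 100\ell^2/\eta^3\rceil$, we will verify numerically that $a\ge b+hD$: here $h\approx 8\ell/\eta^2$ and $D\approx(\ell+1)/\eta$, so $hD\approx 8\ell(\ell+1)/\eta^3$. This has the right order, though the constants must be checked; if they fail for $\ell=1$, one can first reduce $c(A)\le D$ by interpolation.

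\textbf{Pruning loop.} While some $i$ has $|A_i\cap A'|>b+hc(A_i\cap A')$, replace $A'$ by $A_i\cap A'$. The invariant $|A'|\ge b+hc(A')$ is preserved, and $|A'|$ strictly decreases, since $A_i\cap A'=A'$ would contradict the strict inequality against the invariant. So the loop terminates.

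At termination, every $i$ satisfies $|A_i\cap A'|\le b+hc(A_i\cap A')$. The coverage condition also survives: each $\alpha\in A'$ still lies in at least $(1-\delta)n$ of the sets $A_i$, hence of the sets $A_i\cap A'$. Curve-freeness of $U|_{A'}$ (with respect to $A'$) is inherited, since $B\subset A'$ of size $b$ is a subset of $A$ and $K_{A'}(B)$ is the restriction of $K_A(B)$. Because $|A'|\ge b$, curve-freeness also forces $c(A')\ge1$.

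\textbf{Obtaining equality.} We want $|A'|=b+hc(A')$ exactly, but the invariant only gives $\ge$. To fix this, we delete points one at a time from $A'$ outside a carefully chosen set. Removing a point changes $|A'|$ by $1$ and cannot increase $c(A')$, while the conditions $|A_i\cap A'|\le b+hc(A_i\cap A')$ must stay true. This is the main obstacle: deleting a point can drop $c(A_i\cap A')$ by one and break condition (2).

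The plan is to modify the loop. Delete a point $\alpha$ whenever doing so keeps $|A'|-1\ge b+hc(A'\setminus\{\alpha\})$. Whenever some $A_i$ violates (2), prune to it instead. Stop when neither move applies. At that point, every single deletion either drops $c$, or the equality $|A'|=b+hc(A')$ already holds. If every deletion drops $c$, then $A'$ is a minimal set realizing its codimension, so $|A'|\le\ell+1+c(A')<b+hc(A')$, a contradiction. Hence equality holds.

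\textbf{Final step.} Apply \cref{lem:c-translation} with some $J\in\binom{A'}{\ell+1}$ to replace $U|_{A'}$ by $U'$, with $\dim U'=c(A')$. The map $\psi$ acts on rows, so $\psi$ applied rowwise to $M|_{A'}$ gives columns that are $C$-combinations: a column of $M|_{A'}$ minus the interpolation of the columns in $J$, which is still a codeword by linearity. Rows land in $U'\cap K_{A'}(A_i\cap A')$ by \eqref{eq:translation-membership}, and the row span is exactly $U'$. The quantities $c$ are preserved, so $C$ contains a profile in $\mc F^{\on{low}}_{A',U'}$.
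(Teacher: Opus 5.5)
Your argument is correct and is essentially the paper's pruning argument, though the paper replaces your two-move loop with one extremal choice. It takes $A'$ inclusion-minimal in $\mathcal{S}:=\{S\subset A:|S|\ge b+hc(S)\}$. Equality $|A'|=b+hc(A')$ follows because strict inequality plus monotonicity of $c$ would put $A'\setminus\{\alpha\}$ in $\mathcal{S}$. Condition (2) follows because a violating $A_i\cap A'$ lies in $\mathcal{S}$ and, given equality, is a proper subset of $A'$.

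Three steps of your write-up should be fixed.

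First, your claim that a prune strictly decreases $|A'|$ is false. If $A'\subseteq A_i$ and $|A'|>b+hc(A')$, then $A_i\cap A'=A'$ violates (2) without contradicting the non-strict invariant, so your first loop can stall. In the modified loop this is harmless only because, by monotonicity of $c$, a deletion move is always available in exactly that situation. You should say so, and either prioritize deletions or perform only proper prunes.

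Second, your route to equality through $|A'|\le\ell+1+c(A')$ is a detour that relies on a canonical-set bound you do not prove. It is also unnecessary. Since $c(A'\setminus\{\alpha\})\le c(A')$, if no deletion applies then for any $\alpha$ you have $|A'|-1<b+hc(A'\setminus\{\alpha\})\le b+hc(A')$. Together with the invariant, this gives equality at once. A drop in $c$ only makes the deletion condition easier to satisfy; it is never an obstruction.

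Third, the numerical check you deferred does hold, so no fallback is needed (and $c(A)\le\dim U\le D$ is automatic anyway). For $\eta\le 1$ and $\ell\ge1$ one has $b\le 6\ell/\eta$, $h\le 13\ell/\eta^2$ and $D\le 3\ell/\eta$. Hence $b+hD\le 45\ell^2/\eta^3\le a$.
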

\begin{proof}
Fix a witness matrix $M$ and write $c:=c_{A,U}$.
By our choices of parameters in \cref{eq:ggsw-param,eq:def-D,eq:def-h}, we observe that 
\[ |A|=a\ge b+hD \ge b+h\dim_\F U \ge b+hc(A).\]
Then, the following collection of subsets of $A$ is non-empty:
\[ \mc{S}:=\{S\subset A:|S|\ge b+hc(S)\}.\]
Fix $A'$ to be any inclusion-minimal element of $\mc{S}$. 
We claim that $|A'|=b+hc(A')$. If not, then $|A'|>b+hc(A')$. As $c(S)$ is non-decreasing in $S$ with respect to inclusion,
\[ |A'\setminus\{\alpha\}|\ge b+hc(A')\ge b+hc(A'\setminus\{\alpha\}),\]
which means $A'\setminus\{\alpha\}\in \mc{S}$, contradicting minimality of $A'$. This proves the claim. 

Now, set $U':=U\vert_{A'}$ and $A_i':=A'\cap A_i$. Then, $U'$ is $(\ell, A',b)$-curve-free as $U'\vert_B = U\vert_B\not\subset K_B(B)$ by curve-freeness of $U$. Then, $|A'|\ge b$ implies $c(A')>0$ and 
\begin{equation}
\label{eq:prune-ineq}
    c(A')\le \dim_\F U'\le\dim_\F U\le D.
\end{equation}

It suffices to check the bullet point conditions in \cref{def:low-rank-family}. If $|A_i'|>b+hc(A_i')$ for some $i\in [n]$, then $A_i'$ violates the minimality of $A'$. Moreover, 
\[M_{i\star}\in U\cap K_A(A_i)\implies (M\vert_{A'})_{i\star}\in K_{A'}(A'\cap A_i),\]
and, for every $\alpha\in A'$,
\[|\{i\in [n]:\alpha\in A_i'\}|=|\{i\in [n]:\alpha\in A_i\}|\ge (1-\delta)n.\]
Therefore, the matrix $M\vert_{A'}$ witnesses
containment of profile, which is almost in $\mc{F}^{\on{low}}_{A', U'}$ except \cref{eq:prune-ineq} has $c(A')\le \dim U'$ while \cref{eq:def-low-cA} demands equality. 
To do this, replace $U'$ by its image upon applying the linear map $\psi$ in
\cref{lem:c-translation}. We also apply $\psi$ rowwise to
$M\vert_{A'}$, obtaining a matrix $M'$.  By
\cref{lem:c-translation}, every value $c(A_i')$ and
curve-freeness is preserved, and
\[
  U'\cap K_{A'}(A')=\{0\}\implies 
  \dim_{\F}U'=c(A').
\]
Thus $|A'|=b+h\dim_{\F}U'$, and all the previously verified conditions required by $\mc{F}^{\on{low}}_{A', U'}$ remain true. By linearity of interpolation, each column of $M'$ is an $\F$-linear combination of columns of $M\vert_{A'}$, and hence remains in $C$. Therefore, $M'$ witnesses containment of a profile in
$\mc F^{\on{low}}_{A',U'}$.
\end{proof}

\subsection{Codimension-Sensitive Counting}
\label{subsec:rlc-count}
From this point onward, we drop the primes and denote a pruned,
translated profile simply by $(A,U,(A_i)_{i\in[n]})$.
Fix a pair $(A,U)$ in \cref{def:low-rank-family} and let $c=c_{A,U}$. Put $d:=\dim U$ and $a_d:=|A|=b+hd$. Let $Z\subset A$ be such that $\pi_{Z}:U\to \F^{Z}$ is bijective. Let $Z' = A\setminus Z$. Let $z=|Z|=\dim U$, and $z'=|Z'|=a_d-z$.

We wish to bound $|\mc F^{\on{low}}_{A,U}|$ by counting the number of spaces $V_i=U\cap K_A(A_i)$ directly. Formally, we define for integers $s_1, s_2,t \ge 0$
\begin{equation}
  \label{eq:def-Nst}
  N_{s_1, s_2, t}(A,U)
  :=
  \left|\left\{U\cap K_A(S_1 \cup S_2):
      S_1\in\binom{Z}{s_1}, S_2\in\binom{Z'}{s_2},c(S_1\cup S_2)=t\right\}\right|.
\end{equation}

To bound $N_{s_1,s_2,t}(A,U)$, the first idea is that each $V_i$ has a small defining point set we can use instead of $A_i$.  This is the curve decoding analogue of
the canonical constraint-counting principle in \cite{gg26}.

\begin{proposition}
\label{prop:canonical-config}
For any $U\subset\F^A$, $S_1\subset Z$, and $S_2\subset Z'$, let $c(S_1\cup S_2)=c_{A,U}(S_1\cup S_2)$. There exists $T\subset S_2$ such that
\begin{equation}
  U\cap K_A(S_1\cup T)=U\cap K_A(S_1\cup S_2)\quad\text{and}\quad |T|\le \ell+1+c(S_1\cup S_2)-|S_1|.
\end{equation}
\end{proposition}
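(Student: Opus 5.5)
The plan is a greedy construction of $T$. We first pad $S_1$ with arbitrary points of $S_2$ until $\ell+1$ points are present. We then add further points of $S_2$ only when they strictly shrink the subspace $U\cap K_A(S_1\cup T)$. A dimension count controls how many points are added, using the fact that $U$ projects bijectively onto $F^{Z}$.

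\emph{Step 1: a lower bound on $c(S_1)$.} Since $\pi_Z:U\to\F^Z$ is bijective and $S_1\subset Z$, the restriction map $U\to\F^{S_1}$ is surjective. As in the proof of \cref{lem:c-restriction}, $c(S_1)$ equals the dimension of the image of $U$ in $\F^{S_1}/K_{S_1}(S_1)$. Here $K_{S_1}(S_1)$ is the space of evaluations of polynomials of degree at most $\ell$, which has dimension $\min\{|S_1|,\ell+1\}$. Hence
\[ c(S_1)=\max\{0,|S_1|-\ell-1\}. \]
More generally, $c$ is nondecreasing under inclusion and $c\ge 0$.

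\emph{Step 2: padding.} Let $T_0\subset S_2$ consist of $\min\{|S_2|,\max\{0,\ell+1-|S_1|\}\}$ arbitrary elements. If this exhausts $S_2$, take $T=S_2$. Then $|T|\le \ell+1-|S_1|\le \ell+1+c(S_1\cup S_2)-|S_1|$ and we are done. Otherwise $|S_1\cup T_0|\ge \ell+1$.

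\emph{Step 3: greedy growth.} Starting from $T=T_0$, repeatedly add some $\beta\in S_2\setminus T$ for which
\[ U\cap K_A(S_1\cup T\cup\{\beta\})\subsetneq U\cap K_A(S_1\cup T). \]
Each such addition increases $c(S_1\cup T)$ by at least one. Since $c(S_1\cup T)\le c(S_1\cup S_2)$ throughout, the number of greedy additions is at most $c(S_1\cup S_2)-c(S_1\cup T_0)$. When the process stops, the equality $U\cap K_A(S_1\cup T)=U\cap K_A(S_1\cup S_2)$ needs an argument; this is the only subtle point. Take $u\in U\cap K_A(S_1\cup T)$. Because $|S_1\cup T|\ge \ell+1$, the polynomial $p$ of degree at most $\ell$ agreeing with $u$ on $S_1\cup T$ is unique. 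Fix any $\beta\in S_2$. By termination, $u$ also agrees with some degree-$\ell$ polynomial on $S_1\cup T\cup\{\beta\}$, and by uniqueness that polynomial is $p$, so $u(\beta)=p(\beta)$. Thus $p$ agrees with $u$ on all of $S_1\cup S_2$, which gives the equality. The padding in Step 2 is exactly what ensures this uniqueness.

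\emph{Step 4: counting.} We have $|T|\le |T_0|+c(S_1\cup S_2)-c(S_1\cup T_0)$, and we split into two cases. If $|S_1|\ge \ell+1$, then $T_0=\emptyset$ and $c(S_1\cup T_0)\ge c(S_1)= |S_1|-\ell-1$ by Step 1, so $|T|\le \ell+1+c(S_1\cup S_2)-|S_1|$. If $|S_1|<\ell+1$, then $|T_0|\le \ell+1-|S_1|$ and $c(S_1\cup T_0)\ge 0$, which yields the same bound.
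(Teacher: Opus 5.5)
Your proof is correct and is essentially the paper's argument. Both pad $S_1$ with an arbitrary $(\ell+1-|S_1|)$-subset $T_0\subset S_2$, keep only points of $S_2$ that impose independent constraints, and count the $S_1$-constraints in full because $U|_Z=\mathbb{F}^Z$; the paper does the selection by extending a basis of the functionals $u\mapsto u(\beta)-p_u(\beta)$, where you use greedy shrinkage, the formula $c(S_1)=\max\{0,|S_1|-\ell-1\}$, and a uniqueness-of-interpolation termination check. One trivial slip: in Step 2, when $S_2=\emptyset$ and $|S_1|>\ell+1$, the inequality $|T|\le \ell+1-|S_1|$ is false, but the required bound $0\le \ell+1+c(S_1)-|S_1|=0$ still holds by Step 1 (or you can send that case through Steps 3--4).
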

\begin{proof}
If $|S_2|\le \ell+1-|S_1|$, then taking $T=S_2$ suffices since $c(S_1 \cup S_2)=0$ in this case.

Now, if $|S_1|\geq \ell+1$, then set $T_0:=\emptyset$ and fix $S_0$ to be the first $\ell+1$ elements of $S_1$ in some canonical ordering of $\F$. Else, fix $S_0 = S_1 \cup T_0$ where $T_0 \in \binom{S_2}{\ell+1-|S_1|}$. 
For any $u\in U$, there exists a unique $p_u \in \F[X]_{\le\ell}$ that agrees with $u$ on $S_0$. Moreover, $u\mapsto p_u$ is linear. 
Consider the collection of linear functionals on $U$ given by
\[ 
\{u\mapsto u(\beta)-p_u(\beta): \beta\in S_1\cup S_2\setminus S_0\}
\]
They have a common kernel $U\cap K_A(S_1\cup S_2)$ and so their span has $\F$-dimension exactly $c(S_1\cup S_2)$.

Choose any basis of such functionals indexed by $T_1\cup (S_1\setminus S_0)$, where $T_1\in \binom{S_2\setminus S_0}{c(S_1\cup S_2)-|S_1\setminus S_0|}$, so these $c(S_1\cup S_2)$ functionals have common kernel $U\cap K_A(S_1\cup S_2)$. Now, let $T=T_0\cup T_1$, so $U\cap K_A(S_1\cup S_2)=U\cap K_A(S_1\cup T)$ and $|T|\le \ell+1+c(S_1\cup S_2)-|S_1|$.
Here the functionals indexed by $S_1\setminus S_0$ are
independent because $S_1\subset Z$ and $U\vert_Z=\F^Z$, so they may
indeed be included in the chosen basis.
\end{proof}

\begin{remark}\label{rem:nice-stop}
This proposition is the diea of canonical profile counting and already gives useful alphabet size bounds. To count the number of profiles in $\mc{F}^{\on{low}}_{A, U}$ with $|A|=a$ and $c_i = c_{A, U}(A_i)$, we note that the space $U\cap K_A(A_i)$ is determined by $\ell+1+c_i$ points of $A$. Hence, for a fixed $(A, U)$ and $c_1, \dots, c_n$, there are at most $a^{(\ell+1)n+\sum_i c_i}$ such profiles. Combining this with \cref{lem:rlc-fixed-pair}, one can show that RLC has $(\ell, \delta, a, b)$ curve decoding at rate $R\le 1-\delta-2\eta$, provided alphabet size 
\[
q\ge
\left(\frac{\ell}{\eta}\right)^{O(\ell/\eta)}.
\]
For RRS, we note that since $c_i\le d\le D=O(\ell/\eta)$, one can show using \cref{thm:rrs-equiv} that RRS has $(\ell, \delta, a, b)$ curve decoding at rate $R\le 1-\delta-2\eta$, provided alphabet size 
\[
q\ge
\left(\frac{\ell}{\eta}\right)^{O(\ell/\eta^2)}.
\]

\end{remark}
To improve beyond the bounds above by log-factors in the exponent, we do further refinements. 
We first give our bound on $N_{s_1,s_2,t}(A,U)$ that is sensitive to codimension $t$ via double counting.

\begin{theorem}
\label{thm:codim-N-count}
Fix any $(A,U)$ from \cref{def:low-rank-family}. For any $s_1, s_2, t\in\N$ with $a_d:=|A|\ge s_1+s_2\ge\ell+2$, writing $z:=\dim U$ and $z':=a_d-z$, one has
\begin{equation}
\label{eq:Nst-bound}
  N_{s_1, s_2,t}(A,U)
  \le
  \binom{z}{s_1}\binom{z'}{t-\max(s_1-\ell-1, 0)}\frac{\binom{z'}{\max{(\ell+1-s_1, 0)}}}{\binom{s_2}{\max{(\ell+1-s_1, 0)}}}
\end{equation}
\end{theorem}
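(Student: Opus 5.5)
The plan is to double count, using the freedom in the construction of \cref{prop:canonical-config}. Write $m:=\max(\ell+1-s_1,0)$ and $e:=\max(s_1-\ell-1,0)$. Then $\ell+1+t-s_1=m+(t-e)$, and in the proof of \cref{prop:canonical-config} the canonical set is $T=T_0\cup T_1$, with $|T_0|=m$ and $|T_1|=t-|S_1\setminus S_0|=t-e$. The hypothesis $s_1+s_2\ge \ell+2$ rules out the degenerate first case of that proof. When $s_1\le\ell+1$, the set $T_0$ may be chosen as an \emph{arbitrary} $m$-subset of $S_2$; when $s_1>\ell+1$, $T_0=\emptyset$ and $m=0$.

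First, I bound the whole set in \cref{eq:def-Nst} by a union over $S_1\in\binom{Z}{s_1}$. This costs the factor $\binom{z}{s_1}$, so it suffices to fix $S_1$ and bound the number of distinct spaces $V=U\cap K_A(S_1\cup S_2)$ with $S_2\in\binom{Z'}{s_2}$ and $c(S_1\cup S_2)=t$. For each such $V$, I fix one realizing $S_2=S_2(V)$. I then consider the set $\mc P$ of pairs $(V,T_0)$ with $T_0\in\binom{S_2(V)}{m}$. Clearly $|\mc P|$ equals $\binom{s_2}{m}$ times the number of such $V$.

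Next, I define a map on $\mc P$. Given $(V,T_0)$, I run the construction of \cref{prop:canonical-config} with this particular $T_0$, so that $S_0=S_1\cup T_0$ (or the first $\ell+1$ elements of $S_1$ when $m=0$). This produces $T_1\subset S_2(V)\setminus S_0\subset Z'$ with $|T_1|=t-e$ and
\[
V=U\cap K_A(S_1\cup T_0\cup T_1).
\]
The map $(V,T_0)\mapsto (T_0,T_1)\in\binom{Z'}{m}\times\binom{Z'}{t-e}$ is injective, because $V$ is recovered from $S_1$, $T_0$ and $T_1$ via the display above. Hence
\[
\binom{s_2}{m}\cdot \#\{V\}\le \binom{z'}{m}\binom{z'}{t-e}.
\]
Multiplying by $\binom{z}{s_1}$ gives \cref{eq:Nst-bound}.

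The main point to verify carefully is that the proof of \cref{prop:canonical-config} really works for every choice of $T_0\in\binom{S_2}{m}$, and that the basis of functionals can always include those indexed by $S_1\setminus S_0$. The first claim holds because the interpolant on any $\ell+1$ points of $S_1\cup S_2$ is well defined. The second holds because $U|_Z=\F^Z$. I also need $|T_1|$ to be exactly $t-e$, not merely at most $t-e$, so that it lies in a single binomial class. This holds because the span of the functionals has dimension exactly $c(S_1\cup S_2)=t$, and $|S_1\setminus S_0|=e$.
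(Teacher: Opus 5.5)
Your proposal is correct and follows the paper's argument. You fix $S_1$ (the factor $\binom{z}{s_1}$), encode each space by the pair $(T_0,T_1)$ from the construction in \cref{prop:canonical-config}, and divide by $\binom{s_2}{m}$ because every $T_0\in\binom{S_2}{m}$ gives a valid encoding. Your explicit injection $(V,T_0)\mapsto(T_0,T_1)$ and your check that $|T_1|=t-e$ exactly simply make rigorous the paper's one-line claim that the same constraint space arises from at least $\binom{s_2}{\ell+1-s_1}$ choices of $T_0$.
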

\begin{proof}
Continuing from \cref{prop:canonical-config}, observe that $N_{s_1, s_2, t}$ can be counted using the choices made for $S_0, S_1, T_0, T_1$. Observe that if $s_1 \ge \ell+1$, then there are $\binom{z}{s_1}$ choices for $S_1$, each of which naturally fixes $S_0$. $T_0 = \emptyset$, and now, there are at most $\binom{z'}{t+\ell+1-s_1}$ choices of $T_1$.

In the case $\ell+1>s_1$, we first choose $S_1$, which has $\binom{z}{s_1}$ choices. Now, observe that there are $\binom{z'}{\ell+1-s_1}$ choices for $T_0$ and $\binom{z'}{t}$ choices for $T_1$, but the same set of constraints is produced by at least $\binom{s_2}{\ell+1-s_1}$ choices for the set $T_0$. Thus, the number of possibilities is at most $\binom{z}{s_1}\binom{z'}t\frac{\binom{z'}{\ell+1-s_1}}{\binom{s_2}{\ell+1-s_1}}$.

This concludes the proof.

\end{proof}

\begin{lemma}
    For sequences $\mbf{s}=((s_1^{(1)}, s_2^{(1)}),\ldots,(s_1^{(n)}, s_2^{(n)}))$ and
$\mbf{t}=(t^{(1)},\ldots,t^{(n)})$, let
$\mc F^{\on{low}}_{A,U,\mbf s,\mbf t}$ denote the profiles in
$\mc F^{\on{low}}_{A,U}$ admitting agreement sets with
$|A_i\cap Z|=s_1^{(i)}$, $|A_i\cap Z'|=s_2^{(i)}$, and $c_{A,U}(A_i)=t^{(i)}$ for all $i\in[n]$. Then,
\begin{itemize}
    \item If $\dim U \geq \ell+1$: \[\left|
    \mc F^{\on{low}}_{A,U,\mbf s,\mbf t}
  \right|\leq 2^{\dim U\cdot n}a_d^{2(\ell+1)n + \sum_{i=1}^n \left(t^{(i)}-s_1^{(i)}\right)} \leq 2^{\dim U\cdot n}a_d^{2(\ell+1)n}a_d^{-(1-\delta)\dim U\cdot n+\sum_{i=1}^n t^{(i)}}\]
    \item If $(\ell+1)/2\leq \dim U < (\ell+1)$: \[\left|
    \mc F^{\on{low}}_{A,U,\mbf s,\mbf t}
  \right|\leq 2^{\dim U\cdot n}a_d^{(\ell+1)n + \sum_{i=1}^n t^{(i)}}\leq 2^{\dim U\cdot n}a_d^{3\dim U\cdot n}\]
  \item If $\dim U < (\ell+1)/2$: 
  \[\left|
    \mc F^{\on{low}}_{A,U,\mbf s,\mbf t}
  \right|\leq 2^{\dim U\cdot n}a_d^{\sum_{i=1}^n t^{(i)}}\min\left(6^{2dn} (100d/\eta^2)^{w}, (100d/\eta^2)^{n}\right)^{(\ell+1)}\]
  where $w:=\sum_{i=1}^n t^{(i)}-(1-\delta-\eta/2)nd$.
\end{itemize}
\end{lemma}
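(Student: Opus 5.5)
The plan is to bound $|\mc F^{\on{low}}_{A,U,\mbf s,\mbf t}|$ as a product over coordinates. A profile in this class is determined by the tuple $(V_1,\ldots,V_n)$, and each $V_i=U\cap K_A(A_i)$ lies in the set counted by $N_{s_1^{(i)},s_2^{(i)},t^{(i)}}(A,U)$ from \cref{eq:def-Nst}. Hence
\[
\left|\mc F^{\on{low}}_{A,U,\mbf s,\mbf t}\right|\le\prod_{i=1}^n N_{s_1^{(i)},s_2^{(i)},t^{(i)}}(A,U).
\]
I will then insert \cref{thm:codim-N-count} and simplify each factor. Throughout I use $z=\dim U=d$, $z'\le a_d$, and $\binom{z}{s_1}\le 2^{d}$; the last bound supplies the common prefactor $2^{\dim U\cdot n}$. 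When $s_1+s_2\le\ell+1$ we have $t^{(i)}=0$ and $V_i=U$, so those factors are $1$.

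\textbf{Case $\dim U\ge\ell+1$.} Fix $i$ and bound the remaining binomials crudely by powers of $a_d$. If $s_1\ge\ell+1$, the factor is at most $\binom{z'}{t+\ell+1-s_1}\le a_d^{\ell+1+t-s_1}$. If $s_1<\ell+1$, the factor is at most $a_d^{t}\cdot a_d^{\ell+1-s_1}$, where the denominator is simply dropped. In both cases each factor is at most $a_d^{2(\ell+1)+t^{(i)}-s_1^{(i)}}$, which gives the first inequality.

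For the second inequality I need $\sum_i s_1^{(i)}\ge(1-\delta)dn$. This follows by double counting the coverage condition (3) of \cref{def:low-rank-family} restricted to $\alpha\in Z$: each of the $d$ points of $Z$ lies in at least $(1-\delta)n$ of the sets $A_i$, so $\sum_i|A_i\cap Z|\ge(1-\delta)dn$.

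\textbf{Case $(\ell+1)/2\le\dim U<\ell+1$.} Here $s_1\le z<\ell+1$, so every factor is at most $\binom{z'}{t}\binom{z'}{\ell+1-s_1}\le a_d^{t+\ell+1}$. This gives the first bound. For the second, use $t^{(i)}\le d$ and $\ell+1\le 2d$, so that $(\ell+1)n+\sum_i t^{(i)}\le 3dn$.

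\textbf{Case $\dim U<(\ell+1)/2$.} This is the main obstacle, because the quotient $\binom{z'}{\ell+1-s_1}/\binom{s_2}{\ell+1-s_1}$ must be exploited. Write $m=\ell+1-s_1\ge \ell+1-d$. The ratio is at most $(z'/s_2)^{m}$. Since $z'\le a_d=b+hd=O(bd/\eta)$ and, by condition (2), $s_2\ge$ roughly $|A_i|-d$, the ratio is controlled by $(a_d/|A_i|)^{\ell+1}$.

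To relate $|A_i|$ to the potential slack $w$, I would use condition (2) again together with coverage: $\sum_i|A_i|\ge(1-\delta)a_dn$, while $|A_i|\le b+ht^{(i)}$. The coordinates with small $|A_i|$ (say $|A_i|<a_d\eta^2/(100d)$, which forces $t^{(i)}$ below average) must then be compensated by excess $\sum_i t^{(i)}$. A Markov-type argument, as in \cref{prop:low-rank-gap}, should show that the number of such "bad" coordinates, weighted by $\log(a_d/|A_i|)$, is bounded by roughly $w+O(dn)$. On good coordinates the ratio is $O((100d/\eta^2)^{\ell+1})$ only in the worst case, so I bound it by $6^{2d(\ell+1)}$ after using $\binom{z'}{m}/\binom{s_2}{m}\le (z'/s_2)^m$ with $z'/s_2\le 6^{2d/n\cdot}$ on average, via AM–GM over $i$. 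On bad coordinates I bound it by $(100d/\eta^2)^{\ell+1}$.

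This yields the product $6^{2dn(\ell+1)}(100d/\eta^2)^{w(\ell+1)}$. The alternative term in the minimum is the trivial per-coordinate bound $(100d/\eta^2)^{(\ell+1)n}$, which holds since $s_2\ge1$ and $z'\le a_d\le 100d/\eta^2$ after normalizing constants. I expect that pinning down the constants $6$ and $100$ in this averaging step will take the most care. If needed, I would first try Jensen's inequality on $\sum_i\log(z'/s_2^{(i)})$, using the lower bound on $\sum_i s_2^{(i)}$.
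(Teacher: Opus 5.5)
Your first two cases (the product $\prod_i N_{s_1^{(i)},s_2^{(i)},t^{(i)}}(A,U)$, crude binomial bounds, and $\sum_i s_1^{(i)}\ge(1-\delta)dn$ from coverage of $Z$) match the paper's proof. The third case, however, has a genuine gap: your averaging step runs in the wrong direction. Write $s^{(i)}:=|A_i|$. A lower bound on $\sum_i s^{(i)}$ (or on $\sum_i s_2^{(i)}$) can never upper-bound $\sum_i\log(a_d/s^{(i)})$. Since $-\log$ is convex, Jensen and AM--GM only give $\frac1n\sum_i\log(a_d/s^{(i)})\ge\log(a_d/\bar s)$. Concretely, for $\delta$ near $1$, coverage together with $s^{(i)}\le a_d$ still allows most coordinates to have $s^{(i)}=\ell+2$ and $t^{(i)}=1$. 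Each such coordinate contributes $(6a_d/(\ell+2))^{\ell+1}=(\Theta(d/\eta^2))^{\ell+1}$, far more than $6^{2d(\ell+1)}$ for small $d$ and $\eta$, and these coordinates can only be paid for by $w$. Your good/bad split does not achieve this. The cutoff $a_d\eta^2/(100d)$ is only about $\ell/10<\ell+2$, so no coordinate with $t^{(i)}>0$ is ever ``bad'', and everything rests on the invalid averaging. There are also two smaller slips. First, $\binom{z'}{m}/\binom{s_2}{m}\ge(z'/s_2)^m$, so the usable bound is $(ez'/s_2)^m$. Second, $a_d\le 100d/\eta^2$ is false, since $a_d=b+hd$ with $h\approx 8\ell/\eta^2$. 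So the trivial term of the minimum needs $s^{(i)}\ge\ell+2$ and $s_2^{(i)}\ge s^{(i)}/2$ (from $s_1^{(i)}\le d<(\ell+1)/2$) to cancel the factor $\ell$, not merely $s_2\ge1$.

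The missing idea is a pointwise charging argument. For $t^{(i)}\ge1$ one has $N_{s_1^{(i)},s_2^{(i)},t^{(i)}}(A,U)\le 2^d a_d^{t^{(i)}}(6a_d/s^{(i)})^{\ell+1}$. Since the base is at least $1$, you may replace the exponent $\ell+1$ by $t^{(i)}(\ell+1)$. Then split $t^{(i)}=r^{(i)}+(t^{(i)}-r^{(i)})$ with $r^{(i)}:=\max\{(s^{(i)}-b)/h,0\}$, which satisfies $r^{(i)}\le\min\{t^{(i)},s^{(i)}/h\}$ by condition (2). The $r^{(i)}$-part needs no averaging. The map $x\mapsto x\log(6a_d/x)$ is increasing on $(0,a_d]$, so $(s^{(i)}/h)\log(6a_d/s^{(i)})\le(a_d/h)\log 6\le 2d\log 6$, which yields $6^{2dn(\ell+1)}$. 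The excess part costs at most $(6a_d/(\ell+1))^{(t^{(i)}-r^{(i)})(\ell+1)}$, with $6a_d/(\ell+1)\le 100d/\eta^2$. Finally, double counting coverage against $s^{(i)}\le b+ht^{(i)}$, as in \cref{prop:low-rank-gap}, gives $\sum_i r^{(i)}\ge(1-\delta-\eta/2)dn$, i.e.\ $\sum_i(t^{(i)}-r^{(i)})\le w$.
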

\begin{proof}
    
Note that $z' \leq a_d$.

For sequences $\mbf{s}=((s_1^{(1)}, s_2^{(1)}),\ldots,(s_1^{(n)}, s_2^{(n)}))$ and
$\mbf{t}=(t^{(1)},\ldots,t^{(n)})$, let
$\mc F^{\on{low}}_{A,U,\mbf s,\mbf t}$ denote the profiles in
$\mc F^{\on{low}}_{A,U}$ admitting agreement sets with
$|A_i\cap Z|=s_1^{(i)}$, $|A_i\cap Z'|=s_2^{(i)}$, and $c_{A,U}(A_i)=t^{(i)}$ for all $i\in[n]$. Then,
\begin{equation}
  \left|
    \mc F^{\on{low}}_{A,U,\mbf s,\mbf t}
  \right|
  \le
  \prod_{i=1}^n N_{s^{(i)}_1,s^{(i)}_2,t^{(i)}}(A,U).
\end{equation}

When $t^{(i)}=0$, the corresponding space is simply $U$ and
contributes one possibility. When $t^{(i)}>0$, necessarily
$s_1^{(i)}+s_2^{(i)}\ge\ell+2$, so \cref{thm:codim-N-count} applies.

\paragraph{Case 1: $z\geq \ell+1$:} Observe that in this case, we have,

\[N_{s_1, s_2,t}(A,U)
  \le\binom{z}{s_1}\binom{z'}{t-\max(s_1-\ell-1, 0)}\frac{\binom{z'}{\max{(\ell+1-s_1, 0)}}}{\binom{s_2}{\max{(\ell+1-s_1, 0)}}} \leq 2^z \cdot a_d^{2(\ell+1)+t-s_1}\]

\paragraph{Case 2: $\ell+1\ge z\ge (\ell+1)/2$}

\begin{align*}
    N_{s_1, s_2,t}(A,U)
  &\le \binom{z}{s_1}\binom{z'}{t-\max(s_1-\ell-1, 0)}\frac{\binom{z'}{\max{(\ell+1-s_1, 0)}}}{\binom{s_2}{\max{(\ell+1-s_1, 0)}}}\\
  &\le\binom{z}{s_1}\binom{z'}{t}\binom{z'}{\ell+1-s_1}/\binom{s_2}{\ell+1-s_1}\\
  &\leq 2^zz'^t (z')^{\ell+1}\\
  & \leq 2^{z}a_d^{t+\ell+1}
\end{align*}

\paragraph{Case 3: $z\leq (\ell+1)/2.$}

Thus, $s_2\ge s_1$ whenever $t>0$.

\begin{align*}
    N_{s_1, s_2,t}(A,U)
  &\le \binom{z}{s_1}\binom{z'}{t-\max(s_1-\ell-1, 0)}\frac{\binom{z'}{\max{(\ell+1-s_1, 0)}}}{\binom{s_2}{\max{(\ell+1-s_1, 0)}}}\\
  &\le\binom{z}{s_1}\binom{z'}{t}\binom{z'}{\ell+1-s_1}/\binom{s_2}{\ell+1-s_1}\\
  &\leq 2^zz'^t \binom{z'}{\ell+1-s_1}/\binom{s_2}{\ell+1-s_1}\\
  & \leq 2^{z}z'^{t}\left(\frac{ea_d}{s_2}\right)^{\ell+1}\\
  & \leq 2^z z'^t \left(\frac{6a_d}{s_1+s_2}\right)^{\ell+1}
\end{align*}

Let $s^{(i)}=s_1^{(i)}+s_2^{(i)}$. Now, observe that $\sum_{i=1}^n s^{(i)}\geq (1-\delta)n|A|$, and $s^{(i)}=|A_i|\leq b+ht^{(i)}$. Thus, let $r^{(i)} = \max\left(\frac{s^{(i)}-b}{h}, 0\right)\leq t^{(i)}$.

Observe that \[\sum_{i=1}^n r^{(i)}\geq \sum_{i=1}^n \frac{s^{(i)}}{h} - \frac{bn}{h}\geq  \frac{(1-\delta)n(b+dh) -bn}{h} = (1-\delta)nd - \frac{\delta bn}{h} \geq (1-\delta -\eta/2)dn\]

Thus, let $w:=\sum_{i=1}^n t^{(i)}-(1-\delta-\eta/2)dn$. Observe that $w\ge \sum_i(t^{(i)}-r^{(i)})$.

For $t^{(i)}>0$, the factor $6a_d/s^{(i)}$ is at most
$6a_d/(\ell+1)$, and
$1\le r^{(i)}+(t^{(i)}-r^{(i)})$. Thus the excess-factor argument
gives the bound below with exponent $w$. 
Thus, $z<(\ell+1)/2$, 
\begin{align*}
    \left|
    \mc F^{\on{low}}_{A,U,\mbf s,\mbf t}
  \right|& \leq 2^{\dim U\cdot n}a_d^{\sum_{i=1}^n t^{(i)}}\prod_{i: t^{(i)} \ne 0}(6a_d/s^{(i)})^{\ell+1}\\
  & \leq 2^{\dim U\cdot n}a_d^{\sum_i t^{(i)}}
       \prod_i(6a_d/s^{(i)})^{r^{(i)}(\ell+1)}
       (6a_d/(\ell+1))^{w(\ell+1)}\\
  & \leq 2^{\dim U\cdot n}a_d^{\sum_{i=1}^n t^{(i)}}\left[\prod_{i}(6a_d/s^{(i)})^{s^{(i)}(\ell+1)/h}\right] (6a_d/(\ell+1))^{w(\ell+1)}\\
  & \leq 2^{\dim U\cdot n}a_d^{\sum_{i=1}^n t^{(i)}}6^{2dn(\ell+1)} (100d/\eta^2)^{w(\ell+1)}
\end{align*}

But additionally,

\begin{align*}
    \left|
    \mc F^{\on{low}}_{A,U,\mbf s,\mbf t}
  \right|& 
  \leq 2^{\dim U\cdot n}a_d^{\sum_{i=1}^n t^{(i)}}\prod_{i: t^{(i)} \ne 0}(6a_d/s^{(i)})^{\ell+1}\\
  & \leq 2^{\dim U\cdot n}\cdot a_d^{\sum_{i=1}^n t^{(i)}}\cdot (100d/\eta^2)^{n(\ell+1)}
\end{align*}

These inequalities give the requisite intermediate relations directly. In the first, we additionally use that $\sum_{i=1}^n s_1^{(i)}\geq (\dim U)(1-\delta)n$. The second and third simply use that $t^{(i)}\leq \dim U$.

\end{proof}

\begin{definition}
For $d\in[D]$, set
\begin{equation}
  a_d:=b+hd,
  \qquad
  T_d:=\left\lceil\left(1-\delta-\frac{\eta}{2}\right)dn\right\rceil.
\label{eq:def-low-strata}
\end{equation}
For $A\in\binom{\F}{a_d}$ and $w\in\Z_{\ge0}$, let
$\mc F_{A,d,w}^{\on{low}}$ be the union, over all $U$ satisfying
\cref{def:low-rank-family} with $\dim U=d$, of the profiles
$(\mbf V,U)\in\mc F_{A,U}^{\on{low}}$ admitting agreement sets
$A_1,\ldots,A_n$ such that
\begin{equation}
  \sum_{i=1}^n c_{A,U}(A_i)=T_d+w.
\end{equation}
Finally, let
\begin{equation}
  \mc F_{d,w}^{\on{low}}
  :=\bigcup_{A\in\binom{\F}{a_d}}\mc F_{A,d,w}^{\on{low}},
  \qquad
  \mc F_d^{\on{low}}
  :=\bigcup_{w\ge0}\mc F_{d,w}^{\on{low}}.
\end{equation}
\end{definition}

\begin{lemma}
\label{lem:compressed-profile-cover}
If $C$ is not $(\ell,\delta,a,b)$-curve-decodable, then $C$
contains either a profile in $\mc F_S^{\on{high}}$ for some
$S\in\binom{\F}{D}$ or a profile in $\mc F_{d,w}^{\on{low}}$ for
some $d<D$ and $w\in\Z_{\ge0}$.
\end{lemma}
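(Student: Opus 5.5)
The plan is to assemble the pieces already proved in this section. Suppose $C$ is not $(\ell,\delta,a,b)$-curve-decodable. By \cref{thm:ggsw-family} and the decomposition \cref{eq:curve-profile-decomposition}, $C$ contains some $(\mbf V,U)\in\mc F_{A,U}$, for some $A\in\binom{\F}{a}$ and some $(\ell,A,b)$-curve-free $U$, witnessed by a matrix $M$ with $\on{rank}M=\dim_\F U=:d_0$. We then split on $d_0$.

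\emph{High-rank case.} If $d_0\ge D$, \cref{lem:high-rank} gives a set $S\subset A$ of size $D$ such that $M|_S$ witnesses containment of a profile in $\mc F^{\on{high}}_S$. This is the first alternative.

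\emph{Low-rank case.} If $d_0<D$ (more generally $d_0\le D$), we apply \cref{prop:pruning}. It produces $A'\subset A$ and a curve-free $U'\subset\F^{A'}$ with $c(A')=\dim_\F U'=:d$ and $|A'|=b+hd=a_d$, such that $C$ contains a profile $(\mbf V',U')\in\mc F^{\on{low}}_{A',U'}$ with agreement sets $A_1',\dots,A_n'$. The dimension only decreases under restriction and translation, so $1\le d\le d_0<D$; here positivity comes from $|A'|\ge b$ together with curve-freeness. Since $A'\in\binom{\F}{a_d}$ and $\dim U'=d$, the profile belongs to $\mc F^{\on{low}}_{A',d,w}$, where
\[
w:=\sum_i c_{A',U'}(A_i')-T_d .
\]

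The only point needing care is $w\in\Z_{\ge0}$. Integrality is clear because $T_d$ is an integer. Nonnegativity follows from \cref{prop:low-rank-gap}, which gives $\sum_i c(A_i')\ge (1-\delta-\eta/2)dn$. As the left side is an integer, it is at least the ceiling $T_d$.

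The main obstacle I expect is the boundary case $d_0=D$ together with the strict inequality $d<D$ in the statement. The case $d_0=D$ is routed to the high-rank family, so the low-rank branch only sees $d_0<D$, which gives $d<D$ automatically. A second check is that \cref{prop:pruning} returns $A'$ of size exactly $a_d$ with $d$ equal to the dimension of the translated $U'$. Its proof verifies exactly this, so the low-rank branch closes.
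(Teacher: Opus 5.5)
Your proposal is correct and follows the paper's proof essentially verbatim: split on $\dim U$ at $D$, apply \cref{lem:high-rank} in the high-rank case or \cref{prop:pruning} in the low-rank case, and get $w\ge 0$ from \cref{prop:low-rank-gap}. You also note that $\sum_i c(A_i')$ is an integer, so it is at least the ceiling $T_d$; the paper leaves this step implicit, and your handling of it is right.
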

\begin{proof}
By \cref{thm:ggsw-family}, $C$ contains a profile in $\mc F$.
If its row-space dimension is at least $D$, apply
\cref{lem:high-rank}. Otherwise, apply \cref{prop:pruning} to obtain
a contained profile $(\mbf V,U)\in\mc F_{A,U}^{\on{low}}$. Put
$d:=\dim U$ and
$w:=\sum_i c_{A,U}(A_i)-T_d$. By
\cref{prop:low-rank-gap}, $w\ge0$, so this profile belongs to
$\mc F_{d,w}^{\on{low}}$.
\end{proof}

\begin{restatable}[Profile counts for curve-decoding]{lemma}{curvedeccount}\label{lem:low-family-count}
    Let $d \in \mathbb N$:
    \begin{itemize}
        \item if $d\ge \ell+1$: \[|\mathcal F_{d, w}^{\on{low}}|\leq q^{da_d+a_d}\cdot 2^{dn}\cdot \left((a_d)^3\cdot (a_d)^{2(\ell+2)}\right)^n(a_d)^w\]
        \item if $(\ell+1)/2\le d < (\ell+1):$ \[|\mathcal F_{d}^{\on{low}}|\leq q^{da_d+a_d}\cdot \left((a_d)^{3}\cdot (2a_d)^{3d}\right)^n\]
        \item if $d<(\ell+1)/2$:
        \[|\mathcal F_{d}^{\on{low}}|\leq q^{da_d+a_d}\cdot (2a_d)^{4dn}\cdot (100d/\eta^2)^{(\ell+1)n}\]
        \item Additionally, if $d\leq\min\{\log(1/\eta),(\ell+1)/2\}$,
        \[|\mathcal F_{d, w}^{\on{low}}|\leq q^{da_d+a_d}\cdot \left((a_d)^{3}\cdot (2a_d)^{d}\cdot (6)^{2d(\ell+1)}\right)^n(100\log(1/\eta)/\eta^2)^{(w+1)(\ell+1)}\]
    \end{itemize}
\end{restatable}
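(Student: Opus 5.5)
The plan is to decompose $\mc F_{d,w}^{\on{low}}$ (or $\mc F_d^{\on{low}}$) as a union over the ambient data and then over the numerical type of the agreement sets. Concretely,
\[
\mc F_{d,w}^{\on{low}}\subset\bigcup_{A}\bigcup_{U}\bigcup_{\mbf s,\mbf t}\mc F^{\on{low}}_{A,U,\mbf s,\mbf t},
\]
where $\mbf t$ ranges over sequences with $\sum_i t^{(i)}=T_d+w$. This reduces the claim to three counts, which we then multiply together with the preceding lemma's bound on a single class.

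The first count is the ambient data $(A,U)$. There are at most $\binom{q}{a_d}\le q^{a_d}$ choices of $A$. For $U$ we have a $d$-dimensional subspace of $\F^{A}$; a basis gives $q^{da_d}$ choices, and fixing $Z$ adds only $2^{a_d}$, which is absorbed. Together these give the factor $q^{da_d+a_d}$.

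The second count is the types $(\mbf s,\mbf t)$. For each $i$ there are at most $(z+1)(z'+1)(d+1)\le a_d^3$ triples $(s_1^{(i)},s_2^{(i)},t^{(i)})$, which produces the factor $(a_d)^{3n}$ appearing in every bullet. Next we plug in the per-class bounds. In the case $d\ge\ell+1$, we use $\sum_i t^{(i)}=T_d+w$ and $T_d\le(1-\delta-\eta/2)dn+1$. The factor $a_d^{-(1-\delta)dn+\sum_i t^{(i)}}$ is then at most $a_d^{w+1}$, since $-\eta dn/2\le 0$. With $a_d^{2(\ell+1)n}$ this yields $(a_d^{3}a_d^{2(\ell+2)})^n a_d^w$, the stray $+1$ and slack being absorbed by $2(\ell+2)$ versus $2(\ell+1)$.

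For the two bullets stated for the whole of $\mc F_d^{\on{low}}$, we cannot keep $w$, so we bound $t^{(i)}\le d$ and count each $t^{(i)}$ already inside the $a_d^3$ type factor. In the range $(\ell+1)/2\le d<\ell+1$, the middle-case bound $2^{dn}a_d^{3dn}$ gives $((a_d)^3(2a_d)^{3d})^n$. For $d<(\ell+1)/2$, we use the second estimate in the third case, namely $2^{dn}a_d^{dn}(100d/\eta^2)^{(\ell+1)n}$. This sits within $(2a_d)^{4dn}(100d/\eta^2)^{(\ell+1)n}$ once the type factor $a_d^{3n}\le a_d^{3dn}$ is included (using $d\ge1$).

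The final bullet, with $d\le\min\{\log(1/\eta),(\ell+1)/2\}$, uses the $w$-sensitive branch $6^{2dn(\ell+1)}(100d/\eta^2)^{w(\ell+1)}$ with $d\le\log(1/\eta)$. The remaining task is to make $a_d^{\sum t^{(i)}}$ appear only through $(2a_d)^{dn}$. I expect this step to be the main obstacle: $\sum t^{(i)}=T_d+w$ contributes $a_d^{(1-\delta)dn+w}$, while the target only allows $a_d^{dn}$ times a power of $(100\log(1/\eta)/\eta^2)$. My plan is to absorb $a_d^{w}$ into $(100\log(1/\eta)/\eta^2)^{w(\ell+1)}$. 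Since $a_d=b+hd=O(\ell d/\eta^2)$, this needs $a_d\le(100\log(1/\eta)/\eta^2)^{\ell+1}$, which holds for $\ell\ge1$ and the stated parameters. The $+1$ in $(w+1)$ covers the ceiling in $T_d$. The delicate point is verifying these constant-level inequalities; if the exact exponent $a_d^{dn}$ does not match, I would fall back on bounding $a_d^{T_d}\le a_d^{dn}$ directly.
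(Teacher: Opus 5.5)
Your decomposition is the paper's. You take a union over $A$ (at most $q^{a_d}$ choices), over $U$ (at most $q^{da_d}$), and over the type sequences $(\mbf s,\mbf t)$ (at most $a_d^{3n}$), and then plug in the per-class bounds from the preceding lemma. Your treatment of the first three bullets is correct: you use $\sum_i t^{(i)}\le(1-\delta-\eta/2)dn+1+w$ in the first, and $t^{(i)}\le d$ in the second and third.

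The gap is in the fourth bullet. The per-class bound there already contributes $(100d/\eta^2)^{w'(\ell+1)}$, where $w'=\sum_i t^{(i)}-(1-\delta-\eta/2)dn=T_d+w-(1-\delta-\eta/2)dn\in[w,w+1)$. With $d\le\log(1/\eta)$, this consumes the target factor $(100\log(1/\eta)/\eta^2)^{(w+1)(\ell+1)}$ except for at most a $w$-independent amount of slack. Your plan to also absorb $a_d^{w}$ into $(100\log(1/\eta)/\eta^2)^{w(\ell+1)}$ spends that same budget a second time, so it cannot work. Since $w$ can be of order $(\delta+\eta/2)dn$, the leftover $a_d^{w}$ is exponential in $n$, and no constant absorbs it. For the same reason, checking $a_d\le(100\log(1/\eta)/\eta^2)^{\ell+1}$ does not help. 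Your fallback, $a_d^{T_d}\le a_d^{dn}$, still leaves $a_d^{w}$ unaccounted for.

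The missing observation is that $w$ is not free. Every $t^{(i)}=c_{A,U}(A_i)\le\dim U=d$, so $T_d+w=\sum_i t^{(i)}\le dn$; that is, $w\le dn-T_d$ whenever $\mc F^{\on{low}}_{d,w}$ is nonempty. Hence $a_d^{\sum_i t^{(i)}}\le a_d^{dn}$ outright, which is the same bound you already used in bullets two and three. The factor $a_d^{w}$ is therefore paid for by the gap between $T_d$ and $dn$, not by the $(100\log(1/\eta)/\eta^2)$ factor. This is exactly the paper's step (``using $\sum_i t^{(i)}\le dn$''). With it, the product $q^{a_d}\cdot q^{da_d}\cdot a_d^{3n}\cdot 2^{dn}a_d^{dn}\,6^{2dn(\ell+1)}(100\log(1/\eta)/\eta^2)^{(w+1)(\ell+1)}$ is precisely the fourth displayed bound.
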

\begin{proof}
    The $q^{d a_d}$ lets us quantify over the choices of the matrix $U$ in each of the cases. Additionally, $(a_d)^{3n}$ lets us quantify over the sequences $\mbf s_1, \mbf s_2, \mbf t$. 

    Finally, in the first case we know that $\sum t^{(i)}\leq (1-\delta)dn+1+w$. In the last case, the excess in the preceding lemma is
    $T_d+w-(1-\delta-\eta/2)dn<w+1$ by the definition of $T_d$.
    Multiplying these choices by the corresponding three bounds in the preceding lemma, and using $\sum_i t^{(i)}\le dn$, gives the four displayed estimates.
\end{proof}

\section{Alphabet-Size Bounds and Applications}
\label{sec:alphabet}

We now combine the compressed curve-profile families from
\cref{sec:profile-counting} with the three fixed-pair estimates. For
RLC and RLDPC, codimension-sensitive counting is paired with a
codimension-sensitive containment probability. For RRS, the
probability is instead sensitive to the negative potential gap, and
the information-set count in \cref{thm:codim-N-count} supplies the
corresponding saving. No new profile count is needed in this section.

Throughout, let $a,b,D$, and $h$ be as in
\cref{eq:ggsw-param,eq:def-D,eq:def-h}, and suppose
$R\le1-\delta-2\eta$. This means the code has rate at least $\eta$ below the
threshold of the curve decoding profile family. For a sufficiently large absolute constant
$C_0$, the two alphabet bounds used below are
\begin{equation}
  q\ge
  \exp\left(
    C_0\left(\log(1/\eta)+\ell\right)/\eta
  \right)
  \label{eq:q-linear-final}
\end{equation}
for RLC and RLDPC, and
\begin{equation}
  q\ge
  n\exp\left(
    C_0\left(
      \ell/\eta^2
      +\ell\log(\ell+1)/\eta
    \right)
  \right)
  \label{eq:q-rrs-final}
\end{equation}
for RRS. The goal of this section is
to make the resulting union bounds quantitative at the two alphabet
sizes above.
\subsection{Random Linear Codes}

\begin{restatable}[Curve Decoding for Random Linear Codes]{theorem}{rlccurvedec}
\label{thm:rlc-curve-final}
Let $\ell\geq1$, $0<\eta\leq1/4$, $\delta\in(0,1)$, and
$R\leq1-\delta-2\eta$.  Set
\[
 a:=\ceil{\frac{100\ell^2}{\eta^3}},
 \qquad
 b:=\ceil{\frac{4\ell}{\eta}+\ell}.
\]
Then, there exists an absolute constant $c_0$ such that  if
\[
 q\geq\exp\!\left(
   c_0\!\left(
     \frac{\log (1/\eta)}{\eta}
     +\frac{\ell}{\eta}
   \right)
 \right),
\]
then, for all sufficiently large $n$, with probability at least $1- q^{-\eta n/8}$, a random linear code $C\subset\F^n$ of rate $R$ is
$(\ell,\delta,a,b)$-curve-decodable.
\end{restatable}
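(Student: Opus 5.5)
By \cref{lem:compressed-profile-cover}, if $C$ fails to be $(\ell,\delta,a,b)$-curve-decodable, then $C$ contains either a profile in $\mc F^{\on{high}}$ or a profile in some $\mc F^{\on{low}}_{d,w}$ with $d<D$ and $w\ge0$. We therefore union bound over these families, using \cref{lem:rlc-fixed-pair} for each profile:
\[
\mb P\left(C\text{ contains }(\mbf V,U)\right)\le q^{n\phi_{\mbf V}(U,R)}.
\]
Since $\phi_{\mbf V}(U,R)=R\dim U-\frac1n\sum_i c(A_i)$, the bound depends on the profile only through $d=\dim U$ and $\sum_i c(A_i)$. This is exactly the data used to stratify the families, which is why the codimension-sensitive counts cancel cleanly against the probability.

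\textbf{High-rank part.} By \cref{lem:high-rank-family-bound}, every profile in $\mc F^{\on{high}}$ has $\phi\le-\eta D$, and $|\mc F^{\on{high}}|\le \binom qD 2^{Dn}$. The total contribution is therefore at most $q^D 2^{Dn} q^{-\eta Dn}$. This is at most $q^{-\eta Dn/2}$ for large $n$ once $q^{\eta/2}\ge 2$, which is a mild requirement.

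\textbf{Low-rank part.} For $(\mbf V,U)\in\mc F^{\on{low}}_{d,w}$ we have $\sum_i c(A_i)=T_d+w\ge(1-\delta-\eta/2)dn+w$. With $R\le 1-\delta-2\eta$ this gives
\[
n\phi\le -\tfrac32\eta dn - w.
\]
We multiply this by the cases of \cref{lem:low-family-count}; the factor $q^{da_d+a_d}$ is absorbed for $n$ large.
\begin{itemize}
\item For $d\ge\ell+1$, the per-coordinate factor is $2^d a_d^{2\ell+7}$. Writing $\log a_d=O(\log(\ell/\eta))$ and using $d\ge\ell+1$, this is $q^{\eta d/4}$ provided $\log q\gtrsim \log(\ell/\eta)/\eta$. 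The $a_d^w$ factor is beaten by $q^{-w}$.
\item For $(\ell+1)/2\le d<\ell+1$, the factor $(2a_d)^{3d}a_d^3$ needs $\log q\gtrsim \log(\ell/\eta)/\eta$, which again is at most the assumed bound.
\item For $\log(1/\eta)<d<(\ell+1)/2$, the factor is $(2a_d)^{4d}(100d/\eta^2)^{\ell+1}$. Against $q^{-\frac32\eta d}$ this needs $\eta d\log q\gtrsim d\log(\ell/\eta)+\ell\log(d/\eta)$. Using $d>\log(1/\eta)$, this is covered by $\log q\gtrsim(\ell+\log(1/\eta))/\eta$ up to the ratio $\ell\log(d/\eta)/(\eta d)$.
\item For $d\le\log(1/\eta)$, we use the last bullet of \cref{lem:low-family-count}, which has a $6^{2d(\ell+1)}$ factor per coordinate. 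Against $q^{-\frac32\eta d}$ this needs $\log q\gtrsim \ell/\eta$, plus $\log(a_d)/\eta$ from the $a_d^{3}(2a_d)^d$ part. The $w$-dependence $(100\log(1/\eta)/\eta^2)^{(w+1)(\ell+1)}q^{-w}$ sums geometrically once $\log q\gtrsim \ell\log(1/\eta)$.
\end{itemize}
Summing over $d\le D$ and $w\ge0$ and over the high-rank part gives failure probability $q^{-\eta n/8}$ for large $n$.

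\textbf{Main obstacle.} The hard step is making the exponents balance in the middle ranges so that no $\ell\log\ell/\eta$ term appears. The $\log a_d=\Theta(\log(\ell/\eta))$ factors attached to $d$ or $\ell+1$ exponents must be paid by $\eta d\log q$, and the crude estimates above leave terms like $\ell\log(\ell/\eta)/\eta$ in the intermediate strata. I would first try to split at $d\approx\ell$ and check that the $(\ell+1)$-exponent terms only arise when $d\ge(\ell+1)/2$, so that $\ell\log a_d\le 2d\log a_d$. If a residual $\log\ell$ remains, I would refine the Case 3 bound using $\prod(6a_d/s^{(i)})$ with $s^{(i)}\gtrsim b$, which gives $a_d/b=O(d/\eta)$ and hence $\log(d/\eta)$ rather than $\log(\ell/\eta)$.
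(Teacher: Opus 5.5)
Your proposal is correct and follows essentially the same route as the paper: the cover from \cref{lem:compressed-profile-cover}, the fixed-pair bound $q^{n\phi_{\mbf V}(U,R)}$ with $n\phi_{\mbf V}(U,R)\le-\tfrac32\eta dn-w$ on the low-rank strata, the high-rank count, and the same four $d$-ranges of \cref{lem:low-family-count}, up to harmless constants (e.g.\ the high-rank term is $\le q^{-\eta Dn/4}$ rather than $q^{-\eta Dn/2}$). The ``main obstacle'' you flag does not arise: for $d\ge(\ell+1)/2$ the $(\ell+1)$-exponents are paid by $d$, leaving only $\log a_d/\eta=O((\log\ell+\log(1/\eta))/\eta)$ with $\log\ell\le\ell$, and for $\log(1/\eta)<d<(\ell+1)/2$ your residual ratio satisfies $\ell\log(d/\eta)/(\eta d)\le 2\ell/\eta$ because $\log d\le d$ and $\log(1/\eta)<d$, so no further refinement is needed.
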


\begin{proof}
Let $\mc F:=\mc F_{\ell,\delta,a,b}$. By
\cref{lem:high-rank,lem:compressed-profile-cover}, failure of curve
decodability produces a contained profile in either
$\mc F^{\on{high}}$ or one of the low-rank families
$\mc F_{d,w}^{\on{low}}$.

For the high-rank family,
\cref{lem:high-rank-family-bound,lem:rlc-fixed-pair} give
\[
  \mb{P}\left(C\text{ contains }\mc F^{\on{high}}\right)
  \le
  \binom{q}{D}2^{Dn}q^{-\eta Dn}
  \leq \left(q^{D(1-\eta n/2)}\cdot 2^{Dn(1-\eta\log q/2)}\right) \leq q^{-\eta n /4}.
\]
The final inequality follows as long as $n\ge 2/\eta$ and $\log q \ge 4/\eta$.

For a low-rank profile, let $d:=\dim_{\F}U$ and
$T:=\sum_{i=1}^n c(A_i)$. By the definition of $w$,
\begin{equation}
  G:=T-Rdn
  \ge\frac{3\eta}{2}dn+w.
  \label{eq:low-profile-gap}
\end{equation}

\begin{remark}\label{rmk:RLDPC-stuff}
    Observe that this gives that for any profile, probability of containment is at most $q^{-3\eta dn/2-w}$. Throughout, we will only use $q^{-\eta dn-w}$, and thus observe that the same argument directly applies to RLDPCs in the next subsection.
\end{remark}

We now recount the profile counts for all $d, w$:

\curvedeccount*

Additionally, $n\phi_{\mbf V}(U,R)=-G$, the fixed-pair estimate in
\cref{lem:rlc-fixed-pair} is $q^{-G}$. Multiplying this estimate by
the four corresponding bounds in \cref{lem:low-family-count}, we get:

\paragraph{For $d \ge \ell+1$:}

\begin{align*}
    \sum_{d\ge (\ell+1)}\sum_{w\ge 0}\mb{P}\left(C \text{ contains }\mathcal F_{d, w}^{\on{low}}\right) &\leq q^{2da_d-\eta n/4}\cdot 2^{dn(1-\eta \log q/4)}\cdot \left(a_d^{7\ell+7}q^{-\eta d/4}\right)^n\cdot q^{-\eta dn/4}\sum_{w\ge 0} (a_d/q)^w\\
    & \leq \sum_{d\ge (\ell+1)} 1 \cdot 1 \cdot (a_d^7/q^{\eta/4})^{dn} \cdot q^{-\eta dn/4} \cdot \sum_{w\ge 0}2^{-w}\\
    & \leq \sum_{d\ge (\ell+1)} 2q^{-\eta n d/4}\\
    & \leq 4q^{-\eta n/4}
\end{align*}

Each of the first terms is at most $1$ since $n\ge 8da_d/\eta$, $\log q \ge 4/\eta$, $q\ge a_d^{28/\eta}$ and $q\ge 2a_d$. The final geometric sum over $d$ also follows since $n\ge 4/\eta$.

\paragraph{For $(\ell+1)/2\le d<\ell+1$:}
\begin{align*}
    \sum_{(\ell+1)/2\le d<\ell+1} \mb{P}\left(C \text{ contains }\mathcal F_{d}^{\on{low}}\right) & \leq \sum_{(\ell+1)/2\le d<\ell+1} q^{2da_d - \eta n/4}\cdot \left((2a_d)^{6}q^{-\eta/2}\right)^{dn}\cdot q^{-\eta n/4}\\
    & \leq \sum_{d\ge (\ell+1)/2} 1\cdot (1/2)^{dn} \cdot q^{-\eta n/4}\\
    & \leq 2q^{-\eta n/4}
\end{align*}
The first relation follows since $n\ge 8da/\eta$, then, $q\ge (4a)^{12/\eta}$, so the second geometric sum is at most $2$.

\paragraph{For $\log(1/\eta)\le d<(\ell+1)/2$:}
\begin{align*}
    \mb{P}\left(C \text{ contains }\mathcal F_{d}^{\on{low}}\right) & \leq q^{2da_d - \eta n/4}\cdot \left((2a_d)^{4}q^{-\eta/4}\right)^{dn}\cdot \left((100d/\eta^2)^{(\ell+1)/d}q^{-\eta/4}\right)^{dn}\cdot q^{-\eta dn/4}\\
    & \leq 1\cdot 1\cdot \left((40e^2)^{(\ell+1)}q^{-\eta/4}\right)^{dn}\cdot q^{-\eta dn/4}\\
    & \leq  1\cdot 1 \cdot 1\cdot q^{-\eta nd/4}
\end{align*}
The first relation follows since $n\ge 8da/\eta$, then $q\ge 2\cdot (2a)^{16/\eta}$, and then $q\ge (40e^2)^{4(\ell+1)/\eta}$.

Now, summing over $d$ we get the bound of $2q^{-\eta n/4}$.

\paragraph{For $d \leq \min(\log(1/\eta), (\ell+1)/2)$:}

\begin{align*}
    \sum_{w\ge 0}\mb{P}\left(C \text{ contains }\mathcal F_{d, w}^{\on{low}}\right) &\leq q^{2da_d-\eta n/4}\cdot \left((2a_d)^{4}q^{-\eta/4}\right)^{dn}\cdot \left(36^{\ell+1}q^{-\eta/4}\right)^{dn}\notag\\
    &\qquad\cdot q^{-\eta dn/4}\left(\frac{100\log(1/\eta)}{\eta^2}\right)^{\ell+1}\sum_{w\ge 0} \left(\frac{\left(100\log(1/\eta)/\eta^2\right)^{\ell+1}}q\right)^w\\
    & \leq 1 \cdot 1 \cdot 1 \cdot q^{-\eta dn/4} \cdot \sum_{w\ge 0}2^{-w}\\
    & \leq 2q^{-\eta n d/4}
\end{align*}

The first relation follows since $q^{2da_d}\left(100\log(1/\eta)/\eta^2\right)^{\ell+1}\le q^{\eta n/4}$ for sufficiently large $n$, then $q\ge 2\cdot (2a)^{16/\eta}$, then $q\ge 10^{20(\ell+1)/\eta}$, and finally the geometric term follows since $q\ge 2\cdot \left(100\log (1/\eta)/\eta^2\right)^{\ell+1}$.

Now, summing over $d$ we get the bound of $4q^{-\eta n/4}$.

Thus, \[\mb{P}\left(C\text{ contains } \mathcal F^{\on{low}}\right)\leq 4q^{-\eta n/4}+2q^{-\eta n/4}+2q^{-\eta n/4}+4q^{-\eta n/4}\leq 12q^{-\eta n/4}.\]

Thus, $\mb{P}\left(C \text{ is } (\ell, \delta, a, b)\text{-curve decodable}\right)\geq 1- 13q^{-\eta n/4}\geq 1-q^{-\eta n/8}$ whenever $n\ge 32/\eta$ but this is $1-o(1)$ as desired.

\end{proof}

To obtain the large-output consequences without adding an
$\exp(O(1/\eta^2))$ requirement to the RLC alphabet, we use the
sharpened random-linear list-decoding theorem in
\cref{thm:rlc-ld}, quoted from \cite[Corollary~3.11]{gg26}.

\begin{corollary}[PG, CA, and MCA for random linear codes]
\label{cor:rlc-proximity-package}
Under the hypotheses of \cref{thm:rlc-curve-final}, suppose that the
absolute constant in \cref{eq:q-linear-final} is sufficiently
large.
Then, with high probability, the following hold.
\begin{enumerate}
\item For every integer $t\ge1$, $C$ is curve-decodable with parameters
\[
  \left(
    \ell,\delta,
    (t-1)\ceil{(1-R)/\eta}+a,
    t
  \right).
\]
\item For every $m>1$, $C$ has correlated agreement and proximity gap
with parameters
\[
  \left(
    \ell,\delta,
    \frac{\ell m\ceil{(1-R)/\eta}+a}{q},
    \frac{1}{m}
  \right).
\]
\item The code $C$ has mutual correlated agreement with parameters
\[
  \left(
    \ell,\delta,
    \frac{\ell n\ceil{(1-R)/\eta}+a}{q}
  \right).
\]
\end{enumerate}
\end{corollary}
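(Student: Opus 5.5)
The plan is to derive all three items from \cref{thm: main-ca-mca-gg}, applied with $s=1$ and $L:=\ceil{(1-R)/\eta}$. That theorem needs two inputs: $(\ell,\delta,a,b)$ curve-decodability, and $\left(\delta\left(1+\frac{\ell}{b-\ell}\right),L\right)$ list-decodability. The first comes directly from \cref{thm:rlc-curve-final}, which holds with probability at least $1-q^{-\eta n/8}$ under the stated alphabet bound, with $a=\ceil{100\ell^2/\eta^3}$ and $b=\ceil{4\ell/\eta+\ell}$. The second will come from \cref{thm:rlc-ld}. A union bound over the two failure events then gives both properties simultaneously with high probability, and the three conclusions of \cref{thm: main-ca-mca-gg} are exactly items (1)--(3) once $L$ is substituted.

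The only real work is checking that the list-decoding radius needed by \cref{thm: main-ca-mca-gg} is covered by \cref{thm:rlc-ld}. Since $b-\ell\ge 4\ell/\eta$, we have $\ell/(b-\ell)\le \eta/4$, so
\[
\delta\left(1+\frac{\ell}{b-\ell}\right)\le \delta+\frac{\eta\delta}{4}\le \delta+\frac{\eta}{4}.
\]
Using $R\le 1-\delta-2\eta$, i.e.\ $\delta\le 1-R-2\eta$, this is at most $1-R-7\eta/4\le 1-R-3\eta/2$. List-decodability is monotone under shrinking the radius, so the $\left(1-R-3\eta/2,\ceil{(1-R)/\eta}\right)$ list-decodability from \cref{thm:rlc-ld} implies the required $\left(\delta(1+\ell/(b-\ell)),L\right)$ list-decodability.

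It remains to reconcile the alphabet requirements. \cref{thm:rlc-ld} asks for $q\ge \exp(c_0'\log(2/\eta)/\eta)$, which is dominated by \cref{eq:q-linear-final} once the absolute constant $C_0$ is large enough; this is where the hypothesis on the constant is used. The failure probability $8q^{-\eta n/8}$ from \cref{thm:rlc-ld} also vanishes as $n\to\infty$. There is no substantial obstacle here: the step needing care is the radius arithmetic above, which depends on the specific choice of $b$ guaranteeing $\ell/(b-\ell)\le\eta/4$, together with checking that the $\eta/4$ loss fits inside the $2\eta$ rate slack. The choice of this particular list-decoding theorem also matters: using \cref{thm:rlc-ld} instead of a generic RLC list-decoding bound avoids an $\exp(1/\eta^2)$ requirement that would exceed \cref{eq:q-linear-final}.
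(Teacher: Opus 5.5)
Your proposal is correct and follows essentially the same route as the paper: you take list-decodability from \cref{thm:rlc-ld}, absorb its alphabet requirement into \cref{eq:q-linear-final}, and check the radius via $\delta(1+\ell/(b-\ell))\le\delta+\eta/4\le 1-R-7\eta/4<1-R-3\eta/2$. You then combine this with \cref{thm:rlc-curve-final} and \cref{thm: main-ca-mca-gg}; your explicit union bound over the two failure events and the check $b-\ell\ge 4\ell/\eta$ only spell out steps the paper leaves implicit.
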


\begin{proof}
By \cref{thm:rlc-ld}, whose alphabet requirement is absorbed by
\cref{eq:q-linear-final}, with high probability $C$ is
$\left(1-R-3\eta/2,\ceil{(1-R)/\eta}\right)$ list-decodable. On the other hand,
the definition of $b$ gives
\[
  \delta\left(1+\frac{\ell}{b-\ell}\right)
  \le\delta+\frac{\eta}{4}
  \le1-R-\frac{7\eta}{4}
  <1-R-\frac{3\eta}{2}.
\]
Thus \cref{thm:rlc-curve-final,thm: main-ca-mca-gg} gives all three
claims.
\end{proof}

\subsection{Random LDPC Codes}
The random model used is Gallager's ensemble of RLDPCs. See \cref{def:rldpc} for more details.
\begin{restatable}[Curve Decoding for Random LDPC Codes]{theorem}{ldpccurvedec}
\label{thm:rldpc-curve-final}
Let $C$ be sampled from the Gallager ensemble of \cref{def:rldpc}.
Let $\ell\geq1$, $0<\eta\leq1/4$, $\delta\in(0,1)$, and
$R\leq1-\delta-2\eta$.  Set
\[
 a:=\ceil{\frac{100\ell^2}{\eta^3}},
 \qquad
 b:=\ceil{\frac{4\ell}{\eta}+\ell}.
\]
Then, there exists an absolute constant $c_0$ such that if
\begin{equation}
 q\geq\exp\!\left(
   c_0\!\left(
     \frac{\log (1/\eta)}{\eta}
     +\frac{\ell}{\eta}
   \right)
 \right),
 \label{eq:rldpc-rate-independent}
\end{equation}
then, for all odd $s\ge s_0(\ell,\eta,\delta,R,q)$ with
$(1-R)s\in\N$ and all sufficiently large multiples $n$ of $s$, with high probability, a random $s$-sparse LDPC code $C\subset\F^n$ of rate $R$ is $(\ell,\delta,a,b)$-curve-decodable.
\end{restatable}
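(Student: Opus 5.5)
The plan is to rerun the proof of \cref{thm:rlc-curve-final} verbatim, replacing the RLC fixed-pair estimate \cref{lem:rlc-fixed-pair} by the conditional RLDPC estimate of \cref{prop:rldpc-fixed-pair}. This is the route flagged in \cref{rmk:RLDPC-stuff}.

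First we condition on the event $\mathsf{Good}$. By \cref{prop:rldpc-fixed-pair} it has probability $1-o_{n\to\infty}(1)$, so it suffices to bound the conditional probability that $C$ contains $\mc F^{\on{high}}$ or some $\mc F^{\on{low}}_{d,w}$. By \cref{lem:compressed-profile-cover}, failure of curve decodability forces one of these. All profiles involved are $r$-local with $r\le a$, since the row span lives in $\F^A$ with $|A|\le a$. So we invoke \cref{prop:rldpc-fixed-pair} with locality $r=a$ and with slack $\eta'$ in place of $\eta$. We choose $\eta':=\eta/4$, with $\eta'$ applied per unit of $n$. This determines $s_0=s_0(\eta',a,R,q)$, which depends only on $(\ell,\eta,\delta,R,q)$ as the statement allows. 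The conditional containment bound for a pair is then $q^{n\phi_{\mbf V}(U,R)+\eta n/4}$.

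The key step is absorbing the extra $q^{\eta n/4}$. For the high-rank family, \cref{lem:high-rank-family-bound} gives $n\phi\le-\eta Dn$ with $D\ge 1$. Hence the bound becomes $q^{-3\eta Dn/4}$, and the computation in the RLC proof goes through with constants $\eta/4$ replaced by a slightly smaller multiple of $\eta$. For the low-rank families, \cref{eq:low-profile-gap} gives $-n\phi=G\ge \tfrac{3\eta}{2}dn+w$. The RLC argument used only $q^{-\eta dn-w}$, so after losing $\eta n/4\le \eta dn/4$ we still have $q^{-\tfrac{5\eta}{4}dn-w}\le q^{-\eta dn-w}$. Each of the four cases of \cref{lem:low-family-count} therefore yields the same geometric sums as before.

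The resulting conditional failure probability is $O(q^{-\eta n/8})$, under the same alphabet hypothesis \cref{eq:rldpc-rate-independent} up to adjusting $c_0$, and for $n$ a sufficiently large multiple of $s$. Adding the $o(1)$ probability of $\overline{\mathsf{Good}}$ gives the high-probability conclusion.

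The main obstacle is bookkeeping rather than ideas. The error term $\eta n$ in \cref{prop:rldpc-fixed-pair} is additive in $n$ and not proportional to $d$, so I need $d\ge1$ in every family. This holds because pruning guarantees $c(A')>0$. I also need to make sure that the $q^{2da_d}$ and $\binom{q}{D}$ factors are still dominated once only $q^{-\eta dn/4}$-type margins remain. I would check the $d=1$ low-rank case first, since it has the least slack.
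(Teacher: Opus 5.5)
Your proposal is correct and follows the paper's proof. In both, you condition on $\mathsf{Good}$ and apply the RLDPC fixed-pair estimate with a fraction of $\eta$ as auxiliary slack; the paper uses $\eta/2$ where you use $\eta/4$, which is immaterial. You then absorb the additive $\eta n$-type loss using $d\ge 1$ (low-rank) and $D\ge 1$ (high-rank) and rerun the RLC union bounds verbatim. Your explicit handling of how $s_0$ depends on the locality $r\le a$ is a detail the paper leaves implicit.
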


\begin{proof}
Condition on the event $\mathsf{Good}$ from
\cref{prop:rldpc-fixed-pair}, and apply that proposition with
auxiliary error $\zeta:=\eta/2$. For a low-rank profile,
$n\phi_{\mbf V}(U,R)=-G\le-3\eta dn/2-w$, so its conditional
containment probability is at most $q^{-\eta dn-w}$. For a high-rank
profile, the same estimate and \cref{lem:high-rank-family-bound} give
conditional probability at most
$q^{-\eta Dn+\eta n/2}\le q^{-\eta Dn/2}$. Thus the four low-rank
union bounds above apply verbatim, and the high-rank union bound applies
with $\eta$ replaced by $\eta/2$. Finally,
$\mb{P}\left(\mathsf{Good}^{\mathsf c}\right)=o(1)$, which proves the result.
\end{proof}

Thus the RLDPC curve-decoding result is formally the RLC
argument with two existing inputs substituted: the RLDPC fixed-pair
estimate from \cref{prop:rldpc-fixed-pair} and the same compressed
profile count from \cref{lem:low-family-count}.  Together with the
refined list-decoding count recorded in \cref{thm:rldpc-ld}, this uses
the ideas of \cite{gg26,ggsw} but is entirely formal.

\begin{corollary}[PG, CA, and MCA for random LDPC codes]
\label{cor:rldpc-proximity-package}
Under the hypotheses of \cref{thm:rldpc-curve-final}, with high
probability the following hold.
\begin{enumerate}
\item For every integer $t\ge1$, $C$ is curve-decodable with parameters
\[
  \left(
    \ell,\delta,
    (t-1)\ceil{2(1-R)/\eta}+a,
    t
  \right).
\]
\item For every $m>1$, $C$ has correlated agreement and proximity gap
with parameters
\[
  \left(
    \ell,\delta,
    \frac{\ell m\ceil{2(1-R)/\eta}+a}{q},
    \frac{1}{m}
  \right).
\]
\item The code $C$ has mutual correlated agreement with parameters
\[
  \left(
    \ell,\delta,
    \frac{\ell n\ceil{2(1-R)/\eta}+a}{q}
  \right).
\]
\end{enumerate}
\end{corollary}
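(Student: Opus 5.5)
The plan is to run the argument of \cref{cor:rlc-proximity-package} again, substituting the RLDPC inputs. The proof will have three steps.

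First, by \cref{thm:rldpc-curve-final}, under the stated hypotheses the Gallager code $C$ is $(\ell,\delta,a,b)$-curve-decodable with high probability.

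Second, I will invoke \cref{thm:rldpc-ld}. Its alphabet requirement $q\ge\exp(c_0\log(2/\eta)/\eta)$ is implied by \cref{eq:rldpc-rate-independent} once the absolute constant there is taken large enough. Its condition on $s$ is absorbed by enlarging $s_0$ to the maximum of the two thresholds: both statements require odd $s$ with $(1-R)s\in\N$ and $n$ a large multiple of $s$, so the conditions are compatible. This gives, with high probability, that $C$ is $\left(1-R-3\eta/2,\ceil{2(1-R)/\eta}\right)$ list-decodable. A union bound over the two failure events, each of probability $o(1)$, shows that both properties hold simultaneously with high probability.

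Third, I will check the list-decoding radius needed in \cref{thm: main-ca-mca-gg}. Since $b\ge 4\ell/\eta+\ell$, we have $b-\ell\ge 4\ell/\eta$. Hence
\[
\delta\left(1+\frac{\ell}{b-\ell}\right)\le\delta+\frac{\eta}{4}\le 1-R-\frac{7\eta}{4}<1-R-\frac{3\eta}{2},
\]
where the middle inequality uses $\delta<1$ and $R\le1-\delta-2\eta$. Because list-decodability is monotone in the radius, $C$ is $\left(\delta(1+\ell/(b-\ell)),L\right)$ list-decodable with $L=\ceil{2(1-R)/\eta}$. Feeding $L$ and the curve-decodability into the three items of \cref{thm: main-ca-mca-gg} yields exactly the three claimed parameter sets.

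The only point needing care is the compatibility of the parameter regimes in the second step, namely that a single choice of $q$, $s_0$ and $n$ satisfies both theorems. This is bookkeeping rather than a real obstacle.
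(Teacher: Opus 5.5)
Your proposal is correct and follows the paper's proof essentially verbatim: list-decodability from \cref{thm:rldpc-ld} (its alphabet requirement absorbed by \cref{eq:rldpc-rate-independent}), the radius check $\delta(1+\ell/(b-\ell))\le 1-R-7\eta/4<1-R-3\eta/2$, and then \cref{thm: main-ca-mca-gg} combined with \cref{thm:rldpc-curve-final}. Your explicit reconciliation of the two $s_0$ thresholds and the union bound over the two failure events is bookkeeping that the paper leaves implicit.
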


\begin{proof}
By \cref{thm:rldpc-ld}, whose alphabet requirement is absorbed by
\cref{eq:rldpc-rate-independent}, with high probability $C$ is
$\left(1-R-3\eta/2,\ceil{2(1-R)/\eta}\right)$ list-decodable. The estimate
\[
  \delta\left(1+\frac{\ell}{b-\ell}\right)
  \le1-R-\frac{7\eta}{4}
  <1-R-\frac{3\eta}{2}
\]
checks the list decoding hypothesis of
\cref{thm: main-ca-mca-gg}. Combine that theorem with
\cref{thm:rldpc-curve-final}.
\end{proof}

\subsection{Random Reed--Solomon Codes}

\begin{restatable}[Curve Decoding for Random Reed--Solomon Codes]{theorem}{rrscurvedec}
\label{thm:rrs-curve-dec}
Let $\ell\geq1$, $0<\eta\leq1/4$, $\delta\in(0,1)$, and
$R\leq1-\delta-2\eta$.  Set
\[
 a:=\ceil{\frac{100\ell^2}{\eta^3}},
 \qquad
 b:=\ceil{\frac{4\ell}{\eta}+\ell}.
\]
Let $C\subset\F^n$ be a random Reed--Solomon code of rate $R$,
with uniformly random distinct evaluation points. Then, there exists an absolute constant $c_0$ such that  if
\begin{equation}
 q\geq n\exp\left(
 c_0\left(\ell/\eta^2+\ell\log(\ell+1)/\eta\right)
 \right),
 \label{eq:rrs-rate-independent}
\end{equation}
then, for all sufficiently large $n$, with high probability $C$ is
$(\ell,\delta,a,b)$-curve-decodable.
\end{restatable}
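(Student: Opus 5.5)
The proof mirrors \cref{thm:rlc-curve-final}, with the RLC fixed-pair estimate replaced by the gap-sensitive RRS estimate \cref{thm:low-lms}. By \cref{lem:compressed-profile-cover}, failure of curve decodability means $C$ contains a profile in $\mc F^{\on{high}}$ or in some $\mc F^{\on{low}}_{d,w}$ with $d<D$. Every profile in these families has a uniform negative potential gap on every nonzero subspace after maximal quotienting. For $\mc F^{\on{high}}$, \cref{lem:high-rank-family-bound} gives $\phi_{\mbf W}(\F^S,R)\le -\eta D$. For $\mc F^{\on{low}}$, \cref{prop:low-rank-gap} gives $\phi_{\mbf V}(U,R)\le -3\eta d/2$, and via \cref{eq:root-gap-implies-threshold} we get $R\le R_{\mbf V}(U)-\eta$. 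Then \cref{lem:maximal-quotient} produces a quotient pair of dimension at most $d\le D$ satisfying the hypothesis of \cref{thm:low-lms} with slack $\eta$. Since $\eta n\ge 2D$ for large $n$, \cref{thm:rrs-equiv}(1) bounds each fixed-pair containment probability by
\[ D\left(\frac{2eDk}{\eta(q-n)}\right)^{\eta n/2}. \]
We use this rather than $r$, since the ambient dimension after the identification $U\simeq\F^d$ is $d$. The hypothesis $q>\max\{n,D(k-1)\}$ is implied by \cref{eq:rrs-rate-independent}. Writing $q-n\ge q/2$, each probability is at most $(4eDn/(\eta q))^{\eta n/2}$. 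Set $Q:=q/n$; then the bound is $(c\,D/(\eta Q))^{\eta n/2}$.

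Next I union bound using the profile counts. For the high family, $|\mc F^{\on{high}}|\le\binom qD 2^{Dn}$. This leaves a factor $q^{D}$, which is negligible against the exponent $\eta n/2$ for large $n$ since $q$ is polynomial in $n$. The surviving per-coordinate factor is $(2^{D}(cD/\eta Q)^{\eta/2})^{n}$. This is $o(1)$ once $Q\ge (D/\eta)^{O(1)}2^{O(D/\eta)}=\exp(O(\ell/\eta^2))$. For the low families, the relevant counts are the $w$-free bounds of \cref{lem:low-family-count}: for $d\ge\ell+1$, summing over $w$, we have $|\mc F^{\on{low}}_d|$, and I can also use $\sum_i t^{(i)}\le dn$, which absorbs $a_d^w$. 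These give per-coordinate factors at most $a_d^{O(\ell+d)}2^{d}(100d/\eta^2)^{\ell+1}$. The prefactors $q^{da_d+a_d}$ are again polynomial in $n$ and are killed by the exponent $\eta n/2$.

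The key inequality is therefore
\[ \left(a_D^{O(\ell+D)}(100D/\eta^2)^{\ell+1}\right)\cdot (cD/\eta Q)^{\eta/2}\le 1/2. \]
It holds if $\log Q\ge C_0\bigl((\ell+D)\log a_D+\ell\log(D/\eta)\bigr)/\eta$. Plugging in $D=O(\ell/\eta)$ and $a_D=b+hD=O(\ell^2/\eta^3)$, this is $\log Q=O\bigl(\ell\log(\ell/\eta)/\eta^2\bigr)$.

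The main obstacle is exactly here: the crude count above yields $\ell\log(1/\eta)/\eta^2$ rather than the claimed $\ell/\eta^2+\ell\log(\ell+1)/\eta$. Unlike RLC, the RRS probability does not decay with the codimension sum $T$, so the codimension-sensitive cancellation is unavailable. To recover the sharper bound I would split by $d$, as in the RLC proof, and exploit that the exponent available for the union bound is $\eta n/2$ only after accounting for the true gap. Concretely, I would apply \cref{thm:low-lms} directly to the pair with its actual gap: for $d$ not too small, the quotient's gap $3\eta d/2$ relative to dimension $d$ gives exponent $\Omega(\eta n)$ regardless of $d$. That forces the per-coordinate count to be at most $Q^{\eta/4}$. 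The $d$-dependent part, $a_d^{O(d)}$, then contributes $d\log a_d/\eta\le D\log(\ell/\eta)/\eta$. I expect the $\log(1/\eta)$ to be absorbed by using the Case 1/2 bounds $2^{dn}a_d^{O(\ell)n}$ rather than $a_d^{dn}$, so that the $d$-dependence is only $2^{d}$, giving $D/\eta=O(\ell/\eta^2)$. The $\ell$-dependence is then $\ell\log a_d/\eta$, and using $(100d/\eta^2)^{\ell+1}$ in the small-$d$ case gives $\ell\log(\ell/\eta)/\eta$. The leftover $\ell\log(1/\eta)/\eta$ is dominated by $\ell/\eta^2$, yielding \cref{eq:rrs-rate-independent}. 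Summing the geometric series over $d\le D$ and adding the high-rank term shows the failure probability is $o(1)$.
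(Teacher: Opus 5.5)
You have the right skeleton (\cref{lem:compressed-profile-cover}, maximal quotienting, \cref{thm:low-lms}), and your crude union bound is correct. However, it only proves $q\ge n\exp\bigl(O(\ell\log(\ell/\eta)/\eta^2)\bigr)$, which is weaker than the statement in both terms. The proposed refinement does not close this gap, because it rests on a false premise: that the RRS probability does not decay with the codimension sum.

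In fact it does decay, through the gap. A profile in $\mc F^{\on{low}}_{d,w}$ has $n\phi_{\mbf V}(U,R)=-G$ with $G\ge\frac{3\eta}{2}dn+w$ by \cref{eq:low-profile-gap}. So \cref{eq:root-gap-implies-threshold} and \cref{lem:maximal-quotient} give a quotient with uniform slack $\eta':=\frac{3\eta}{2}+\frac{w}{dn}$. Since $w\le(\delta+\eta/2)dn$, we have $\eta'\le\delta+2\eta\le1$. Then \cref{thm:low-lms} gives exponent $\eta'n-d\ge\frac{\eta n}{2}+\frac{w}{2D}$ for large $n$. The resulting extra factor $\bigl(\frac{2eD(k-1)}{\eta(q-n)}\bigr)^{w/(2D)}$ is the gap-sensitive replacement for the RLC codimension cancellation. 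It is the ingredient your plan is missing.

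Without this factor, your step ``use the Case 1/2 bounds $2^{dn}a_d^{O(\ell)n}$ rather than $a_d^{dn}$'' cannot be carried out. For $\ell+1\le d<D$, the Case 1 count in \cref{lem:low-family-count} is $q^{da_d+a_d}2^{dn}a_d^{(2\ell+7)n}a_d^{w}$, and $w$ ranges up to $(\delta+\eta/2)dn$. Summing over $w$, or bounding $\sum_i t^{(i)}\le dn$ as you do, reinstates a per-coordinate factor $a_d^{\Theta(d)}=\exp\bigl(\Theta(\frac{\ell}{\eta}\log\frac{\ell}{\eta})\bigr)$. Against the exponent $\eta n/2$, this forces exactly your crude $\log(q/n)\gtrsim\ell\log(\ell/\eta)/\eta^2$.

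With the $w$-decay, the sum over $w$ becomes a convergent geometric series once $\log(q/n)\gtrsim D\log(2a)=O\bigl(\ell\log(\ell+1)/\eta+\ell\log(1/\eta)/\eta\bigr)$. This condition is the true origin of the $\ell\log(\ell+1)/\eta$ term. The remaining per-coordinate factor $2^{d}a^{O(\ell)}$ with $d<D$ then needs only $\log(q/n)\gtrsim(D+\ell\log a)/\eta=O\bigl(\ell/\eta^2+\ell\log(\ell+1)/\eta\bigr)$. For $d\le\ell$, the $w$-free Case 2/3 counts already suffice, as you observed.

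One minor point: $q$ is not assumed polynomial in $n$. The prefactors $q^{O(Da)}$ should instead be absorbed by writing $q=nQ$ and spending a constant fraction of the exponent, which is routine.
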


\begin{proof}

\newcommand{\qrs}{q_{\mathsf{RS}}}

We let \begin{equation*}
  q_{\mathsf{RS}}
  :=
  n+\frac{2eDn}{\eta}\exp\!\left(
    c_1\left(
      \frac{\ell+1}{\eta^2}
      +\frac{(\ell+1)\log(\ell+1)}{\eta}
    \right)
  \right).
\end{equation*} The field-size hypothesis in \cref{eq:q-rrs-final} dominates this auxiliary threshold after increasing its absolute constant.
We now let $q\ge \qrs$, $\kappa_0 = c_1\left(
      \frac{\ell+1}{\eta^2}
      +\frac{(\ell+1)\log(\ell+1)}{\eta}
    \right)$ and let $\alpha = q/\qrs$. For a low-rank profile
$(\mbf V,U)\in\mc F_{d,w}^{\on{low}}$, maximal quotienting followed
by \cref{thm:low-lms} gives the bound below; for a high-rank profile,
the same bound holds with $w=0$.  Indeed, in the low-rank case the
normalized root gap is at least
$3\eta/2+w/(dn)$ by \cref{eq:low-profile-gap}, while in the high-rank
case it is at least $\eta$. Since $d\le D\le\eta n/2$ for sufficiently
large $n$, the first bound in \cref{eq:relative-bound} may be weakened
uniformly as follows:
\begin{align*}
    \mb{P}\left(C \text{ contains }(\mbf V, U)\right)& \leq D
  \left(
    \frac{2\mathrm eD(k-1)}{\eta(q-n)}
  \right)^{\eta n/2+w/(2D)}\\
  & \leq \left(
    \frac{2\mathrm eD(k-1)}{\eta(\qrs-n)}
  \right)^{w/(2D)}\cdot D\left(
    \frac{2\mathrm eD(k-1)}{\eta(\qrs-n)}
  \right)^{\eta n/2}\cdot \alpha^{-\eta n/2}\\
  & \leq 2^{-\kappa_0w/(2D)}\cdot D\cdot 2^{-\kappa_0(\eta n/2)}\cdot \alpha^{-\eta n/2}\\
  & \leq 2^{-\kappa_0w/(2D)}\cdot 2^{-\kappa_0 \eta n/4}\cdot
  q^{-\kappa_0\eta n/(8\log \qrs)}.
\end{align*}

Here we used $k-1\le n$ and the definition of $\qrs$ to bound
$2\mathrm eD(k-1)/(\eta(\qrs-n))\le e^{-\kappa_0}\le2^{-\kappa_0}$;
the last line also uses $\alpha=q/\qrs$, and holds for all sufficiently
large $n$.

Let $\kappa_0 \eta /4 = c_2\left(
      \frac{\ell+1}{\eta}
      +(\ell+1)\log(\ell+1)\right) = \kappa_1$ and $m = \kappa_0 \eta n/(8\log \qrs)$. Thus, 

\[\mb{P}\left(C \text{ contains }(\mbf V, U)\right) \leq 2^{-\kappa_1 w/(\eta D)}\cdot 2^{-\kappa_1 n}\cdot q^{-m}.\]

As before, a curve-decoding failure produces either a high-rank or a
low-rank contained profile. 

We bound the high-rank family separately and split the low-rank
family according to $d$ and $w$. Now, 
\begin{align*}
    |\mathcal F^{\on{high}}|\cdot 2^{-\kappa_1 n}\cdot q^{-m} \leq q^{D-m}\cdot 2^{(D-\kappa_1)n}\leq q^{-m/2}
\end{align*}
as long as $n/\log n >\mathsf{poly}(\ell, 1/\eta)$.

It remains to now bound 
\begin{align*}
    \mb{P}\left(C \text{ contains }\mathcal F^{\on{low}}\right) & \leq \sum_{\ell+1\le d<D}\sum_{w\ge 0} |\mathcal F_{d,w}^{\on{low}}|\cdot 2^{-\kappa_1 w/(\eta D)}\cdot 2^{-\kappa_1 n}\cdot q^{-m} + \sum_{d=1}^{\ell} \mb{P}\left(C \text{ contains }\mathcal F_d^{\on{low}}\right)\\
    & \leq \sum_{D>d\ge (\ell+1)} q^{2da_d-m}\cdot \left(2^{d+7(\ell+1)\log a_d - \kappa_1}\right)^n\cdot \sum_{w\ge 0}(a_d\cdot 2^{-\kappa_1/(\eta D)})^w \\&~~~ + \sum_{d=1}^{\ell} \mb{P}\left(C \text{ contains }\mathcal F_d^{\on{low}}\right)\\
    & \leq (D-\ell)\cdot q^{-m/2}\cdot 1\cdot 1\cdot \sum_{w\ge 0}2^{-w} + \sum_{d=1}^{\ell} \mb{P}\left(C \text{ contains }\mathcal F_d^{\on{low}}\right)\\
    & \leq 2(D-\ell)q^{-m/2} + \sum_{d=1}^{\ell} \mb{P}\left(C \text{ contains }\mathcal F_d^{\on{low}}\right)
\end{align*}

Now, 
\begin{align*}
    \sum_{d=1}^{\ell} \mb{P}\left(C \text{ contains }\mathcal F_d^{\on{low}}\right) &\leq \left(\ell \max_{d\le \ell} |\mathcal F_d^{\on{low}}|\right)\cdot 2^{-\kappa_1 n}\cdot q^{-m}\\
    & \leq \ell\cdot q^{2Da-m}\cdot \left((2a)^{6(\ell+1)} (100\ell/\eta^2)^{(\ell+1)}2^{-\kappa_1}\right)^n\\
    & \leq \ell\cdot q^{-m/2}\cdot \left((2a)^{6(\ell+1)} (a)^{(\ell+1)}2^{-\kappa_1}\right)^n\\
    & \leq \ell \cdot q^{-m/2}\cdot \left(2^{7(\ell+1)\log 2a -\kappa_1}\right)^n\\
    & \leq \ell q^{-m/2}
\end{align*}

The final conclusion follows in both cases as long as $m\ge 4Da = \mathsf{poly}(\ell, 1/\eta)$, and $\kappa_1 \ge D+7(\ell+1)\log (a)$ and
$\kappa_1/(\eta D)\ge\log(2a)$, which is true for a sufficiently large constant choice of $c_2$. As a reminder, $\kappa_1 = c_2\left(
      \frac{\ell+1}{\eta}
      +(\ell+1)\log(\ell+1)\right)$.

Thus, \[\mb{P}\left(C \text{ is not }(\ell, \delta, a, b)\text{-curve decodable}\right)\leq (2D-\ell+1)q^{-m/2}\leq q^{-m/4}\]

when $n\ge \mathsf{poly}(\ell, 1/\eta)$ for a sufficiently large polynomial.
\end{proof}

\begin{corollary}[PG, CA, and MCA for random Reed--Solomon codes]
\label{cor:rrs-proximity-package}
Under the hypotheses of \cref{thm:rrs-curve-dec}, with high probability
the following hold.
\begin{enumerate}
\item For every integer $t\ge1$, $C$ is curve-decodable with parameters
\[
  \left(
    \ell,\delta,
    (t-1)\ceil{(1-R)/\eta}+a,
    t
  \right).
\]
\item For every $m>1$, $C$ has correlated agreement and proximity gap
with parameters
\[
  \left(
    \ell,\delta,
    \frac{\ell m\ceil{(1-R)/\eta}+a}{q},
    \frac{1}{m}
  \right).
\]
\item The code $C$ has mutual correlated agreement with parameters
\[
  \left(
    \ell,\delta,
    \frac{\ell n\ceil{(1-R)/\eta}+a}{q}
  \right).
\]
\end{enumerate}
\end{corollary}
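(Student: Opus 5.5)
The proof follows the same template as \cref{cor:rlc-proximity-package}: feed \cref{thm: main-ca-mca-gg} a list-decoding guarantee and a curve-decoding guarantee that hold simultaneously with high probability. The curve-decoding input is \cref{thm:rrs-curve-dec}. Under the hypotheses of that theorem, a random Reed--Solomon code $C$ of rate $R\le 1-\delta-2\eta$ is $(\ell,\delta,a,b)$-curve-decodable with high probability, where $a=\ceil{100\ell^2/\eta^3}$ and $b=\ceil{4\ell/\eta+\ell}$.

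The list-decoding input is \cref{thm:rrs-ld}. It says that if $q\ge n\exp(c_0/\eta^2)$, then with high probability $C$ is $\left(1-R-3\eta/2,\ceil{(1-R)/\eta}\right)$ list-decodable. The first thing to check is that the field-size hypothesis \cref{eq:rrs-rate-independent} absorbs this requirement. Since $\ell\ge1$, we have
\[
\exp\left(c_0(\ell/\eta^2+\ell\log(\ell+1)/\eta)\right)\ge\exp(c_0/\eta^2),
\]
after enlarging the absolute constant if necessary. So both theorems apply, and a union bound over their two failure events gives that both properties hold simultaneously with high probability.

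Next we verify the list-decoding radius needed in \cref{thm: main-ca-mca-gg}. By the choice of $b$ we have $b-\ell\ge 4\ell/\eta$, so
\[
\delta\left(1+\frac{\ell}{b-\ell}\right)\le\delta+\frac{\eta}{4}.
\]
Using $\delta\le 1-R-2\eta$, this is at most $1-R-7\eta/4<1-R-3\eta/2$. List decodability is monotone in the radius, so $C$ is $\left(\delta(1+\ell/(b-\ell)),L\right)$ list-decodable with $L=\ceil{(1-R)/\eta}$.

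Applying \cref{thm: main-ca-mca-gg} with this $L$ and the curve-decodability parameters $(\ell,\delta,a,b)$ gives the three items directly. Item 1 is $(t-1)L+a$ curve decodability for every $t\ge 1$. Item 2 is correlated agreement and proximity gap with error $(\ell mL+a)/q$ and slack $1/m$. Item 3 is mutual correlated agreement with error $(\ell nL+a)/q$.

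There is no substantive obstacle here. The one point needing care is confirming that the $n\exp(O(1/\eta^2))$ alphabet requirement of \cref{thm:rrs-ld} is dominated by \cref{eq:rrs-rate-independent}, so that no extra field-size condition has to be added.
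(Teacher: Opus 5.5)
Your proposal is correct and follows the paper's proof essentially step for step. You absorb the $n\exp(O(1/\eta^2))$ field-size requirement of \cref{thm:rrs-ld} into \cref{eq:rrs-rate-independent}, verify $\delta(1+\ell/(b-\ell))\le 1-R-7\eta/4<1-R-3\eta/2$, and then combine \cref{thm:rrs-curve-dec} with \cref{thm: main-ca-mca-gg}; your explicit union bound over the two failure events and the appeal to monotonicity of list decodability in the radius are details the paper leaves implicit.
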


\begin{proof}
The field-size hypothesis in \cref{eq:rrs-rate-independent} absorbs
the hypothesis of \cref{thm:rrs-ld}. Thus, with high probability,
$C$ is $\left(1-R-3\eta/2,\ceil{(1-R)/\eta}\right)$ list-decodable. Since
\[
  \delta\left(1+\frac{\ell}{b-\ell}\right)
  \le1-R-\frac{7\eta}{4}
  <1-R-\frac{3\eta}{2},
\]
the claims follow from
\cref{thm:rrs-curve-dec,thm: main-ca-mca-gg}.
\end{proof}

\begin{remark}
The formal statements use $R\le1-\delta-2\eta$. The headline
parameterization $\delta=1-R-\eta$ follows by replacing $\eta$ in the
formal statements by $\eta/2$, which changes only the absolute
constants in the alphabet bounds.
\end{remark}

\addcontentsline{toc}{section}{References}
\bibliographystyle{alpha}
\bibliography{ref, coding-theory}
\end{document}